%% file: main.tex
\documentclass[review=false]{jfp-epi}

\usepackage[T1]{fontenc}

\usepackage{graphicx}

\usepackage{amsmath}

\makeatletter
\let\langle\undefined \let\rangle\undefined
\DeclareSymbolFont{yhlargesymbols}{OMX}{yhex}{m}{n}
\DeclareMathDelimiter{\rangle}
  {\mathclose}{symbols}{"69}{yhlargesymbols}{"0B}
\DeclareMathDelimiter{\langle}
  {\mathopen}{symbols}{"68}{yhlargesymbols}{"0A}

\makeatother

\usepackage{mathtools}
\usepackage{mathrsfs}
\usepackage{mathpartir}
\usepackage{tabularx}
\usepackage{tikz}
\usepackage{tikz-cd}
\tikzcdset{every label/.append style = {font = \Large}}
\usepackage{dashbox}
\usepackage{subfiles}
\allowdisplaybreaks
\usepackage{enumitem}

\usepackage{mleftright}
\mleftright

\mathtoolsset{showonlyrefs=true}

\usepackage{ifthen}
\newboolean{isFinal}
\setboolean{isFinal}{false}
\newcommand{\version}[2]{\ifbool{isFinal}{#1}{#2}}

\usepackage{xcolor}

\input{macros.tex}

\newenvironment{addmargin}[2][\empty]{\par
  \rightskip=#2\relax
  \ifx\empty#1\relax \leftskip=\rightskip
  \else \leftskip=#1\relax
  \fi}{\par}

\usepackage{hyperref}
\usepackage{color}

\AddToHook{cmd/appendix/before}{}

\begin{document}
\title{Expressive power of one-shot control operators and coroutines}

\author{Kentaro Kobayashi}
\orcid{0009-0004-3505-7151}
\affiliation{%
  \institution{University of Tsukuba}
  \city{Tsukuba}
  \state{Ibaraki}
  \country{Japan}
  \authoremail{kentaro.kobayashi@acm.org}
}

\author{Yukiyoshi Kameyama}
\orcid{0000-0002-2693-5133}
\affiliation{%
  \institution{University of Tsukuba}
  \city{Tsukuba}
  \state{Ibaraki}
  \country{Japan}
  \authoremail{kameyama@acm.org}
}

\begin{abstract}
  Control operators, such as exceptions and effect handlers, provide a means of
  representing computational effects in programs abstractly and modularly.
  While most theoretical studies have focused on multi-shot control operators,
  one-shot control operators---which restrict the use of captured continuations
  to at most once---are gaining attention for their balance between
  expressiveness and efficiency.
  This study aims to fill the gap.
  We present a mathematically rigorous comparison of the expressive power among
  one-shot control operators, including effect handlers, delimited
  continuations, and even asymmetric coroutines.
  Following previous studies on multi-shot control operators, we adopt
  Felleisen's macro-expressiveness as our measure of expressiveness.
  We verify the folklore that one-shot effect handlers and one-shot
  delimited-control operators can be macro-expressed by asymmetric coroutines,
  but not vice versa.
  We explain why a previous informal argument fails, and how to revise it to
  make a valid macro-translation.
\end{abstract}

\maketitle

\subfile{sec1-introduction.tex}

\subfile{sec2-framework.tex}

\subfile{sec3-deltoac.tex}

\subfile{sec4-efftodel.tex}

\subfile{sec5-nonexistence.tex}

\subfile{sec6-related-work.tex}

\subfile{sec7-conclusion.tex}

\bibliographystyle{ACM-Reference-Format}
\bibliography{refs}

\version{}
{\appendix

\subfile{appA-deltoac.tex}

\subfile{appB-efftodel.tex}

\subfile{appC-non-reftodel.tex}

\subfile{appD-non-deltoref.tex}}

\end{document}

%% file: macros.tex
\newenvironment{caseindent}{\begin{addmargin}[\parindent]{0pt}}{\end{addmargin}}

\newcommand{\mam}{\mathbf{MAM}}
\newcommand{\eff}{\mathbf{EFF}_{\mathrm{one}}}
\newcommand{\del}{\mathbf{DEL}_{\mathrm{one}}}
\newcommand{\ac}{\mathbf{AC}}
\newcommand{\reflang}{\mathbf{REF}}

\newcommand{\syntacticset}[1]{\mathscr{#1}}
\newcommand{\dom}[1]{\mathrm{Dom}(#1)}
\newcommand{\im}[1]{\mathrm{Im}(#1)}

\newcommand{\lb}[1]{\mathrm{LB}\left(#1\right)}

\newcommand{\var}[1]{\mathit{#1}}
\newcommand{\unit}{\mathtt{()}}
\newcommand{\vpair}[2]{(#1,#2)}
\newcommand{\inj}[2]{\mathbf{inj}_{\mathrm{#1}}\,#2}
\newcommand{\thunk}[1]{\left\{#1\right\}}
\newcommand{\force}[1]{#1!}
\newcommand{\return}[1]{\mathbf{return}\,#1}
\newcommand{\seq}[3]{\mathbf{let}\;\var{#1} = #2\;\mathbf{in}\;#3}
\newcommand{\letin}[3]{\mathbf{let}\;\var{#1} = #2\;\mathbf{in}\;#3} %
\newcommand{\abs}[2]{\lambda \var{#1}.\;#2}

\newcommand{\app}[2]{#1\,#2}

\newcommand{\cpair}[2]{\left\langle #1,#2\right\rangle}
\newcommand{\prj}[2]{\mathbf{prj}_{#1}\,#2}
\newcommand{\pcase}[4]{\mathbf{case}\;#1\;\mathbf{of}\;(#2,#3) \mapsto #4}
\newcommand{\scase}[4]{\mathbf{case}\;#1\;\mathbf{of}\;\{(\inj{#2}{#3} \mapsto #4)_i\}}
\newcommand{\underscore}{\texttt{\_}}
\newcommand{\paren}[1]{\left(#1\right)}

\newcommand{\nil}{\mathbf{nil}}

\newcommand{\shift}[2]{\mathbf{S_0} \var{#1}.\;#2}
\newcommand{\dollar}[3]{\left\langle #1 \middle\vert\, \var{#2}. #3 \right\rangle}
\newcommand{\dollart}[2]{\left\langle #1 \middle\vert\, #2 \right\rangle}
\newcommand{\throw}[2]{\mathbf{throw}\,#1\,#2}
\newcommand{\plug}[2]{\mathcal{#1}\left[#2\right]}

\newcommand{\op}[1]{\mathtt{#1}}
\newcommand{\opcall}[2]{\app{\op{#1}}{#2}}
\newcommand{\handle}[2]{\mathbf{with}\;#1\;\mathbf{handle}\;#2}
\newcommand{\handler}[3]{\{\mathbf{return}\;\var{#1} \mapsto #2,\;#3\}}

\newcommand{\labeledc}[2]{\var{#1} : #2}
\newcommand{\labeledcp}[2]{\var{#1} : \paren{#2}}
\newcommand{\create}[1]{\mathbf{create}\,#1}
\newcommand{\resume}[2]{\mathbf{resume}\,#1\,#2}
\newcommand{\yield}[1]{\mathbf{yield}\,#1}
\newcommand{\refcell}[1]{\mathit{RefCell}(#1)}
\newcommand{\activelabels}[1]{\mathit{AL}(#1)}
\newcommand{\wellformed}[1]{\mathsf{WF}_{\ac}(#1)}

\newcommand{\set}[2]{\mathbf{set}\;#1\;#2}
\newcommand{\get}[1]{\mathbf{get}\;#1}

\newcommand{\config}[2]{\left\langle #1; #2\right\rangle}

\newcommand{\hole}{[\ ]}
\newcommand{\context}[1]{\mathcal{#1}}
\newcommand{\Eval}[1]{\mathrm{Eval}_{#1}}
\newcommand{\eval}[2]{\mathrm{Eval}_{#1}\left(#2\right)}

\newcommand{\redbetaM}[2]{#1 \rightarrow_{\mathbf{M}}^{\beta} #2}
\newcommand{\arrbetaM}{\rightarrow_{\mathbf{M}}^{\beta}}

\newcommand{\redbetaE}[2]{{#1} \rightarrow_{\mathbf{E}}^{\beta} {#2}}
\newcommand{\redbetaEp}[4]{\config{#1}{#2} \rightarrow_{\mathbf{E}}^{\beta} \config{#3}{#4}}

\newcommand{\redbetaD}[2]{{#1} \rightarrow_{\mathbf{D}}^{\beta} {#2}}
\newcommand{\redbetaDp}[4]{\config{#1}{#2} \rightarrow_{\mathbf{D}}^{\beta} \config{#3}{#4}}
\newcommand{\arrbetaD}{\rightarrow_{\mathbf{D}}^{\beta}}

\newcommand{\redbetaAC}[2]{{#1} \rightarrow_{\mathbf{AC}}^{\beta} {#2}}
\newcommand{\redbetaACp}[4]{\config{#1}{#2} \rightarrow_{\mathbf{AC}}^{\beta} \config{#3}{#4}}

\newcommand{\redbetaR}[2]{{#1} \rightarrow_{\mathbf{R}}^{\beta} {#2}}
\newcommand{\redbetaRp}[4]{\config{#1}{#2} \rightarrow_{\mathbf{R}}^{\beta} \config{#3}{#4}}

\newcommand{\redM}[2]{#1 \rightarrow_{\mathbf{M}} #2}
\newcommand{\arrM}{\rightarrow_{\mathbf{M}}}
\newcommand{\redMclos}[2]{#1 \rightarrow^{*}_{\mathbf{M}} #2}

\newcommand{\redE}[2]{#1 \rightarrow_{\mathbf{E}} #2}
\newcommand{\arrE}{\rightarrow_{\mathbf{E}}}
\newcommand{\redEp}[4]{\config{#1}{#2} \rightarrow_{\mathbf{E}} \config{#3}{#4}}
\newcommand{\redEclos}[2]{#1 \rightarrow^{*}_{\mathbf{E}} #2}
\newcommand{\redEplus}[2]{#1 \rightarrow^{+}_{\mathbf{E}} #2}

\newcommand{\redD}[2]{#1 \rightarrow_{\mathbf{D}} #2}
\newcommand{\arrD}{\rightarrow_{\mathbf{D}}}
\newcommand{\redDp}[4]{\config{#1}{#2} \rightarrow_{\mathbf{D}} \config{#3}{#4}}
\newcommand{\redDclos}[2]{#1 \rightarrow^{*}_{\mathbf{D}} #2}
\newcommand{\redDplus}[2]{#1 \rightarrow^{+}_{\mathbf{D}} #2}

\newcommand{\redAC}[2]{#1 \rightarrow_{\mathbf{AC}} #2}
\newcommand{\arrAC}{\rightarrow_{\mathbf{AC}}}
\newcommand{\redACp}[4]{\config{#1}{#2} \rightarrow_{\mathbf{AC}} \config{#3}{#4}}
\newcommand{\redACclos}[2]{#1 \rightarrow^{*}_{\mathbf{AC}} #2}
\newcommand{\redACplus}[2]{#1 \rightarrow^{+}_{\mathbf{AC}} #2}

\newcommand{\redR}[2]{#1 \rightarrow_{\mathbf{R}} #2}
\newcommand{\arrR}{\rightarrow_{\mathbf{R}}}
\newcommand{\redRp}[4]{\config{#1}{#2} \rightarrow_{\mathbf{R}} \config{#3}{#4}}
\newcommand{\redRclos}[2]{#1 \rightarrow^{*}_{\mathbf{R}} #2}
\newcommand{\redRplus}[2]{#1 \rightarrow^{+}_{\mathbf{R}} #2}

\newcommand{\mt}[1]{\underline{#1}}
\newcommand{\mte}[2]{\underline{#1}_{#2}}
\newcommand{\defeq}{\overset{\mathrm{def}}{=}}

\newcommand{\mtempty}{\mt{\makebox[2mm][c]{$\cdot$}}}

\newcommand{\simu}[2]{#1 \sim #2}
\newcommand{\simue}[3]{#1\;\overset{#3}{\sim}\;#2}
\newcommand{\simues}[3]{#1\;\overset{#3}{\sim}_{\sharp}\;#2}
\newcommand{\simuec}[3]{#1\;\overset{#3}{\sim}_{\mathsf{c}}\;#2}

\newcommand{\rsimue}[1]{\overset{#1}{\sim}}
\newcommand{\rsimuec}[1]{\overset{#1}{\sim}_{\mathsf{c}}}

\newcommand{\rsimues}[1]{\overset{#1}{\sim}_{\sharp}}

\newcommand{\headmatch}[2]{#1 \mathrel{\vartriangleright} #2}

\newcommand{\Cont}{\mathrm{Cont}}
\newcommand{\Act}{\mathrm{Act}}
\newcommand{\ActCtx}{\mathrm{ActCtx}}
\newcommand{\Inv}{\mathsf{Inv}}

\newcommand{\valid}[1]{\inj{Valid}{#1}}
\newcommand{\invalid}{\inj{Invalid}{\unit}}

\newcommand{\WF}{\mathsf{WF}}
\newcommand{\Coh}{\mathsf{Coh}}

\newcommand{\cdollar}[1]{\mathcal{A}_{\mathrm{dollar}, #1}}
\newcommand{\cshift}[1]{\mathcal{A}_{\mathrm{shift0}, #1}}
\newcommand{\ccreate}{\mathcal{A}_{\mathrm{create}}}
\newcommand{\cget}{\mathcal{A}_{\mathrm{get}}}
\newcommand{\cset}{\mathcal{A}_{\mathrm{set}}}

%% file: sec1-introduction.tex
\section{Introduction}

Control operators are powerful tools for representing computational effects.
Exceptions and coroutines are classic control operators, implemented in many
languages.
Delimited-control operators (e.g., shift/reset) have been extensively studied in
the literature.
The last decade has seen growing interest in effect handlers, which support
modular abstraction of computational
effects~\cite{plotkin2003algebraic,plotkin2009handlers}.

This paper studies the theoretical foundation of \emph{one-shot} variants of
control operators, where captured continuations are restricted to at most one
use.
While most studies on control operators have focused on unrestricted (i.e.,
\emph{multi-shot}) control operators,\footnote{A notable exception is Berdine et
  al., who proposed linearly-used continuations~\cite{berdine2002linear}.}
one-shot variants have been recently gaining attention.
There are several reasons to consider one-shot variants: First, they can be
implemented more efficiently than multi-shot ones, as they avoid stack
copying~\cite{bruggeman1996oneshot}.
Second, they may alleviate the verification burden by reducing the complexity of
reasoning about sensitive resources~\cite{vilhena2021separation}.
Third, one-shotness is key to relating control operators based on continuations
to classic ones found in many dynamic languages.
For instance, a recent example in the former category is the one-shot effect
handlers of OCaml Version 5.x,\footnote{\url{https://ocaml.org}} while the
latter category includes coroutines and the yield operator, both of which are
intrinsically one-shot.\footnote{James and Sabry argued that a multi-shot
  variant of the yield operator is as expressive as delimited-control
  operators~\cite{yield2011}.}

Despite these advantages, few authors have studied the theoretical foundation of
one-shot control operators.
Indeed, it was folklore that results for multi-shot control operators carry over
easily to their one-shot counterparts; however, this is not the case.
Since one-shotness is a dynamic property, a formal calculus for one-shot control
operators should track the validity of each continuation, complicating both the
semantics and precise reasoning about them.
Among the few authors, de Moura and Ierusalimschy demonstrated a connection
between one-shot delimited-control operators and
coroutines~\cite{moura2009revisiting}.
However, this correspondence relies heavily on mutable state that can store
higher-order functions, which, in our view, obscures the raw expressive power of
these control operators.

We note that comparing the expressiveness of control operators is surprisingly
difficult.
For the case of one-shot control operators, the expressiveness results in the
literature often lacked correctness proofs (e.g.,~\cite{kawahara2020one}), or
were incorrect.
In this paper, we explain why a simple and seemingly correct translation from
delimited-control operators to coroutines fails to preserve semantics.

This paper studies the expressive power of three one-shot control facilities:
one-shot effect handlers, one-shot delimited-control operators, and
coroutines~\cite{moura2009revisiting}.
We adopt macro-expressibility~\cite{felleisen1991expressive} as the basis for
our comparison, since it is well-established in the literature, as in the study
of control operators by Forster, Kammar, Lindley, and
Pretnar~\cite{forster2019expressive}.
To our knowledge, this is the first systematic study to rigorously analyze the
expressiveness of one-shot control operators.

Our contributions are threefold:
\begin{enumerate}
\item We prove that one-shot delimited-control operators can be macro-expressed
  by asymmetric coroutines.
\item We also prove that one-shot effect handlers can be macro-expressed by
  asymmetric coroutines.
\item We show that the converse direction does not hold: asymmetric coroutines
  cannot be macro-expressed by either one-shot delimited-control operators or
  one-shot effect handlers.
\end{enumerate}
\begin{figure}[h]
  \centering
  \begin{tikzcd}
    \text{
      \begin{tabular}{c}
        One-shot \\
        effect handler
      \end{tabular}
    } \arrow[rrrr] \arrow[rrd, shift right=2] &  &                                                                                                &  & \text{
      \begin{tabular}{c}
        One-shot \\
        delimited-control operator
      \end{tabular}
    } \arrow[llll, shift right=2] \arrow[lld, shift left=2] \\
    &  & \text{
      \begin{tabular}{c}
        Asymmetric\\
        coroutine
      \end{tabular}
    } \arrow[rru, "\nexists", dashed] \arrow[llu, "\nexists"', dashed] &  &
  \end{tikzcd}
  \caption{Macro-expressibility among control operators.
    Solid lines indicate the existence of a macro-translation; dashed lines
    indicate its non-existence.}
  \label{fig:relations}
\end{figure}
Figure~\ref{fig:relations} illustrates the macro-expressibility results established in this paper.

This is the extended version of our paper presented at APLAS
2025~\cite{DBLP:conf/aplas/KobayashiK25}.
It differs from the conference version as follows:
\begin{enumerate}
\item We add all key definitions and several important proof cases that could
  not be included for lack of space.
\item We give a new translation from one-shot delimited-control operators to
  coroutines, which is much simpler than that presented in the conference
  version, and allows a more concise proof of correctness.
\item We present a new inexpressibility result in
  Section~\ref{sec:nonexistent:deltoref}, which completes the comparison of all
  the calculi introduced in this paper.
\end{enumerate}

The remainder of this paper is organized as follows.
In Section~\ref{sec:framework}, we introduce the core calculus and define
macro-expressibility.
In Section~\ref{sec:DELoneToAC}, we introduce two calculi for one-shot delimited
continuations and coroutines.
We first demonstrate that a na\"{i}ve translation from one-shot delimited
continuations to coroutines fails to preserve semantics, analyzing the cause of
failure.
We then give a refined translation and prove its correctness.
In Section~\ref{sec:EFFoneToDELone}, we introduce the calculus for one-shot
effect handlers, and establish the equi-expressivity of one-shot effect handlers
and one-shot delimited continuations.
In Section~\ref{sec:nonexistent}, we describe the inexpressibility results.
Finally, in Section~\ref{sec:conclusion}, we conclude the paper.
\version{%

Proofs omitted from the text appear in the full version.
}{%
Proofs omitted from the main text appear in the appendices.
}

%% file: sec2-framework.tex
\section{Core calculus and macro-expressibility}\label{sec:framework}

\subsection{Core calculus \texorpdfstring{$\mam$}{MAM}}

First, we present the calculus $\mam$ (multi-adjunctive metalanguage) by Forster
et al.~\cite{forster2019expressive}, which is designed after Levy's
call-by-push-value~\cite{levy2004cbpv}.
This framework subsumes both call-by-value and call-by-name evaluation
strategies and provides a uniform foundation to study various computational
effects.
In this paper, \(\mam\) serves as the common core calculus for our extensions.

\begin{figure}[tb]
  \setlength{\parindent}{0cm}
  \setlength{\arraycolsep}{4pt}%
  \renewcommand{\arraystretch}{0.9}%
  \noindent
  \begin{minipage}[t]{0.62\columnwidth}
    \begin{tabular}[t]{@{}l@{\hspace{0.1em}}l@{\hspace{0.5em}}l@{}}
      $V,W$ & $::=$ & \textsf{value} \\
            & \textbar\ $x \in \syntacticset{V}$ & variable \\
            & \textbar\ $\unit$ & unit \\[1em]
      $M,N$ & $::=$ & \textsf{computation} \\
            & \textbar\ $\return{V}$ & returner \\
            & \textbar\ $\seq{x}{M}{N}$ & sequencing \\
            & \textbar\ $\pcase{V}{x_1}{x_2}{M}$ & product matching \\      
            & \textbar\ $\scase{V}{L_{\mathnormal{i}}}{x_i}{M_i}$ & variant matching
    \end{tabular}
  \end{minipage}%
  \begin{minipage}[t]{0.35\columnwidth}
    \begin{tabular}[t]{@{}l@{\hspace{0.5em}}l@{}}  
      \textbar\ $\vpair{V}{W}$ & pairing \\
      \textbar\ $\inj{L}{V}\quad(\mathrm{L} \in \syntacticset{C})$ & variant \\
      \textbar\ $\thunk{M}$ & thunk \\[1em]
      \textbar\ $\force{V}$ & force  \\
      \textbar\ $\abs{x}{M}$ & abstraction \\
      \textbar\ $\app{M}{V}$ & application \\
      \textbar\ $\cpair{M}{N}$ & pairing \\
      \textbar\ $\prj{i}{M}$ & projection
    \end{tabular}
  \end{minipage}%
  \caption{Syntax of $\mam$}
  \label{fig:mam_syntax}
\end{figure}

Figure~\ref{fig:mam_syntax} gives the syntax of $\mam$.
In this language, values and computations are clearly separated into two
different syntactic categories.

Values, ranged over by the metavariables \(V\) and \(W\), are the data consumed
by computations.
As usual, this category includes variables (drawn from a countable set
\(\syntacticset{V}\)), the unit value, pairings \(\vpair{V}{W}\) (products), and
variants \(\inj{L}{V}\) (sums, whose tags \(\mathrm{L}\) range over a finite set
\(\syntacticset{C}\)).
In addition, a thunk \(\thunk{M}\) is a value that suspends a computation \(M\)
and reifies it as data.

Computations, ranged over by the metavariables \(M\) and \(N\), consume such
data and, when they terminate, return a value.
A computation returns a value \(V\) via the returner \(\return{V}\), and two
computations can be sequenced by \(\letin{x}{M}{N}\): once \(M\) returns a
value, that value is substituted for \(x\) in \(N\).
The pattern-matching constructs \(\pcase{V}{x_1}{x_2}{M}\) and
\(\scase{V}{L_{\mathnormal{i}}}{x_i}{M_i}\) match on a product and a variant
value, respectively.
A force term \(\force{V}\) runs the computation represented by \(V\); in
particular, if \(V\) is a thunk \(\thunk{M}\), \(\force{\thunk{M}}\) reduces to
the computation \(M\).
An abstraction \(\abs{x}{M}\) binds a variable \(x\) in computation \(M\), and
an application \(\app{M}{V}\) supplies a value \(V\) to such an abstraction.
Since an abstraction is classified as a computation, to pass it as a value to
other computations, it must be suspended as a thunk.
Finally, \(\cpair{M}{N}\) pairs two computations \(M\) and \(N\), and
\(\prj{i}{M}\) takes the \(i\)-th component of such a pairing, reducing to that
component.

Throughout the paper, we freely use nested patterns and a catch-all clause in
the pattern-matching constructs; e.g., we write a term such as
\(\mathbf{case}\;V\;\mathbf{of}\;\{\inj{L}{\vpair{x_1}{x_2}} \mapsto M,\;
\underscore \mapsto N\}\).
This is only a superficial extension: such a term can be desugared into a
combination of pattern-matching constructs.
We also write \(\underscore\) for a variable that does not occur free in the
body, as in \(\letin{\underscore}{M}{N}\).

We consider terms modulo renaming of bound variables as usual.
We define the set of \(\mam\) \emph{programs} to be the set of all computations.

We give the operational semantics of $\mam$ using evaluation
contexts~\cite{felleisen1987reduction}: we decompose a computation into an
evaluation context and a redex, and reduce the latter using beta reduction
rules.
Figure~\ref{fig:mam_frame} defines frames and contexts.
Evaluation contexts consist of computational frames, while pure contexts
comprise only pure frames.
We use an evaluation context to locate the redex as usual; a pure context, in
turn, identifies the part of an evaluation context that has no (immediate)
computational effects.
Although these two notions of contexts agree in \(\mam\) (since it has no
computational effects), in each extension of \(\mam\) introduced below, we add
computational frames according to the control abstraction it offers.

Figure~\ref{fig:mam_beta} defines the beta reduction rules $\arrbetaM$ on
computations.
\(M[V_1/x_1, \ldots, V_n/x_n]\) denotes the simultaneous, capture-avoiding
substitution of each \(x_i\) with \(V_i\) in \(M\).
\begin{figure}[tb]
  \setlength{\arraycolsep}{4pt}%
  \renewcommand{\arraystretch}{0.9}%
  \begin{tabular}{rrcl}
    \textsf{pure frame} & $\mathcal{P}$ & ::= & $\seq{x}{\hole}{N}$ \textbar\ $\app{\hole}{V}$ \textbar\ $\prj{i}{\hole}$ \\
    \textsf{computational frame} & $\mathcal{F}$ & ::= & $\mathcal{P}$ \\
    \textsf{pure context} & $\mathcal{H}$ & ::= & $\hole$ \textbar\ $\mathcal{P}[\mathcal{H}\hole]$ \\
    \textsf{evaluation context} & $\mathcal{C}$ & ::= & $\hole$ \textbar\ $\mathcal{F}[\mathcal{C}\hole]$
  \end{tabular}%
  \caption{Frames and contexts of $\mam$}
  \label{fig:mam_frame}
\end{figure}
\begin{figure}[tb]
  \centering
  \begin{minipage}[t]{0.49\linewidth}
    \begin{tabular}{l}
    $(\times)\quad\begin{array}{l}
      \redbetaM{\pcase{(V_1,V_2)}{x_1}{x_2}{M}\\}{M[V_1/x_1,V_2/x_2]}
                 \end{array}$ \\
    $(+)\quad\begin{array}{l}
              \redbetaM{\mathbf{case}\;\mathbf{inj}_{\mathrm{L}_k}\,{V}\;\mathbf{of}\;\{(\mathbf{inj}_{\mathrm{L}_i}\,{x_i} \mapsto M_i)_i\}\\}{M_k[V/x_k]}
            \end{array}$
  \end{tabular}
  \end{minipage}%
  \begin{minipage}[t]{0.51\linewidth}
    \begin{tabular}{l}
    $(F)\quad\redbetaM{\seq{x}{\return{V}}{M}}{M[V/x]}$ \\      
    $(U)\quad\redbetaM{\force{\thunk{M}}}{M}$ \\
    $(\to)\quad\redbetaM{\app{(\abs{x}{M})}{V}}{M[V/x]}$ \\
    $(\&)\quad\redbetaM{\prj{i}{\cpair{M_1}{M_2}}}{M_i}$
  \end{tabular}
  \end{minipage}%
  \caption{Beta reduction rules of $\mam$}
  \label{fig:mam_beta}
\end{figure}
We define the transition relation $\arrM$ on computations by
using evaluation contexts as follows:
\[
  \inferrule
    {\redbetaM{M}{M'}}
    {\redM{\mathcal{C}[M]}{\mathcal{C}[M']}}.
\]
We remark that the decomposition of a computation into an evaluation context and
a redex is unique.

Evaluation of a program is defined by:
$\eval{\mam}{M} \defeq V \quad\text{if}\quad
\redMclos{M}{\return{V}}$.\footnote{We write \(\to^*\) for the reflexive
  transitive closure of a binary relation \(\to\), and \(\to^+\) for its
  transitive closure.}
$\Eval{\mam}$ is a well-defined partial function, since $\arrM$ is
deterministic: there exists a unique maximal reduction sequence starting from
\(M\).
It is defined at \(M\) exactly when the evaluation of \(M\) terminates
successfully.
We say that the evaluation of \(M\) \emph{diverges} if this maximal reduction
sequence is infinite.
We adopt the same terminology for the calculi introduced in later sections.

\begin{example}
  We assume that \(\mam\) is extended with integers and arithmetic
  operators.\footnote{Henceforth, we freely use this extension when we give
    examples.}
  Then,
  \(\eval{\mam}{\app{(\abs{f}{\app{\force{f}}{2}})}{\thunk{\abs{x}{x + 1}}}} =
  3\).
  In particular, the reduction proceeds as follows:
  \begin{align*}
    \app{(\abs{f}{\app{\force{f}}{2}})}{\thunk{\abs{x}{x + 1}}} &\arrM \app{\force{\thunk{\abs{x}{x + 1}}}}{2} \\
                                                                &\arrM \app{(\abs{x}{x + 1})}{2} \\
                                                                &\arrM 2 + 1 \arrM 3.
  \end{align*}
\end{example}

\subsection{Macro-expressibility}

To compare the expressive power of one-shot control operators, we use the notion
of \emph{macro-expressibility}, which was originally proposed by
Felleisen~\cite{felleisen1991expressive}, and has been adopted in a number of
studies on control operators.
This notion serves as a finer basis for comparison than computability, and is
especially useful when comparing the expressive power of extensions of a
Turing-complete language, such as \(\mam\).

Before defining the notion of macro-expressibility formally, let us clarify the
target of comparison.
An extension of \(\mam\) considered in this paper should contain the syntactic
categories of values and computations, and may have finitely many additional
syntactic categories.
It may also have additional constructors.
Following Felleisen~\cite{felleisen1991expressive}, we regard a binding
constructor as a family of constructors indexed by the bound variable, e.g.,
\(\letin{x}{-}{-}\) for each \(x \in \syntacticset{V}\), and a variable as a
nullary constructor.
For each extension, the set of programs and the evaluation function should be
defined explicitly.\footnote{Our treatment of extensions of \(\mam\) is
  deliberately made somewhat informal.
  We could have defined a general notion of languages and macro-expressibility
  over them, but it would involve complex arguments about binding structures,
  possibly using the theory of abstract syntax~\cite{fiore1999abstractsyntax}.
  Such a rigorous treatment is necessary when developing a general theory of
  macro-expressibility; however, in this paper, we are interested in the
  expressive power of several fixed calculi.
}

\begin{definition}\label{def:macro-expressibility}
  Let $\mathscr{L}_1$ and $\mathscr{L}_2$ be extensions of $\mam$.
  A partial function $\phi$ from $\mathscr{L}_1$-terms to $\mathscr{L}_2$-terms
  is a (strong) \emph{macro-translation} if and only if the following conditions
  are satisfied:
  \begin{enumerate}
  \item If $M$ is an $\mathscr{L}_1$-program, then $\phi(M)$ is defined and an $\mathscr{L}_2$-program;
  \item If $F$ is an $n$-ary constructor of $\mam$, then for any $\mathscr{L}_1$-terms $M_1, \ldots, M_n$,
    \[
      \phi(F(M_1, \ldots, M_n)) = F(\phi(M_1), \ldots, \phi(M_n));
    \]
  \item For each $n$-ary constructor $F \in \mathscr{L}_1\setminus\mam$, there
    is an $n$-hole syntactic abstraction\footnote{An $n$-hole syntactic
      abstraction in $\mathscr{L}$ is an $\mathscr{L}$-term with $n$ holes.}
    $\mathcal{A}$ in $\mathscr{L}_2$ such that
    \[
      \phi(F(M_1, \ldots, M_n)) = \mathcal{A}[\phi(M_1), \ldots, \phi(M_n)]
    \]
    for any $\mathscr{L}_1$-terms $M_1, \ldots, M_n$;
  \item\label{item:4} For every \(\mathscr{L}_1\) program \(M\),
    $\eval{\mathscr{L}_1}{M}$ is defined if and only if
    $\eval{\mathscr{L}_2}{\phi(M)}$ is defined.
  \end{enumerate}
  We say $\mathscr{L}_1$ is (strongly) \emph{macro-expressible} in $\mathscr{L}_2$ if a macro-translation from $\mathscr{L}_1$
  to $\mathscr{L}_2$ exists.
  We define a \emph{weak} macro-translation by replacing ``if and only if'' in
  (\ref{item:4}) with ``only if''.
\end{definition}

\begin{example}
  An example of macro-expressibility is the sequencing \(M; N\), which evaluates
  \(M\) first, discards the result, and then evaluates \(N\).
  Consider extending \(\mam\) with this sequencing constructor.
  Then we define a macro-translation $\phi$ from this extension to \(\mam\) such
  that $\phi({M}; {N}) = (\letin{\underscore}{\phi(M)}{\phi(N)})$ holds.
  Other constructors, such as the abstraction \(\abs{x}{M}\), are translated
  homomorphically: \(\phi(\abs{x}{M}) = \abs{x}{\phi(M)}\).
\end{example}

We remark that macro-expressibility is reflexive and transitive: an identity
translation is a macro-translation, and the composition of macro-translations is
also a macro-translation.

%% file: sec3-deltoac.tex
\section{One-shot delimited continuations as asymmetric coroutines}\label{sec:DELoneToAC}

This section investigates the macro-expressibility of one-shot delimited
continuations in terms of asymmetric coroutines.
Although this expressibility had been considered trivially true, establishing it
turns out to be unexpectedly complicated.
We will first introduce two calculi: $\del$ for one-shot delimited continuations
and $\ac$ for asymmetric coroutines.
We will then present a simple translation from $\del$ to $\ac$, and explain why
this translation fails.
Finally, we will give a refined translation which is proved to be a correct
macro-translation.

\subsection{The calculus for one-shot delimited continuations}
\label{sec:DELoneToAC:del}

We present the calculus $\del$, which incorporates a one-shot version of the
control operator \emph{shift0/dollar}.
The \emph{shift0/dollar} operator
\cite{materzok2012ADI,DBLP:conf/tlca/KiselyovS07} is a variant of
\emph{shift0/reset0}~\cite{danvy1989functional} and has the same expressive
power.
We adopt the calculus $\mathbf{DEL}$ defined by Forster et
al.~\cite{forster2019expressive}, restricting the use of each captured
continuation to one-shot.

\begin{figure}[tb]
  {\setlength{\parindent}{0cm}
  \setlength{\arraycolsep}{4pt}
  \setlength{\tabcolsep}{4pt}
  \renewcommand{\arraystretch}{0.9}
    \begin{tabular}[t]{rcl@{\hspace{0.1\linewidth}}l}
      \(V,W\) & \(::=\) & \(\ldots\) & \textsf{value} \\
      & \textbar & \(l\in\syntacticset{L}_{\mathbf{D}}\) & continuation label \\
      \(M,N\) & \(::=\) & \(\ldots\) & \textsf{computation} \\
      & \textbar & \(\shift{k}{M}\) & shift0 \\
      & \textbar & \(\dollar{M}{x}{N}\) & dollar \\
      & \textbar & \(\throw{V}{W}\) & throw
    \end{tabular}\\[1em]
    \begin{tabular}{rrcl}
      \textsf{pure frame} & $\mathcal{P}$ & ::= & \ldots \\
      \textsf{computational frame} & $\mathcal{F}$ & ::= & $\ldots \mid \dollar{\hole}{x}{N}$ \\
      \textsf{pure context} & $\mathcal{H}$ & ::= & \ldots \\
      \textsf{evaluation context} & $\mathcal{C}$ & ::= & \ldots \\
    \end{tabular}}
\caption{Syntax of $\del$}
\label{fig:del_syntax}
\end{figure}

Figure~\ref{fig:del_syntax} shows the syntax of $\del$ as an extension of
$\mam$, where the ellipses (\ldots) indicate the syntax from $\mam$.
A continuation label \(l\) is a value ranging over a countable set
$\syntacticset{L}_{\mathbf{D}}$.
It is used to refer to a captured continuation, as described below.
The computation \(\dollar{M}{x}{N}\) represents a \emph{generalized reset0},
called \emph{dollar}.
When evaluated, this term installs a delimiter for continuations captured in
$M$.
Unlike reset0, it has an additional part $x.N$ where $x$ is bound in $N$.
When $M$ evaluates to a value, its result is bound to $x$, and $N$ is evaluated.
The computation \(\shift{k}{M}\) represents the control operator \emph{shift0}.
When this term is evaluated, a continuation delimited by the nearest dollar term
is captured and assigned a fresh continuation label; $k$ is bound to this label,
and the body $M$ is evaluated.
If there is no surrounding dollar term, the evaluation gets stuck.
The computation \(\throw{V}{W}\) represents the invocation of the captured
continuation associated with \(V\) with the argument \(W\).
We define $\del$ programs as the set of computations containing no continuation
labels.

In $\del$, continuations can be invoked at most once.
Consider the example:
\[
  \dollar{
    \begin{aligned}
      \shift{k}{
      &\letin{a}{\throw{k}{1}}{\\&\letin{b}{\throw{k}{2}}{\return{\vpair{a}{b}}}}
      }
    \end{aligned}       
  }{x}{\return{x}}
\]
The term attempts to invoke the captured continuation $k$ twice, first to
compute $a$ and then to compute $b$, which violates the one-shotness constraint.
Since whether a continuation has already been invoked can only be determined at
runtime, we detect such violations dynamically, using a \textit{store} to record
the content and status of continuations.

A store is a partial function
$\theta :
\syntacticset{L}_{\mathbf{D}}\;\rightharpoonup\;\mathsf{computation}\sqcup\{\nil\}$.
$\dom{\theta}$ is the set of continuation labels $l$ such that $\theta(l)$ is
defined.
Note that $\theta(l) = \nil$ does \emph{not} mean that \(\theta(l)\) is
undefined.
Instead, $\theta(l)=\nil$ means that the continuation labeled $l$ is
\emph{invalid}---that is, this continuation has already been invoked.
If \(\theta(l) = M\) for some computation \(M\), we say that the continuation
labeled \(l\) is \emph{valid}.
The empty set $\emptyset$ represents the store whose domain is empty.
We write \(\theta[l := M]\) for the usual store update.

\begin{figure}[tb]
  \centering
  \begin{tabular}{cl}
    $(\mam)$ & $\inferrule{\redbetaM{M}{M'}}{\redbetaDp{M}{\theta}{M'}{\theta}}$ \\
    $(\mathrm{ret})$ & $\redbetaDp{\dollar{\return{V}}{x}{M}}{\theta}{M[V/x]}{\theta}$ \\
    $(\mathrm{shift})$ &
    $\inferrule
    {
      l \notin \dom{\theta}
    }
    { 
      \redbetaDp{\dollar{\plug{H}{\shift{k}{M}}}{x}{N}}{\theta}{M[l/k]}{\theta[l := \abs{y}{\dollar{\plug{H}{\return{y}}}{x}{N}}]}
    }$ \\
    $(\mathrm{throw})$ &
    $\inferrule
    {
      \theta(l) = \abs{y}{\dollar{\plug{H}{\return{y}}}{x}{N}}
    }
    {
      \redbetaDp{\throw{l}{V}}{\theta}{\dollar{\plug{H}{\return{V}}}{x}{N}}{\theta[l := \nil]}
    }$ \\
    $(\mathrm{fail})$ &
    $\inferrule
    {
      \theta(l) = \nil
    }
    {
      \config{\throw{l}{V}}{\theta} \rightarrow_{\mathbf{D}}^{\beta} \bot
    }$
  \end{tabular}
  \caption{Beta reduction rules of $\del$}
  \label{fig:del_beta}
\end{figure}

We introduce \textit{configurations} to describe the runtime states of programs.
A configuration $C$ is either a pair of a $\del$ computation $M$ and a store
$\theta$, written as \(\config{M}{\theta}\), or the error state $\bot$.
The beta reduction rules $\arrbetaD$ on configurations are defined in
Figure~\ref{fig:del_beta}.\footnote{Here \(\arrbetaM\) is understood as the
  relation on \(\del\) computations obtained by reading the beta-reduction rules
  of \(\mam\) in Figure~\ref{fig:mam_beta} as rules over \(\del\) computations.
  We adopt the same reading for the calculi introduced in later sections.}
In $(\mathrm{shift})$, we assume that $y$ does not freely occur in the context
$\context{H}$.
Throughout this paper, we assume similar freshness conditions for the variables
introduced in reduction rules and translations.

The reduction rules of $\del$ are defined as follows:
\[
  \inferrule{\redbetaDp{M}{\theta}{M'}{\theta'}}{\redDp{\plug{C}{M}}{\theta}{\plug{C}{M'}}{\theta'}}
  \hspace {4em}
  \inferrule{\redbetaD{\config{M}{\theta}}{\bot}}{\redD{\config{\plug{C}{M}}{\theta}}{\bot}}
\]
Evaluation of a $\del$ program is defined by: $\eval{\del}{M} \defeq V$ if
there exists a store $\theta$ such that
$\redDclos{\config{M}{\emptyset}}{\config{\return{V}}{\theta}}$.
$\Eval{\del}$ is a well-defined partial function, since $\arrD{}$ is
deterministic.
\footnote{Strictly speaking, there is an ambiguity in the choice of a fresh
  label, which implies the nondeterminism of the reduction.
  To avoid this, we fix a linear order on \(\syntacticset{L}_{\mathbf{D}}\) and
  take the least label not in \(\dom{\theta}\) as a fresh label.
  We adopt the same convention for the calculi introduced in later sections.
}

\begin{example}
  The evaluation of the program \(M \defeq \dollar{(\shift{k}{((\throw{k}{2}) + 3)})}{x}{x+4}\) proceeds as follows.
  \begin{align*}
    \config{M}{\emptyset}
    &\arrD \config{(\throw{l}{2}) + 3}{\; \emptyset[l := \abs{y}{\dollar{\return{y}}{x}{x+4}}]} \\
    &\arrD \config{\dollar{\return{2}}{x}{x+4} + 3}{\; \emptyset[l := \nil]} \\
    &\arrD \config{(2 + 4) + 3}{\; \emptyset[l := \nil]} \\
    &\arrD^{+} \config{9}{\; \emptyset[l := \nil]}.
  \end{align*}
  Thus, \(\eval{\del}{M} = 9\).
\end{example}

\subsection{The calculus for asymmetric coroutines}
\label{sec:DELoneToAC:AC}

A coroutine is a computation that can suspend its execution and later resume
from the point of suspension, in contrast to an ordinary subroutine.
De Moura and Ierusalimschy studied several variations of coroutines found in
programming languages and introduced two calculi, symmetric coroutines and
asymmetric coroutines~\cite{moura2009revisiting}.
We choose asymmetric coroutines because they are the most expressive among
several variations of coroutines and ``much simpler to manage and understand''
from the programmer's perspective~\cite{moura2009revisiting}.
They are also prevalent in modern programming languages, such as Lua and Ruby,
while symmetric coroutines are adopted in ``classical'' languages, such as
Simula and Modula-2.

\begin{figure}[tb]
  {\setlength{\parindent}{0cm}
  \setlength{\arraycolsep}{4pt}
  \setlength{\tabcolsep}{4pt}
  \renewcommand{\arraystretch}{0.9}
    \begin{tabular}[t]{rcl@{\hspace{0.1\linewidth}}l}
      $V,W$ & $::=$ & $\ldots$ & \textsf{value} \\
            & \textbar\ & $l\in\syntacticset{L}_{\ac}$ & coroutine label \\
      \(M,N\) & \(::=\) & \(\ldots\) & \textsf{computation} \\
            & \textbar\ & $\yield{V}$ & yield \\
            & \textbar\ & $\create{V}$ & create \\
            & \textbar\ & $\resume{V}{W}$ & resume \\
            & \textbar\ & $\labeledc{l}{M}$ & labeled computation
    \end{tabular}\\[1em]
    \begin{tabular}{rrcl}
      \textsf{pure frame} & $\mathcal{P}$ & ::= & \ldots \\
      \textsf{computational frame} & $\mathcal{F}$ & ::= & $\ldots \mid \labeledc{l}{\hole}$ \\
      \textsf{pure context} & $\mathcal{H}$ & ::= & \ldots \\
      \textsf{evaluation context} & $\mathcal{C}$ & ::= & \ldots \\
    \end{tabular}}

\caption{Syntax of $\ac$}
\label{fig:ac_syntax}
\end{figure}

We define the calculus $\ac$ for asymmetric coroutines, employing the basic
constructs of de Moura and Ierusalimschy's calculus, and adapting these
constructs to the call-by-push-value style.
Figure~\ref{fig:ac_syntax} presents the syntax of $\ac$.
Besides the constructs of $\mam$, $\ac$ has coroutine labels as values, and four
constructs to manipulate coroutines as computations: labeled computation,
\textbf{create}, \textbf{resume}, and \textbf{yield}.
$\syntacticset{L}_{\ac}$ is a countable set of coroutine labels.
The labeled computation $\labeledc{l}{M}$ represents the coroutine $l$ executing
the computation $M$.
The computation $\create{V}$ produces a new coroutine whose body is $V$,
creating a fresh label for it.
The computation $\resume{V}{W}$ invokes a coroutine labeled $V$, with a
parameter $W$.
The computation $\yield{V}$ suspends the current coroutine, yielding $V$ to its
caller.
We define $\ac$ programs as the set of computations containing no coroutine
labels.

Similarly to \(\del\), we use a store to track the content and status of
coroutines.
A store $\theta$ is a partial function that maps labels to \textit{values} or
\(\nil\).
Note the difference from the store in $\del$, which maps labels to
\textit{computations} or \(\nil\).
A configuration in $\ac$ is either a pair $\config{M}{\theta}$ of a computation
$M$ and a store $\theta$, or the error state $\bot$.

\begin{figure}[tb]
  {\renewcommand{\arraystretch}{1.3}
    \begin{tabular}{cl}
    $(\mam)$ & $\inferrule{\redbetaM{M}{M'}}{\redbetaACp{M}{\theta}{M'}{\theta}}$ \\

    $(\mathrm{create})$ &
    $\inferrule
    {
      l \notin \dom{\theta}
    }
    {
      \redbetaACp{\create{V}}{\theta}{\return{l}}{\theta[l := V]}
    }$ \\

    $(\mathrm{resume})$ &
    $\inferrule
    {
      l \in \dom{\theta} \\ \theta(l) \neq \nil
    }
    {
      \redbetaACp{\resume{l}{V}}{\theta}{\labeledc{l}{(\app{\force{\theta(l)}}{V})}}{\theta[l := \nil]}
    }$ \\

    $(\mathrm{fail})$ &
    $\inferrule
    {
      \theta(l) = \nil
    }
    {
      \redbetaAC{\config{\resume{l}{V}}{\theta}}{\bot}
    }$ \\

    $(\mathrm{ret})$ & $\redbetaACp{\labeledc{l}{\return{V}}}{\theta}{\return{V}}{\theta}$ \\
    $(\mathrm{yield})$ & $\redbetaACp{\labeledc{l}{\context{H}[\yield{V}]}}{\theta}{\return{V}}{\theta[l := \thunk{\abs{y}{\context{H}[\return{y}]}}]}$
  \end{tabular}}
  \[
  \inferrule{\redbetaACp{M}{\theta}{M'}{\theta'}}{\redACp{\plug{C}{M}}{\theta}{\plug{C}{M'}}{\theta'}} \hspace {4em} \inferrule{\redbetaAC{\config{M}{\theta}}{\bot}}{\redAC{\config{\plug{C}{M}}{\theta}}{\bot}}
  \]
  \caption{Semantics of $\ac$}
  \label{fig:ac_beta}
\end{figure}

Figure~\ref{fig:ac_beta} defines the operational semantics of $\ac$, which
includes the beta reduction rules $\arrAC^{\beta}$ and the reduction rules
$\arrAC$.
Note that coroutines are inherently one-shot: although a coroutine may be called
multiple times by suspending and resuming it, it cannot be duplicated or reused.

Evaluation of an $\ac$-program is defined by: $\eval{\ac}{M} \defeq V$ if
there exists a store $\theta$ such that
$\redACclos{\config{M}{\emptyset}}{\config{\return{V}}{\theta}}$.
Since $\arrAC{}$ is deterministic, $\Eval{\ac}$ is a well-defined partial
function.

\begin{example}
  The evaluation of the program \(M\), defined by
  \[
    M \defeq
    \begin{pmatrix*}[l]
      \letin{f}{\create{\thunk{
      \begin{array}{@{}l@{}}
        \lambda i.\;\letin{j}{\yield{i}}
        {\\\phantom{\lambda i.\;}{j + 1}}
        \end{array}}}}
      {\\\letin{a}{\resume{f}{2}}
      {\\\letin{b}{\resume{f}{3}}
      {\\\return{\vpair{a}{b}}}}}
    \end{pmatrix*},
  \]
  proceeds as follows.
  \begin{align*}
    \config{M}{\emptyset}
    &\arrAC \config{\begin{pmatrix*}[l]
      \letin{f}{\return{l}}
      {\\\letin{a}{\resume{f}{2}}
      {\\\letin{b}{\resume{f}{3}}
      {\\\return{\vpair{a}{b}}}}}
    \end{pmatrix*}}{\; \emptyset\left[l := \thunk{\begin{array}{@{}l@{}}
        \lambda i.\;\letin{j}{\yield{i}}
        {\\\phantom{\lambda i.\;}{j + 1}}
        \end{array}}\right]} \\
    &\arrAC \config{\begin{pmatrix*}[l]
      \letin{a}{\resume{l}{2}}
      {\\\letin{b}{\resume{l}{3}}
      {\\\return{\vpair{a}{b}}}}
    \end{pmatrix*}}{\; \emptyset\left[l := \thunk{\begin{array}{@{}l@{}}
        \lambda i.\;\letin{j}{\yield{i}}
        {\\\phantom{\lambda i.\;}{j + 1}}
        \end{array}}\right]} \\
    &\arrAC^+ \config{\begin{pmatrix*}[l]
      \letin{a}{\labeledc{l}{\app{\begin{pmatrix*}[l]
        \lambda i.\;\letin{j}{\yield{i}}
        {\\\phantom{\lambda i.\;}{j + 1}}
        \end{pmatrix*}}{2}}}
      {\\\letin{b}{\resume{l}{3}}
      {\\\return{\vpair{a}{b}}}}
    \end{pmatrix*}}{\; \emptyset\left[l := \nil\right]} \\
    &\arrAC \config{\begin{pmatrix*}[l]
      \letin{a}{\labeledc{l}{\begin{pmatrix*}[l]
        \letin{j}{\yield{2}}
        {\\{j + 1}}
        \end{pmatrix*}}}
      {\\\letin{b}{\resume{l}{3}}
      {\\\return{\vpair{a}{b}}}}
    \end{pmatrix*}}{\; \emptyset\left[l := \nil\right]} \\
    &\arrAC \config{\begin{pmatrix*}[l]
      \letin{a}{\return{2}}
      {\\\letin{b}{\resume{l}{3}}
      {\\\return{\vpair{a}{b}}}}
    \end{pmatrix*}}{\; \emptyset\left[l := \thunk{\begin{array}{@{}l@{}}
        \lambda y.\;\letin{j}{\return{y}}
        {\\\phantom{\lambda y.\;}{j + 1}}
        \end{array}}\right]} \\
    &\arrAC \config{\begin{pmatrix*}[l]
      \letin{b}{\resume{l}{3}}
      {\\\return{\vpair{2}{b}}}
    \end{pmatrix*}}{\; \emptyset\left[l := \thunk{\begin{array}{@{}l@{}}
        \lambda y.\;\letin{j}{\return{y}}
        {\\\phantom{\lambda y.\;}{j + 1}}
        \end{array}}\right]} \\
    &\arrAC^+ \config{\begin{pmatrix*}[l]
      \letin{b}{\labeledc{l}{\app{\begin{pmatrix*}[l]
        \lambda y.\;\letin{j}{\return{y}}
        {\\\phantom{\lambda y.\;}{j + 1}}
        \end{pmatrix*}}{3}}}
      {\\\return{\vpair{2}{b}}}
    \end{pmatrix*}}{\; \emptyset\left[l := \nil\right]} \\
    &\arrAC \config{\begin{pmatrix*}[l]
      \letin{b}{\labeledc{l}{\begin{pmatrix*}[l]
        \letin{j}{\return{3}}
        {\\ j + 1}
      \end{pmatrix*}}}
      {\\\return{\vpair{2}{b}}}
    \end{pmatrix*}}{\; \emptyset\left[l := \nil\right]} \\
    &\arrAC^+ \config{\begin{pmatrix*}[l]
      \letin{b}{\labeledc{l}{\return{4}}}
      {\\\return{\vpair{2}{b}}}
    \end{pmatrix*}}{\; \emptyset\left[l := \nil\right]} \\
    &\arrAC^+ \config{\return{\vpair{2}{4}}}{\; \emptyset\left[l := \nil\right]}
  \end{align*}
  Thus, \(\eval{\ac}{M} = \vpair{2}{4}\).
\end{example}

\subsection{Na\"{i}ve translation and its failure}
\label{sec:DELoneToAC:naive-translation}

\begin{figure}[tb]
  \centering
    $\begin{array}{rll}
    \mt{\dollar{M}{x}{N}} & \defeq &
                                       \left(\begin{array}{l}
                                         \letin{z}{\create{\thunk{\abs{\underscore}{\letin{x}{\mt{M}}{\return{\thunk{\abs{\underscore}{\mt{N}}}}}}}}}
                                         {\\\letin{res}{\resume{z}{\unit}}
                                         {\\\app{\force{\var{res}}}{z}}}
                                       \end{array}\right) \\
    \mt{\shift{k}{L}} & \defeq & \yield{\thunk{\abs{k}{\mt{L}}}} \\
    \mt{\throw{V}{W}} & \defeq &
                                   \left(\begin{array}{l}
                                     \letin{res}{\resume{\mt{V}}{\mt{W}}}
                                     {\\\app{\force{\var{res}}}{\mt{V}}}
                                   \end{array}\right)
  \end{array}$
  \caption{Na\"{i}ve translation from $\del$ to $\ac$}
  \label{fig:deltoac:naive}
\end{figure}

Given the similarity between delimited continuations and coroutines, one may think that it is straightforward to macro-translate $\del$ to $\ac$ with the following correspondence: a dollar term in $\del$ is translated to a create term in $\ac$, a shift0 term to a yield term, and a throw term to a resume term.
Based on this intuition, we can define the na\"{i}ve translation from $\del$ to $\ac$ in Figure~\ref{fig:deltoac:naive}.
Note that on the right-hand side of $\mt{\dollar{M}{x}{N}}$, $x$ corresponds to the binding occurrence of $x$ in the original term $\dollar{M}{x}{N}$.
Figure~\ref{fig:deltoac:naive} only shows nontrivial cases.
Other cases are translated homomorphically except for continuation labels, whose translations are not defined.
This is not problematic, since a macro-translation only has to translate all $\del$ programs, which do not contain continuation labels.

The na\"{i}ve translation works for simple expressions.
However, it fails to preserve semantics, as shown below.
Consider the following $\del$ program $M$:
\begin{align*}
  M &\defeq
  \left\langle M' \;\middle\vert\;\var{i}.\,\return{i} \right\rangle \\
  M' &\defeq \begin{pmatrix*}[l]
      \mathbf{let}\;\var{j} = (\shift{k_{\mathrm{1}}}{\mathbf{let}\;\var{r_{\mathrm{1}}} = \throw{k_1}{10}\;\mathbf{in}} \\
      \phantom{\mathbf{let}\;\var{j} = (\mathbf{S_0} \var{k_{\mathrm{1}}}.\;}\mathbf{let}\;\mathrlap{\var{r_{\mathrm{2}}}}\phantom{\var{r_{\mathrm{1}}}} = \throw{k_1}{20}\;\mathbf{in}\;\return{\var{r_{\mathrm{1}}}})\;\mathbf{in} \\
      \shift{k_{\mathrm{2}}}{\return{30}}
    \end{pmatrix*}  
\end{align*}
We shall explain how the evaluation of \(M\) proceeds in \(\del\).
(For brevity, the full content of the store is omitted from the evaluation
trace.
Only the changes relevant to the main computation are mentioned.)
First, the shift0 term $\shift{k_{\mathrm{1}}}{\cdots}$ is invoked, and the pure
context surrounding it is captured and stored under a fresh label $l_1$:
\begin{flalign}
  M &\arrD
      \begin{pmatrix*}[l]
        \mathbf{let}\;\var{r_{\mathrm{1}}} = \throw{l_1}{10}\;\mathbf{in}\;
        \\\mathbf{let}\;\mathrlap{\var{r_{\mathrm{2}}}}\phantom{\var{r_{\mathrm{1}}}} = \throw{l_1}{20}\;\mathbf{in}\;\return{\var{r_{\mathrm{1}}}}
      \end{pmatrix*}\label{eq:evalM1}  \\
  &\left[\text{where \(l_1 \mapsto \abs{y}{\dollar{\letin{j}{\return{y}}{\shift{k_{\mathrm{2}}}{\return{30}}}}{i}{\return{i}}}\)} \right] \notag \\
\shortintertext{Next, $\throw{l_1}{10}$ invokes the captured continuation labeled $l_1$,
invalidating the continuation:}
    \cdots &\arrD^+
    \paren{\begin{array}{@{}l@{}}
    \mathbf{let}\;\var{r_{\mathrm{1}}} = \left\langle
      \begin{array}{@{}l@{}}
      \mathbf{let}\;\var{j} = \return{10}\;\mathbf{in}\;\\
      \shift{k_{\mathrm{2}}}{\return{30}}
      \end{array}
    \;\middle\vert\;\var{i}.\,\return{i} \right\rangle\;\mathbf{in}\\
    \mathbf{let}\;\mathrlap{\var{r_{\mathrm{2}}}}\phantom{\var{r_{\mathrm{1}}}} = \throw{l_1}{20}\;\mathbf{in}\;\return{\var{r_{\mathrm{1}}}}      
    \end{array}}\label{eq:evalM2} \\
  &\left[\text{where \(l_1 \mapsto \nil\)}\right] \\
  &\arrD\paren{\begin{array}{@{}l@{}}
    \mathbf{let}\;\var{r_{\mathrm{1}}} = \left\langle
      \shift{k_{\mathrm{2}}}{\return{30}}
    \;\middle\vert\;\var{i}.\,\return{i} \right\rangle\;\mathbf{in}\\
    \mathbf{let}\;\mathrlap{\var{r_{\mathrm{2}}}}\phantom{\var{r_{\mathrm{1}}}} = \throw{l_1}{20}\;\mathbf{in}\;\return{\var{r_{\mathrm{1}}}}      
    \end{array}}\label{eq:evalM3} \\
\shortintertext{Then, the shift0 term $\shift{k_{\mathrm{2}}}{\cdots}$ is invoked.
This captures the context and stores it under another fresh label $l_2$.
The evaluation then continues as follows (note that the continuation labeled $l_2$ is not used):}
    \cdots
     &\arrD^+\begin{pmatrix*}[l]
      \mathbf{let}\;\var{r_{\mathrm{1}}} = \return{30}\;\mathbf{in}\;\\
      \mathbf{let}\;\mathrlap{\var{r_{\mathrm{2}}}}\phantom{\var{r_{\mathrm{1}}}} = \throw{l_1}{20}\;\mathbf{in}\;\return{\var{r_{\mathrm{1}}}}
    \end{pmatrix*}\label{eq:evalM4} \\
     &\left[\text{where \(l_2 \mapsto \abs{y}{\dollar{\return{y}}{i}{\return{i}}}\)}\right] \\
    &\arrD\begin{array}{@{}l@{}}
      \mathbf{let}\;\mathrlap{\var{r_{\mathrm{2}}}}\phantom{\var{r_{\mathrm{1}}}} = \throw{l_1}{20}\;\mathbf{in}\;\return{30}      
    \end{array}\label{eq:evalM5}
\end{flalign}
Finally, $\throw{l_1}{20}$ is invoked, but since the continuation labeled $l_1$ has already been consumed, the evaluation fails.

Therefore, in $\ac$, the evaluation of $\mt{M}$ must not successfully terminate
since macro-translations preserve semantics; however, this is not the case.
To understand the reason, consider the evaluation trace of \(\mt{M}\).
First, after applying the translation, a coroutine corresponding to the body of
the source dollar term is created, labeled as \(m\), and invoked
(\(\context{P}\) is a pure context defined as
\(\seq{i}{\hole{}}{\return{\thunk{\abs{\underscore}{\return{i}}}}}\)):
\begin{flalign}
  \mt{M}
    &\equiv
    \begin{pmatrix*}[l]
      \letin{z}{\create{\thunk{\abs{\underscore}{\plug{P}{\mt{M'}}}}}}
      {\\\letin{res}{\resume{z}{\unit}}
      {\\\app{\force{\var{res}}}{\var{z}}}}
    \end{pmatrix*} \nonumber
  \\
  &\arrAC^{+} 
    \begin{pmatrix*}[l]
      \letin{res}{\\\quad\labeledc{m}{\plug{P}{\begin{array}{@{}l@{}}
        \mathbf{let}\;\var{j} = \\
        \quad \yield{\thunk{\abs{k_{\mathrm{1}}}{
        \begin{pmatrix*}[l]
          \seq{r_{\mathrm{1}}}{\app{\app{\mt{\mathbf{throw}}}{k_1}}{10}}
          {\\\seq{r_{\mathrm{2}}}{\app{\app{\mt{\mathbf{throw}}}{k_1}}{20}}
          {\\\return{r_{\mathrm{1}}}}}
        \end{pmatrix*}}}}
        \;\mathbf{in} \\
        \yield{\thunk{\abs{k_{\mathrm{2}}}{\return{30}}}}
      \end{array}}}}
      {\\\app{\force{\var{res}}}{\var{m}}}
    \end{pmatrix*}
  \\
    &\left[\text{where \(m \mapsto \nil\)}\right] \\
  \shortintertext{%
  Then, the current coroutine \(m\) is suspended, yielding the thunk
  \(\thunk{\abs{k_{\mathrm{1}}}{\cdots}}\).%
  }
  \cdots &\arrAC^{+} 
      \begin{pmatrix*}[l]
        \letin{res}{\return{\thunk{\abs{k_{\mathrm{1}}}{
          \begin{pmatrix*}[l]
            \seq{r_{\mathrm{1}}}{\app{\app{\mt{\mathbf{throw}}}{k_1}}{10}}
            {\\\seq{r_{\mathrm{2}}}{\app{\app{\mt{\mathbf{throw}}}{k_1}}{20}}
            {\\\return{r_{\mathrm{1}}}}}
          \end{pmatrix*}}}}}
        {\\\app{\force{\var{res}}}{\var{m}}}
      \end{pmatrix*}
  \\
  &\left[\begin{array}{@{}l@{}}
                              \text{where} \\
                              m\mapsto\thunk{\begin{array}{@{}l@{}}
                                \abs{y}{\plug{P}{
                                \begin{array}{@{}l@{}}
                                  \seq{j}{\return{y}}
                                  {\\\yield{\thunk{\abs{k_{\mathrm{2}}}{\return{30}}}}}
                                \end{array}
                                }}
                              \end{array}} \\
                            \end{array}\right] \\
 &\arrAC^{+}  \paren{\begin{array}{@{}l@{}}
      \seq{r_{\mathrm{1}}}{\app{\app{\mt{\mathbf{throw}}}{m}}{10}}
      {\\\seq{r_{\mathrm{2}}}{\app{\app{\mt{\mathbf{throw}}}{m}}{20}}
                            {\\\return{r_{\mathrm{1}}}}}
 \end{array}} \nonumber \\
  \shortintertext{%
  This term corresponds to \eqref{eq:evalM1}, where the continuation captured by
  $\shift{k_{\mathrm{1}}}{\cdots}$ in $M$, labeled \(l_1\), corresponds to the
  coroutine labeled $m$.
  Next, $\app{\app{\mt{\mathbf{throw}}}{m}}{{10}}$ is evaluated.
  This resumes the coroutine labeled $m$, mapping \(m\) to
  $\nil$ in the store, and the evaluation reaches the following term corresponding
  to \(\eqref{eq:evalM2}\):}
  \cdots  & \arrAC^{+} \paren{\begin{array}{@{}l@{}}
    \seq{r_{\mathrm{1}}}{\paren{
      \begin{array}{@{}l@{}}
        \seq{res}{\labeledc{m}{\paren{
        \plug{P}{\begin{array}{@{}l@{}}
          \seq{j}{\return{10}}
          {\\\yield{\thunk{\abs{k_{\mathrm{2}}}{\return{30}}}}}
        \end{array}}
        }} }
        {\\\app{\force{\var{res}}}{m}}
      \end{array}
      }}
      {\\\seq{r_{\mathrm{2}}}{\app{\app{\mt{\mathbf{throw}}}{m}}{20}}
      {\\\return{r_{\mathrm{1}}}}}
  \end{array}} \label{eq:evalmtM1}\\
    &\left[\text{where \(m \mapsto \nil\)}\right] \\  
  \shortintertext{%
  Then, the value \(10\) is discarded, and the term
  $\yield{\thunk{\abs{k_{\mathrm{2}}}{\return{30}}}} (=
  \mt{\shift{k_{\mathrm{2}}}{\return{30}}})$ is invoked:
  }
  \cdots &\arrAC^{+} \paren{\begin{array}{@{}l@{}}
    \seq{r_{\mathrm{1}}}{\app{\force{\thunk{\abs{k_{\mathrm{2}}}{\return{30}}}}}{m}}
    {\\\seq{r_{\mathrm{2}}}{\app{\app{\mt{\mathbf{throw}}}{m}}{20}}
    {\\\return{r_{\mathrm{1}}}}}
                            \end{array}}\;
                            \left[\begin{array}{@{}l@{}}
                              \text{where} \\
                              m \mapsto \thunk{\begin{array}{@{}l@{}}
                                \abs{y}{\plug{P}{\return{y}}}
                              \end{array}} \\
                            \end{array}\right] \label{eq:evalmtM2} \\
  \cdots &\arrAC^{+} \paren{\begin{array}{@{}l@{}}
    \seq{r_{\mathrm{1}}}{\return{30}}
    {\\\seq{r_{\mathrm{2}}}{\app{\app{\mt{\mathbf{throw}}}{m}}{20}}
    {\\\return{r_{\mathrm{1}}}}}
  \end{array}} \\
  \shortintertext{%
  In this term, which corresponds to \eqref{eq:evalM4}, the label \(m\)
  substituted for \(k_{\mathrm{2}}\) is discarded without being consumed.
  Thus, the invocation of \(\mt{\mathbf{throw}}\,m\,20\) (corresponding to that
  of \(\throw{l_1}{20}\) in \eqref{eq:evalM5}) succeeds, although it
  should not, and the evaluation terminates successfully:
  }
  \cdots &\arrAC^{+} \paren{\begin{array}{@{}l@{}}
    \seq{r_{\mathrm{2}}}{\app{\app{\mt{\mathbf{throw}}}{m}}{20}}
    {\\\return{30}}
                            \end{array}}\,
                            \left[\begin{array}{@{}l@{}}
                              \text{where} \\
                              m \mapsto \thunk{\begin{array}{@{}l@{}}
                                \abs{y}{\plug{P}{\return{y}}}
                              \end{array}} \\
                            \end{array}\right] \label{eq:evalmtM3} \\
    &\arrAC^{+}
      \paren{\begin{array}{@{}l@{}}
      \seq{r_{\mathrm{2}}}{\begin{pmatrix*}[l]
        \seq{res}{\labeledc{m}{\paren{
        \seq{i}{\return{20}}{\return{\thunk{\abs{\underscore}{\return{i}}}}}
        }}}
        {\\\app{\force{\var{res}}}{m}}
      \end{pmatrix*}}
                            {\\\return{30}}
      \end{array}}\\
    &\left[\text{where \(m \mapsto \nil\)}\right] \\
    &\arrAC^{+} \paren{\begin{array}{@{}l@{}}
      \seq{r_{\mathrm{2}}}{\app{\force{\thunk{\abs{\underscore}{\return{20}}}}}{\var{m}}}
      {\\\return{30}}
    \end{array}} \\
  &\arrAC^{+} \return{30}
\end{flalign}

In short, the na\"{i}ve translation fails because it ignores the difference
between \(\del\) and \(\ac\) in how the validity of labels is managed.
In \(\del\), validity is \emph{one-way}: once a shift0 invocation creates a
fresh label \(l\), this label remains valid---i.e., \(\theta(l)\) holds the
captured continuation---until the continuation is invoked.
After that, \(\theta(l)\) becomes \(\nil\) permanently.
For example, once the continuation labeled \(l_1\) is invoked in
\eqref{eq:evalM2}, it is kept invalid throughout the evaluation.
In \(\ac\), by contrast, validity is \emph{alternating}: each time coroutine
\(m\) is suspended and later resumed, \(\theta(m)\) alternates between a value
(the coroutine \(m\) is suspended, hence valid) and \(\nil\) (the coroutine is
running, hence invalid).

Consequently, since both \(l_1\) and \(l_2\) are mapped to \(m\), the generation
of \(l_2\) corresponds to the suspension of \(m\), and the second invocation of
\(l_1\) corresponds to its resumption.
This resumption is legitimate in \(\ac\), as the coroutine \(m\) is suspended.
Interpreted back in \(\del\)'s terms, however, this means that \(l_1\) is
illegitimately \emph{reactivated} by the creation of \(l_2\).
This is the reason why the na\"{i}ve translation fails to preserve the
semantics.

We remark that it would be harmless if the captured continuation labeled \(l_2\)
were consumed (or, equivalently, the thunk returned by \textbf{yield} in
\eqref{eq:evalmtM1} eventually resumed $m$), as that would invalidate $m$ again.
In our counterexample, however, this is not the case.

This phenomenon had been overlooked for years.
In fact, it was folklore that a simple macro-translation from $\del$ to $\ac$
should exist.
For example, Kawahara and Kameyama proposed such a translation from one-shot
effect handlers to asymmetric coroutines, which has been implemented in Lua, Go,
and several other languages~\cite{kawahara2020one}.
Our analysis, however, reveals that these simple translations fail to preserve
semantics and therefore cannot be macro-translations.
The problem becomes apparent when we examine a dollar term that contains more
than one occurrence of shift0.\footnote{It can be shown that Kawahara and
  Kameyama's translation fails to preserve semantics by considering a similar
  counterexample.}

\subsection{Refined translation from \texorpdfstring{$\del$}{DELone} to
  \texorpdfstring{$\ac$}{AC}}
\label{sec:DELoneToAC:refined-translation}

Behind the failure analyzed in the previous section lie two gaps to be
addressed:
\begin{enumerate}
\item Each shift0 invocation creates a fresh continuation label, whereas a
  coroutine label, once created, is reused across resumptions.
  A single coroutine label therefore has to serve all the continuations
  captured within one dollar term.
\item The validity of a continuation label is one-way, whereas that of a
  coroutine label is alternating.
\end{enumerate}
The na\"{i}ve translation addresses neither.

Our key idea to fill the gaps is to map each continuation label not directly to
a coroutine label, but to a \emph{reference cell} that holds both the
continuation label's validity and a coroutine label.
Specifically, a reference cell holds either \(\valid{m}\), meaning that the
continuation is still valid and is realized by the coroutine labeled \(m\), or
\(\invalid\), meaning that it has already been invoked.
A continuation is then represented by the label of a cell, while the
continuation's validity is recorded in the content of the cell.
Thus, invoking a continuation through one copy of its label invalidates all the
other copies as well.

Reference cells are not primitive in \(\ac\), but they can be implemented using
coroutines.
See the following encoding (this encoding can be extended into a
macro-translation from the calculus for reference cells to \(\ac\); see
Theorem~\ref{thm:reftoac}):
\begin{align*}
  \var{ref} &\defeq 
  \begin{array}{l}
    \thunk{\lambda \var{v}.\;\create{\refcell{\var{v}}}}
  \end{array} \\
  \refcell{v} &\defeq
  \thunk{
  \begin{array}{@{}l@{}}
    \abs{y}{\letin{q'}{\return{\var{y}}}
    {\\\phantom{\lambda \var{y}.\;}\app{\force{\var{loop}}}{\app{\var{v}}{\var{q'}}}}}
  \end{array}} \\
  \var{loop} &\defeq \thunk{\app{\force{\thunk{\abs{x}{\app{\force{\var{th}}}{\thunk{\app{\force{\var{x}}}{\var{x}}}}}}}}{\thunk{\abs{x}{\app{\force{\var{th}}}{\thunk{\app{\force{\var{x}}}{\var{x}}}}}}}} \\
  \var{th} &\defeq
  \left\{\begin{array}{@{}l@{}}
    \lambda \var{f}. \lambda \var{s}. \lambda \var{q}.\;\mathbf{case}\;\var{q}\;\mathbf{of}\;\{\\
    \quad(\inj{Set}{\var{v}}) \mapsto \letin{q'}{\yield{\unit}}{\app{\app{\force{\var{f}}}{\var{v}}}{\var{q'}}} \\
    \quad(\inj{Get}{\underscore}) \mapsto \letin{q'}{\yield{\var{s}}}{\app{\app{\force{\var{f}}}{\var{s}}}{\var{q'}}}\}
  \end{array}\right\} \\
    \var{get} &\defeq \thunk{\abs{c}{\resume{\var{c}}{(\inj{Get}{\unit})}}} \\
    \var{set} &\defeq \thunk{\abs{c}{\abs{v}{\resume{\var{c}}{(\inj{Set}{\var{v}})}}}}
\end{align*}
Given a value \(v\), \(\var{ref}\) creates a coroutine \(\refcell{v}\) that
holds \(v\) as its content and runs an infinite loop waiting for a query \(q\).
When it is passed a query of the form \(\inj{Get}{\unit}\), it returns the
current state, continuing the loop; when passed a query of the form
\(\inj{Set}{v}\), it updates the state to \(v\), returning the unit value
\(\unit\).

\begin{figure}[tb]
  \noindent
  \begin{minipage}[t]{0.50\linewidth}
      $\begin{array}{l}
        \mt{\shift{k}{M}} \defeq \yield{\thunk{\abs{x}{\letin{k}{\force{x}}{\mt{M}}}}} \\
        \mt{\dollar{M}{x}{N}} \defeq \\
        \left(\begin{array}{l}
          \letin{z}{\mathbf{create}\\\;\;\thunk{
          \begin{array}{@{}l@{}}
            \lambda \var{\underscore}.\;\mathbf{let}\;\var{x} = \mt{M}\;\mathbf{in} \\
            \phantom{\lambda \var{\underscore}.\;}\return{\thunk{\abs{\underscore}{\mt{N}}}}
          \end{array}
          }}
          {\\\letin{zc}{\return{\thunk{\app{\force{\var{ref}}}{(\inj{Valid}{z})}}}}
          {\\\letin{res}{\resume{z}{\unit}}
          {\\\app{\force{\var{res}}}{\var{zc}}}}}
        \end{array}\right) \\
        \mt{\throw{V}{W}} \defeq \\
        \left(\begin{array}{l}
          \mathbf{let}\;\var{zc} = \app{\force{\var{get}}}{\mt{V}}\;\mathbf{in} \\
          \mathbf{case}\;\var{zc}\;\mathbf{of}\;\{\\
          \;\;\paren{\inj{Valid}{z}} \mapsto\\
          \;\;\;\;\mathbf{let}\;\underscore = \app{\app{\force{\var{set}}}{\mt{V}}}{(\inj{Invalid}{\unit})}\;\mathbf{in}\\
          \;\;\;\;\mathbf{let}\;\var{zc'} = {\return{\thunk{\app{\force{\var{ref}}}{(\inj{Valid}{z})}}}}\,\mathbf{in}\\
          \;\;\;\;\mathbf{let}\;\var{res} = \app{\app{\mathbf{resume}}{\var{z}}}{\mt{W}}\;\mathbf{in}\\
          \;\;\;\;\app{\force{\var{res}}}{zc'} \\
          \;\;\paren{\inj{Invalid}{\underscore}} \mapsto \force{\var{fail}}\\
          \} \\
        \end{array} \right) \\
      \end{array}$
    \end{minipage}
  \begin{minipage}[t]{0.46\linewidth}
   \noindent$\begin{array}{l}
        \text{where} \\
        \var{fail} \defeq \\
        \left\{ \begin{array}{l}
          \letin{z}{\create{\thunk{\abs{\underscore}{\return{\unit}}}}}
          {\\\letin{\underscore}{\resume{z}{\unit}}
          {\\\resume{z}{\unit}}}
        \end{array} \right\}\\
       \var{ref} \defeq 
                               \begin{array}{l}
                                 \thunk{\lambda \var{v}.\;\create{\refcell{\var{v}}}}
                               \end{array}\\
        \refcell{v} \defeq \\
        \thunk{
        \begin{array}{@{}l@{}}
          \abs{y}{\letin{q'}{\return{\var{y}}}
          {\\\phantom{\lambda \var{y}.\;}\app{\force{\var{loop}}}{\app{\var{v}}{\var{q'}}}}}
        \end{array}} \\
     \var{loop} \defeq \thunk{\app{\force{\thunk{\abs{x}{\app{\force{\var{th}}}{\thunk{\app{\force{\var{x}}}{\var{x}}}}}}}}{\thunk{\abs{x}{\app{\force{\var{th}}}{\thunk{\app{\force{\var{x}}}{\var{x}}}}}}}} \\
        \var{th} \defeq \\
                              \left\{\begin{array}{@{}l@{}}
                                \lambda \var{f}. \lambda \var{s}. \lambda \var{q}.\;\mathbf{case}\;\var{q}\;\mathbf{of}\;\{\\
                                \quad(\inj{Set}{\var{v}}) \mapsto \\\qquad\letin{q'}{\yield{\unit}}{\app{\app{\force{\var{f}}}{\var{v}}}{\var{q'}}} \\
                                \quad(\inj{Get}{\underscore}) \mapsto \\\qquad\letin{q'}{\yield{\var{s}}}{\app{\app{\force{\var{f}}}{\var{s}}}{\var{q'}}}\}
                              \end{array}\right\}\\
        \var{get} \defeq \thunk{\abs{c}{\resume{\var{c}}{(\inj{Get}{\unit})}}} \\
        \var{set} \defeq \thunk{\abs{c}{\abs{v}{\resume{\var{c}}{(\inj{Set}{\var{v}})}}}}
      \end{array}$
  \end{minipage}
  \caption{Refined translation from $\del$ to $\ac$}
  \label{fig:deltoac}
\end{figure}

Using reference cells, we give the refined translation in
Figure~\ref{fig:deltoac}.
In the translation of a dollar term, we introduce a thunk \(zc\) that creates a
reference cell holding the coroutine \(z\) tagged with \(\mathrm{Valid}\) and
pass it in place of the coroutine in the na\"{i}ve translation.
This thunk is forced in the body of the translation of a shift0 term to obtain
the reference cell.
In the translation of a throw term \(\mt{\throw{V}{W}}\), we pattern-match on
the content of the reference cell labeled as \(\mt{V}\).
If its content is \(\valid{m}\), then we update the content of the reference
cell to \(\invalid\), which makes every later use of \(\mt{V}\) fail; we then
resume the coroutine labeled \(m\) with \(\mt{W}\), passing it a thunk \(zc'\)
built in the same way as \(zc\).
Otherwise, we let the computation fail.

We address the first gap by building \(zc\) in the dollar term and \(zc'\) in
the throw term, each of which creates a new reference cell when forced.
Thus, the continuations captured by successive shift0 invocations are
represented by distinct reference cells, even though they all refer to the same
coroutine \(z\).
We address the second gap by never turning the content of a reference cell from
\(\invalid\) back to \(\valid{m}\).
Thus, once a reference cell drops the \(\mathrm{Valid}\) tag, it never regains
it, whereas the coroutine \(m\) keeps alternating between valid and invalid.

Now, the evaluation of \(M\) in Section~\ref{sec:DELoneToAC:naive-translation}
under the refined translation proceeds as follows:
\begin{align}
  \mt{M}
  &\equiv
    \begin{pmatrix*}[l]
      \letin{z}{\create{\thunk{\abs{\underscore}{\plug{P}{\mt{M'}}}}}}
      {\\\letin{zc}{\return{\thunk{\app{\force{\var{ref}}}{(\inj{Valid}{z})}}}}
      {\\\letin{res}{\resume{z}{\unit}}
      {\\\app{\force{\var{res}}}{\var{zc}}}}}
    \end{pmatrix*} \nonumber
  \\
  &\arrAC^{+} 
    \begin{pmatrix*}[l]
      \letin{res}{\\\quad\labeledc{m}{\plug{P}{\begin{array}{@{}l@{}}
        \mathbf{let}\;\var{j} = \\
        \quad \yield{\thunk{\abs{x_{\mathrm{1}}}{
        \begin{pmatrix*}[l]
          \seq{k_{\mathrm{1}}}{\force{\var{x_{\mathrm{1}}}}}
          {\\\seq{r_{\mathrm{1}}}{\app{\app{\mt{\mathbf{throw}}}{k_1}}{10}}
          {\\\seq{r_{\mathrm{2}}}{\app{\app{\mt{\mathbf{throw}}}{k_1}}{20}}
          {\\\return{r_{\mathrm{1}}}}}}
        \end{pmatrix*}}}}
        \;\mathbf{in} \\
        \yield{\thunk{\abs{x_{\mathrm{2}}}{\seq{k_{\mathrm{2}}}{\force{\var{x_{\mathrm{2}}}}}{\return{30}}}}}
      \end{array}}}}
      {\\\app{\force{\var{res}}}{\thunk{\app{\force{\var{ref}}}{\inj{Valid}{m}}}}}
    \end{pmatrix*}
  \\
    &\left[\text{where \(m \mapsto \nil\)}\right] \\
    &\arrAC^{+}  \app{\force{\thunk{\abs{x_{\mathrm{1}}}{
            \begin{pmatrix*}[l]
              \seq{k_{\mathrm{1}}}{\force{\var{x_{\mathrm{1}}}}}
          {\\\seq{r_{\mathrm{1}}}{\app{\app{\mt{\mathbf{throw}}}{k_1}}{10}}
          {\\\seq{r_{\mathrm{2}}}{\app{\app{\mt{\mathbf{throw}}}{k_1}}{20}}
          {\\\return{r_{\mathrm{1}}}}}}
            \end{pmatrix*}}}}}{\thunk{\app{\force{\var{ref}}}{\inj{Valid}{m}}}} \\
  &\left[\begin{array}{@{}l@{}}
                              \text{where} \\
                              m\mapsto\thunk{\begin{array}{@{}l@{}}
                                \abs{y}{\plug{P}{
                                \begin{array}{@{}l@{}}
                                  \seq{j}{\return{y}}
                                  {\\\yield{\thunk{\abs{x_{\mathrm{2}}}{\seq{k_{\mathrm{2}}}{\force{\var{x_{\mathrm{2}}}}}{\return{30}}}}}}
                                \end{array}
                                }}
                              \end{array}} \\
                            \end{array}\right] \\
 &\arrAC^{+}  \paren{\begin{array}{@{}l@{}}
      \seq{r_{\mathrm{1}}}{\app{\app{\mt{\mathbf{throw}}}{m_1}}{10}}
      {\\\seq{r_{\mathrm{2}}}{\app{\app{\mt{\mathbf{throw}}}{m_1}}{20}}
                            {\\\return{r_{\mathrm{1}}}}}
                            \end{array}} \label{eq:evalrmtM1} \\
  &\left[\text{where \(m_1 \mapsto \refcell{\inj{Valid}{m}}\)}\right] \\
    & \arrAC^{+} \paren{\begin{array}{@{}l@{}}
      \seq{r_{\mathrm{1}}}{\paren{
      \begin{array}{@{}l@{}}
        \seq{res}{\labeledc{m}{\paren{
        \plug{P}{\begin{array}{@{}l@{}}
          \seq{j}{\return{10}}
          {\\\yield{\thunk{\abs{x_{\mathrm{2}}}{\seq{k_{\mathrm{2}}}{\force{\var{x_{\mathrm{2}}}}}{\return{30}}}}}}
        \end{array}}
        }} }
        {\\\app{\force{\var{res}}}{\thunk{\app{\force{\var{ref}}}{\inj{Valid}{m}}}}}
      \end{array}
      }}
      {\\\seq{r_{\mathrm{2}}}{\app{\app{\mt{\mathbf{throw}}}{m_1}}{20}}
      {\\\return{r_{\mathrm{1}}}}}
    \end{array}}\\
    &\left[\text{where \(m \mapsto \nil, m_1 \mapsto \refcell{\inj{Invalid}{\unit}}\)}\right] \\  
    &\arrAC^{+} \paren{\begin{array}{@{}l@{}}
      \seq{r_{\mathrm{1}}}{\app{\force{\thunk{\abs{x_{\mathrm{2}}}{\seq{k_{\mathrm{2}}}{\force{\var{x_{\mathrm{2}}}}}{\return{30}}}}}}{\thunk{\app{\force{\var{ref}}}{\inj{Valid}{m}}}}}
      {\\\seq{r_{\mathrm{2}}}{\app{\app{\mt{\mathbf{throw}}}{m_1}}{20}}
                            {\\\return{r_{\mathrm{1}}}}}
    \end{array}}\\
  &\left[\text{where \(m \mapsto \thunk{\abs{y}{\plug{P}{\return{y}}}}\)} \right] \\
    &\arrAC^{+} \paren{\begin{array}{@{}l@{}}
      \seq{r_{\mathrm{1}}}{\left( \seq{k_{\mathrm{2}}}{\force{\thunk{\app{\force{\var{ref}}}{\inj{Valid}{m}}}}}{\return{30}}\right)}
      {\\\seq{r_{\mathrm{2}}}{\app{\app{\mt{\mathbf{throw}}}{m_1}}{20}}
      {\\\return{30}}}
                            \end{array}}\\
    &\arrAC^{+} \paren{\begin{array}{@{}l@{}}
      \seq{r_{\mathrm{1}}}{\left(\seq{k_{\mathrm{2}}}{\return{m_2}}{\return{30}}\right)}
      {\\\seq{r_{\mathrm{2}}}{\app{\app{\mt{\mathbf{throw}}}{m_1}}{20}}
      {\\\return{30}}}
                            \end{array}}  \label{eq:evalrmtM2}\\
    &\left[ \text{where \(m_2 \mapsto \refcell{\inj{Valid}{m}}\)} \right] \\
    &\arrAC^{+} \paren{\begin{array}{@{}l@{}}
      \seq{r_{\mathrm{1}}}{\return{30}}
      {\\\seq{r_{\mathrm{2}}}{\app{\app{\mt{\mathbf{throw}}}{m_1}}{20}}
      {\\\return{30}}}
    \end{array}} \\
    &\arrAC^{+}
      \paren{\begin{array}{@{}l@{}}
      \seq{r_{\mathrm{2}}}{\app{\app{\mt{\mathbf{throw}}}{m_1}}{20}}
        {\\\return{30}}
      \end{array}} \label{eq:evalrmtM3}\\
  &\arrAC^{+} \bot
\end{align}
Thus, the evaluation of \(\mt{M}\) results in the error state \(\bot\), as it
should.
There are two points worth noting.
First, the two continuations captured in \(\mt{M}\) are represented by distinct
cells \(m_1\) and \(m_2\), even though both are backed by the same coroutine
\(m\).
Therefore, the capture of \(m_2\) no longer reactivates \(m_1\), as it did under
the na\"{i}ve translation.
Second, the invocation of the first \(\mt{\mathbf{throw}}\) in
\eqref{eq:evalrmtM1} makes the cell \(m_1\) hold \(\invalid\), and nothing
restores it afterwards, so the second throw fails in \eqref{eq:evalrmtM3}.

\subsection{Simulation relation}
\label{sec:DELoneToAC:simulation}

To prove the correctness of the translation in Figure~\ref{fig:deltoac}, we
use \emph{simulation}.
For two transition systems $\mathscr{L}$ and $\mathscr{L}'$, a binary relation
on these systems is a simulation relation if the following condition holds:
\begin{quote}
  If $M \rightarrow_{\mathscr{L}} M'$ and $M \sim N$ hold, then there exists an
  $N'$ such that $N \rightarrow^{*}_{\mathscr{L}'} N'$ and $M' \sim N'$ hold.
\end{quote}
If such a simulation relation exists, we say $\mathscr{L}'$ \emph{simulates}
$\mathscr{L}$.
In our case, we shall construct a simulation relation $\sim$ on $\del$
configurations and $\ac$ configurations that respects the translation in
Figure~\ref{fig:deltoac}.
We then use this to show that the translation preserves the semantics.

Constructing such a relation is the main obstacle to the correctness proof.
In the refined translation, the validity of a continuation is recorded in the
corresponding reference cell, not in the coroutine that realizes it.
Therefore, a target term alone does not tell us which source configuration it
represents: our simulation relation must relate stores as well as computations.

We do so with \emph{label maps}, which associate a continuation label (e.g.,
\(l_1\) in our running example) with a reference cell that represents it (e.g.,
\(m_1\)) and a coroutine label that realizes it (e.g., \(m\)).
We also enforce some \emph{invariant conditions}: several continuation labels
may share one coroutine (e.g., \(l_1\) and \(l_2\) share \(m\)), but at most one
of them is valid, and each reference cell's content follows the validity of its
label.
Setting up reference cells and coroutines in \(\ac\) needs several steps that
have no counterpart in \(\del\).
We therefore let the \emph{core} relation relate a source configuration to a
\emph{canonical} target representative, and close it under these
\emph{administrative} reductions.

\paragraph{Label maps and invariant conditions}

First, we define the \emph{label maps}: A label map is a partial function
\( \eta : \syntacticset{L}_{\mathbf{D}} \rightharpoonup \syntacticset{L}_{\ac}
\times \syntacticset{L}_{\ac}\).
If \(\eta(l) = (mc, m)\), then \(mc\) is the reference cell that represents
\(l\), and \(m\) is the coroutine label that \(mc\) may hold as \(\valid{m}\).
We then parameterize and extend the translation \(\mtempty\) with $\eta$, by
translating the source continuation label \(l\) to the label \(mc\) of the
reference cell.
We call this extension a \emph{runtime translation} and write it as
\(\mtempty_{\eta}\).
Similarly, we extend runtime translations to translations on contexts by mapping
a hole to a hole.

Using label maps and the extended translation, we define coherence conditions,
which constrain a label map against a source store, and invariant conditions,
which relate a source store to a target one.
\begin{definition}[coherence conditions] For a \(\del\) store \(\theta\) and a
  label map \(\eta\), we define \(\Coh(\theta, \eta)\) to hold if and only if
  both of the following conditions are satisfied:
  \begin{enumerate}
  \item[(C1)] \(\dom{\eta} = \dom{\theta}\);
  \item[(C2)] \(\eta\) is injective in its first component: if
    \(\eta(l) = (mc, m)\) and \(\eta(l') = (mc, m')\), then \(l = l'\).
  \end{enumerate}
\end{definition}

\begin{definition}[invariant conditions]
  For a \(\del\) store \(\theta\), an \(\ac\) store \(\tau\), and a label
  map \(\eta\), we define \(\Inv(\theta, \tau, \eta)\) to hold if and only
  if the following conditions are satisfied.
  \begin{enumerate}
  \item[(IC1)] If \(\theta(l) = \nil\) and \(\eta(l) = (mc, m)\), then
    \(\tau(mc) = \refcell{\invalid}\);
  \item[(IC2)] If \(\theta(l) = \abs{y}{\dollar{\plug{H}{\return{y}}}{x}{M}}\)
    and \(\eta(l) = (mc, m)\), then
    \begin{enumerate}
    \item \(\tau(mc) = \refcell{\valid{m}}\),
    \item \(\tau(m) = \Cont_{\eta}(H, x, M)\), and
    \item for any \(l' \in \dom{\theta}\setminus\{l\}\) and
      \(mc' \in \syntacticset{L}_{{\ac}}\), if \(\eta(l') = (mc', m)\), then
      \(\theta(l') = \nil\),
    \end{enumerate}
  \end{enumerate}
  where
  \[
    \Cont_{\eta}(H, x, M) \defeq
    \thunk{\abs{y}{\letin{x}{\mte{\plug{H}{\return{y}}}{\eta}}{\return{\thunk{\abs{\underscore}{\mte{M}{\eta}}}}}}}.
  \]
\end{definition}

Condition (IC1) ensures that a reference cell representing an invalid
continuation has \(\invalid\).
Condition (IC2) states that a valid continuation corresponds to a valid
reference cell containing a coroutine label that realizes the continuation.
Moreover, condition (IC2-c) says that a coroutine label \(m\) realizes only one
valid source continuation label.
In other words, since the second component of \(\eta\) is not necessarily
injective, the label \(m\) may have more than one corresponding source
continuation labels, but condition (IC2-c) ensures that only one of them is
valid.

\paragraph{Core relation}

We then define the \emph{core relation} that specifies the correspondence
between source and target configurations.

\begin{definition}[core relation]
  For a label map \(\eta\), we define a relation \(\rsimue{\eta}\) on \(\del\)
  configurations and \(\ac\) configurations inductively on the structure of
  \(\del\) computations as follows.

  For every constructor \(E\) of \(\del\) among
  \begin{mathpar}
    \pcase{V}{x_1}{x_2}{M},
    \and
    \scase{V}{L_{\mathnormal{i}}}{x_i}{M_i},
    \and
    \force{V},\and \return{V},\\
    \abs{x}{M},
    \and
    \cpair{M_1}{M_2},
    \and
    \shift{k}{M},
    \and
    \throw{V}{W},
  \end{mathpar}
  we have the base clause
  \[
    \simue{\config{E}{\theta}}{\config{\mte{E}{\eta}}{\tau}}{\eta},
  \]
  for any \(\theta\) and \(\tau\).

  For other constructs of \(\del\), we have
  \begin{mathpar}
    \inferrule
    {
      \simue{\config{M_1}{\theta}}{\config{N_1}{\tau}}{\eta}
    }
    {
      \simue
      {\config{\letin{x}{M_1}{M_2}}{\theta}}
      {\config{\letin{x}{N_1}{\mte{M_2}{\eta}}}{\tau}}
      {\eta}
    },
    \\
    \inferrule
    {
      \simue{\config{M}{\theta}}{\config{N}{\tau}}{\eta}
    }
    {
      \simue
      {\config{\app{M}{V}}{\theta}}
      {\config{\app{N}{\mte{V}{\eta}}}{\tau}}
      {\eta}
    },
    \and
    \inferrule
    {
      \simue{\config{M}{\theta}}{\config{N}{\tau}}{\eta}
    }
    {
      \simue
      {\config{\prj{i}{M}}{\theta}}
      {\config{\prj{i}{N}}{\tau}}
      {\eta}
    },
  \end{mathpar}
  and
  \[
     \inferrule
    {
      \simue{\config{M_1}{\theta}}{\config{N_1}{\tau}}{\eta} \\
      \tau(m) = \nil \\
      \forall l,mc.\;\eta(l) = (mc, m) \Rightarrow \theta(l) = \nil
    }
    {
      \simue
      {\config{\dollar{M_1}{x}{M_2}}{\theta}}
      {\config{\Act_{\eta}(N_1, x, M_2, m)}{\tau}}
      {\eta}
    },
  \]
  where
  \[
    \Act_{\eta}(N, x, M, m) \defeq
    \begin{pmatrix*}[l]
      \letin{\var{res}}{\labeledc{m}{\left( \letin{x}{N}{\return{\thunk{\abs{\underscore}{\mte{M}{\eta}}}}} \right)}}{\\\app{\force{\var{res}}}{\thunk{\app{\force{\var{ref}}}{(\valid{m})}}}}
    \end{pmatrix*}.
  \]
\end{definition}

We defined the core relation on configurations, not on computations.
This is because, in the clause for the dollar term, the target configuration has
an active coroutine \(m\) in it, which requires two additional conditions in the
premise.
These premises are complementary to the invariant conditions.
The invariant conditions are concerned with continuations to be invoked---thus
suspended coroutines.
On the other hand, these two premises are concerned with the current
continuation to be captured---thus the active coroutine \(m\).

We extend the core relation onto contexts:
\(\simuec{\config{\context{C}}{\theta}}{\config{\context{D}}{\tau}}{\eta}\),
where \(\context{C}\) is a \(\del\) evaluation context, \(\context{D}\) is an
\(\ac\) evaluation context, \(\theta\) is a \(\del\) store, and \(\tau\) is an
\(\ac\) store.
The definition of
\(\simuec{\config{\context{C}}{\theta}}{\config{\context{D}}{\tau}}{\eta}\) has
clauses similar to the last four in the definition of \(\rsimue{\eta}\), plus a
clause for a hole:
\(\simuec{\config{\hole}{\theta}}{\config{\hole}{\tau}}{\eta}\).

Then, we put several side conditions together into a single relation:
\begin{definition}
  For \(\del\) and \(\ac\) configurations \(C = \config{M}{\theta}\) and
  \(D = \config{N}{\tau}\), we write \(\simues{C}{D}{\eta}\) if and only if all
  of the following hold:
  \begin{enumerate}
  \item \(\WF_{\del}(C)\), which means that all continuation labels occurring in
    \(M\) or in \(\theta\) are in \(\dom{\theta}\);\footnote{See \version{the
        appendix of the full version}{Appendix~\ref{app:sec:well-formedness}}
      for the precise definition.
      We remark that well-formedness is preserved by reduction.
    }
  \item \(\WF_{\ac}(D)\), which is defined similarly, and which moreover requires
    that \(\tau(m) = \nil\) for every coroutine \(m\) running in \(N\), that is,
    for every \(m\) such that \(N\) has a subterm of the form
    \(\labeledc{m}{N'}\);
  \item \(\Coh(\theta, \eta)\);
  \item \(\Inv(\theta, \tau, \eta)\);
  \item \(\simue{C}{D}{\eta}\).
  \end{enumerate}
\end{definition}

There are two points worth noting about this definition.
First, \(\simues{C}{D}{\eta}\) is parameterized by \(\eta\), which is necessary
to record the correspondence between source continuations and target coroutines
that are dynamically generated during evaluation.
The final simulation relation will be given by existentially quantifying
\(\eta\).
Second, the relation mentions only \emph{canonical} \(\ac\) configuration \(D\).
For example, the base clause for the dollar term,
\(\simue{\config{\dollar{M_1}{x}{M_2}}{\theta}}
{\config{\mte{\dollar{M_1}{x}{M_2}}{\eta}}{\tau}}{\eta}\), is not included.
This is because, by reducing the configuration on the right-hand side, we obtain
\[ {\config{\mte{\dollar{M_1}{x}{M_2}}{\eta}}{\tau}} \arrAC^+
  {\config{\Act_{\eta}(N_1, x, M_2, m)}{\tau'}},
\]
for some \(m\) and \(\tau'\), and then we can prove that
\[
  \simue
  {\config{\dollar{M_1}{x}{M_2}}{\theta}}
  {\config{Act_{\eta}(N_1, x, M_2, m)}{\tau'}}
  {\eta}.
\]
Put differently, we consider the target configurations \emph{modulo
  administrative reductions}, which leads to the following definition of the
simulation relation:

\begin{definition}[simulation relation]\label{def:deltoac-simulation}
  For \(\del\) and \(\ac\) configurations \(C\) and \(D\), define \(C \sim D\)
  if and only if either \(C = D = \bot\), or there exist a label map \(\eta\)
  and an \(\ac\) configuration \(D'\) such that
  \[
    D \arrAC^{*} D' \quad\text{and}\quad \simues{C}{D'}{\eta}.
  \]
\end{definition}

We remark that $\simu{C}{D}$ has the following properties, the latter of which
ensures that we can \emph{canonicalize} the target configuration.

\begin{lemma}\label{lem:deltoac:properties}
  \begin{enumerate}
  \item For every \(\del\) program \(M\),
    \(\simu{\config{M}{\emptyset}}{\config{\mt{M}}{\emptyset}}\) holds.
    
  \item Assume \(\WF_{\del}(\config{M}{\theta})\),
    \(\WF_{\ac}(\config{\mte{M}{\eta}}{\tau})\), \(\Coh(\theta, \eta)\), and
    \(\Inv(\theta, \tau, \eta)\).
    Then, there exist an \(\ac\) computation \(N\) and an \(\ac\) store
    \(\tau'\) such that
    \( \config{\mte{M}{\eta}}{\tau} \arrAC^{*} \config{N}{\tau'} \) and
    \(\simues{\config{M}{\theta}}{\config{N}{\tau'}}{\eta}\).
  \end{enumerate}
\end{lemma}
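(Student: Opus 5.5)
The plan is to prove part (2) first and then obtain part (1) as a special case.

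For (1), take \(\theta=\tau=\emptyset\) and \(\eta=\emptyset\). A \(\del\) program has no continuation labels, so \(\WF_{\del}(\config{M}{\emptyset})\) holds. Likewise \(\mt{M}\) has no coroutine labels and no labeled subterms, so \(\WF_{\ac}(\config{\mt{M}}{\emptyset})\) holds. The conditions \(\Coh\) and \(\Inv\) are vacuous on empty stores and maps, and \(\mte{M}{\emptyset}=\mt{M}\). Part (2) then yields \(N,\tau'\) with \(\config{\mt{M}}{\emptyset}\arrAC^{*}\config{N}{\tau'}\) and \(\simues{\config{M}{\emptyset}}{\config{N}{\tau'}}{\emptyset}\). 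This is exactly \(\simu{\config{M}{\emptyset}}{\config{\mt{M}}{\emptyset}}\) by Definition~\ref{def:deltoac-simulation}.

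For (2), I will prove a strengthened statement by induction on the structure of \(M\), quantifying over all \(\theta,\tau\) satisfying the hypotheses. Alongside the conclusion, the induction maintains two extra facts: \(\tau'\) extends \(\tau\) only with fresh labels, and \(\tau'\) agrees with \(\tau\) on \(\dom{\tau}\) except that it may map fresh coroutines to \(\nil\). The cases go as follows.

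\begin{itemize}
\item The base constructors (case, force, return, abstraction, computation pairing, shift0, throw) are immediate. Take zero steps, so \(N=\mte{M}{\eta}\) and \(\tau'=\tau\); the base clause gives \(\rsimue{\eta}\).
\item For \(\letin{x}{M_1}{M_2}\), \(\app{M}{V}\) and \(\prj{i}{M}\), the translation is homomorphic and the hole sits in an evaluation frame. Administrative steps of \(\mte{M_1}{\eta}\) from the induction hypothesis therefore lift through the frame, and the congruence clause of the core relation applies. Well-formedness of the subterm follows from that of the whole.
\item The dollar case \(\dollar{M_1}{x}{M_2}\) carries the real content. Reduce \(\mte{\dollar{M_1}{x}{M_2}}{\eta}\) in the following steps. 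First, \(\mathbf{create}\) allocates a fresh \(m\notin\dom{\tau}\), bound to \(\thunk{\abs{\underscore}{\letin{x}{\mte{M_1}{\eta}}{\cdots}}}\). Second, the let steps substitute \(z:=m\) and \(zc:=\thunk{\app{\force{\var{ref}}}{(\valid{m})}}\). Third, \(\mathbf{resume}\) sets \(\tau_1(m)=\nil\) and unfolds by \((U)\) and \((\to)\) to \(\labeledc{m}{(\letin{x}{\mte{M_1}{\eta}}{\return{\thunk{\abs{\underscore}{\mte{M_2}{\eta}}}}})}\). The result is exactly \(\Act_{\eta}(\mte{M_1}{\eta},x,M_2,m)\) over \(\tau_1\).
\item Still in the dollar case, check the hypotheses in order to apply the induction hypothesis to \(M_1\) under \(\tau_1\). \(\WF_{\ac}\) holds because \(m\) is running and \(\tau_1(m)=\nil\). \(\Inv(\theta,\tau_1,\eta)\) survives because \(m\) is fresh: it is neither a cell nor a coroutine mentioned by \(\eta\), using that \(\WF_{\ac}\) of the original configuration ensures \(\im{\eta}\)-labels lie in \(\dom{\tau}\). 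Note that \(\WF\) as stated constrains labels occurring in terms, so I may need to extract this from \(\Inv\) together with (C1) instead.
\item The induction hypothesis gives \(N_1,\tau_2\) with \(\tau_2(m)=\nil\) still, by the extra invariant. The two side premises of the dollar clause then hold. First, \(\tau_2(m)=\nil\). Second, no \(l\) has \(\eta(l)=(mc,m)\) since \(m\notin\dom{\tau}\). Lifting the \(M_1\)-reductions under the \(\Act\) frame concludes the case.
\end{itemize}

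The main obstacle is the bookkeeping in the dollar case. One must ensure the freshly created coroutine \(m\) is disjoint from every label in the image of \(\eta\), so that \(\Inv\), (IC2-c), and the premise \(\forall l,mc.\;\eta(l)=(mc,m)\Rightarrow\theta(l)=\nil\) are not disturbed. I would handle this by first proving a small auxiliary fact: under \(\Coh\), \(\Inv\) and \(\WF\), every label in \(\im{\eta}\) lies in \(\dom{\tau}\).
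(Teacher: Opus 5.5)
Your route is the paper's. Part (2) is its canonicalization proposition, proved by induction on \(M\): zero steps for the base constructors, lifting through let, application and projection frames, and, in the dollar case, reducing to \(\Act_{\eta}(\mte{M_1}{\eta},x,M_2,m)\) over \(\tau[m:=\nil]\) before applying the induction hypothesis to \(M_1\). Part (1) is then the corollary with empty stores and the empty label map. The one variation is how you obtain \(\tau''(m)=\nil\). You carry an explicit invariant: canonicalization never changes \(\tau\) on \(\dom{\tau}\) and only adds fresh labels mapped to \(\nil\). The paper instead uses preservation of \(\WF_{\ac}\) together with the clause forcing running coroutines to \(\nil\). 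Both work.

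The step that fails is your auxiliary fact that every label in \(\im{\eta}\) lies in \(\dom{\tau}\); it does not follow from \(\Coh\), \(\Inv\) and well-formedness.
\begin{itemize}
\item Via (C1), conditions (IC1) and (IC2) put into \(\dom{\tau}\) only two kinds of labels: the first component of every \(\eta(l)\), and the coroutine component of those \(l\) whose entry has the shape \(\abs{y}{\dollar{\plug{H}{\return{y}}}{x}{M}}\).
\item The coroutine of an invalidated label is unconstrained, and a non-\(\nil\) entry of any other shape constrains nothing.
\end{itemize}
The dollar case actually needs something weaker about the fresh \(m\):
\begin{itemize}
\item \(m\) is not a first component, so (IC1) and (IC2-a) survive;
\item \(m\) is not the coroutine of a continuation-shaped valid label, so (IC2-b) survives;
\item every \(l\) with \(\eta(l)=(mc,m)\) has \(\theta(l)=\nil\). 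This is automatic for invalid \(l\) and forced by (IC2-b) for continuation-shaped valid \(l\).
\end{itemize}

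The remaining case is a defect of the statement itself. Take \(\theta=\{l\mapsto\return{\unit}\}\), \(\tau=\emptyset\), \(\eta(l)=(a,c)\) with \(c\) the least coroutine label, and \(M=\dollar{\return{\unit}}{x}{\return{x}}\). All hypotheses hold, and \(\mathbf{create}\) allocates \(m=c\). Hence no configuration reachable from \(\config{\mte{M}{\eta}}{\tau}\) is related to \(\config{M}{\theta}\) by \(\rsimues{\eta}\): the only ones of \(\Act\) shape use \(m=c\) and violate the dollar side condition.

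So you must add an invariant. Two options work: require that both components of every \(\eta(l)\) lie in \(\dom{\tau}\), or require that every non-\(\nil\) entry of \(\theta\) has continuation shape. Either holds initially and is preserved by the simulation. The paper's remark that \(m\) ``is fresh and thus occurs in no value of \(\eta\)'' silently relies on the same missing invariant.
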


\subsection{Correctness of the refined translation}
\label{sec:DELoneToAC:correctness}

We prove that our refined translation and the simulation relation form a
simulation in three steps.
First, we prove the simulation of beta-reduction
(Proposition~\ref{prop:beta-sim}), then its lift onto the general reduction
(Proposition~\ref{prop:beta-sim-lift}), and finally the simulation of the
general reduction (Theorem~\ref{thm:sim}).
The preservation of semantics follows from it as a corollary.
We shall give all the key propositions and proof sketches for important cases.
The full proof appears in \version{the appendix of the full
  version}{Appendix~\ref{app:sec:DELoneToAC:proof}}.

\begin{proposition}\label{prop:beta-sim}
  Assume \(\simues{\config{M}{\theta}}{\config{N}{\tau}}{\eta}\).
  Then:
  \begin{enumerate}
  \item If \(\redbetaD{\config{M}{\theta}}{\bot}\), then
    \[
      \redACplus{\config{N}{\tau}}{\bot}.
    \]
    \label{prop:beta-sim:1}

  \item If \(\redbetaD{\config{M}{\theta}}{\config{M'}{\theta'}}\) then there exist \(N'\), \(\tau'\), \(\eta'\) such that
    \[
      \redACplus{\config{N}{\tau}}{\config{N'}{\tau'}}
      \quad\text{and}\quad
      \simues{\config{M'}{\theta'}}{\config{N'}{\tau'}}{\eta'}.
    \]
    \label{prop:beta-sim:2}
  \end{enumerate}
\end{proposition}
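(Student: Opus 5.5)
The plan is a case analysis on the beta rule used for \(\config{M}{\theta}\), the redex \(M\) itself. Since \(M\) is the whole redex and \(\simue{\config{M}{\theta}}{\config{N}{\tau}}{\eta}\) holds, inverting the core relation fixes the shape of \(N\). For every rule except \((\mathrm{ret})\) and \((\mathrm{shift})\), \(M\) is built by a base-clause constructor, or is a \(\mathbf{let}\), application or projection whose immediate subterm is a returner, abstraction or pair. In those cases \(N\) is essentially \(\mte{M}{\eta}\): the base clause applies to the inner subterm, and the frame clauses put \(\mte{\cdot}{\eta}\) on the remaining parts. For \((\mathrm{ret})\) and \((\mathrm{shift})\), \(M\) is a dollar term, so \(N = \Act_{\eta}(N_1,x,M_2,m)\) with \(\tau(m)=\nil\), where \(N_1\) is related to the body.

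I will use two auxiliary lemmas, each proved by routine induction:
\begin{itemize}
\item substitution commutes with the runtime translation, \(\mte{M[V/x]}{\eta} = \mte{M}{\eta}[\mte{V}{\eta}/x]\);
\item inversion for pure contexts: if \(\simue{\config{\plug{H}{E}}{\theta}}{\config{N}{\tau}}{\eta}\) with \(E\) a base construct, then \(N = \plug{H'}{\mte{E}{\eta}}\), where \(H'\) is a pure context with \(\simuec{\config{\context{H}}{\theta}}{\config{\context{H'}}{\tau}}{\eta}\). Moreover \(H'\) equals \(\mte{\context{H}}{\eta}\) up to replacing inner subterms by related canonical forms; for pure \(H\) these inner subterms are just \(\mte{\cdot}{\eta}\).
\end{itemize}

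The \((\mam)\) cases are then immediate. The target takes the same beta step, and the result is \(\mte{M'}{\eta}\) or a frame around it. I then apply Lemma~\ref{lem:deltoac:properties}(2) to canonicalize, with \(\eta'=\eta\) and the stores unchanged, so the invariants carry over.

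For \((\mathrm{ret})\), \(N_1\) must be \(\return{\mte{V}{\eta}}\). The target pops \(\labeledc{m}{\cdot}\), runs the \(\mathbf{let}\), and forces the returned thunk, reaching \(\mte{M_2}{\eta}[\mte{V}{\eta}/x]\). The extra \(zc\) argument is discarded, and I canonicalize again with Lemma~\ref{lem:deltoac:properties}(2).

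For \((\mathrm{shift})\), the pure-context lemma gives \(N_1 = \plug{H'}{\yield{\thunk{\abs{x'}{\letin{k}{\force{x'}}{\mte{M}{\eta}}}}}}\). The target proceeds in four steps:
\begin{itemize}
\item \((\mathrm{yield})\) fires, storing \(\tau(m) = \Cont_{\eta}(H,x,M_2)\);
\item the thunk is applied to \(\thunk{\app{\force{\var{ref}}}{(\valid{m})}}\);
\item forcing it creates a fresh cell \(mc\) with content \(\refcell{\valid{m}}\);
\item \(k\) is bound to \(mc\).
\end{itemize}
I set \(\eta' = \eta[l\mapsto(mc,m)]\). (C1) holds because \(l\) is fresh; (C2) holds because \(mc\) is fresh. (IC2-a) and (IC2-b) hold by construction. (IC2-c) comes exactly from the dollar-clause premise that every \(l'\) with second component \(m\) is invalid. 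Extending \(\eta\) does not change translations of terms whose labels lie in \(\dom{\eta}\), by well-formedness.

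For \((\mathrm{throw})\) and \((\mathrm{fail})\), \(N = \mte{\throw{l}{V}}{\eta}\) with \(\eta(l)=(mc,m)\). I step through the reference-cell protocol: \(\var{get}\) resumes \(mc\), runs one \(\var{loop}\) iteration, and yields the content back, restoring \(\tau(mc)\) to a \(\refcell{\cdot}\) form. This needs a small lemma characterising these administrative steps for \(\var{get}\) and \(\var{set}\).

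In the fail case, (IC1) gives \(\invalid\), so the target forces \(\var{fail}\), which resumes a finished coroutine and reaches \(\bot\).

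In the valid case, (IC2-a) gives \(\valid{m}\). The target then does three things:
\begin{itemize}
\item it sets \(mc\) to \(\invalid\), which establishes (IC1) for \(l\) now that \(\theta'(l)=\nil\);
\item it builds \(zc'\);
\item it resumes \(m\); (IC2-b) gives \(\labeledc{m}{(\letin{x}{\mte{\plug{H}{\return{V}}}{\eta}}{\ldots})}\), and we have \(\tau'(m)=\nil\).
\end{itemize}
Canonicalizing the inner \(\mte{\plug{H}{\return{V}}}{\eta}\) yields exactly \(\Act_{\eta}(\cdot,x,M_2,m)\) wrapped in the outer \(\var{res}\) frame. Strictly, the passed thunk is \(zc'\) rather than the literal in \(\Act\), but the two are syntactically identical. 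The dollar premise \(\forall l'.\;\eta(l')=(\cdot,m)\Rightarrow\theta'(l')=\nil\) follows from (IC2-c) together with \(\theta'(l)=\nil\).

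I expect the main obstacle to be the \((\mathrm{throw})\) case. It requires the precise administrative lemma for the \(\var{ref}\)/\(\var{get}\)/\(\var{set}\) loop, including unrolling the fixed-point combinator in \(\var{loop}\) to show the cell returns to the shape \(\refcell{\cdot}\). It also requires checking that all of \(\Inv\), \(\Coh\), and \(\WF_{\ac}\) survive. In particular, \(m\) becomes running and \(\WF_{\ac}\) needs \(\tau'(m)=\nil\), while no other label's invariant refers to \(m\) as valid. I will prove the reference-cell lemma first, as a standalone statement about \(\var{get}\) and \(\var{set}\) on a store containing \(\refcell{v}\).
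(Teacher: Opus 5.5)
Your proposal is correct and follows essentially the same route as the paper. The paper also does a case analysis on the beta rule and inverts the core relation; where you state a pure-context inversion lemma, it combines Lemmas~\ref{lem:decomp} and~\ref{lem:pure}. It uses the reference-cell calculation of Lemma~\ref{lem:calc}, canonicalizes via Lemma~\ref{lem:deltoac:properties}(2) after each step, and discharges (IC2-c) and the dollar-clause side conditions exactly as you describe.

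Two small steps should be made explicit. In the \((\mathrm{shift})\) case you must also canonicalize \(\mte{M_1[l/k]}{\eta'}\), where \(M_1\) is the shift0 body, to obtain \(\rsimues{\eta'}\). In the \((\mathrm{throw})\) case you need \(\tau''(m)=\nil\) to still hold after canonicalizing inside the active coroutine \(m\); the paper derives this from preservation of \(\WF_{\ac}\) (Proposition~\ref{prop:ac-wellformedness}).
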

\begin{proof}
  By case analysis on the beta-reduction rule used in the reduction of \(\config{M}{\theta}\).
  We write out the cases for \((\mathrm{throw})\) and \((\mathrm{fail})\).

  \noindent \textbf{Case} $(\mathrm{throw})$:
  \begin{caseindent}
    Suppose \(M = \throw{l}{V}\) and
    \(\theta(l) = \abs{y}{\dollar{\plug{H}{\return{y}}}{x}{M_0}}\).
    Then
    \[
      \config{M}{\theta} \arrD
      \config{\dollar{\plug{H}{\return{V}}}{x}{M_0}}{\theta'}, \qquad \theta'
      \defeq \theta[l := \nil].
    \]
    The last rule used to derive
    \(\simues{\config{M}{\theta}}{\config{N}{\tau}}{\eta}\) is the base
    clause for \(\mathbf{throw}\).
    Hence
    \(N \equiv \mte{\throw{l}{V}}{\eta}\).
    By (C1), there are target labels \(mc\) and \(m\) such that
    \(\eta(l)=(mc,m)\).
    By (IC2),
    \[
      \tau(mc)=\refcell{\valid{m}},
      \qquad
      \tau(m)=\Cont_{\eta}(H, x, M_0),
    \]
    and for any continuation label \(l'\) and coroutine label \(\var{mc'}\), if \(l'\neq l\) and
    \(\eta(l')=(mc',m)\), then \(\theta(l')=\nil\).
    Put
    \[
      \tau' \defeq \tau[mc := \refcell{\invalid},\; m := \nil].
    \]
    By calculation, we have
    \[
      \config{\mte{\throw{l}{V}}{\eta}}{\tau} \arrAC^{+}
      \config{\Act_{\eta}(\mte{\plug{H}{\return{V}}}{\eta},x,M_0,m)}{\tau'}.
    \]

    We next check the hypotheses of Lemma~\ref{lem:deltoac:properties}(2) for
    \(\plug{H}{\return{V}}\) as \(M\) with \(\theta'\), \(\tau'\) and \(\eta\).
    \(\WF_{\del}(\ldots)\) and \(\WF_{\ac}(\ldots)\) are immediate.
    \(\Coh(\theta', \eta)\) follows from \(\Coh(\theta, \eta)\) because
    \(\dom{\theta'}=\dom{\theta}\).
    To prove \(\Inv(\theta',\tau',\eta)\), we show (IC1) as follows:
    The label \(l\) is invalid in \(\theta'\) and \(mc\) has the invalid tag in
    \(\tau'\).
    All other invalid continuation labels in \(\theta'\) are invalid in
    \(\theta\), and mapped to invalid cells by \(\eta\) and (IC1) for
    \(\theta\).
    Thus (IC1) holds.
    We then show (IC2) as follows:
    If \(l'\) is valid in \(\theta'\), then \(l' \neq l\), and by (C2) of
    \(\Coh(\theta', \eta)\), the first component of \(\eta(l')\) is not
    \(\var{mc}\).
    Moreover, the coroutine realizing \(l'\) cannot be \(m\): if so, (IC2-c) of
    \(\Inv(\theta, \tau, \eta)\) would imply \(\theta(l')=\nil\).
    Thus the store contents relevant to \(l'\) are unchanged, and (IC2) follows.

    Hence, we can apply Lemma~\ref{lem:deltoac:properties}(2) for
    \(M = \plug{H}{\return{V}}\) with \(\theta'\), \(\tau'\), and \(\eta\), to
    obtain
    \[
      \config{\mte{\plug{H}{\return{V}}}{\eta}}{\tau'} \arrAC^{*} \config{N'}{\tau''}
      \quad\text{and}\quad
      \simues{\config{\plug{H}{\return{V}}}{\theta'}}{\config{N'}{\tau''}}{\eta},
    \]
    for some \(\ac\) computation \(N'\) and \(\ac\) store \(\tau''\).
    This reduction takes place in the body of the active coroutine \(m\), so it
    lifts to
    \[
      \config{\Act_{\eta}(\mte{\plug{H}{\return{V}}}{\eta}, x, M_0, m)}{\tau'}
      \arrAC^{*}
      \config{\Act_{\eta}(N', x, M_0, m)}{\tau''},
    \]
    and hence
    \(\redACplus{\config{N}{\tau}}{\config{\Act_{\eta}(N', x, M_0, m)}{\tau''}}\).

    It remains to show
    \(\simues{\config{\dollar{\plug{H}{\return{V}}}{x}{M_0}}{\theta'}}{\config{\Act_{\eta}(N',
        x, M_0, m)}{\tau''}}{\eta}\).
    The two well-formedness conditions, \(\Coh(\theta',\eta)\), and
    \(\Inv(\theta',\tau'',\eta)\) are immediately derived from the relation
    \(\rsimues{\eta}\) obtained above.
    For the core relation, we apply the dollar-term rule of the core relation to
    \(\simue{\config{\plug{H}{\return{V}}}{\theta'}}{\config{N'}{\tau''}}{\eta}\),
    so it suffices to check its two side conditions
    \[
      \tau''(m) = \nil
      \quad\text{and}\quad
      \forall l',mc'.\;\eta(l') = (mc', m) \Rightarrow \theta'(l') = \nil.
    \]
    The former holds because the coroutine \(m\) keeps running throughout the
    lifted reduction above, and \(\WF_{\ac}\), which is preserved by reduction,
    forces \(\tau''(m)\) to be \(\nil\) (see \version{the appendix of the full
      version}{the proof of Proposition~\ref{app:prop:beta-sim} in the
      appendix} for the details).
    As for the latter, (IC2-c) of \(\Inv(\theta, \tau, \eta)\) tells us that the
    label \(l\) is the only valid continuation label defined in \(\theta\) that
    is associated with the coroutine \(m\).
    This label is no longer valid in \(\theta'\), which implies the condition
    above.
  \end{caseindent}

  \noindent \textbf{Case} $(\mathrm{fail})$:
  \begin{caseindent}
    Suppose \(M=\throw{l}{V}\) and \(\theta(l)=\nil\).
    Then \(\redbetaD{\config{M}{\theta}}{\bot}\).
    The last rule used to derive
    \(\simues{\config{M}{\theta}}{\config{N}{\tau}}{\eta}\) is the base clause
    for \(\mathbf{throw}\), so \(N\equiv\mte{\throw{l}{V}}{\eta}\).
    By (C1), \(\eta(l)=(mc,m)\) is defined.
    Since \(\theta(l)=\nil\), (IC1) gives
    \[
      \tau(mc)=\refcell{\invalid}.
    \]
    Therefore, by calculation, we obtain
    \[
      \config{\mte{\throw{l}{V}}{\eta}}{\tau}
      \arrAC^{+}
      \config{\force{\var{fail}}}{\tau}
      \arrAC^{+}
      \bot.
    \]
  \end{caseindent}
  This concludes the proof.
\end{proof}

\begin{proposition}\label{prop:beta-sim-lift}
  Suppose
  \(\simues{\config{M}{\theta}}{\config{N}{\tau}}{\eta}\),
  \(\simuec{\config{\context{C}}{\theta}}{\config{\context{D}}{\tau}}{\eta}\),
  \(\WF_{\del}(\config{\plug{C}{M}}{\theta})\),
  and
  \(\WF_{\ac}(\config{\plug{D}{N}}{\tau})\).
  \begin{enumerate}
  \item If \(\redbetaD{\config{M}{\theta}}{\bot}\), then
    \[
      \redACplus{\config{\plug{D}{N}}{\tau}}{\bot}.
    \]
    \label{prop:beta-sim-lift:1}

  \item If \(\redbetaD{\config{M}{\theta}}{\config{M'}{\theta'}}\) then there exist \(N'\), \(\tau'\), \(\eta'\) such that
    \begin{mathpar}
      \redACplus{\config{N}{\tau}}{\config{N'}{\tau'}},
      \and
      \simues{\config{\plug{C}{M'}}{\theta'}}{\config{\plug{D}{N'}}{\tau'}}{\eta'},
      \and
      \simuec{\config{\context{C}}{\theta'}}{\config{\context{D}}{\tau'}}{\eta'}.
    \end{mathpar}
    \label{prop:beta-sim-lift:2}
  \end{enumerate}
\end{proposition}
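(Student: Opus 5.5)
We prove both parts by combining Proposition~\ref{prop:beta-sim} with the closure of \(\arrAC\) under evaluation contexts. To make the context side go through, we also prove a monotonicity property of \(\rsimuec{\eta}\) and a plugging property.

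\textbf{Part~(\ref{prop:beta-sim-lift:1}).} Proposition~\ref{prop:beta-sim}(\ref{prop:beta-sim:1}) gives \(\redACplus{\config{N}{\tau}}{\bot}\). Write this sequence as steps \(\config{N_i}{\tau_i}\arrAC\config{N_{i+1}}{\tau_{i+1}}\) ending in a step to \(\bot\). Each step is a beta step inside some evaluation context. Since \(\context{D}\) is itself an \(\ac\) evaluation context, \(\plug{D}{\plug{C'}{-}}\) is again one. So every intermediate step lifts to \(\config{\plug{D}{N_i}}{\tau_i}\arrAC\config{\plug{D}{N_{i+1}}}{\tau_{i+1}}\), and the final failing step lifts by the second reduction rule of Figure~\ref{fig:ac_beta}.

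\textbf{Part~(\ref{prop:beta-sim-lift:2}).} Proposition~\ref{prop:beta-sim}(\ref{prop:beta-sim:2}) gives \(N',\tau',\eta'\) with \(\redACplus{\config{N}{\tau}}{\config{N'}{\tau'}}\) and \(\simues{\config{M'}{\theta'}}{\config{N'}{\tau'}}{\eta'}\). Inspecting the cases of that proof, we record three facts:
\begin{enumerate}
\item \(\eta'\) extends \(\eta\), with new labels only in the (shift) case;
\item \(\theta'\) extends \(\theta\) on domains, and \(\nil\) entries stay \(\nil\);
\item \(\tau'\) differs from \(\tau\) only at fresh labels and at coroutines or cells created or touched inside the redex.
\end{enumerate}
The reduction of \(N\) lifts under \(\context{D}\) exactly as in Part~(\ref{prop:beta-sim-lift:1}), and \(\WF_{\ac}\) is preserved along it. We then prove two auxiliary facts.
\begin{itemize}
\item[(a)] Monotonicity: \(\simuec{\config{\context{C}}{\theta}}{\config{\context{D}}{\tau}}{\eta}\) implies \(\simuec{\config{\context{C}}{\theta'}}{\config{\context{D}}{\tau'}}{\eta'}\).
\item[(b)] Plugging: \(\simue{\config{M'}{\theta'}}{\config{N'}{\tau'}}{\eta'}\) together with (a) yields \(\simue{\config{\plug{C}{M'}}{\theta'}}{\config{\plug{D}{N'}}{\tau'}}{\eta'}\).
\end{itemize}
Fact (b) is a straightforward induction on the context derivation, since each context clause mirrors a core clause. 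The remaining components of \(\rsimues{\eta'}\) follow directly. \(\Coh(\theta',\eta')\) and \(\Inv(\theta',\tau',\eta')\) depend only on stores and the label map, so they are inherited from the relation for \(M'\). The two well-formedness conditions follow because reduction preserves them.

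\textbf{Proving (a), the main obstacle.} We argue by induction on the context derivation. For the clauses for let, application and projection, the translated subterms \(\mte{M_2}{\eta}\) and \(\mte{V}{\eta}\) only mention labels in \(\dom{\theta}\), by \(\WF_{\del}(\config{\plug{C}{M}}{\theta})\). Since \(\eta'\) extends \(\eta\), these equal \(\mte{M_2}{\eta'}\) and \(\mte{V}{\eta'}\).

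The delicate case is the dollar clause for a frame of \(\context{C}\) with running coroutine \(m\). Here we must re-establish \(\tau'(m)=\nil\) and \(\forall l,mc.\;\eta'(l)=(mc,m)\Rightarrow\theta'(l)=\nil\).
\begin{itemize}
\item The first condition holds because \(m\) labels a subterm of \(\plug{D}{N'}\), and \(\WF_{\ac}\) of that configuration forces \(\tau'(m)=\nil\).
\item For the second condition, old labels \(l\in\dom{\eta}\) with \(\eta(l)=(mc,m)\) were already \(\nil\) in \(\theta\), and \(\nil\) is permanent in \(\del\).
\item A new label exists only in the (shift) case. There it is mapped to the coroutine of the innermost dollar, which lies inside the redex and is not \(m\). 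Indeed, that coroutine now holds a non-\(\nil\) continuation in \(\tau'\), whereas \(m\) is running and hence \(\tau'(m)=\nil\). So the new label cannot be mapped to \(m\).
\item In the (throw) case the resumed coroutine is likewise distinct from every running coroutine of \(\context{D}\), by the same \(\WF_{\ac}\) argument applied to \(\tau\).
\end{itemize}
This is the step where I expect the main care to be needed. What must be pinned down is that the coroutines named in \(\context{D}\) are disjoint from those the redex creates or suspends. I would try to derive this uniformly from \(\WF_{\ac}\) together with the store-change facts recorded above.
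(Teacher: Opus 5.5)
Your argument is correct, but it is organized differently from the paper's.

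\textbf{The paper's route.} The paper inducts on the derivation of $\simuec{\config{\context{C}}{\theta}}{\config{\context{D}}{\tau}}{\eta}$, with Proposition~\ref{prop:beta-sim} as the base case, and rebuilds one frame at a time. In the dollar-frame case it re-checks the two side conditions by case analysis on the source rule. For (throw), the resumed coroutine is not $m$, because otherwise the outer side condition would make the thrown label invalid. For (shift), the inner coroutine is not $m$, because otherwise $\plug{D}{N}$ would contain two active occurrences of $m$.

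\textbf{Your route and what it buys.} You apply Proposition~\ref{prop:beta-sim} once and lift the target reduction under $\context{D}$ by contextual closure. You then isolate the bookkeeping in the monotonicity lemma (a) and the plugging step (b), which is exactly Lemma~\ref{lem:plug}. This separates the simulation step cleanly from the evolution of the context relation. It also gives $\tau'(m)=\nil$ for every dollar frame uniformly, from preservation of $\WF_{\ac}$ on $\config{\plug{D}{N'}}{\tau'}$, with no case split. The only rule-specific work left is the fresh label of (shift). Your (throw) bullet is not actually needed, since that rule adds nothing to $\eta$.

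\textbf{Closing the step you flag.} The fresh label created by (shift) is valid in $\theta'$, with a continuation of the shape (IC2) requires. So (IC2-b) of $\Inv(\theta',\tau',\eta')$, which Proposition~\ref{prop:beta-sim} supplies, gives the non-$\nil$ thunk $\Cont_{\eta'}(H,x,M_2)$ at its coroutine in the final store $\tau'$, i.e.\ after canonicalization. That coroutine therefore differs from $m$, since $\tau'(m)=\nil$.

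Both routes rest on the same facts read off from the proof of Proposition~\ref{prop:beta-sim}: $\eta'$ extends $\eta$ only by fresh labels, and $\nil$ entries of $\theta$ stay $\nil$. They differ only in where the side conditions are discharged.
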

\begin{proof}
  By induction on the derivation of
  \(\simuec{\config{\context{C}}{\theta}}{\config{\context{D}}{\tau}}{\eta}\).
  The base case is Proposition~\ref{prop:beta-sim}.
  For the inductive step, the cases for \(\mam\) frames trivially hold.
  The only nontrivial case is that for a dollar frame, where we have to
  carefully check that the side conditions involving an active coroutine are
  preserved through reduction (\version{See the full version for the
    details}{See the proof of Proposition~\ref{app:prop:beta-sim-lift} in the
    appendix}).
\end{proof}

\begin{theorem}[simulation]\label{thm:sim}
  If \(\simu{C}{D}\) and \(\redD{C}{C'}\), then there exists an \(\ac\) configuration \(D'\) such that
  \[
    \redACplus{D}{D'} \quad\text{and}\quad \simu{C'}{D'}.
  \]
\end{theorem}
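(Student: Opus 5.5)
We unfold the definition of \(\sim\) (Definition~\ref{def:deltoac-simulation}) and reduce the theorem to Proposition~\ref{prop:beta-sim-lift}. The case \(C = D = \bot\) is vacuous, since \(\bot\) has no outgoing reductions. Otherwise \(C = \config{M_0}{\theta}\), and there are \(\eta\) and \(D_0\) with \(D \arrAC^{*} D_0\) and \(\simues{C}{D_0}{\eta}\). Because \(\arrAC\) is deterministic, every configuration on the path from \(D\) to \(D_0\) reduces to \(D_0\). It therefore suffices to find \(D'\) with \(D_0 \arrAC^{+} D'\) and \(\simu{C'}{D'}\): if \(D \neq D_0\), the prefix already supplies at least one step, and if \(D = D_0\), the \(+\) comes from the reduction out of \(D_0\).

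The reduction \(\redD{C}{C'}\) arises from a decomposition \(M_0 = \plug{C}{M}\) with \(\config{M}{\theta} \arrbetaD C''\). The key auxiliary step is a \emph{decomposition lemma}. If \(\simue{\config{\plug{C}{M}}{\theta}}{\config{N_0}{\tau}}{\eta}\) and \(M\) is a beta-redex, then \(N_0 = \plug{D}{N}\) for some \(\ac\) evaluation context \(\context{D}\) and computation \(N\) such that \(\simuec{\config{\context{C}}{\theta}}{\config{\context{D}}{\tau}}{\eta}\) and \(\simue{\config{M}{\theta}}{\config{N}{\tau}}{\eta}\). We prove this by induction on \(\context{C}\), inverting the core relation frame by frame. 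Let-, application- and projection-frames are mapped to the same frames. A dollar frame \(\dollar{\hole}{x}{M_2}\) is mapped to the context \(\Act_{\eta}(\hole, x, M_2, m)\). This requires checking that the hole of \(\Act\) lies in an \(\ac\) evaluation context, which it does: a let-frame sits inside a labeled frame \(\labeledc{m}{\hole}\), which in turn sits inside a let-frame. The side conditions of the dollar rule carry over verbatim to the context relation.

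To upgrade the pieces to \(\rsimues{\eta}\), note that the well-formedness, \(\Coh\) and \(\Inv\) conditions concern only the stores and the whole terms. \(\WF_{\del}\) and \(\WF_{\ac}\) of the subterm follow from those of the whole configuration, with the labeled subterms of \(N\) forming a subset of those of \(\plug{D}{N}\). This gives \(\simues{\config{M}{\theta}}{\config{N}{\tau}}{\eta}\).

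We then apply Proposition~\ref{prop:beta-sim-lift}. If \(C' = \bot\), part (1) gives \(\config{\plug{D}{N}}{\tau} \arrAC^{+} \bot\), and \(\bot \sim \bot\) holds. Otherwise part (2) yields \(N', \tau', \eta'\) with \(\config{N}{\tau} \arrAC^{+} \config{N'}{\tau'}\) and \(\simues{\config{\plug{C}{M'}}{\theta'}}{\config{\plug{D}{N'}}{\tau'}}{\eta'}\). The reduction of \(N\) lifts to \(\plug{D}{N}\) because reduction is closed under evaluation contexts; the store changes, but the context rule does not depend on it. So \(D_0 \arrAC^{+} \config{\plug{D}{N'}}{\tau'}\), and taking \(D'\) to be this configuration with zero further steps establishes \(\simu{C'}{D'}\).

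The main obstacle is the decomposition lemma, specifically the mismatch between redex positions. A \(\del\) redex under the core relation may correspond to a target term that is not itself the redex in \(\ac\). For instance, \(\plug{C}{\dollar{\return{V}}{x}{M_2}}\) with the (ret) rule is related to an \(\Act\) term, not to a dollar frame around a returner. Such cases must be arranged so that the dollar redex itself is taken as \(M\), with \(N = \Act_\eta(\ldots)\), rather than descending into the frame. We therefore stop the induction at the outermost position where \(M\) is a redex, and rely on the uniqueness of the \(\del\) decomposition to guarantee that the core-relation derivation is split at exactly that point.
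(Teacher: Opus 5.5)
Your proposal is correct and follows the paper's proof essentially step for step. You unfold $\sim$ and split the target with the context-decomposition lemma (the paper's Lemma~\ref{lem:decomp}, proved by the same frame-by-frame inversion, with dollar frames sent to $\ActCtx_{\eta}$). You then restrict the side conditions to get $\rsimues{\eta}$ on the redex, apply Proposition~\ref{prop:beta-sim-lift}, and compose with $D \arrAC^{*} D_0$. The appeal to determinism is unnecessary, since $\arrAC^{*}$ followed by $\arrAC^{+}$ is already $\arrAC^{+}$.
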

\begin{proof}
  Since \(\redD{C}{C'}\), the configuration \(C\) is not \(\bot\).
  By the definition of \(\sim\), there exist an \(\ac\) computation \(P\), an \(\ac\) store \(\tau\), and a label map \(\eta\) such that
  \[
    D \arrAC^{*} \config{P}{\tau}
    \quad\text{and}\quad
    \simues{C}{\config{P}{\tau}}{\eta}.
  \]
  Write \(C=\config{M_0}{\theta}\).
  By the definition of \(\arrD\), the source step decomposes into an evaluation
  context and a beta-reduction step.
  Thus, there exist a \(\del\) evaluation context \(\context{C}_0\) and a
  \(\del\) computation \(M\) such that \( M_0 \equiv \context{C}_0[M] \) and
  either
  \[
    \redbetaD{\config{M}{\theta}}{\bot}
    \quad\text{and}\quad
    C'=\bot,
  \]
  or there exist a \(\del\) computation \(M'\) and a \(\del\) store \(\theta'\) such that
  \[
    \redbetaD{\config{M}{\theta}}{\config{M'}{\theta'}}
    \quad\text{and}\quad
    C'=\config{\context{C}_0[M']}{\theta'}.
  \]

  From \(\simues{\config{\context{C}_0[M]}{\theta}}{\config{P}{\tau}}{\eta}\),
  we obtain in particular
  \(\simue{\config{\context{C}_0[M]}{\theta}}{\config{P}{\tau}}{\eta}\).
  By induction on \(\context{C}_0\), we can check that there exist an \(\ac\)
  evaluation context \(\context{D}_0\) and an \(\ac\) computation \(N\) such
  that \(P \equiv \context{D}_0[N]\),
  \(\simuec{\config{\context{C}_0}{\theta}}{\config{\context{D}_0}{\tau}}{\eta}\),
  and \(\simue{\config{M}{\theta}}{\config{N}{\tau}}{\eta}\).
  From the side conditions of
  \(\simues{\config{\context{C}_0[M]}{\theta}}{\config{\context{D}_0[N]}{\tau}}{\eta}\),
  we obtain \(\simues{\config{M}{\theta}}{\config{N}{\tau}}{\eta}\).

  We now apply Proposition~\ref{prop:beta-sim-lift} to
  \(\simues{\config{M}{\theta}}{\config{N}{\tau}}{\eta}\) and
  \(\simuec{\config{\context{C}_0}{\theta}}{\config{\context{D}_0}{\tau}}{\eta}
  \).
  First, suppose \( \redbetaD{\config{M}{\theta}}{\bot} \).
  Proposition~\ref{prop:beta-sim-lift} gives
  \( \config{P}{\tau} = {\config{\context{D}_0[N]}{\tau}} \arrAC^{+} {\bot}
  \).
  Combining this with \(D\arrAC^{*}\config{P}{\tau}\) yields
  \( \redACplus{D}{\bot} \).
  As \(C'=\bot\), we have \(\simu{C'}{\bot}\) by the definition of the
  simulation relation.

  Next suppose \( \redbetaD{\config{M}{\theta}}{\config{M'}{\theta'}} \).
  Proposition~\ref{prop:beta-sim-lift} yields an \(\ac\) computation \(N'\), an
  \(\ac\) store \(\tau'\), and a label map \(\eta'\) such that
  \( \redACplus{\config{N}{\tau}}{\config{N'}{\tau'}} \) and
  \( \simues {\config{\context{C}_0[M']}{\theta'}}
  {\config{\context{D}_0[N']}{\tau'}} {\eta'} \).
  By the definition of \(\arrAC\), we obtain
  \(
  \redACplus{\config{\context{D}_0[N]}{\tau}}{\config{\context{D}_0[N']}{\tau'}}
  \).
  Since \(P\equiv\context{D}_0[N]\), composing with
  \(D\arrAC^{*}\config{P}{\tau}\) gives
  \( \redACplus{D}{\config{\context{D}_0[N']}{\tau'}} \).
  Also, because \( C'=\config{\context{C}_0[M']}{\theta'} \), we obtain
  \( \simu{C'}{\config{\context{D}_0[N']}{\tau'}} \) by the definition of the
  simulation relation.
\end{proof}

Using this theorem, we prove the preservation of semantics and the strong
macro-expressibility.

\begin{theorem}[preservation of semantics]\label{thm:prev-sem}
  For every \(\del\) program \(M\), \(\eval{\del}{M}\) is defined if and only if
  \(\eval{\ac}{\mt{M}}\) is defined.
\end{theorem}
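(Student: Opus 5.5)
The plan is to prove the two directions separately, using Theorem~\ref{thm:sim} together with Lemma~\ref{lem:deltoac:properties}(1) and the determinism of \(\arrD\) and \(\arrAC\). Throughout, \(M\) is a \(\del\) program. By Lemma~\ref{lem:deltoac:properties}(1) we have \(\simu{\config{M}{\emptyset}}{\config{\mt{M}}{\emptyset}}\).

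\emph{Only if.} Suppose \(\eval{\del}{M}=V\), so that \(\config{M}{\emptyset}\arrD^{*}\config{\return{V}}{\theta}\). Applying Theorem~\ref{thm:sim} once for each step of this sequence gives an \(\ac\) configuration \(D\) with \(\config{\mt{M}}{\emptyset}\arrAC^{*} D\) and \(\simu{\config{\return{V}}{\theta}}{D}\). Since the source configuration is not \(\bot\), Definition~\ref{def:deltoac-simulation} yields \(D\arrAC^{*}\config{N}{\tau}\) with \(\simues{\config{\return{V}}{\theta}}{\config{N}{\tau}}{\eta}\) for some \(\eta\). The core relation has only one clause for \(\return{V}\), namely the base clause. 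Hence \(N\equiv\mte{\return{V}}{\eta}=\return{\mte{V}{\eta}}\), and therefore \(\eval{\ac}{\mt{M}}\) is defined.

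\emph{If.} I argue by contraposition. Suppose \(\eval{\del}{M}\) is undefined. Since \(\arrD\) is deterministic, the maximal reduction sequence from \(\config{M}{\emptyset}\) falls into one of three cases.
\begin{enumerate}
\item It reaches \(\bot\). By Theorem~\ref{thm:sim}, \(\config{\mt{M}}{\emptyset}\arrAC^{+}\bot\). Since \(\bot\) is terminal and \(\arrAC\) is deterministic, the target cannot also reach a returner.
\item It is infinite. Each application of Theorem~\ref{thm:sim} contributes at least one target step (\(\arrAC^{+}\)). The target therefore has arbitrarily long reduction sequences, and by determinism its unique maximal sequence is infinite, so it diverges.
\item It ends in a stuck configuration \(C=\config{M_0}{\theta}\) that is neither \(\bot\) nor a returner.
\end{enumerate}

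The third case is the main obstacle. By Theorem~\ref{thm:sim} and the definition of \(\sim\), the target reduces to some \(\config{P}{\tau}\) with \(\simues{C}{\config{P}{\tau}}{\eta}\). By determinism, it suffices to show that \(\config{P}{\tau}\) never reaches a returner. As in the proof of Theorem~\ref{thm:sim}, I would decompose \(M_0\equiv\context{C}_0[R]\) and \(P\equiv\context{D}_0[N]\) with \(\simuec{\config{\context{C}_0}{\theta}}{\config{\context{D}_0}{\tau}}{\eta}\), where \(R\) is the stuck non-value head. I would then classify the stuck heads.
\begin{itemize}
\item \(\mam\) mismatches, such as \(\mathbf{case}\) on a non-matching value, forcing a non-thunk, applying a non-abstraction, or projecting from a non-pair. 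These are translated homomorphically (up to the label map), so \(N\) is the same mismatch and is stuck.
\item A \(\mathbf{throw}\) on a value that is not a label. Then \(\app{\force{\var{get}}}{\mte{V}{\eta}}\) performs \(\mathbf{resume}\) on a non-label, which is stuck.
\item \(\shift{k}{M_1}\) with no dollar frame in \(\context{C}_0\). Here the key observation is an easy induction on \(\rsimuec{\eta}\): a labeled frame \(\labeledc{m}{\hole}\) appears in \(\context{D}_0\) only through the \(\Act\) clause, which corresponds to a dollar frame. Hence \(\context{D}_0\) has no labeled frame. The target then reduces through a few administrative \(\mam\) steps to \(\context{D}_0[\yield{\cdots}]\), which is stuck, since the rule \((\mathrm{yield})\) requires an enclosing labeled computation.
\end{itemize}
In every case the target ends in a stuck non-returner, so \(\eval{\ac}{\mt{M}}\) is undefined. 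This completes the contrapositive, and hence the \emph{if} direction.
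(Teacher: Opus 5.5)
Your proof is correct, and its outer structure is the paper's. The \emph{only if} direction uses the initial simulation, iterated Theorem~\ref{thm:sim}, and the fact that a source $\return{V}$ is related only by the base clause. The contrapositive splits into divergence, error, and stuckness, and you handle the first two by iterating the simulation exactly as in Lemmas~\ref{lem:divergence-preserv} and~\ref{lem:error-preserv}.

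Where you differ is the stuck case. The paper proves Lemma~\ref{lem:stuck-shape} by induction on the derivation of the core relation, carrying the head-match predicate of Definition~\ref{def:headmatch}. Irreducibility of the target is propagated outward frame by frame, and (H1)--(H4) ensure that no enclosing frame can consume the irreducible subterm. In particular, neither the \textbf{let} inside $\Act_{\eta}$ nor the $(\mathrm{yield})$ rule at its labeled frame can fire.

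You instead decompose the stuck source top-down into a context and a stuck head, transfer this via Lemma~\ref{lem:decomp}, and argue case by case. You settle the \textbf{shift0} case by the global observation that labeled frames in $\context{D}_0$ arise only from dollar frames in $\context{C}_0$, which plays the role of (H4). Your route is more direct and makes that insight explicit. However, it relies on an exhaustive classification of stuck $\del$ heads, which the paper's compositional induction never has to state.

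Your list, as written, misses two cases:
\begin{itemize}
\item a dollar frame whose body is an abstraction or lazy pair, such as $\dollar{\abs{y}{M}}{x}{N}$;
\item a bare top-level abstraction or lazy pair.
\end{itemize}
Both are easy. In the first, the target is stuck because the \textbf{let} frame inside $\Act_{\eta}$ faces the abstraction. But these are not $\mam$ mismatches, so they must be listed separately.

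Two smaller points. In the \textbf{shift0} case no administrative steps occur, since the base clause already gives $N \equiv \yield{\thunk{\ldots}}$. In the \textbf{throw} case you need $\WF_{\del}$ to exclude a label outside $\dom{\theta}$.
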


\begin{proof}
  We prove the two directions separately.

  First suppose that \(\eval{\del}{M}\) is defined.
  Then there exist a \(\del\) value \(V\) and a \(\del\) store \(\theta\) such
  that \( \redDclos{\config{M}{\emptyset}}{\config{\return{V}}{\theta}} \).
  By Lemma~\ref{lem:deltoac:properties},
  \( \simu{\config{M}{\emptyset}}{\config{\mt{M}}{\emptyset}} \).
  Repeatedly applying Theorem~\ref{thm:sim} along the source reduction sequence,
  we obtain an \(\ac\) configuration \(D\) such that
  \[
    \redACclos{\config{\mt{M}}{\emptyset}}{D}
    \quad\text{and}\quad
    \simu{\config{\return{V}}{\theta}}{D}.
  \]
  By the definition of \(\sim\), there exist an \(\ac\) computation \(N\),
  an \(\ac\) store \(\tau\), and a label map \(\eta\) such that
  \[
    D\arrAC^{*}\config{N}{\tau}
    \quad\text{and}\quad
    \simues{\config{\return{V}}{\theta}}{\config{N}{\tau}}{\eta},
  \]
  the latter of which yields
  \( \simue{\config{\return{V}}{\theta}}{\config{N}{\tau}}{\eta} \).
  By the definition of \(\rsimue{\eta}\), \(N\) equals \(\return{\mte{V}{\eta}}\).
  Hence
  \( \redACclos{\config{\mt{M}}{\emptyset}}{\config{\return{\mte{V}{\eta}}}{\tau}} \),
  and therefore \(\eval{\ac}{\mt{M}}\) is defined.

  We prove the converse direction by contraposition.
  Assume that \(\eval{\del}{M}\) is undefined.
  Since \(\arrD\) is deterministic, exactly one of the following alternatives
  holds.
  \begin{enumerate}
  \item The evaluation diverges.

  \item The evaluation reaches the error state \(\bot\):
    \( \redDplus{\config{M}{\emptyset}}{\bot} \).

  \item The evaluation reaches a stuck configuration: there exist a
    \(\del\) computation \(M'\) and a \(\del\) store \(\theta\) such that
    \( \redDclos{\config{M}{\emptyset}}{\config{M'}{\theta}} \), the
    configuration \(\config{M'}{\theta}\) is stuck, and \(M'\) is not of the
    form \(\return{V}\).
  \end{enumerate}
  We can prove that, in each case, the evaluation of \(\mt{M}\) diverges,
  reaches the error state, or gets stuck, respectively (See \version{the full
    version}{Appendix~\ref{app:sec:DELoneToAC:stuckness}} for the details).
  Therefore, in all cases, \(\eval{\ac}{\mt{M}}\) is undefined.
\end{proof}

\begin{theorem}[strong macro-expressibility]\label{thm:strong-macro} The
  translation \(M \mapsto \mt{M}\) of Figure~\ref{fig:deltoac} is a strong
  macro-translation from \(\del\) to \(\ac\).
\end{theorem}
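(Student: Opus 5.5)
We check the four clauses of Definition~\ref{def:macro-expressibility} for the translation \(\mt{\cdot}\) of Figure~\ref{fig:deltoac}. Clause (4), the only semantic one, is exactly Theorem~\ref{thm:prev-sem}. It gives both directions (``if and only if''), so the translation is a strong, not merely weak, macro-translation. Everything left is syntactic bookkeeping on the shape of the translation.

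For clause (1), we argue by induction on the structure of \(\del\) programs. A \(\del\) program contains no continuation labels, so the undefined case of \(\mt{\cdot}\) is never reached. Each clause of the translation produces an \(\ac\) computation built from \(\mam\) constructs and \(\mathbf{create}\), \(\mathbf{resume}\), \(\mathbf{yield}\), with no coroutine labels. Hence \(\mt{M}\) is an \(\ac\) program.

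For clause (2), every \(\mam\) constructor is translated homomorphically by definition. Binding constructors, regarded as families indexed by the bound variable, keep their bound variable, for example \(\mt{\abs{x}{M}} = \abs{x}{\mt{M}}\).

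For clause (3), we must exhibit for each of shift0, dollar and throw a fixed syntactic abstraction \(\mathcal{A}\) in \(\ac\) into which the translated subterms are plugged. We read these off Figure~\ref{fig:deltoac}.
\begin{itemize}
\item \(\cshift{k}\) is \(\yield{\thunk{\abs{x}{\letin{k}{\force{x}}{\hole}}}}\), one family member for each bound \(k\).
\item \(\cdollar{x}\) is the two-hole term obtained from the dollar clause by replacing \(\mt{M}\) and \(\mt{N}\) with holes.
\item The throw abstraction has holes for \(\mt{V}\) and \(\mt{W}\).
\end{itemize}
The auxiliary terms \(\var{ref}\), \(\var{get}\), \(\var{set}\), \(\var{fail}\), \(\var{loop}\) and \(\var{th}\) are closed \(\ac\) terms, so they are legitimately part of the context.

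The main obstacle is the variable convention. The internal binders \(z\), \(zc\), \(zc'\), \(\var{res}\), the variable \(x\) in shift0, and so on must not capture free variables of the plugged subterms. Plugging into holes is not capture-avoiding, so we must arrange this explicitly. Our approach is to pick these internal names outside the free variables of the plugged subterms (the freshness convention stated after Figure~\ref{fig:del_beta}), working modulo \(\alpha\)-equivalence. The hole for \(\mt{N}\) under \(\abs{\underscore}{\cdot}\) and the \(x\)-binding of \(\mt{M}\)'s result must respect the source binder \(x\). The holes also have to be placed in the correct scopes: \(\mt{M}\) sits where \(k\) is bound, and \(\mt{N}\) where \(x\) is bound. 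With this checked, \(\mt{\cdot}\) meets all four conditions and is a strong macro-translation.
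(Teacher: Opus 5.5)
Your proposal is correct and takes the same approach as the paper: you read the syntactic clauses off Figure~\ref{fig:deltoac}, and you discharge the semantic clause (4) by Theorem~\ref{thm:prev-sem}. Your extra discussion of variable capture goes beyond the paper, which leaves this point to its global freshness convention and to its deliberately informal treatment of binding in macro-translations.
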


\begin{proof}
  The translation maps every \(\del\) program to an \(\ac\) program and is
  homomorphic on the \(\mam\) constructors by definition.
  Every constructor peculiar to \(\del\) is translated by a fixed syntactic
  abstraction, as shown in Figure~\ref{fig:deltoac}.
  The preservation of semantics is shown by Theorem~\ref{thm:prev-sem}.
\end{proof}

%% file: sec4-efftodel.tex
\section{One-shot effect handlers as asymmetric coroutines}\label{sec:EFFoneToDELone}

Since Plotkin and Pretnar's proposal~\cite{plotkin2009handlers}, effect handlers
have been actively studied in recent years.
This section extends our results to effect handlers; namely, we establish the
macro-expressibility of one-shot effect handlers in terms of asymmetric
coroutines.
Since macro-expressibility is transitive, it suffices to show that one-shot
effect handlers are macro-expressible by one-shot delimited-control operators.

\subsection{The calculus for one-shot effect handlers}
\label{subsec:eff}

Figure~\ref{fig:eff_syntax} presents the syntax of $\eff$, which is a one-shot
variant of the calculus \(\mathbf{EFF}\) for effect handlers defined by Forster
et al.~\cite{forster2019expressive}.
A continuation label \(l\) (drawn from a countable set
\(\syntacticset{L}_{\mathbf{E}}\)) is a runtime representation of a one-shot
continuation, as in \(\del\).
The computation $\opcall{op}{V}$ is an operation invocation with an argument
$V$, where the operation $\op{op}$ ranges over a finite set $\syntacticset{O}$.
A handler
\(\handler{x}{M_{\mathrm{ret}}}{(\app{\op{op}_i}{\app{p_i}{k_i}} \mapsto
  M_i)_i}\) consists of the \emph{return clause}
\(\return{x} \mapsto M_{\mathrm{ret}}\), and for each operation symbol
\(\op{op}_i\), the \emph{operational clause}
\(\app{\op{op}_i}{\app{p_i}{k_i}} \mapsto M_i\).
The computation $\handle{H}{M}$ installs a handler $H$ and evaluates $M$ under
it.
The computation $\throw{V}{W}$, when evaluated, invokes the continuation
associated with $V$ with the argument \(W\) if \(V\) is a continuation label.
We define $\eff$ programs as the set of computations containing no continuation
labels.

\begin{figure}[tb]
  \setlength{\parindent}{0cm}%
  \setlength{\arraycolsep}{4pt}%
  \renewcommand{\arraystretch}{0.9}%
\begin{tabular}{rcll}
  $V,W$ & ::= & $\ldots$ &  \textsf{value}\\
& \textbar\  & $l\quad (\in\syntacticset{L}_{\mathbf{E}})$ & continuation label \\
  $M,N$ & ::= & $\ldots$ &  \textsf{computation}\\
& \textbar\  & $\opcall{op}{V}\quad(\op{op} \in \syntacticset{O})$ & operation call \\
& \textbar\  & $\handle{H}{M}$ & handle \\
& \textbar\  & $\throw{V}{W}$ & throw \\
  $H$ & ::= & $\handler{x}{M_{\mathrm{ret}}}{(\app{\op{op}_i}{\app{p_i}{k_i}} \mapsto M_i)_i}$ & \textsf{handler}\footnotemark \\
\end{tabular}\\[1em]
\begin{tabular}{rrcl}
  \textsf{pure frame} & $\mathcal{P}$ & ::= & \ldots \\
  \textsf{computational frame} & $\mathcal{F}$ & ::= & $\ldots \mid \handle{H}{\hole}$ \\
  \textsf{pure context} & $\mathcal{H}$ & ::= & \ldots \\
  \textsf{evaluation context} & $\mathcal{C}$ & ::= & \ldots \\
\end{tabular}
\caption{Syntax of $\eff$}
\label{fig:eff_syntax}
\end{figure}
\footnotetext{Note that $\var{x}$ is bound in $M_{\mathrm{ret}}$, and
  $\var{p_i}$ and $\var{k_i}$ are bound in their respective $M_i$.}

In our formulation, every handler must handle all operations in
\(\syntacticset{O}\).
This is not an essential restriction, as any handler that handles only some of
the operations in \(\syntacticset{O}\) can be rewritten\footnote{Given such a
  handler \(H\), we add a clause
  $\opcall{op}{\app{p}{k}} \mapsto \seq{r}{\opcall{op}{p}}{\throw{k}{\var{r}}}$
  to \(H\) for each unhandled operation $\op{op}$.}
to satisfy this condition.

Similarly to $\del$, captured continuations can be invoked at most once during
evaluation.
Hence, the evaluation of \(\handle{H}{(\opcall{error}{10})}\) where
\[
   H \equiv \left\{
    \begin{array}{@{}l@{}}
      \return{x} \mapsto \return{x}, \\
      \opcall{error}{\app{\var{p}}{\var{k}}} \mapsto \letin{a}{\throw{k}{1}}{\letin{b}{\throw{k}{2}}{\return{\vpair{a}{b}}}}
    \end{array}
  \right\}
\]
must result in an error.
$\eff$ also uses a store, which is a partial function
$\theta :
\syntacticset{L}_{\mathbf{E}}\;\rightharpoonup\;\mathsf{computation}\sqcup\{\nil\}$,
to enforce one-shotness.

A configuration $C$ is either a pair of an $\eff$ computation $M$ and a store $\theta$,
or the error state $\bot$.
The beta reduction rules $\rightarrow_{\mathbf{E}}^{\beta}$ on configurations
are defined in Figure~\ref{fig:eff_beta}.
\begin{figure}[tb]
  \centering
  \setlength{\extrarowheight}{10pt}
  \begin{tabular}{cl}
    $(\mam)$ & $\inferrule{\redbetaM{M}{M'}}{\redbetaEp{M}{\theta}{M'}{\theta}}$ \\
    $(\mathrm{ret})$ &
    $\inferrule
       {H \equiv \handler{x}{M_{\mathrm{ret}}}{\ldots}}
       {\redbetaEp{\handle{H}{\paren{\return{V}}}}{\theta}{M_{\mathrm{ret}}[V/x]}{\theta}}$ \\
    $(\mathrm{op})$ &
    $\inferrule
    {
      H \equiv \handler{x}{M_{\mathrm{ret}}}{(\app{\op{op}_i}{\app{p_i}{k_i}} \mapsto M_i)_i} \\
      l \notin \dom{\theta}
    }
    { 
      \config{\handle{H}{\paren{\plug{H}{\app{\op{op}_j}{V}}}}}{\theta} \\\\
      \rightarrow_{\mathbf{E}}^{\beta} \config{M_j[V/p_j, l/k_j]}{\theta[l := \abs{x}{\handle{H}{\plug{H}{\return{x}}}}]}
      }$ \\
    $(\mathrm{throw})$ &
    $\inferrule
    {
      \theta(l) = \abs{x}{\handle{H}{\plug{H}{\return{x}}}}
    }
    {
      \redbetaEp{\throw{l}{V}}{\theta}{\handle{H}{\plug{H}{\return{V}}}}{\theta[l := \nil]}
    }$ \\
    $(\mathrm{fail})$ &
    $\inferrule
    {
      \theta(l) = \nil
    }
    {
      \config{\throw{l}{V}}{\theta} \rightarrow^{\beta}_{\mathbf{E}} \bot
    }$
  \end{tabular}
  \caption{Beta reduction rules of $\eff$}
  \label{fig:eff_beta}
\end{figure}
The reduction of $\eff$ is defined as follows:
\[
  \inferrule{\redbetaEp{M}{\theta}{M'}{\theta'}}{\redEp{\plug{C}{M}}{\theta}{\plug{C}{M'}}{\theta'}} \hspace {4em} \inferrule{\redbetaE{\config{M}{\theta}}{\bot}}{\redE{\config{\plug{C}{M}}{\theta}}{\bot}}
\]
Evaluation of an $\eff$ program is defined by: $\eval{\eff}{M} \defeq V$ if
there exists a store $\theta$ such that
$\redEclos{\config{M}{\emptyset}}{\config{\return{V}}{\theta}}$.
$\Eval{\eff}$ is a well-defined partial function, since $\arrE$ is
deterministic.

\subsection{Macro-translation from \texorpdfstring{$\eff$}{EFFone} to \texorpdfstring{$\del$}{DELone}}
\label{subsec:efftodel_macro_translation}

\begin{figure}[tb]
  \centering
  \[
  \begin{array}{rll}
    \mt{\handle{H}{M}} & \defeq & \dollar{\app{\dollart{\mt{M}}{H^{\mathrm{ret}}}}{\thunk{H^{\mathrm{ops}}}}}{z}{\return{z}} \\
                       & \mathrm{where} & (\handler{x}{M_{\mathrm{ret}}}{\ldots})^{\mathrm{ret}} = x. \abs{\underscore}{\mt{M_{\mathrm{ret}}}} \\
                       & & (\handler{x}{M_{\mathrm{ret}}}{(\app{\op{op}_i}{\app{p_i}{k_i}} \mapsto M_i)_i})^{\mathrm{ops}} \\
                       & &\quad= \abs{c}{\scase{c}{\op{op}_i}{\vpair{p_i}{k_i}}{\mt{M_i}}} \\
    \mt{\opcall{op}{V}} & \defeq & \mathbf{S_0} \var{k}.\;\lambda \var{h}.\;(\mathbf{let}\;\var{y} = (\shift{k'}{\app{\force{h}}{\inj{\op{op}}{\vpair{\mt{V}}{\var{k'}}}}})\;\mathbf{in}\\
                       & & \phantom{(\mathbf{S_0} \var{k}.\;\lambda \var{h}.\;}
                           \app{\throw{k}{y}}{h}) \\
    
    \mt{\throw{V}{W}} & \defeq & \throw{\mt{V}}{\mt{W}}
  \end{array}
  \]
  \caption{Translation from $\eff$ to $\del$}
  \label{fig:efftodel}
\end{figure}

We present a translation from $\eff$ to $\del$ in Figure~\ref{fig:efftodel},
which is inspired by Forster et al.'s translation \cite{forster2019expressive}.
A handle computation \(\handle{H}{M}\) is translated using two nested dollar
terms: an outer one that delimits the whole translated computation, and an inner
one, whose additional part \(H^{\mathrm{ret}}\) is built from the return clause
of \(H\), that delimits \(\mt{M}\).
The inner dollar term, once evaluated, is applied to the thunked
\emph{dispatcher} \(\thunk{H^{\mathrm{ops}}}\) built from the operational
clauses of \(H\).
An operation invocation \(\opcall{op}{V}\) is translated to a computation that
performs two shift0's in succession.
The first captures the continuation up to the inner delimiter, binding \(k\);
the resulting abstraction is then applied to the dispatcher, binding \(h\).
The second captures the continuation \(\letin{y}{\hole}{\app{\throw{k}{y}}{h}}\)
up to the outer delimiter, binding \(k'\), and calls \(h\) with the pair, tagged
with \(\op{op}\), of the argument \(\mt{V}\) and \(k'\)---the label of this
second captured continuation.
A source throw term is translated to the target throw term homomorphically.
Specifically, when an operational clause invokes its continuation variable,
which is bound to \(k'\), the target throw term invokes \(k'\), which in turn
immediately invokes \(k\), reinstalling the dispatcher \(h\).

The use of two successive shift0's is where our translation departs from that of
Forster et al.
Adapting their translation to \(\eff\) constructs, we obtain the following:
\begin{align*}
  \mt{\handle{H}{M}} & \defeq \app{\dollart{\mt{M}}{H^{\mathrm{ret}}}}{\thunk{H^{\mathrm{ops}}}}, \\
  \mt{\opcall{op}{V}} &\defeq
                        \shift{k}{\abs{h}{\app{\force{h}}{(\inj{\op{op}}{\vpair{\mt{V}}{\thunk{\abs{y}{\app{\throw{k}{y}}{h}}}}})}}}, \\
  \mt{\throw{V}{W}} &\defeq \app{\force{\mt{V}}}{\mt{W}},
\end{align*}
where a source continuation label is represented by the thunk
\(\thunk{\abs{y}{\app{\throw{k}{y}}{h}}}\), which is a value that can be
consumed by the \((U)\) rule of \(\mam\).
This translation would still be a weak macro-translation, but not a strong one
in our setting, since a source program that forces a captured continuation gets
stuck in \(\eff\) whereas its translation does not.\footnote{For instance, take
  \(H \equiv \handler{x}{\return{x}}{\opcall{op}{\app{p}{k}} \mapsto
    \app{\force{\var{k}}}{\unit}}\).
  Then, \(\handle{H}{\opcall{op}{\unit}}\) gets stuck in \(\eff\), because the
  continuation label substituted for \(\var{k}\) is not a thunk, whereas its
  translation evaluates to \(\unit\).}
Therefore, we represent an \(\eff\) continuation label by a genuine \(\del\)
continuation label instead of a thunk.
This is the reason why we perform two successive shift0's: the second one
captures the computation that Forster et al.'s translation encapsulates as a
thunk above.

\paragraph{Simulation relation}

As in Section~\ref{sec:DELoneToAC:simulation}, we prove the correctness of the
translation by constructing a simulation relation between \(\eff\)
configurations and \(\del\) configurations.
While Forster et al.\ defined a simulation relation using their translation
directly, we must take into account labels and stores that are necessary to
capture one-shotness.
Therefore, our simulation relation must relate the stores as well as the
computations, so that it records which captured continuations in \(\del\)
correspond to a given \(\eff\) captured continuation.

As in Section~\ref{sec:DELoneToAC:simulation}, we use \emph{label maps}: here a
label map is a partial function
\(\eta : \syntacticset{L}_{\mathbf{E}} \rightharpoonup
\syntacticset{L}_{\mathbf{D}} \times \syntacticset{L}_{\mathbf{D}}\) that sends
an \(\eff\) continuation label \(l\) to a pair
\(\eta(l) = (\eta(l)_1, \eta(l)_2)\) of \(\del\) continuation labels.
Note that, in our translation, an \(\eff\) continuation label \(l\) is
represented by the label passed to the dispatcher, namely the one bound to
\(k'\).
We thus let \(\eta(l)_1\) be that label, generated by the second shift0, and
\(\eta(l)_2\) be the one bound to \(k\), generated by the first shift0.
We then parameterize and extend the translation with \(\eta\) by
\(\mte{l}{\eta} \defeq \eta(l)_1\), provided \(\eta(l)\) is defined.
As in Section~\ref{sec:DELoneToAC:simulation}, we call this extension a
\emph{runtime translation} and write it as \(\mtempty_{\eta}\).
We also extend runtime translations to translations on contexts by mapping a
hole to a hole.

Using the label map and the runtime translation, we define \emph{coherence
  conditions} and \emph{invariant conditions}.

\begin{definition}[coherence conditions]\label{def:efftodel:Coh}
  For an \(\eff\) store \(\theta\) and a label map \(\eta\), we define
  \(\Coh(\theta, \eta)\) to hold if and only if the following conditions are
  satisfied.
  \begin{enumerate}
  \item[(C1)] \(\dom{\theta} = \dom{\eta}\);
  \item[(C2)] for all \(l, l' \in \dom{\theta}\) and \(i, j \in \{1, 2\}\), if
    \(\eta(l)_i = \eta(l')_j\), then \(l = l'\) and \(i = j\).
  \end{enumerate}
\end{definition}

\begin{definition}[invariant conditions]\label{def:efftodel:IC}
  For an \(\eff\) store \(\theta\), a \(\del\) store \(\tau\), and a label map
  \(\eta\), we define \(\Inv(\theta, \tau, \eta)\) to hold if and only if the
  following conditions are satisfied for every \(l \in \dom{\theta}\).
  \begin{enumerate}
  \item[(IC1)] \(\eta(l)_1 \in \dom{\tau}\) and \(\eta(l)_2 \in \dom{\tau}\);
  \item[(IC2)] if \(\theta(l) = \nil\), then \(\tau(\eta(l)_1) = \nil\);
  \item[(IC3)] if \(\theta(l) = \abs{y}{\handle{H}{\plug{H}{\return{y}}}}\) for
    an \(\eff\) handler \(H\) and an \(\eff\) pure context \(\context{H}\), then
    \begin{align*}
      \tau(\eta(l)_1) &= \abs{x}{\dollar{\letin{y}{\return{x}}{\app{\throw{\eta(l)_2}{y}}{\thunk{H^{\mathrm{ops}}_{\eta}}}}}{z}{\return{z}}}, \\
      \tau(\eta(l)_2) &= \abs{y}{\dollart{\mte{\plug{H}{\return{y}}}{\eta}}{H^{\mathrm{ret}}_{\eta}}},
    \end{align*}
  \end{enumerate}
  where \(H^{\mathrm{ops}}_{\eta}\) (resp.\ \(H^{\mathrm{ret}}_{\eta}\)) denotes
  the translation of the operational clauses (resp.\ return clause) of \(H\)
  with respect to \(\eta\).
\end{definition}

Condition (IC2) ensures that an invalid source continuation corresponds to an
invalid target continuation, and condition (IC3) states that a valid source
continuation is realized by a pair of target continuations.

We can now define the simulation relation.

\begin{definition}[simulation relation]\label{def:efftodel:simulation}
  For an \(\eff\) configuration \(C\) and a \(\del\) configuration \(D\), we
  write \(\simu{C}{D}\) if and only if either \(C = D = \bot\), or there exist
  an \(\eff\) computation \(M\), a \(\del\) computation \(N\), an \(\eff\)
  store \(\theta\), a \(\del\) store \(\tau\), and a label map \(\eta\) such
  that
  \begin{enumerate}
  \item \(C = \config{M}{\theta}\) and \(D = \config{N}{\tau}\);
  \item \(N \equiv \mte{M}{\eta}\);
  \item \(\WF_{\eff}(C)\), which means that every continuation label occurring
    in \(M\) or in \(\theta\) must be in
    \(\dom{\theta}\);\footnote{See \version{the appendix of the full
        version}{Appendix~\ref{app:sec:efftodel:well-formedness}} for the
      precise definition.
      We remark that well-formedness is preserved by reduction.
    }
  \item \(\Coh(\theta, \eta)\) and \(\Inv(\theta, \tau, \eta)\).
  \end{enumerate}
\end{definition}

In contrast to Section~\ref{sec:DELoneToAC:simulation}, we relate a source
computation to its runtime translation directly; in other words, we need not
consider administrative reductions.
The essential difference is that, in the translation from \(\del\) to \(\ac\), a
dollar term administratively creates a coroutine, which forces us to relate only
\emph{canonical} target representatives, whereas here every \(\del\)
continuation label is created by a shift0 term that originates from an operation
invocation in the source.

\paragraph{Correctness of the translation}

We shall prove the correctness of the translation in Figure~\ref{fig:efftodel}.
We first prove that \(\simu{C}{D}\) forms a simulation.

\begin{proposition}[simulation]\label{prop:eff_to_del_sim}
  The following statements hold.
  \begin{enumerate}
  \item If $\simu{C}{D}$ and $\redbetaE{C}{C'}$, then there exists a
    $\del$ configuration $D'$ such that $\redDplus{D}{D'}$ and $\simu{C'}{D'}$.

  \item (1) can be lifted onto the general reduction: if $\simu{C}{D}$ and
    $\redE{C}{C'}$, then there exists a $\del$ configuration $D'$ such that
    $\redDplus{D}{D'}$ and $\simu{C'}{D'}$.
  \end{enumerate}
\end{proposition}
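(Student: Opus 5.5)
The plan for (1) is a case analysis on the $\eff$ beta rule, with $C=\config{M}{\theta}$ and $D=\config{\mte{M}{\eta}}{\tau}$ satisfying $\WF_{\eff}$, $\Coh(\theta,\eta)$ and $\Inv(\theta,\tau,\eta)$. Two auxiliary lemmas come first:
\begin{itemize}
\item a \emph{substitution lemma}, $\mte{M[V/x]}{\eta} \equiv \mte{M}{\eta}[\mte{V}{\eta}/x]$ (this also covers continuation labels substituted for $k_j$);
\item a \emph{context lemma}: the runtime translation sends an $\eff$ pure context $\context{H}$ to a $\del$ pure context, and an $\eff$ evaluation context to a $\del$ evaluation context. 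This holds because $\mam$ frames are translated homomorphically and a $\mathbf{handle}$ frame becomes two dollar frames plus an application frame.
\end{itemize}
A \emph{weakening lemma} is also needed: if $\eta\subseteq\eta'$, then $\mte{M}{\eta}\equiv\mte{M}{\eta'}$ for every $M$ whose labels lie in $\dom{\eta}$. This is how the invariants for old labels survive extension of the label map.

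\textbf{Case $(\mathrm{op})$.} The target is
\[
\dollar{\app{\dollart{\mte{\plug{H}{\mt{\opcall{op_j}{V}}}}{\eta}}{H^{\mathrm{ret}}_{\eta}}}{\thunk{H^{\mathrm{ops}}_{\eta}}}}{z}{\return{z}}.
\]
\begin{itemize}
\item The first shift0 captures the inner pure context up to the inner delimiter under a fresh $l_2$, giving $\tau(l_2)=\abs{y}{\dollart{\mte{\plug{H}{\return{y}}}{\eta}}{H^{\mathrm{ret}}_{\eta}}}$.
\item The $(\to)$ rule then consumes the dispatcher.
\item The second shift0 captures $\letin{y}{\hole}{\app{\throw{l_2}{y}}{\thunk{H^{\mathrm{ops}}_\eta}}}$ up to the outer delimiter under a fresh $l_1$.
\item Forcing the dispatcher and the case split reduce to $\mte{M_j}{\eta}[\mte{V}{\eta}/p_j,\,l_1/k_j]$.
\end{itemize}
Put $\eta' \defeq \eta[l:=(l_1,l_2)]$. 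By the substitution lemma this target is $\mte{M_j[V/p_j,l/k_j]}{\eta'}$. Coherence (C2) holds because $l_1,l_2$ are fresh for $\tau$, and (IC1) keeps all old $\eta$-images in $\dom{\tau}$. (IC3) for $l$ is exactly the two stored continuations. (IC3) for the old labels follows from weakening.

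\textbf{Case $(\mathrm{throw})$.} Here (C1) and (IC3) apply to $\eta(l)=(l_1,l_2)$. The target $\throw{l_1}{\mte{V}{\eta}}$ reduces by throw, then $(F)$, then throw on $l_2$, arriving at exactly $\mte{\handle{H}{\plug{H}{\return{V}}}}{\eta}$ with $l_1,l_2\mapsto\nil$. (IC2) holds for $l$. By (C2) no other source label's images are touched, so (IC2) and (IC3) for the others are preserved.

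\textbf{Case $(\mathrm{fail})$.} (IC2) gives $\tau(\eta(l)_1)=\nil$, so the target fails in one step.

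\textbf{$\mam$ rules.} These are immediate from the substitution lemma.

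\textbf{Case $(\mathrm{ret})$, the expected main obstacle.} The target reduces through the inner $(\mathrm{ret})$ and then $(\to)$ to
\[
\dollar{\mte{M_{\mathrm{ret}}[V/x]}{\eta}}{z}{\return{z}},
\]
which is \emph{not} syntactically $\mte{M_{\mathrm{ret}}[V/x]}{\eta}$. Definition~\ref{def:efftodel:simulation} demands $N\equiv\mte{M}{\eta}$, so at this point I would either check whether one extra step is available, or else weaken the relation. The weakening I would try first is to let $N$ range over $\mte{M}{\eta}$ with finitely many identity-dollar frames $\dollar{\hole}{z}{\return{z}}$ inserted at positions that are harmless, namely where no operation can be captured by them before they dissolve via $(\mathrm{ret})$. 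If that fails, I would revisit the shape of the return-clause encoding. Whichever relation is chosen must then be re-verified in the other cases. This is where I expect the real work to lie.

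\textbf{Part (2).} This is by induction on the evaluation context $\context{C}$ in $C=\config{\plug{C}{M}}{\theta}$. By the context lemma, $\mte{\plug{C}{M}}{\eta}=\plug{C'}{\mte{M}{\eta}}$ with $\context{C}'$ a $\del$ evaluation context. The target steps from (1) then take place under $\context{C}'$, and the error case propagates to $\bot$. Finally, the context itself stays related under the extended map $\eta'$ by the weakening lemma, since the labels in $\context{C}$ lie in $\dom{\eta}$ by $\WF_{\eff}$.
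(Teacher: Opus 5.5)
Your cases $(\mathrm{op})$, $(\mathrm{throw})$, $(\mathrm{fail})$, the $\mam$ rules, and part (2) match the paper's proof essentially step for step. This includes the same extension $\eta[l:=(l_1,l_2)]$, substitution and weakening, and the fact that a $\mathbf{handle}$ frame translates to dollar, application and dollar frames. The trouble is $(\mathrm{ret})$, which the paper dismisses as routine. Your observation there is correct, and the hole is in the paper's proof too, not only in yours. After $(\mathrm{ret})$ and $(\to)$ the target is $\dollar{\mte{M_{\mathrm{ret}}[V/x]}{\eta}}{z}{\return{z}}$. Already for $M_{\mathrm{ret}} \equiv \letin{a}{\return{x}}{\return{a}}$, the deterministic target reduction goes on to $\return{\mte{V}{\eta}}$ without ever reaching a term of the form $\mte{M_{\mathrm{ret}}[V/x]}{\eta'}$. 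So the proposition is false for the relation of Definition~\ref{def:efftodel:simulation}.

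Your proposed weakening cannot repair this, because the leftover identity dollar is not harmless. It is exactly the delimiter that the first $\mathbf{S_0}$ of an operation performed by the return clause will capture. After that capture, $\abs{h}{\ldots}$ sits outside the application frame that should feed it the dispatcher. Concretely, let $\syntacticset{O}=\{\op{op}\}$, $H_0 \defeq \handler{x}{\return{x}}{\opcall{op}{\app{p}{k}} \mapsto \return{p}}$, and let $H_1$ be $H_0$ with return clause $\opcall{op}{x}$. In $\eff$, $\handle{H_0}{\handle{H_1}{\return{\unit}}}$ reduces by $(\mathrm{ret})$ and then $(\mathrm{op})$ to $\return{\unit}$. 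Its translation reduces by $(\mathrm{ret})$ and $(\to)$ to
\[
  \dollar{\app{\dollart{\dollar{\mt{\opcall{op}{\unit}}}{z}{\return{z}}}{H_0^{\mathrm{ret}}}}{\thunk{H_0^{\mathrm{ops}}}}}{z}{\return{z}}.
\]
Here the first $\mathbf{S_0}$ is captured by $H_1$'s leftover outer dollar instead of $H_0$'s inner one. That leaves $\dollart{\abs{h}{\ldots}}{H_0^{\mathrm{ret}}}$, an abstraction directly under a dollar frame, to which no rule applies. So the translation of Figure~\ref{fig:efftodel} does not even preserve termination, and no relation, however weakened, yields the simulation.

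What is needed is your fallback: change the encoding. For instance, take $H^{\mathrm{ret}} \defeq x.\,\abs{\underscore}{\shift{w}{\mt{M_{\mathrm{ret}}}}}$ with $w$ fresh. Then $(\mathrm{ret})$, $(\to)$ and a $(\mathrm{shift})$ with empty pure context discard the outer delimiter and land exactly on $\mte{M_{\mathrm{ret}}[V/x]}{\eta}$. The only side effect is one extra store entry under a fresh label outside the image of $\eta$, which $\Coh$ and $\Inv$ do not constrain. The remaining cases of your argument go through unchanged.
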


\begin{proof}
  We prove (1) by case analysis on $\redbetaE{C}{C'}$, giving only nontrivial
  cases.
  The other cases follow by routine arguments, and (2) follows from (1) together
  with the fact that an \(\eff\) configuration not in normal form decomposes
  into an evaluation context and a redex, and that this decomposition is
  preserved by the translation.
  \version{The details are given in the appendix of the full version}{The
    details are given in the proofs of Lemma~\ref{lem:eff_to_del_beta_sim} and
    Proposition~\ref{prop:EFFoneToDELone:sim} in the appendix}.

  \noindent \textbf{Case} $(\mathrm{op})$:
  \begin{caseindent}
    Suppose
    $C = \config{\handle{H}{\paren{\plug{H}{\opcall{op}{V}}}}}{\theta}$ and
    \(C \arrE C' = \config{M_{\op{op}}[V/p, l/k]}{\theta'}\), where
    \begin{align*}
      H &= \handler{x}{M_{\mathrm{ret}}}{\ldots (\opcall{op}{\app{p}{k}}
          \mapsto M_{\op{op}}) \ldots}, \\
      \theta' &\defeq \theta[l := \abs{x}{\handle{H}{\plug{H}{\return{x}}}}],
    \end{align*}
    and $l$ is a fresh label.
    By the definition of $\simu{C}{D}$, there exist \(\tau\) and \(\eta\) such
    that
  \[
    D = \config{\dollar{\app{\dollart{N}{H^{\mathrm{ret}}_{\eta}}}{\thunk{H^{\mathrm{ops}}_{\eta}}}}{z}{\return{z}}}{\;\tau},
  \]
  where
  \[
    N \equiv \plug{\mte{\context{H}}{\eta}}{
              \begin{array}{@{}l@{}}
                \mathbf{S_0} \var{k}.\;\lambda \var{h}.\;(\mathbf{let}\;\var{y} = (\shift{k'}{\app{\force{h}}{\inj{\op{op}}{\vpair{\mte{V}{\eta}}{\var{k'}}}}})\;\mathbf{in}\\
                \phantom{(\mathbf{S_0} \var{k}.\;\lambda \var{h}.\;}
                \app{\throw{k}{y}}{h})
              \end{array}
            }.
  \]
  This configuration evaluates to
  \[
    D' \defeq \config{\mte{M_{\op{op}}}{\eta}\left[\mte{V}{\eta}/p, m/k\right]}
    {\tau'}
  \]
  where $m$ and \(n\) are fresh, and
  \[
    \tau' \defeq \tau\left[
      \begin{array}{@{}l@{}}
        m := \abs{x}{\dollar{\letin{y}{\return{x}}{\app{\throw{n}{y}}{\thunk{H^{\mathrm{ops}}_{\eta}}}}}{z}{\return{z}}}, \\
        n := \abs{y}{\dollart{\plug{\mte{\context{H}}{\eta}}{\return{y}}}{H^{\mathrm{ret}}_{\eta}}}
      \end{array} \right].
  \]
  Let $\eta'$ be $\eta[ l := (m, n) ]$.
  By the substitution lemma (\version{proved in the appendix of the full
    version}{Lemma~\ref{lem:eff_to_del_subst} in the appendix}), it follows that
  \[
    \mte{M_{\op{op}}}{\eta}[\mte{V}{\eta}/p, m/k] \equiv \mte{M_{\op{op}}[V/p, l/k]}{\eta'}.
  \]
  Note that
  $\mte{M_{\op{op}}}{\eta} \equiv \mte{M_{\op{op}}}{\eta'}$,
  $\mte{\context{H}}{\eta} \equiv \mte{\context{H}}{\eta'}$,
  $H^{\mathrm{ret}}_{\eta} \equiv H^{\mathrm{ret}}_{\eta'}$, and
  $H^{\mathrm{ops}}_{\eta} \equiv H^{\mathrm{ops}}_{\eta'}$ hold
  since $l$ does not appear in any of $M_{\op{op}}$, $\context{H}$,
  $H^{\mathrm{ret}}$, or $H^{\mathrm{ops}}$.
  Finally, we conclude
  \[
    C' \sim D',
  \]
  witnessed by \(\eta'\).
  It is straightforward to check that \(\Coh(\theta', \eta')\) and
  \(\Inv(\theta', \tau', \eta')\) hold.
  Here, the freshness of \(l\), \(m\), and \(n\) gives (C1) and (C2).
  \end{caseindent}

  \noindent \textbf{Case} $(\mathrm{throw})$:
  \begin{caseindent}
    Suppose $C = \config{\throw{l}{V}}{\theta}$ and
    $C' = \config{\handle{H}{\plug{H}{\return{V}}}}{\theta'}$, where
    $\theta(l) = \abs{x}{\handle{H}{\plug{H}{\return{x}}}}$ and
    $\theta' \defeq \theta[l := \nil]$.
    By the definition of $\simu{C}{D}$, there exist \(\tau\) and \(\eta\) such that
    \(D = \config{\throw{\eta(l)_1}{\mte{V}{\eta}}}{\tau}\),
    \[
      \tau(\eta(l)_1) = \abs{x}{\dollar{\letin{y}{\return{x}}{\app{\throw{\eta(l)_2}{y}}{\thunk{H^{\mathrm{ops}}_{\eta}}}}}{z}{\return{z}}},
    \]
    and
    \[
      \tau(\eta(l)_2) = \abs{y}{\dollart{\mte{\plug{H}{\return{y}}}{\eta}}{H^{\mathrm{ret}}_{\eta}}}.
    \]
    Thus, $D$ evaluates to
    \[
      D' = \config{\dollar{\app{\paren{\dollart{\mte{\plug{H}{\return{y}}}{\eta}}{H^{\mathrm{ret}}_{\eta}}\left[\mte{V}{\eta}/y\right]}}{\thunk{H^{\mathrm{ops}}_{\eta}}}}{z}{\return{z}}}{\tau'},
    \]
    where $\tau' \defeq \tau[\eta(l)_1 := \nil, \eta(l)_2 := \nil]$.
    Then, from the substitution lemma, we obtain
    \[
      \dollart{\mte{\plug{H}{\return{y}}}{\eta}}{H^{\mathrm{ret}}_{\eta}}[\mte{V}{\eta}/y] \equiv \dollart{\mte{\plug{H}{\return{V}}}{\eta}}{H^{\mathrm{ret}}_{\eta}}.
    \]
    Therefore, we conclude that
    \[
    C' \sim D',
    \]
    witnessed by \(\eta\), since
    \[
      \mte{\handle{H}{\plug{H}{\return{V}}}}{\eta} \equiv \dollar{\app{\dollart{\mte{\plug{H}{\return{V}}}{\eta}}{H^{\mathrm{ret}}_{\eta}}}{\thunk{H^{\mathrm{ops}}_{\eta}}}}{z}{\return{z}}.
    \]
    Note that \(\Coh(\theta', \eta)\) holds since
    \(\dom{\theta'} = \dom{\theta}\), and that \(\Inv(\theta', \tau', \eta)\)
    also holds.
  \end{caseindent}

  \noindent \textbf{Case} $(\mathrm{fail})$:
  \begin{caseindent}
    Suppose $C = \config{\throw{l}{V}}{\theta}$, $\theta(l) = \nil$, and $\redbetaE{C}{\bot}$.
    By the definition of $\simu{C}{D}$, there exist \(\tau\) and \(\eta\) such that
    \(D = \config{\throw{\eta(l)_1}{\mte{V}{\eta}}}{\tau}\) and $\tau(\eta(l)_1) = \nil$.
    Hence, the invocation of $\eta(l)_1$ fails and $D$ evaluates to $\bot$, which
    concludes the current case.
  \end{caseindent}
  This concludes the proof.
\end{proof}

\begin{proposition}[preservation of semantics]\label{prop:eff_to_del_preservation_of_sem}
  For every \(\eff\) program \(M\), $\eval{\eff}{M}$ is defined if and only if
  $\eval{\del}{\mt{M}}$ is defined.
\end{proposition}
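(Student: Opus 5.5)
The proof follows the same pattern as Theorem~\ref{thm:prev-sem}, now using the simulation of Proposition~\ref{prop:eff_to_del_sim}. Two facts are needed first. The initial configurations are related: \(\simu{\config{M}{\emptyset}}{\config{\mt{M}}{\emptyset}}\), witnessed by the empty label map. For a program, \(\mte{M}{\emptyset}\equiv\mt{M}\) since \(M\) contains no labels, and \(\WF_{\eff}\), \(\Coh\), and \(\Inv\) hold vacuously for empty stores. Relatedness is also preserved by \(\redE\), with the target taking at least one \(\arrD\) step each time.

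\emph{If direction.} Suppose \(\redEclos{\config{M}{\emptyset}}{\config{\return{V}}{\theta}}\). Iterating Proposition~\ref{prop:eff_to_del_sim}(2) along this sequence gives a \(\del\) configuration \(D\) with \(\redDclos{\config{\mt{M}}{\emptyset}}{D}\) and \(\simu{\config{\return{V}}{\theta}}{D}\). The simulation relation relates a computation directly to its runtime translation, with no administrative closure. Hence \(D=\config{\mte{\return{V}}{\eta}}{\tau}=\config{\return{\mte{V}{\eta}}}{\tau}\), so \(\eval{\del}{\mt{M}}\) is defined.

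\emph{Only-if direction, by contraposition.} Suppose \(\eval{\eff}{M}\) is undefined. By determinism the maximal source sequence either diverges, reaches \(\bot\), or stops at a stuck configuration \(\config{M'}{\theta}\) with \(M'\) not a returner.
\begin{enumerate}
\item \emph{Divergence.} Each source step is matched by \(\arrD^{+}\), so the target sequence is infinite. Since \(\arrD\) is deterministic, this infinite sequence is the evaluation of \(\mt{M}\), which therefore diverges.
\item \emph{Error.} The related target configuration must be \(\bot\), and \(\bot\) is not of the form \(\config{\return{V}}{\tau}\).
\item \emph{Stuck.} This is the main obstacle. We need a lemma: if \(\simu{\config{M'}{\theta}}{\config{N}{\tau}}\) and \(\config{M'}{\theta}\) is stuck and not a returner, then \(\config{N}{\tau}\) reduces to a stuck configuration that is not a returner.
\end{enumerate}

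For the stuck lemma, I would decompose \(M'=\plug{C}{R}\) with \(R\) maximal and argue by cases on why \(R\) is stuck. If \(R\) is a \(\mam\) redex with a mismatched value, such as \(\force{l}\), a case on a label, or a pair applied as a thunk, then the translation is homomorphic on values except that labels map to labels. So \(\mte{R}{\eta}\) is stuck in exactly the same way; this is precisely why labels are translated to genuine \(\del\) labels rather than thunks. A free variable cannot occur in a closed program.

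An unhandled \(\opcall{op}{V}\) with no enclosing handler is the interesting case. Its translation performs \(\mathbf{S_0}\), so I must show the target has no enclosing dollar. Indeed, the dollar frames in \(\mte{\plug{C}{\hole}}{\eta}\) arise only from translated handlers. Hence an \(\eff\) context without handle frames translates to a \(\del\) context without dollar frames, and the shift0 gets stuck.

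\(\throw{V}{W}\) with \(V\) not a label translates to a throw on a non-label and is likewise stuck. A throw on a label is never stuck, because (C1) ensures the label lies in \(\dom{\theta}\).

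Finally, a non-returner computation such as \(\abs{x}{N}\) at top level translates to a non-returner in normal form. In every case the target is stuck and not a returner, so \(\eval{\del}{\mt{M}}\) is undefined.
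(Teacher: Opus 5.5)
Your proposal follows the paper's proof: the empty-label-map initial relation, iterating Proposition~\ref{prop:eff_to_del_sim} for the terminating direction, and the divergence/error/stuck split for the converse (your ``if'' and ``only if'' labels are swapped, which is harmless). The only difference is that you organize the stuck-preservation lemma by decomposing into an evaluation context and a stuck redex, whereas the paper inducts on the structure of \(M\). In that lemma your case list omits a handler whose body is a non-return terminal such as \(\handle{H}{\abs{x}{N}}\), which homomorphism does not cover: check, as in the paper's handle case, that the inner dollar with a \(\lambda\)-body is stuck and that neither the application frame nor the outer dollar can fire. Also, \(\eff\) programs need not be closed, but a free variable in elimination position translates to itself and stays stuck, so that case is harmless.
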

\begin{proof}
  We first show the ``only if'' direction.
  Suppose that $\redEclos{\config{M}{\emptyset}}{\config{\return{V}}{\theta}}$
  for some $\theta$.
  By induction on \(M\), we obtain
  $\simu{\config{M}{\emptyset}}{\config{\mt{M}}{\emptyset}}$.
  By applying Proposition~\ref{prop:eff_to_del_sim} iteratively, there exist
  \(\tau\) and $\eta$ such that
  $\redDclos{\config{\mt{M}}{\emptyset}}{\config{\return{\mte{V}{\eta}}}{\tau}}$.
  This implies that $\eval{\del}{\mt{M}}$ is defined.
  We then show the ``if'' direction.
  We prove the contrapositive of the proposition: if $\eval{\eff}{M}$ is
  undefined, $\eval{\del}{\mt{M}}$ is also undefined.
  There are three cases where $\eval{\eff}{M}$ is undefined: the evaluation of
  \(M\) (1) diverges, (2) gets stuck, or (3) reaches $\bot$.
  We can check that in each case the evaluation of \(\mt{M}\) diverges, gets
  stuck, or reaches $\bot$, respectively\version{}{
    (Lemmas~\ref{lem:efftodel:diverge}, \ref{lem:efftodel:stuck}, and
    \ref{lem:efftodel:bot} in the appendix)}.
  Thus, $\eval{\del}{\mt{M}}$ is undefined, which completes the proof.
\end{proof}

\begin{theorem}\label{thm:efftodel}
  $\eff$ is macro-expressible in $\del$.
\end{theorem}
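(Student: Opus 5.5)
We will show that the translation \(M \mapsto \mt{M}\) of Figure~\ref{fig:efftodel} meets the four clauses of Definition~\ref{def:macro-expressibility}, in the same way that Theorem~\ref{thm:strong-macro} was obtained from Theorem~\ref{thm:prev-sem}.

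The first three clauses are syntactic.
\(\eff\) programs contain no continuation labels, and the translation only introduces binders and no labels. Hence \(\mt{M}\) is defined for every program \(M\), and it is again a \(\del\) computation without continuation labels, that is, a \(\del\) program.
By construction, the translation acts homomorphically on every \(\mam\) constructor, so clause (2) holds.
For clause (3), we read off a fixed syntactic abstraction in \(\del\) for each constructor peculiar to \(\eff\):
\begin{itemize}
\item the operation call \(\opcall{op}{[\,]}\) uses the one-hole abstraction of Figure~\ref{fig:efftodel};
\item \(\throw{[\,]}{[\,]}\) is translated to \(\del\)'s own \(\mathbf{throw}\);
\item the handle construct is treated as a family of constructors, following Felleisen's convention for binders. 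Its arguments are the body \(M\), the return clause \(M_{\mathrm{ret}}\), and the clauses \(M_i\), one for each \(\op{op}_i \in \syntacticset{O}\); the binders \(x\), \(p_i\), \(k_i\) are indices. The abstraction \(\dollar{\app{\dollart{[\,]_0}{x.\abs{\underscore}{[\,]_{\mathrm{ret}}}}}{\thunk{\abs{c}{\scase{c}{\op{op}_i}{\vpair{p_i}{k_i}}{[\,]_i}}}}}{z}{\return{z}}\) then receives exactly the translated subterms in its holes.
\end{itemize}
The one point to check is that \(\syntacticset{O}\) is finite. This holds by assumption, so the number of holes is fixed and the variant match is a legitimate finite \(\mam\) term.

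Clause (4), preservation of termination in both directions, is exactly Proposition~\ref{prop:eff_to_del_preservation_of_sem}. Combining the four clauses shows that the translation is a strong macro-translation, and hence \(\eff\) is macro-expressible in \(\del\).

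We expect no real obstacle at this stage, because the semantic content has already been discharged in Propositions~\ref{prop:eff_to_del_sim} and~\ref{prop:eff_to_del_preservation_of_sem}. The only care needed is the binder bookkeeping for the handler constructor. Variables bound in the source (\(x\), \(p_i\), \(k_i\)) are kept as binders in the target abstraction. The auxiliary variables \(c\), \(z\), \(k\), \(k'\), \(h\), \(y\) are chosen fresh under the paper's freshness convention, so that no capture occurs when holes are filled.

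Finally, composing this translation with Theorem~\ref{thm:strong-macro}, and using the transitivity of macro-expressibility noted in Section~\ref{sec:framework}, shows that \(\eff\) is also macro-expressible in \(\ac\).
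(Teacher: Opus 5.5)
Your proposal is correct and follows the paper's proof: clauses (1)--(3) are read off Figure~\ref{fig:efftodel}, since the translation introduces no labels, is homomorphic on $\mam$ constructors, and uses a fixed abstraction for each $\eff$-specific construct, and clause (4) is exactly Proposition~\ref{prop:eff_to_del_preservation_of_sem}. Your explicit treatment of $\handle{H}{M}$, with its handler inlined, as a single binder-indexed family of constructors with $2 + |\syntacticset{O}|$ holes is a slightly more careful rendering of the step the paper covers by saying that the figure ``shows clearly'' the constructs are syntactic abstractions.
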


\begin{proof}
  The translation in Figure~\ref{fig:efftodel} preserves the semantics by
  Proposition~\ref{prop:eff_to_del_preservation_of_sem}, so all we have to do is to check the
  syntactic conditions.
  First, this translation maps $\eff$ programs to $\del$ programs since it
  introduces no continuation labels.
  The translation does not alter $\mam$ constructors, so it acts homomorphically
  on them.
  Also, Figure~\ref{fig:efftodel} shows clearly that the program constructs
  peculiar to $\eff$ are expressed by syntactic abstractions of $\del$.
  Therefore, the translation is a macro-translation.
\end{proof}

From Theorem~\ref{thm:efftodel} together with Theorem~\ref{thm:strong-macro} and
the transitivity of macro-expressibility, we get the following result.

\begin{theorem}\label{thm:efftoac}
$\eff$ is macro-expressible in $\ac$.
\end{theorem}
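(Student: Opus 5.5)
The plan is to compose the two macro-translations already established, relying on the transitivity of macro-expressibility noted at the end of Section~\ref{sec:framework}. Let $\phi_1$ be the translation of Figure~\ref{fig:efftodel}, a macro-translation from $\eff$ to $\del$ by Theorem~\ref{thm:efftodel}. Let $\phi_2$ be the refined translation of Figure~\ref{fig:deltoac}, a strong macro-translation from $\del$ to $\ac$ by Theorem~\ref{thm:strong-macro}. I will show that $\phi_2 \circ \phi_1$ satisfies the four conditions of Definition~\ref{def:macro-expressibility}.

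Conditions (1) and (4) follow immediately. For (1), an $\eff$ program $M$ is sent by $\phi_1$ to a $\del$ program, which $\phi_2$ then sends to an $\ac$ program. For (4), $\eval{\eff}{M}$ is defined iff $\eval{\del}{\phi_1(M)}$ is defined (Proposition~\ref{prop:eff_to_del_preservation_of_sem}), and this holds iff $\eval{\ac}{\phi_2(\phi_1(M))}$ is defined (Theorem~\ref{thm:prev-sem}). Condition (2) follows because both $\phi_1$ and $\phi_2$ act homomorphically on $\mam$ constructors, so their composite does too.

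The only step requiring care is condition (3), for each constructor $F$ peculiar to $\eff$, namely operation call, handle (with its handler clauses) and throw. By Theorem~\ref{thm:efftodel}, $\phi_1(F(M_1,\ldots,M_n)) = \mathcal{A}[\phi_1(M_1),\ldots,\phi_1(M_n)]$ for a $\del$ syntactic abstraction $\mathcal{A}$. I then show, by induction on the structure of $\mathcal{A}$, that
\[
  \phi_2(\mathcal{A}[N_1,\ldots,N_n]) = \phi_2(\mathcal{A})[\phi_2(N_1),\ldots,\phi_2(N_n)],
\]
where $\phi_2(\mathcal{A})$ is obtained by translating $\mathcal{A}$ constructor-wise and keeping holes as holes. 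At each node of $\mathcal{A}$ one uses homomorphism for $\mam$ constructors and the fixed abstraction from Figure~\ref{fig:deltoac} for shift0, dollar and throw. The result $\phi_2(\mathcal{A})$ is an $n$-hole $\ac$ syntactic abstraction independent of the $N_i$, as required.

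The main obstacle is making sure this lifting is well defined, and two points need checking. First, $\phi_2$ is undefined on continuation labels, but the abstractions in Figure~\ref{fig:efftodel} contain none, so this causes no problem. Second, binders in $\mathcal{A}$ (such as $k$, $h$, $y$, $z$ and the handler-bound variables) are treated as constructor families indexed by variables, as in Felleisen's convention, and $\phi_2$ preserves them with the freshness conventions the paper assumes. Once these are checked, the composite is a macro-translation, and $\eff$ is macro-expressible in $\ac$.
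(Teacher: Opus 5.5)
Your proposal is correct and is essentially the paper's proof, which obtains the theorem by composing the macro-translation of Theorem~\ref{thm:efftodel} with the strong macro-translation of Theorem~\ref{thm:strong-macro} via transitivity of macro-expressibility. The only difference is that you check the four conditions of Definition~\ref{def:macro-expressibility} for the composite explicitly, including lifting $\phi_2$ through the syntactic abstractions, whereas the paper relies on its earlier remark that macro-translations compose.
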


\subsection{Macro-translation
  from \texorpdfstring{$\del$}{DELone}
  to \texorpdfstring{$\eff$}{EFFone}
}
We can also show the macro-expressibility of $\del$ in $\eff$, using Forster et
al.'s translation for the multi-shot $\mathit{shift}_0/\mathit{dollar}$.

\begin{theorem}\label{thm:deltoeff}
  Assume that $\op{shift0}$ is contained in the set of operations
  $\syntacticset{O}$.
  Then, the following translation is a macro-translation from $\del$ to $\eff$.%
  \[
  \begin{array}{rll}
    \mt{\shift{k}{M}} & \defeq & \opcall{shift0}{\thunk{\abs{k}{\mt{M}}}} \\
    \mt{\throw{V}{W}} & \defeq & \throw{\mt{V}}{\mt{W}} \\
    \mt{\dollar{M}{x}{N}} & \defeq & \handle{\handler{x}{\mt{N}}{\opcall{shift0}{\app{p}{k}} \mapsto \app{\force{p}}{k}}}{\mt{M}}
  \end{array}
  \]  
\end{theorem}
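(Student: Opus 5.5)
The plan is a direct simulation argument in the style of Section~\ref{sec:EFFoneToDELone}. Unlike the translation into $\ac$, this translation introduces no administrative runtime structure: every target continuation label is created by an operation call that comes from a source $\shift{k}{M}$. So I relate a source configuration to the runtime translation of its computation, with no canonicalisation step.

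\textbf{Step 1: label maps and the relation.} A label map is a partial injection $\eta:\syntacticset{L}_{\mathbf{D}}\rightharpoonup\syntacticset{L}_{\mathbf{E}}$. The runtime translation $\mte{\cdot}{\eta}$ extends the translation in the statement by $\mte{l}{\eta}\defeq\eta(l)$, and sends holes to holes on contexts. The translation is homomorphic on $\mam$ frames, so it maps $\del$ pure contexts to $\eff$ pure contexts. It maps a dollar frame $\dollar{\hole}{x}{N}$ to the handle frame $\handle{H_N}{\hole}$, where $H_N\defeq\handler{x}{\mte{N}{\eta}}{\opcall{shift0}{\app{p}{k}}\mapsto\app{\force{p}}{k}}$. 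I define $\simu{\config{M}{\theta}}{\config{N}{\tau}}$ to hold, besides the case $\bot\sim\bot$, when there is an $\eta$ satisfying all of the following:
\begin{itemize}
\item $N\equiv\mte{M}{\eta}$, and $\WF_{\del}(\config{M}{\theta})$ holds;
\item $\eta$ is a bijection from $\dom{\theta}$ onto $\dom{\tau}$;
\item $\theta(l)=\nil$ iff $\tau(\eta(l))=\nil$;
\item $\theta(l)=\abs{y}{\dollar{\plug{H}{\return{y}}}{x}{N_0}}$ implies $\tau(\eta(l))=\abs{y}{\handle{H_{N_0}}{\mte{\plug{H}{\return{y}}}{\eta}}}$.
\end{itemize}
I also need a substitution lemma, $\mte{M}{\eta}[\mte{V}{\eta}/x]\equiv\mte{M[V/x]}{\eta}$, together with stability of $\mte{\cdot}{\eta}$ under extending $\eta$ by a label that does not occur in the term. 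Both go by routine induction, as in Section~\ref{sec:EFFoneToDELone}.

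\textbf{Step 2: step-wise simulation.} I proceed by case analysis on $\arrbetaD$, then lift through evaluation contexts using the frame correspondence above. The $(\mam)$ cases follow from the substitution lemma.
\begin{itemize}
\item Case $(\mathrm{ret})$: the return clause of $H_N$ fires and yields $\mte{N}{\eta}[\mte{V}{\eta}/x]$.
\item Case $(\mathrm{shift})$: the operation call $\opcall{shift0}{\thunk{\abs{k}{\mte{M}{\eta}}}}$ sits in the pure context $\mte{\context{H}}{\eta}$ under $H_N$. Rule $(\mathrm{op})$ stores $l'\mapsto\abs{y}{\handle{H_N}{\mte{\plug{H}{\return{y}}}{\eta}}}$ for a fresh $l'$. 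Two $\mam$ steps, $(U)$ and $(\to)$, then reach $\mte{M}{\eta}[l'/k]$. I extend the label map to $\eta[l:=l']$.
\item Case $(\mathrm{throw})$: invoking $\eta(l)$ gives exactly the translation of $\dollar{\plug{H}{\return{V}}}{x}{N_0}$, and both stores mark the label $\nil$.
\item Case $(\mathrm{fail})$: it maps to $(\mathrm{fail})$.
\end{itemize}
Every source step is matched by at least one target step, so the simulation gives the ``only if'' direction. It also shows that divergence is preserved.

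\textbf{Step 3: errors, stuckness, and the main obstacle.} The source reaching $\bot$ maps to the target reaching $\bot$ by the $(\mathrm{fail})$ case. The remaining and most delicate part is showing that a stuck source configuration translates to a stuck target configuration. Take the unique decomposition $\plug{C}{R}$ of the source and inspect the redex $R$. The cases are: a shift0 with no enclosing dollar; a throw or case on a non-label or ill-shaped value; a force of a label; and the usual $\mam$ mismatches. For each, I argue that the translation has the same decomposition with a corresponding non-redex.

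The key point is that the nearest handler of the translated operation call is exactly the translation of the nearest dollar. This holds because dollar frames are the only source frames mapped to handle frames, and pure contexts map to pure contexts. Also, a label is translated to a label and never to a thunk, so forcing a captured continuation gets stuck on both sides. This is precisely the point where using genuine $\eff$ labels, rather than Forster et al.'s thunk representation, is needed for strong macro-expressibility.

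Finally, the syntactic conditions of Definition~\ref{def:macro-expressibility} are immediate from the shape of the translation: it introduces no labels, is homomorphic on $\mam$ constructors, and uses fixed syntactic abstractions for the $\del$-specific constructors.
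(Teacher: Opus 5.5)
Your proposal is correct and follows essentially the same route as the paper's proof in the appendix. The paper likewise uses a label map into $\eff$ labels and a relation requiring $N \equiv \mte{M}{\eta}$, well-formedness, and coherence/invariant conditions on the stores. It then does a case analysis on (shift), (throw) and (fail), lifts this through evaluation contexts, and gives separate arguments for divergence, stuckness and $\bot$. Your requirement that $\eta$ be a bijection onto $\dom{\tau}$, and that $\nil$ be matched in both directions, is slightly stronger than the paper's injectivity plus $\eta(l) \in \dom{\tau}$ and one-way $\nil$ condition. It is preserved by every simulated step, however, so the argument goes through unchanged.
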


We omit the proof, since the argument is quite similar to that in the previous
subsection.
See \version{the appendix of the full
  version}{Appendix~\ref{app:DELoneToEFFone}} for the full proof.

%% file: sec5-nonexistence.tex
\section{Inexpressibility results}\label{sec:nonexistent}

Now that asymmetric coroutines can macro-express one-shot effect handlers, it is
natural to ask whether the converse direction also holds.
In this section, we prove that the converse direction does not hold, which is
witnessed by \emph{reference cells} (Section~\ref{sec:nonexistent:reftodel}).
We then show that reference cells, in turn, cannot macro-express one-shot
delimited-control operators (Section~\ref{sec:nonexistent:deltoref}), which
completes the comparison of the four calculi studied in this paper.

First, we introduce $\reflang$, a calculus with ML-style reference cells.
Its syntax is defined in Figure~\ref{fig:ref_syntax}; the definition of frames
and contexts is omitted since they are the same as those of $\mam$.
\begin{figure}[tb]
  \begin{minipage}[t]{0.40\columnwidth}
    \begin{tabular}[t]{@{}l@{\hspace{0.5em}}l@{\hspace{1em}}l@{}}
      $V,W$ & $::= \ldots$ & \textsf{value} \\
            & \textbar\ $l\in\syntacticset{L}_{\mathbf{R}}$ & reference cell
    \end{tabular}
  \end{minipage}%
    \begin{minipage}[t]{0.38\columnwidth}
    \begin{tabular}[t]{@{}l@{\hspace{0.5em}}l@{\hspace{1em}}l@{}}
      $M,N$ & $::= \ldots$ & \textsf{computation} \\
            & \textbar\ $\create{V}$ & create
    \end{tabular}%
  \end{minipage}%
    \begin{minipage}[t]{0.2\columnwidth}
    \begin{tabular}[t]{@{}l@{\hspace{1em}}l@{}}
      \textbar\ $\set{V}{W}$ & set \\
      \textbar\ $\get{V}$ & get
    \end{tabular}%
  \end{minipage}%
\caption{Syntax of $\reflang$}
\label{fig:ref_syntax}
\end{figure}
$\reflang$ programs are the computations with no reference cells.
Similarly to the previous calculi, we use stores and configurations: a store is
a partial function that maps reference cells to \textit{values}, and a
configuration is a pair of a $\reflang$ computation and a store.
Figure~\ref{fig:ref_beta} presents the semantics of $\reflang$.

\begin{figure}[tb]
  \begin{minipage}[c]{0.60\columnwidth}
    \begin{tabular}{cl}
    $(\mam)$ & $\inferrule{\redbetaM{M}{M'}}{\redbetaRp{M}{\theta}{M'}{\theta}}$ \\
    
    $(\mathrm{create})$ &
    $\inferrule
    {
      l \notin \dom{\theta}
    }
    {
      \redbetaRp{\create{V}}{\theta}{\return{l}}{\theta[l := V]}
    }$ \\
    
    $(\mathrm{set})$ &
    $\inferrule
    {
      l \in \dom{\theta}
    }
    {
      \redbetaRp{\set{l}{V}}{\theta}{\return{\unit}}{\theta[l := V]}
    }$ \\

    $(\mathrm{get})$ &
    $\inferrule
    {
      \theta(l) = V
    }
    {
      \redbetaRp{\get{l}}{\theta}{\return{V}}{\theta}}$
  \end{tabular}
  \end{minipage}%
  \begin{minipage}[c]{0.35\columnwidth}
    \fbox{$\inferrule
      {
        \redbetaRp{M}{\theta}{M'}{\theta'}
      }
      {
        \redRp{\plug{C}{M}}{\theta}{\plug{C}{M'}}{\theta'}
      }$}
  \end{minipage}%
  \caption{Semantics of $\reflang$}
  \label{fig:ref_beta}
\end{figure}

\subsection{Non-macro-expressibility of \texorpdfstring{\(\reflang\)}{REF} in
  terms of \texorpdfstring{\(\del\)}{DELone}}
\label{sec:nonexistent:reftodel}

First, we prove that $\del$ cannot weakly macro-express $\reflang$:

\begin{theorem}\label{thm:reftoeff_nonexist}
  There is no weak macro-translation from $\reflang$ to $\del$.
\end{theorem}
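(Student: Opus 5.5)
Suppose, for contradiction, that \(\phi\) is a weak macro-translation from \(\reflang\) to \(\del\). By Definition~\ref{def:macro-expressibility}, there are fixed syntactic abstractions \(\cset\), \(\cget\), and \(\ccreate\) in \(\del\) such that \(\phi(\set{V}{W}) = \cset[\phi(V),\phi(W)]\), \(\phi(\get{V}) = \cget[\phi(V)]\), and \(\phi(\create{V}) = \ccreate[\phi(V)]\). Every other constructor is translated homomorphically. The plan is to exhibit two \(\reflang\) programs that the translation cannot distinguish. In one, the evaluation terminates; in the other, it diverges. Weak preservation then fails for the second one: its translation terminates while the source does not. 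The source of the contradiction is that state in \(\del\) lives only in continuations delimited by dollar frames on the current stack. Consequently, a value written by \(\cset\) and then returned from cannot be observed later by a \(\cget\) executed in a sibling position of the same \(\mam\) context.

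The concrete witness I would use is a pair of the form
\[
  M_b \defeq \letin{r}{\create{\unit}}{\letin{\underscore}{S_b}{\letin{v}{\get{r}}{\mathbf{case}\;v\;\mathbf{of}\;\{\ldots\}}}},
\]
where \(S_b\) sets \(r\) to \(\inj{True}{\unit}\) or \(\inj{False}{\unit}\). The final case returns on one tag and enters \(\Omega\) on the other. I would choose the tags so that the source program with \(b=\mathrm{True}\) diverges, and, for both \(b\), the initial content \(\unit\) leads to the same branch as the other setting. Some variant of this choice should make a single source program whose evaluation depends on the stored value.

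The core lemma to prove is a non-interference property of \(\del\). Consider a closed \(\del\) computation \(P\) that appears in let-position, \(\letin{x}{P}{Q}\), and is evaluated from a configuration \(\config{\letin{x}{P}{Q}}{\tau}\). Suppose \(P\) returns \(\return{W}\) with store \(\tau'\). Then \(Q[W/x]\) behaves as it would after any \(P'\) that returns the same \(W\), with the caveat that \(\tau'\) may contain new labels. The point is that \(P\) cannot capture beyond the let frame unless there is an enclosing dollar. At top level there is none, so any shift0 that escapes the frame gets stuck. Captured continuations stored in \(\tau'\) are reachable only through labels that occur in \(W\). Using this, I would argue by analysing what \(\phi(\create{\unit})\) returns. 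It returns some value \(W_r\), which contains only \(\phi(\unit)\)-derived data and labels of continuations captured inside \(\ccreate\). Such a continuation is delimited by a dollar internal to \(\ccreate[\,]\), so it can contain no information about later sets. Throwing to it replays only code fixed at creation time. Since a label can be thrown at most once, and the continuation's content is fixed, \(\cget[W_r]\) cannot depend on which value \(\cset\) wrote earlier. The result of \(\phi(S_b)\) is discarded by \(\underscore\). Its only persistent effects are store updates: some labels reachable from \(W_r\) become \(\nil\), and fresh unreachable entries are added. This is a bit of information, namely which labels were invalidated, and that bit may be independent of \(b\).

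The main obstacle is exactly that residual channel. Invalidating a continuation is a one-bit, one-way write, so \(\del\) is not totally stateless. \(\cset\) can kill labels in \(W_r\) depending on its argument, and \(\cget\) could detect this through failure, although it can only detect it by failing, which is \(\bot\) and not a useful read. I would handle this by designing the witness so that every read that would need to succeed is preceded by two sets with different values, for example set True, set False, then get, versus set False, set True, then get. After two sets, a monotone invalidation pattern cannot encode the last value. To make this rigorous, I would define a simulation between the two target executions that relates stores differing only in the contents of unreachable or invalidated labels. I would then show by induction on reduction steps that the two translated programs either both terminate or both do not. The source programs differ in termination, which contradicts condition~(\ref{item:4}) in its ``only if'' form.
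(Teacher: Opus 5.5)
There is a genuine gap in the direction of the semantic condition. A \emph{weak} macro-translation only guarantees that if $\eval{\reflang}{M}$ is defined then $\eval{\del}{\mt{M}}$ is defined. It imposes nothing when the source diverges.

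Your contradiction comes from a diverging source program whose translation terminates (``its translation terminates while the source does not''). A weak translation is allowed to do exactly that. Your closing step does not rescue this. Suppose the two translated programs ``either both terminate or both do not''. The terminating source forces its translation to terminate, so both translations terminate, and this is consistent with the ``only if'' condition. As designed, the argument can at most refute a \emph{strong} translation.

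What you need is a \emph{terminating} source program whose translation cannot terminate. The paper uses $M_{\neq}$. It reads the same cell before and after $\set{r}{(\inj{B}{\unit})}$ and then runs $\Delta_{\mathrm{A},\mathrm{B}}$. Successful termination of $\mt{M_{\neq}}$ therefore forces the two runs of $\cget[X]$ to return values tagged $\mathrm{A}$ and $\mathrm{B}$ respectively. Your two-program idea can be repaired the same way: take two programs that \emph{both} terminate, one setting $\mathrm{A}$ and accepting only $\mathrm{A}$, the other setting $\mathrm{B}$ and accepting only $\mathrm{B}$.

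The second problem is the tool you propose for indistinguishability. You suggest a symmetric simulation between stores that differ in invalidated entries, concluding ``both terminate or both do not''. This is false in general: $\cset$ may invalidate a label that $\cget$ later throws to, giving $\bot$ on one side only.

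What does hold is a one-directional replay, which you noticed in passing (``it can only detect it by failing''). A run that reaches $\return{V}$ never throws to a $\nil$ entry. So a successful run from a store with more invalidated reachable entries can be replayed step by step from the less invalidated store. It yields the same value up to a finite permutation of freshly generated labels.

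Combine this with two facts. First, Lemma~\ref{lem:reset-free-decomp} splits the top-level run at each $\mathbf{let}$. Second, a reduction changes the entries reachable from $X$ only by invalidating them, and never makes new labels reachable from $X$. Together these give $V_1 = \pi \cdot V_2$ by determinism, contradicting the tags. With this replay lemma, your ``two sets in opposite orders'' device is unnecessary. Its justification, that a monotone invalidation pattern cannot encode the last value, is not established anyway.
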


To prove the theorem, we need the following lemma:
\begin{lemma}\label{lem:reset-free-decomp}
  If \(\config{\seq{x}{M}{N}}{\theta} \arrD^* \config{\return{V}}{\theta''}\),
  then there exist a value \(V'\) and a store \(\theta'\) such that
  \(\config{M}{\theta} \arrD^* \config{\return{V'}}{\theta'}\) and
  \(\config{N[V'/x]}{\theta'} \arrD^* \config{\return{V}}{\theta''}\).
\end{lemma}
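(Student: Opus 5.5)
We argue by induction on the length \(n\) of the reduction sequence \(\config{\seq{x}{M}{N}}{\theta} \arrD^{n} \config{\return{V}}{\theta''}\). The idea is that a let frame is a pure frame and not a delimiter. So as long as \(M\) is not a returner, every step of \(\seq{x}{M}{N}\) is a step of \(M\) itself performed under the frame \(\seq{x}{\hole}{N}\). The steps of \(M\) therefore come first, and the steps of \(N[V'/x]\) follow.

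Since \(\seq{x}{M}{N}\) is not a returner, \(n \ge 1\). We split on the shape of \(M\).

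If \(M \equiv \return{V'}\), the unique decomposition of \(\seq{x}{\return{V'}}{N}\) is the empty context around the \((F)\) redex. The first step is therefore \(\config{\seq{x}{\return{V'}}{N}}{\theta} \arrD \config{N[V'/x]}{\theta}\). We take \(\theta' \defeq \theta\), use zero steps for \(M\), and the remaining \(n-1\) steps give \(\config{N[V'/x]}{\theta} \arrD^{*} \config{\return{V}}{\theta''}\).

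Otherwise \(M\) is not a returner, and we look at the first step. By the definition of \(\arrD\), \(\seq{x}{M}{N} \equiv \plug{C}{R}\) for an evaluation context \(\context{C}\) and a beta-redex \(R\). This is the key claim: \(\context{C}\) is not the empty context, so \(\context{C} = \seq{x}{\context{C}'}{N}\) with \(M \equiv \plug{C'}{R}\). To see this, inspect the beta rules of Figure~\ref{fig:del_beta}. A computation of the form \(\seq{x}{M}{N}\) is a redex only via \((F)\), which requires \(M\) to be a returner. The rules \((\mathrm{ret})\) and \((\mathrm{shift})\) need a dollar at the root, and \((\mathrm{throw})\) and \((\mathrm{fail})\) need a throw at the root.

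\begin{itemize}
\item If the step is an error step, the sequence continues from \(\bot\). This contradicts reaching \(\config{\return{V}}{\theta''}\), since \(\bot\) has no successor.
\item Otherwise the step is \(\config{\seq{x}{\plug{C'}{R}}{N}}{\theta} \arrD \config{\seq{x}{\plug{C'}{R'}}{N}}{\theta_1}\) with \(\config{R}{\theta} \arrbetaD \config{R'}{\theta_1}\). Hence \(\config{M}{\theta} \arrD \config{M_1}{\theta_1}\) where \(M_1 \equiv \plug{C'}{R'}\).
\end{itemize}

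In the second case we apply the induction hypothesis to \(\config{\seq{x}{M_1}{N}}{\theta_1} \arrD^{n-1} \config{\return{V}}{\theta''}\). This yields \(V'\) and \(\theta'\) with \(\config{M_1}{\theta_1} \arrD^{*} \config{\return{V'}}{\theta'}\) and \(\config{N[V'/x]}{\theta'} \arrD^{*} \config{\return{V}}{\theta''}\). Prepending the step \(\config{M}{\theta} \arrD \config{M_1}{\theta_1}\) gives the claim.

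The main obstacle is the key claim that the redex lies inside \(M\). In particular we must rule out a shift0 in \(M\) capturing through the let frame. If \(M \equiv \plug{H}{\shift{k}{L}}\) with no enclosing dollar inside \(M\), then \(\seq{x}{M}{N}\) is itself of the form \(\plug{H'}{\shift{k}{L}}\) with no dollar anywhere around the shift0. No rule applies, so the configuration is stuck, contradicting the assumed reduction to a returner. We will state this as a small auxiliary observation: a configuration whose computation is \(\plug{H}{\shift{k}{L}}\) for a pure context \(\context{H}\) is stuck. This follows directly from Figure~\ref{fig:del_beta} and the uniqueness of decomposition.
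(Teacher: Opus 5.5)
Your proof is correct and follows essentially the same route as the paper. Both argue by induction on the length of the reduction sequence, using that the frame \(\seq{x}{\hole}{N}\) contains no dollar and can only be consumed by \((F)\), so every step before \(M\) becomes a returner is a step of \(M\) itself. Your closing observation that \(\plug{H}{\shift{k}{L}}\) is stuck is harmless but not needed: a \((\mathrm{shift})\) redex contains its own dollar, so your key claim already places it inside \(M\).
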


\begin{proof}
  By induction on the length of the reduction sequence.
  Since the frame \(\seq{x}{\hole}{N}\) contains no dollar frame, redexes for
  the \((\mathrm{shift})\) rule can occur only within \(M\).
  The frame is preserved until \(M\) becomes \(\return{V'}\), as the only rule
  that can consume the frame is \((F)\).
\end{proof}

\begin{proof}[Proof of Theorem~\ref{thm:reftoeff_nonexist}]
  Let
  \( \Omega \defeq
  \app{(\abs{x}{\app{\force{x}}{x}})}{\thunk{\abs{x}{\app{\force{x}}{x}}}} \),
  and define
  \[
    \Delta_{\mathrm{A}, \mathrm{B}}(x, y) \defeq
    \begin{pmatrix*}[l]
      \mathbf{case}\;(x, y)\;\mathbf{of}\; \{\\
      \quad (\inj{A}{\underscore}, \inj{B}{\underscore}) \mapsto \return{\unit} \\
      \quad \underscore \mapsto \Omega \}      
    \end{pmatrix*}.
  \]
  Consider the \(\reflang\) program
  \[
    M_{\neq} \defeq 
    \begin{pmatrix*}[l]
      \letin{r}{\create{\inj{A}{\unit}}}
      {\\\letin{i}{\get{r}}
      {\\\letin{\underscore}{\set{r}{\inj{B}{\unit}}}
      {\\\letin{j}{\get{r}}
      {\\\Delta_{\mathrm{A}, \mathrm{B}}(i, j)}}}}
    \end{pmatrix*}.
  \]
  In \(\reflang\), the first \(\mathbf{get}\) returns \(\inj{A}{\unit}\) and the
  second \(\mathbf{get}\) returns \(\inj{B}{\unit}\), and therefore
  \(\eval{\reflang}{M_{\neq}}\) is defined.

  Suppose that there exists a weak macro-translation \(\mtempty\) from
  \(\reflang\) to \(\del\), and let \(\ccreate\), \(\cget\), \(\cset\) be the
  syntactic abstractions corresponding to \textbf{create}, \textbf{get}, and
  \textbf{set}, respectively.
  Then, since \(\mtempty\) acts homomorphically on \(\mam\) constructors, the
  translation \(\mt{M_{\neq}}\) has the shape
  \[
    \begin{pmatrix*}[l]
      \letin{r}{\ccreate\left[\inj{A}{\unit}\right]}
      {\\\letin{i}{\cget\left[r\right]}
      {\\\letin{\underscore}{\cset\left[r, \inj{B}{\unit}\right]}
      {\\\letin{j}{\cget[r]}
      {\\\Delta_{\mathrm{A}, \mathrm{B}}(i, j)}}}}
    \end{pmatrix*},
  \]
  and contains no continuation label.
  Since \(\eval{\reflang}{M_{\neq}}\) is defined, so is
  \(\eval{\del}{\mt{M_{\neq}}}\):
  there exists a store \(\theta\) such that
  \[
    \config{\mt{M_{\neq}}}{\emptyset} \arrD^* \config{\return{\unit}}{\theta}.
  \]

  Applying Lemma~\ref{lem:reset-free-decomp} repeatedly to this reduction
  sequence (note that there is no dollar frame enclosing \(\mt{M_{\neq}}\))
  yields values \(X, V_1, W, V_2\) and stores
  \(\theta_0, \theta_1, \theta_2, \theta_3\) such that
  \begin{align}
    \config{\ccreate\left[\inj{A}{\unit}\right]}{\emptyset} &\arrD^* \config{\return{X}}{\theta_0}, \label{eq:1}\\
    \config{\cget\left[X\right]}{\theta_0} &\arrD^* \config{\return{V_1}}{\theta_1}, \label{eq:2}\\
    \config{\cset\left[X, \inj{B}{\unit}\right]}{\theta_1} &\arrD^* \config{\return{W}}{\theta_2}, \label{eq:3}\\
    \config{\cget[X]}{\theta_2} &\arrD^* \config{\return{V_2}}{\theta_3}, \label{eq:4}
  \end{align}
  and moreover the evaluation of \(\Delta_{\mathrm{A}, \mathrm{B}}(V_1, V_2)\)
  terminates successfully.
  Thus, \(V_1 \equiv \inj{A}{V_1'}\) and \(V_2 \equiv \inj{B}{V_2'}\) for some
  values \(V_1'\) and \(V_2'\).

  As Figure~\ref{fig:del_beta} shows, a reduction step of \(\del\) may store a
  continuation under a fresh label or invalidate an existing one.
  Unlike the \((\mathrm{set})\) rule of \(\reflang\), it never replaces the
  content of a valid entry.
  Hence, when restricted to the labels \emph{reachable} from
  \(\cget\left[X\right]\), \(\theta_2\) differs from \(\theta_0\) only in that
  some of the entries are invalidated.
  Since the reduction sequence in \eqref{eq:4} results in a value \(V_2\), this
  implies that no step of that sequence depends on the difference between
  \(\theta_0\) and \(\theta_2\).
  Thus, from the determinism of the reduction, we obtain
  \(V_1 = V_2\).\footnote{To be precise, \(V_1\) is identical to \(V_2\) modulo
    freshly generated labels.
    Rigorously proving this part requires cumbersome analysis of stores; see
    \version{the appendix of the full
      version}{Appendix~\ref{app:subsec:nonexistent:reftodel}} for the details.
  } However, this contradicts \(V_1 \equiv \inj{A}{V_1'}\) and
  \(V_2 \equiv \inj{B}{V_2'}\).
  Therefore, there is no weak macro-translation from \(\reflang\) to \(\del\).
\end{proof}

We remark that this proof does not depend on the one-shotness of continuations.
Thus, by a similar argument, we can prove that multi-shot delimited-control
operators (thus multi-shot effect handlers) also cannot macro-express reference
cells.
This proves, in the setting of macro-expressibility, the conjecture of Kammar
and Pretnar that reference cells cannot be implemented by effect handlers
alone~\cite{DBLP:journals/jfp/KammarP17}.

On the other hand, $\ac$ can macro-express $\reflang$, using the encoding
introduced in Section~\ref{sec:DELoneToAC:refined-translation}:
\[
  \begin{array}{rll}
    \mt{\create{V}} & \defeq & \create{\refcell{\mt{V}}} \\
    \mt{\set{V}{W}} & \defeq & \resume{\mt{V}}{\paren{\inj{Set}{\mt{W}}}} \\
    \mt{\get{V}} & \defeq & \resume{\mt{V}}{\paren{\inj{Get}{\unit}}} %
  \end{array}
\]
The simulation relation, \(\simu{\config{M}{\theta}}{\config{N}{\tau}}\), holds
if and only if there exists a partial function
\(\eta : \syntacticset{L}_{\mathbf R} \rightharpoonup \syntacticset{L}_{\ac}\)
such that:
\begin{enumerate}
\item $N \equiv \mte{M}{\eta}$
\item $\eta$ is injective
\item $\eta(\dom{\theta}) \subseteq \dom{\tau}$
\item\label{item:3} For any $l \in \dom{\theta}$, if $\theta(l) = V$, then\footnotemark
  \[
    \tau(\eta(l)) = \thunk{\abs{x}{\letin{y}{\return{x}}{\app{\app{\force{\var{loop}}}{\mte{V}{\eta}}}{y}}}}.
  \]
  \footnotetext{\(\var{loop}\) is defined in Figure~\ref{fig:deltoac}.}
\end{enumerate}
Note that (\ref{item:3}) specifies the correspondence between source reference
cells and target coroutines.

\begin{theorem}\label{thm:reftoac}
  $\reflang$ is strongly macro-expressible in $\ac$.
\end{theorem}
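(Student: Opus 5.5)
We follow the same pattern as the proofs of Theorems~\ref{thm:strong-macro} and~\ref{thm:efftodel}. The syntactic conditions are immediate. The translation displayed above is homomorphic on $\mam$ constructors, introduces no coroutine labels, and expresses $\mathbf{create}$, $\mathbf{set}$ and $\mathbf{get}$ by fixed syntactic abstractions. Note that $\refcell{\cdot}$, $\var{loop}$ and $\var{th}$ are closed $\ac$ terms with one hole. All the work is therefore in preservation of semantics, which we prove with the simulation relation $\sim$ defined just before the statement. Initially $\simu{\config{M}{\emptyset}}{\config{\mt{M}}{\emptyset}}$ holds, witnessed by the empty $\eta$.

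\textbf{Step 1: beta-simulation.} Assume $\simu{\config{M}{\theta}}{\config{N}{\tau}}$ and $\config{M}{\theta} \to^{\beta}_{\mathbf{R}} \config{M'}{\theta'}$. We show $\config{N}{\tau} \arrAC^{+} \config{N'}{\tau'}$ with $\simu{\config{M'}{\theta'}}{\config{N'}{\tau'}}$, by case analysis on the rule.
\begin{itemize}
\item $(\mam)$: a substitution lemma $\mte{M}{\eta}[\mte{V}{\eta}/x] \equiv \mte{M[V/x]}{\eta}$ suffices.
\item $(\mathrm{create})$: $\create{\refcell{\mte{V}{\eta}}}$ allocates a fresh $m$. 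We extend $\eta$ by $l \mapsto m$; injectivity follows from freshness. The store entry is literally $\refcell{\mte{V}{\eta}}$, so condition~(\ref{item:3}) must be read up to this initial form. Concretely, we normalise: $\refcell{v}$ is exactly $\thunk{\abs{y}{\letin{q'}{\return{y}}{\app{\app{\force{\var{loop}}}{v}}{q'}}}}$, which matches (\ref{item:3}) up to bound-variable renaming.
\item $(\mathrm{get})$ and $(\mathrm{set})$: by (\ref{item:3}), resuming $\eta(l)$ with $\inj{Get}{\unit}$ runs inside $\labeledc{\eta(l)}{\cdots}$. It unfolds $\var{loop}$, which is a Z-combinator style fixed point, to $\app{\app{\force{\var{th}}}{\var{loop}'}}{\mte{V}{\eta}}$, where $\var{loop}'$ is the self-applied thunk. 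It then dispatches on the query and reaches $\yield{\mte{V}{\eta}}$. The yield returns $\mte{V}{\eta}$ and stores the continuation $\thunk{\abs{y}{\letin{q'}{\return{y}}{\app{\app{\force{f}}{s}}{q'}}}}$ with $f := \var{loop}'$ and $s := \mte{V}{\eta}$ (resp.\ $\mte{W}{\eta}$ for $\mathbf{set}$, returning $\unit$).
\end{itemize}

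\textbf{Main obstacle.} The stored thunk after a yield contains $\var{loop}'$, a one-step unfolding of $\var{loop}$, rather than $\var{loop}$ itself. So condition~(\ref{item:3}) does not hold syntactically. I would first try to fix this by noting that forcing $\var{loop}$ reduces in one $(U)$ step and one $(\to)$ step to $\app{\force{\var{th}}}{\thunk{\app{\force{\var{x}}}{\var{x}}}}$ with $x$ instantiated. Both $\app{\force{\var{loop}}}{\cdot}$ and $\app{\force{\var{loop}'}}{\cdot}$ reach the same term after a bounded number of administrative steps. Accordingly, I would weaken (\ref{item:3}) to allow either the canonical $\var{loop}$ form or the unfolded $\var{loop}'$ form, i.e.\ require that the stored thunk, when applied, joins the canonical one within finitely many deterministic $\arrAC$ steps. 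The simulation proof then just performs these extra steps before the dispatch. If that is too fiddly, the fallback is to define $\sim$ modulo administrative reductions, as in Definition~\ref{def:deltoac-simulation}.

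\textbf{Step 2: lifting and termination.} We lift Step 1 through evaluation contexts. $\reflang$ has only $\mam$ frames, which translate to themselves. The target redex is also never under a coroutine frame once it has returned, since $(\mathrm{ret})$ and $(\mathrm{yield})$ remove the frame. Hence $\simu{C}{D}$ and $\redR{C}{C'}$ imply $\redACplus{D}{D'}$ with $\simu{C'}{D'}$. For the ``only if'' direction of preservation, iterate along the source run to reach $\return{\mte{V}{\eta}}$. For the ``if'' direction, argue by contraposition. Divergence transfers because every source step yields at least one target step. A stuck source configuration (e.g.\ $\get{V}$ with $V$ not a label, or a type-mismatched $\mam$ redex) maps to a stuck target configuration: $\resume{\mte{V}{\eta}}{\cdots}$ on a non-label is stuck, and a $\mam$ mismatch stays a mismatch. $\reflang$ has no error state, so there is no $\bot$ case.
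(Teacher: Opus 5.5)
Your proposal is correct and is essentially the paper's proof: the same label map and invariant, beta-simulation for $\mathbf{create}$, $\mathbf{get}$ and $\mathbf{set}$ (including the $\alpha$-renaming observation for $\refcell{\cdot}$), lifting through translated $\mam$ contexts, and contraposition via divergence and stuckness (for stuckness you should also invoke well-formedness of reachable configurations, $\WF_{\reflang}$, as the paper does, so that a stuck $\mathbf{get}$ or $\mathbf{set}$ can only have a non-label argument). Your ``main obstacle'' does not exist: with $T \defeq \thunk{\abs{x}{\app{\force{\var{th}}}{\thunk{\app{\force{x}}{x}}}}}$ we have $\var{loop} = \thunk{\app{\force{T}}{T}}$, so your self-applied thunk $\var{loop}'$ is syntactically $\var{loop}$ and $\redACplus{\app{\force{\var{loop}}}{V}}{\app{\app{\force{\var{th}}}{\var{loop}}}{V}}$ (Lemma~\ref{lem:reftoac_loop}); hence condition~(\ref{item:3}) is re-established exactly, up to $\alpha$-renaming, after every yield, and neither the weakened invariant nor closure under administrative reductions is needed.
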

\begin{proof}
  We can show that \(M \mapsto \mt{M}\) is a strong macro-translation from
  $\reflang$ to $\ac$.
  The proof proceeds similarly to other correctness proofs (see \version{the
    full version}{Appendix~\ref{subsec:reftoac}} for the details).
\end{proof}

\begin{theorem}\label{thm:nonexistent}
  $\del$ cannot weakly macro-express $\ac$.
  Moreover, $\eff$ also cannot weakly macro-express $\ac$.
\end{theorem}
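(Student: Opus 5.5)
The plan is to argue by contradiction, using composition of translations together with Theorem~\ref{thm:reftoac} and Theorem~\ref{thm:reftoeff_nonexist}. Suppose there were a weak macro-translation $\psi$ from $\ac$ to $\del$. By Theorem~\ref{thm:reftoac} there is a strong, hence in particular weak, macro-translation $\phi$ from $\reflang$ to $\ac$. We then show that the composite $\psi\circ\phi$ is a weak macro-translation from $\reflang$ to $\del$, which contradicts Theorem~\ref{thm:reftoeff_nonexist}.

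To check that the composite is a weak macro-translation, we verify each clause of Definition~\ref{def:macro-expressibility} in turn.
\begin{enumerate}
\item $\phi$ sends $\reflang$ programs to $\ac$ programs, and $\psi$ sends those to $\del$ programs.
\item Both maps are homomorphic on $\mam$ constructors, so the composite is as well.
\item Consider a $\reflang$-specific constructor $F$ with syntactic abstraction $\mathcal{A}$ in $\ac$. Applying $\psi$ compositionally to $\mathcal{A}$ gives a $\del$ syntactic abstraction $\mathcal{A}'$ satisfying $\psi(\mathcal{A}[N_1,\dots,N_n])=\mathcal{A}'[\psi(N_1),\dots,\psi(N_n)]$.
\item If $\eval{\reflang}{M}$ is defined, then so is $\eval{\ac}{\phi(M)}$, and therefore so is $\eval{\del}{\psi(\phi(M))}$.
\end{enumerate}

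The only delicate point is clause (3). We must make sure that $\psi$ can be applied to a term with holes, and that holes are carried along untouched. This follows by structural induction on $\mathcal{A}$: on $\mam$ constructors $\psi$ is homomorphic, and on $\ac$ constructors it replaces each node by a fixed abstraction. One caveat is that $\mathcal{A}$ may contain $\ac$ constructors that are not programs. For example, coroutine labels cannot occur in the encoding of Theorem~\ref{thm:reftoac}, which only uses \textbf{create}, \textbf{resume} and \textbf{yield}, so $\psi$ is defined on it. I expect this well-definedness bookkeeping to be the main (though mild) obstacle. The paper already remarks that macro-expressibility is transitive, so I would cite that remark rather than redo the argument.

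For the second claim, suppose $\eff$ weakly macro-expressed $\ac$. Composing that translation with the macro-translation from $\eff$ to $\del$ of Theorem~\ref{thm:efftodel}, again by transitivity, would give a weak macro-translation from $\ac$ to $\del$. This contradicts the first claim. Alternatively, we can compose directly with $\phi$ and Theorem~\ref{thm:efftodel} to obtain a weak macro-translation $\reflang\to\del$, contradicting Theorem~\ref{thm:reftoeff_nonexist}.
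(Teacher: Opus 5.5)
Your proposal is correct and follows the paper's proof. You compose the assumed weak translation $\ac\to\del$ with the strong translation $\reflang\to\ac$ of Theorem~\ref{thm:reftoac} to contradict Theorem~\ref{thm:reftoeff_nonexist}, and you get the $\eff$ claim by composing with the translation of Theorem~\ref{thm:efftodel}. Your clause-by-clause check of Definition~\ref{def:macro-expressibility} (with the one-directional clause (4)) just makes explicit the transitivity remark the paper relies on.
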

\begin{proof}
  Suppose that a weak macro-translation from $\ac$ to $\del$ exists.
  Then, by composing it with a (strong) macro-translation from \(\reflang\) to
  \(\ac\) (Theorem~\ref{thm:reftoac}), we obtain a weak macro-translation from
  $\reflang$ to $\del$, which contradicts Theorem~\ref{thm:reftoeff_nonexist}.
  The latter statement follows from Theorem~\ref{thm:efftodel}.
\end{proof}

One may find Theorem~\ref{thm:nonexistent} counter-intuitive, since effect
handlers (and delimited-control operators) are considered a universal tool for
expressing various computational effects, such as coroutines.
This gap arises from the strict condition which macro-expressibility imposes on
translations: a macro-translation should transform programs \emph{locally} and
\emph{homomorphically}.
This prevents a translation from introducing a \emph{global} handler---thus
effect handlers cannot macro-express global effects such as reference cells.

\subsection{Non-macro-expressibility of \(\del\) in terms of \(\reflang\)}
\label{sec:nonexistent:deltoref}

The results obtained so far are that while \(\del\) and \(\eff\) are
equi-expressive, \(\ac\) is strictly more expressive than both; the difference
is witnessed by the non-macro-expressibility \(\reflang \nrightarrow \del\).
In this section, to complete the picture, we consider the converse direction and
prove the following theorem:

\begin{theorem}\label{thm:no-del-to-ref}
  There is no weak macro-translation from \(\del\) to \(\reflang\).
\end{theorem}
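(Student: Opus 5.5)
The plan is to use the same ``no abortion'' phenomenon that underlies Lemma~\ref{lem:reset-free-decomp}, now on the \(\reflang\) side. Unlike \(\del\), \(\reflang\) has no way to discard the current evaluation context. So a frame \(\seq{x}{\hole}{N}\) that is pushed during a terminating run must eventually be consumed by \((F)\), and then \(N\) runs. Suppose, for contradiction, that \(\mtempty\) is a weak macro-translation from \(\del\) to \(\reflang\). Let \(\cdollar{x}\) and \(\cshift{k}\) be the syntactic abstractions for dollar and shift0, and put \(\mathcal{A}\defeq\cdollar{x}[-,\return{x}]\). Since the translation is weak, I only need one terminating \(\del\) program whose translation fails to terminate.

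\textbf{Step 1 (decomposition lemma for \(\reflang\)).} First I will prove the analogue of Lemma~\ref{lem:reset-free-decomp} for \(\reflang\). If \(\config{\plug{C}{M}}{\theta}\arrR^*\config{\return{V}}{\theta''}\) for an evaluation context \(\context{C}\), then \(\config{M}{\theta}\arrR^*\config{\return{V'}}{\theta'}\) and \(\config{\plug{C}{\return{V'}}}{\theta'}\arrR^*\config{\return{V}}{\theta''}\). The proof is by induction on the length of the reduction, since every \(\reflang\) frame is pure and no rule consumes a frame except through \((F)\), \((\to)\) or \((\&)\) applied to a value-like redex. A consequence is the following. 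If, in a terminating run, a subterm \(\seq{\underscore}{P}{\Omega}\) ever occurs in evaluation position, the run cannot terminate, because \(\Omega\) would eventually be in evaluation position.

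\textbf{Step 2 (genericity: \(\mathcal{A}\) must eventually evaluate its hole).} Consider the two \(\del\) programs
\[
P_{\mathrm{A}}\defeq\mathbf{case}\;\dollar{\return{\inj{A}{\unit}}}{x}{\return{x}}\;\mathbf{of}\;\{\inj{A}{\underscore}\mapsto\return{\unit},\;\inj{B}{\underscore}\mapsto\Omega\}
\]
(using let-binding to fit \(\mam\) syntax), and \(P_{\mathrm{B}}\), which is obtained from \(P_{\mathrm{A}}\) by swapping the roles of \(\mathrm{A}\) and \(\mathrm{B}\) in the returned value. Both programs terminate in \(\del\), so both translations terminate in \(\reflang\). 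Their translations differ only in the hole filler \(\return{\inj{A}{\unit}}\) versus \(\return{\inj{B}{\unit}}\) of \(\mathcal{A}\). I will prove a ``parallel reduction'' lemma. Until some copy of the filler first reaches evaluation position, the two runs proceed in lockstep, with related terms and stores, where related means identical up to the filler and up to the choice of fresh cell labels. If no copy were ever evaluated, the two runs would reach the outer case with related values. Then one of the two translations would diverge, since \(\Omega\) is a closed program and relatedness preserves the constructor tag of a value that does not mention the filler. This is a contradiction. Hence in the run of \(\mt{P_{\mathrm{A}}}\) a copy of the hole filler reaches evaluation position after finitely many steps. Moreover, by the lockstep lemma, the prefix of the run up to that point does not depend on what the filler is.

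\textbf{Step 3 (the aborting program).} Let \(Q\defeq\dollar{\seq{\underscore}{\shift{k}{\return{\inj{A}{\unit}}}}{\Omega}}{x}{\return{x}}\), placed inside the same case context as \(P_{\mathrm{A}}\). In \(\del\), shift0 discards the frame containing \(\Omega\), so \(Q\) evaluates to \(\unit\). Its translation is the same as \(\mt{P_{\mathrm{A}}}\) except that the hole filler is \(\seq{\underscore}{\cshift{k}[\return{\inj{A}{\unit}}]}{\Omega}\). By Step~2 (lockstep with filler-independence), the run of \(\mt{Q}\) also reaches a configuration in which a copy of this filler is in evaluation position. By Step~1, \(\mt{Q}\) then cannot terminate. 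This contradicts weak macro-translation, since \(\eval{\del}{Q}\) is defined.

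The main obstacle is the lockstep lemma in Step~2. The filler may be duplicated, stored inside thunks, or written into reference cells, so the relation must be defined as a congruence on terms and stores in which every occurrence of filler \(1\) is matched by filler \(2\). Cell labels must be matched up to a bijection, which can be handled with a label map as in Definition~\ref{def:deltoac-simulation}. I must also check that a case analysis on a value can never observe the filler, which holds because the filler is a computation and occurs only under thunks inside values.
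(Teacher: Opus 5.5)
Your argument is correct, but it is organized differently from the paper's. The two routes share the same ingredients: \(\reflang\) cannot escape a \(\letin{\underscore}{\hole}{\Omega}\) frame (Lemma~\ref{lem:non-termination}), and a lock-step congruence simulation shows that a filler cannot be observed until it is run.

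\begin{itemize}
\item \emph{The paper's route.} It proves one observational-equivalence statement (Proposition~\ref{prop:obs-equiv}) for the congruence generated by \(\thunk{\plug{L}{M}} \sim \thunk{\plug{L}{N}}\) for \emph{all} \(M, N\). It then applies this to just two programs, whose fillers \(\plug{L}{\shift{k}{\return{(\inj{L_i}{\unit})}}}\) are simultaneously blocking and tag-distinguishing. The dichotomy ``the filler is forced / never forced'' therefore closes in a single step.
\item \emph{Your route.} You split that dichotomy across three programs. \(P_{\mathrm{A}}\) and \(P_{\mathrm{B}}\) show that the filler of \(\cdollar{x}[\hole,\return{x}]\) must reach evaluation position at some finite step \(n\). \(Q\) then exploits this.
\item \emph{What each buys.} You need a stronger lemma: filler-independence of the run prefix for an \emph{arbitrary} pair of fillers, with residual tracking, so that ``evaluated at step \(n\)'' transfers from \(\return{\inj{A}{\unit}}\) to the blocking filler. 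The paper only needs its relation for one special shape of thunk. In exchange, your Step~2 isolates a reusable structural fact: any weak translation of dollar into \(\reflang\) must eventually run its body.
\end{itemize}

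A few small repairs are needed.

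\begin{itemize}
\item Step~1 as stated is false for application and projection frames, where the hole may terminate as \(\abs{x}{N}\) or \(\cpair{N_1}{N_2}\) rather than as a returner. You only use the let-frame instance, so this is harmless.
\item \(\mt{P_{\mathrm{A}}}\) and \(\mt{P_{\mathrm{B}}}\) differ in their case branches as well, not only in the filler. This is fine once Step~1 isolates \(\cdollar{x}[-,\return{x}]\).
\item The blocking filler \(\letin{\underscore}{\cshift{k}[\return{\inj{A}{\unit}}]}{\Omega}\) need not be closed. Free variables of \(\cshift{k}\) may be captured by binders of \(\cdollar{x}\). Your relation should therefore relate \(\return{\inj{A}{\unit}}\) to \(\letin{\underscore}{P}{\Omega}\) for \emph{every} \(P\), as the paper's axiom quantifies over all computations. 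Otherwise the substitution case of your lock-step lemma fails.
\item Matching cell labels up to a bijection is unnecessary. Fresh labels are chosen deterministically, and the store domains coincide in lock-step.
\end{itemize}
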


Since macro-expressibility is transitive, this theorem implies the following
result.

\begin{theorem}
  There is no weak macro-translation from \(\eff\) to \(\reflang\), and there is
  no weak macro-translation from \(\ac\) to \(\reflang\).
\end{theorem}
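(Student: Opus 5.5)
The statement follows from Theorem~\ref{thm:no-del-to-ref} by transitivity, as the sentence before it indicates. Since we only have a \emph{weak} translation in the hypothetical direction, I will check explicitly that composing a strong macro-translation with a weak one yields a weak macro-translation. I will not use the reflexivity/transitivity remark of Section~\ref{sec:framework} as stated, since it speaks only of strong translations.

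First I would prove a small composition lemma. Let \(\phi\) be a strong macro-translation from \(\mathscr{L}_1\) to \(\mathscr{L}_2\), and \(\psi\) a weak macro-translation from \(\mathscr{L}_2\) to \(\mathscr{L}_3\). Then \(\psi\circ\phi\) is a weak macro-translation from \(\mathscr{L}_1\) to \(\mathscr{L}_3\). The conditions of Definition~\ref{def:macro-expressibility} are checked one by one.
\begin{enumerate}
\item Programs are mapped to programs, because each translation does so.
\item Both \(\phi\) and \(\psi\) are homomorphic on \(\mam\) constructors, so \(\psi\circ\phi\) is too.
\item Let \(F\) be a constructor outside \(\mam\), with \(\phi(F(M_1,\ldots,M_n)) = \mathcal{A}[\phi(M_1),\ldots,\phi(M_n)]\). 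Apply \(\psi\) to \(\mathcal{A}\), treating the holes as fresh nullary symbols. Since \(\psi\) is compositional on every constructor, homomorphically on \(\mam\) and by fixed syntactic abstractions otherwise, this gives an \(n\)-hole syntactic abstraction \(\mathcal{A}'\) in \(\mathscr{L}_3\). It satisfies \(\psi(\mathcal{A}[N_1,\ldots,N_n]) = \mathcal{A}'[\psi(N_1),\ldots,\psi(N_n)]\).
\item If \(\eval{\mathscr{L}_1}{M}\) is defined, then \(\eval{\mathscr{L}_2}{\phi(M)}\) is defined by strongness of \(\phi\). Hence \(\eval{\mathscr{L}_3}{\psi(\phi(M))}\) is defined by weakness of \(\psi\).
\end{enumerate}
The only mildly delicate point is item 3. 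The holes must be treated uniformly, and bound variables in \(\mathcal{A}\) must not capture anything in the plugged terms. This is handled by the paper's standing convention that binders are families of constructors indexed by the bound variable, under which \(\psi\) commutes with plugging.

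With this lemma, both claims are short.

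For \(\ac\), suppose a weak macro-translation \(\psi\) from \(\ac\) to \(\reflang\) existed. Theorem~\ref{thm:strong-macro} gives a strong macro-translation from \(\del\) to \(\ac\). Composing the two yields a weak macro-translation from \(\del\) to \(\reflang\), which contradicts Theorem~\ref{thm:no-del-to-ref}.

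For \(\eff\), I would argue the same way, composing the translation of Theorem~\ref{thm:deltoeff} from \(\del\) to \(\eff\) with a hypothetical weak translation from \(\eff\) to \(\reflang\). Theorem~\ref{thm:deltoeff} is stated as a macro-translation, and its proof in the appendix establishes preservation in both directions, so it is strong. It assumes that \(\op{shift0}\in\syntacticset{O}\). Since \(\syntacticset{O}\) is a fixed finite parameter of \(\eff\), I would state the result for operation sets containing \(\op{shift0}\), or note that renaming one operation symbol to \(\op{shift0}\) is harmless.

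I expect no real obstacle here. The substantive content lies entirely in Theorem~\ref{thm:no-del-to-ref}, and the only care needed is the composition lemma with mixed strong/weak hypotheses.
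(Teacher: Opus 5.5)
Your proposal is correct and matches the paper's argument: the paper derives this theorem from Theorem~\ref{thm:no-del-to-ref} by transitivity, composing a hypothetical weak translation into $\reflang$ with the translations $\del\to\ac$ (Theorem~\ref{thm:strong-macro}) and $\del\to\eff$ (Theorem~\ref{thm:deltoeff}). Your explicit composition lemma and your observation that the $\eff$ case relies on $\op{shift0}\in\syntacticset{O}$ make explicit what the paper leaves implicit. In the lemma, weakness of the first translation would already suffice for item~4. The hypothesis on $\syntacticset{O}$ amounts to $\syntacticset{O}\neq\emptyset$; without it, handlers reduce to \textbf{let}, \textbf{throw} is always stuck, and the claim fails.
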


The gist of the proof of Theorem~\ref{thm:no-del-to-ref} is that \(\reflang\)
lacks \emph{control effects}.
The proof proceeds as follows:
\begin{enumerate}
\item\label{item:1} In \(\reflang\), for any evaluation context \(\context{E}\), the following
  computation cannot terminate successfully:
  \[
    \plug{E}{\letin{\underscore}{M}{\Omega}},
  \]
  where \(\Omega\) is the diverging computation
  \((\abs{x}{\app{\force{x}}{x}}) \thunk{\abs{x}{\app{\force{x}}{x}}}\).
  To successfully terminate the computation, we need to escape from the
  \textbf{let} frame when evaluating \(M\), but \(\reflang\) does not have any
  facility for this, in contrast to \(\del\).
  
\item\label{item:2} Therefore, for an arbitrary \(\reflang\) context
  \(\context{C}\) that expects a value and arbitrary computations \(M_i\),
  \(\reflang\) cannot distinguish the computations
  \(\context{C}[\thunk{\letin{\underscore}{M_i}{\Omega}}]\) from one another.
  
\item On the other hand, in \(\del\), the computations of the form
  \[
    \dollar{\force{\thunk{\letin{\underscore}{\shift{k}{\return{(\inj{L_{\mathnormal{i}}}{\unit})}}}{\Omega}}}}{x}{\return{x}},
  \]
  parameterized by variant labels \(\mathrm{L}_i\), can be distinguished.
  
\item Thus, if there exists a weak macro-translation, \(\reflang\) must be able
  to distinguish the computations of the form
  \[
    \cdollar{x}\left[{\force{\thunk{\letin{\underscore}{\cshift{k}\left[
                {\return{(\inj{\mathit{L_i}}{\unit})}}\right]}{\Omega}}}},
      {\return{x}}\right],
  \]
  where \(\cdollar{x}\) is the syntactic abstraction corresponding to a source
  dollar binding \(x\) in its additional part, and \(\cshift{k}\) is the one
  corresponding to a source shift0 binding \(k\) in its body.
  However, this contradicts (\ref{item:2}).
\end{enumerate}

To avoid clutter, we only sketch the proof of (\ref{item:2}) and omit the
proofs of the other statements, which are straightforward to check; see
\version{the full version}{Appendix~\ref{app:sec:nonexistent:deltoref}} for the
details.

\begin{proposition}\label{prop:obs-equiv}
  Let \(\context{L} \defeq (\letin{\underscore}{\hole}{\Omega})\).
  For any \(\reflang\) computations \(M\) and \(N\), the thunks
  \(\thunk{\plug{L}{M}}\) and \(\thunk{\plug{L}{N}}\) are observationally
  equivalent; that is, for every context \(\context{C}\) that expects a value,
  \(\eval{\reflang}{\plug{C}{\thunk{\plug{L}{M}}}}\) is defined if and only if
  \(\eval{\reflang}{\plug{C}{\thunk{\plug{L}{N}}}}\) is defined.
\end{proposition}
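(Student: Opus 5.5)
We prove the proposition with a simulation argument that runs the two programs side by side. Define a relation \(\approx\) between \(\reflang\) configurations as follows. We have \(\config{P}{\theta} \approx \config{Q}{\theta'}\) when \(\theta\) and \(\theta'\) have the same domain, and \(P\) and \(Q\) (and, pointwise, the stored values \(\theta(l)\) and \(\theta'(l)\)) are identical except at positions where \(P\) has \(\thunk{\plug{L}{M'}}\) and \(Q\) has \(\thunk{\plug{L}{N'}}\). Here \(M'\) and \(N'\) are arbitrary, and we close this under a second kind of discrepancy: \(P\) has \(\plug{L}{M''}\) in some position and \(Q\) has \(\plug{L}{N''}\). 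Formally, \(\approx\) is the least congruence on terms generated by these two pairs of shapes, lifted pointwise to stores.

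Clearly \(\config{\plug{C}{\thunk{\plug{L}{M}}}}{\emptyset} \approx \config{\plug{C}{\thunk{\plug{L}{N}}}}{\emptyset}\). Moreover \(\approx\) is preserved by substitution, since it is a congruence that relates values to values.

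The key lemma says that if \(\config{P}{\theta} \approx \config{Q}{\theta'}\) and \(\config{P}{\theta} \arrR \config{P_1}{\theta_1}\), then one of two things holds:
\begin{enumerate}
\item \(\config{Q}{\theta'} \arrR \config{Q_1}{\theta_1'}\) with \(\config{P_1}{\theta_1} \approx \config{Q_1}{\theta_1'}\); or
\item \(P\) has the form \(\plug{E}{\plug{L}{M''}}\), where the related position in \(Q\) is \(\plug{E'}{\plug{L}{N''}}\), so the redex lies inside an \(\mathcal{L}\)-frame on both sides.
\end{enumerate}
The proof is by case analysis on the decomposition \(P = \plug{E}{R}\) into an evaluation context and a redex. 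If the redex is not inside a discrepancy, the same rule fires in \(Q\) at the corresponding position, using the identical shape of the \(\mam\) redex or of \(\mathbf{create}/\mathbf{get}/\mathbf{set}\). Values and stored values stay related because the differences are only carried around; a forced thunk \(\force{\thunk{\plug{L}{M''}}}\) steps to \(\plug{L}{M''}\) on both sides, which is exactly why the second discrepancy shape is included. Labels are chosen in the same order, because the domains agree.

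Next we show that a configuration whose evaluation context contains an \(\mathcal{L}\)-frame, i.e. \(\plug{E}{\plug{L}{R}}\), never reaches \(\return{V}\). This is the \(\reflang\) analogue of Lemma~\ref{lem:reset-free-decomp}, and it is easier here because \(\reflang\) has no frame-removing rule other than \((F)\). Suppose such a configuration reached \(\return{V}\). Then its inner part would reach \(\return{V'}\), after which \(\Omega\) runs, and \(\Omega\) diverges. So the computation diverges or gets stuck.

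Now suppose \(\eval{\reflang}{\plug{C}{\thunk{\plug{L}{M}}}}\) is defined, and follow its reduction sequence. Along it, the relation \(\approx\) is maintained step by step, by the first case of the key lemma, until either the final configuration \(\return{V}\) is reached, or case (2) arises. Case (2) is impossible: the left side, being then of the form \(\plug{E}{\plug{L}{\cdots}}\), could never terminate, contradicting the assumption. So the whole sequence is simulated, and the right side reaches \(\return{W}\) with \(W\) related to \(V\). Hence \(\eval{\reflang}{\plug{C}{\thunk{\plug{L}{N}}}}\) is defined. The converse holds by symmetry.

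The main obstacle is the key lemma's case where a discrepancy is the active part of the term. One subtlety is that \(M''\) and \(N''\) may themselves evolve differently inside the \(\mathcal{L}\)-frame. This is why we stop the simulation there instead of trying to continue in lockstep, and rely on the non-termination observation. A second point to check is that \(\mathbf{get}\) and \(\mathbf{set}\) move related values into and out of the store consistently, which the pointwise store condition handles.
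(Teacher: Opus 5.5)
Your proposal is correct and is essentially the paper's argument: a congruence generated by the pairs of \(\mathcal{L}\)-thunks, and a lock-step simulation whose only failure mode leaves the left configuration of the form \(\plug{E}{\plug{L}{\cdots}}\), which cannot terminate successfully because \(\Omega\) diverges; this is followed by an induction along the terminating reduction sequence. The only difference is that you also relate \(\plug{L}{M''}\approx\plug{L}{N''}\), so forcing a related thunk is still a matched step and you stop only once evaluation enters the \(\mathcal{L}\)-frame, whereas the paper's relation contains just the thunks and stops at the forcing step itself (and, as the paper notes, closure under substitution relies on \(\Omega\) being closed, not merely on \(\approx\) being a congruence).
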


\begin{proof}
  It suffices to show the ``only if'' direction, since the converse direction
  follows by symmetry.
  Suppose that \(\eval{\reflang}{\plug{C}{\thunk{\plug{L}{M}}}}\) is defined and
  let \(\sim\) be the congruence relation on \(\reflang\) configurations
  generated by \(\thunk{\plug{L}{M'}} \sim \thunk{\plug{L}{N'}}\) for any
  \(\reflang\) computations \(M'\) and \(N'\).
  Then we can prove that \(\sim\) forms a lock-step simulation, except for a
  case where a thunk of the form \(\thunk{\plug{L}{M'}}\) is forced; such a step
  results in a configuration \(\config{\plug{E}{\plug{L}{M'}}}{\theta}\) for
  some evaluation context \(\context{E}\) and store \(\theta\), whose evaluation
  never terminates successfully by (\ref{item:1}), but this contradicts the
  assumption that \(\eval{\reflang}{\plug{C}{\thunk{\plug{L}{M}}}}\) is defined.
  Therefore, by replacing each occurrence of \(\thunk{\plug{L}{M'}}\) with the
  corresponding \(\thunk{\plug{L}{N'}}\) in each configuration in the reduction
  sequence of \(\plug{C}{\thunk{\plug{L}{M}}}\), we obtain a reduction sequence
  of \(\plug{C}{\thunk{\plug{L}{N}}}\) that terminates successfully.
  Thus \(\eval{\reflang}{\plug{C}{\thunk{\plug{L}{N}}}}\) is defined.
\end{proof}

%% file: sec6-related-work.tex
\section{Related work}\label{sec:related-work}

\paragraph{Expressive power of control operators}

As a finer notion than computability, Felleisen proposed the notion of
\emph{macro-expressibility} to rigorously compare the expressive power of two
programming languages when one is an extension of the
other~\cite{felleisen1991expressive}.
This notion was later adjusted to compare two languages when both are extensions
of a common language.
Riecke and Thielecke compared first-class continuations (\emph{call/cc}) and
exceptions in a typed language and proved that they cannot macro-express each
other~\cite{DBLP:conf/icalp/RieckeT99}.
Shan relaxed macro-expressibility into \emph{top-macro-expressibility} and
compared \emph{shift/reset} with dynamic delimited-control operators such as
\emph{control/prompt}~\cite{DBLP:journals/lisp/Shan07}.

More recently, Forster et al.\ compared three abstractions for user-defined
effects: effect handlers, monads, and delimited-control operators (in
particular, \emph{shift0/dollar})~\cite{DBLP:journals/pacmpl/0002KLP17,
  forster2019expressive}.
They proved that, in an untyped setting, the three are all equi-expressive.
They also compared the expressive power in a typed setting, and proved that
typed monads can macro-express typed delimited-control operators but not typed
effect handlers.
Following Forster et al.'s approach, Pir\'{o}g et al.\ established typed
equivalence between effect handlers and delimited-control operators in a
polymorphic type system~\cite{DBLP:conf/rta/PirogPS19}.
Ikemori et al.\ extended this result to the equivalence of labeled effect
handlers and labeled delimited-control
operators~\cite{DBLP:conf/ppdp/IkemoriCM23}.

However, all of these previous studies concentrated on multi-shot control
operators, and to our knowledge, the present paper is the first study to conduct
a systematic comparison of the relative expressive power of one-shot control
operators.
James and Sabry claimed that the one-shot yield operator can be seen as a
one-shot variant of delimited-control operators~\cite{yield2011}, but they did
not provide a formal justification.
Meanwhile, earlier work by de Moura and Ierusalimschy examined the expressive
power of symmetric coroutines, asymmetric coroutines, and one-shot
subcontinuations in the presence of mutable state~\cite{moura2009revisiting}.
In contrast, we study the extensions of a base calculus without mutable states,
which isolates the expressive power of each control operator itself: it is
exactly the ability to macro-express global effects such as reference cells that
separates coroutines from delimited-control operators, since macro-translations
into \(\del\) are local and hence cannot introduce a dollar enclosing the entire
program (Section~\ref{sec:nonexistent:reftodel}).

\paragraph{One-shot continuations and control operators}

Historically, the notion of one-shot continuations was formalized in the Scheme
community by Bruggeman et al.\ \cite{bruggeman1996oneshot}, who introduced them
as an efficient yet less expressive variant of general (multi-shot) first-class
continuations, motivated by the desire to avoid the stack-copying overhead upon invocation.
Their implementation of one-shot continuations, named \emph{call/1cc}, is
available in Chez Scheme~\cite{DBLP:conf/icfp/Dybvig06}.

Recently, one-shot continuations have been attracting attention as a means of
integrating control abstractions into mainstream programming languages.
Since version 5.0, OCaml has implemented one-shot effect handlers because they
allow smooth integration into OCaml's runtime
system~\cite{DBLP:conf/pldi/Sivaramakrishnan21}.
Phipps-Costin et al.\ proposed WasmFX, which extends WebAssembly with one-shot
effect handlers, enabling type-safe non-local control features such as
async/await~\cite{phipps2023continuing}.
Although WasmFX gives types to continuations, it enforces one-shotness
dynamically by aborting execution on a second invocation; OCaml likewise raises
a runtime error.
Our calculi model this discipline directly, and we give a first formal account
of what such one-shot control abstractions can and cannot macro-express.

The one-shot usage of continuations has also been studied from a type-theoretic
and semantic perspective.
Van Rooij and Krebbers presented Affect, an affine type-and-effect system equipped
with effect handlers and reference cells~\cite{DBLP:journals/pacmpl/RooijK25}.
In Affect, one-shot and multi-shot continuations coexist, and their usage is
statically tracked by the type system, in contrast to our untyped setting, where
one-shotness is enforced dynamically.
From a semantic perspective, Berdine et al.\ demonstrated that various control
behaviors, such as exceptions, labeled jumping, and coroutines, can be
CPS-transformed into a target calculus with linear
types~\cite{berdine2002linear}.
Notably, they pointed out that, in these transforms, continuations are
\emph{used} linearly.
Based on this observation, Hasegawa~\cite{DBLP:conf/flops/Hasegawa02} showed
that Berdine et al.'s linear CPS is a special instance of the monadic transforms
of the computational lambda calculus~\cite{DBLP:conf/lics/Moggi89} into the
linear lambda calculus, and proved its full completeness.
It would be interesting to analyze the expressive power of one-shot control
operators from this semantic perspective.

%% file: sec7-conclusion.tex
\section{Conclusion}\label{sec:conclusion}

\begin{figure}[t]
  \centering
  \begin{tikzcd}
    & & \ac & & \\
    & &     & & \\
    \eff \arrow[rruu] \arrow[rr, shift left=1] & & \del \arrow[ll, shift left=1] \arrow[uu]& & \reflang \arrow[lluu]
  \end{tikzcd}
  \caption{Macro-expressibility obtained in this work.
    An arrow indicates the existence of a strong macro-translation in that
    direction, while the absence of an arrow indicates the non-existence of a
    weak macro-translation.}
  \label{fig:relations-conclusion}
\end{figure}

This study investigated the expressive power of one-shot control operators,
summarized in Figure~\ref{fig:relations-conclusion}, resulting in the following
two key findings:
\begin{enumerate}
\item One-shot delimited-control operators and one-shot effect handlers
  can be macro-expressed by asymmetric coroutines.
\item The converse direction does not hold, and this non-macro-expressibility is
  witnessed by reference cells.
\end{enumerate}

Previously, the former result was considered trivial; however, we found a gap in
the literature and fixed the problem by introducing reference cells, which can
be simulated using coroutines, and using them to manage the validity of
continuations.
As this result demonstrates, establishing such a connection requires rigorous
analysis.

The latter result depends on the non-macro-expressibility of reference cells in
terms of effect handlers.
Kammar and Pretnar conjectured that effect handlers alone cannot implement
reference cells, while establishing that they can macro-express dynamically
scoped state~\cite{DBLP:journals/jfp/KammarP17}.
We proved their conjecture in the macro-expressibility setting, and the
proof exploited the \emph{locality} of macro-translations.

This non-macro-expressibility result suggests that macro-expressibility is too
strict as a measure for comparing expressive power.
We conjecture that, under a suitable relaxation of the locality of
macro-translations, $\eff$ can macro-express $\reflang$ and $\ac$.
One possible relaxation is to allow macro-translations to insert a fixed context
at the root of each translated term, as Shan did for comparing the expressive
power of delimited-control operators~\cite{DBLP:journals/lisp/Shan07}.
We leave the exploration of this idea for future work.

Comparing the expressive power in a typed setting is an important direction to
explore.
For effect handlers and delimited-control operators, there are a number of
proposals for their type systems, including linear/affine
ones~\cite{DBLP:journals/pacmpl/TangHLM24,DBLP:journals/pacmpl/RooijK25}.
Such type systems are particularly interesting here, since they would enforce
one-shotness statically, whereas our calculi detect its violation dynamically.
As Forster et al.'s results indicate, moving to a typed setting can change the
relationships obtained in the untyped setting~\cite{forster2019expressive};
whether the macro-expressibility results in
Figure~\ref{fig:relations-conclusion} carry over remains open.

Another important direction is to design type systems for one-shot control
operators, in the spirit of Berdine et al.'s linearly used
continuations~\cite{berdine2002linear}, whose examples already include a
restricted form of coroutines.
The asymmetric coroutines considered here, however, are much more general, and
the design space for their type systems is correspondingly broad---thus
requiring a systematic approach.
An instance of such an approach is Anton and Thiemann's type system for
coroutines~\cite{DBLP:conf/aplas/AntonT10}, which is derived via a translation
to a calculus with the \emph{shift/reset} operators and reference cells.
Whether a type system can be derived in the same spirit from our
macro-translations remains to be investigated.

\subsection*{Acknowledgments}
The authors were supported by JSPS KAKENHI Grant Number JP23K24819, Japan.
The first author is supported by JST BOOST, Grant Number JPMJBS2414, Japan.

%% file: appA-deltoac.tex
\section{Supplementary proofs for Section~\ref{sec:DELoneToAC}}
\label{app:sec:deltoac}

\subsection{Simulation relation}
\label{app:sec:deltoac:simulation}

\subsubsection{Well-formedness}\label{app:sec:well-formedness}

\begin{definition}
  For a \(\del\) computation \(M\), let \(\lb{M}\) be the set of continuation labels occurring in \(M\).
  For a \(\del\) store \(\theta\), define
  \[
    \lb{\theta} \defeq \bigcup \{\lb{\theta(l)} \mid l \in \dom{\theta}, \theta(l) \neq \nil \}.
  \]
  For a \(\del\) configuration \(C = \config{M}{\theta}\), define
  \[
    \lb{C} \defeq \lb{M} \cup \lb{\theta}.
  \]
  For \(\ac\) computations, stores, and configurations, define these functions accordingly.
\end{definition}

\begin{definition}
  A \(\del\) configuration \(\config{M}{\theta}\) is \emph{well-formed}, written \(\WF_{\del}(\config{M}{\theta})\), if and only if
  \[
    \lb{\config{M}{\theta}} \subseteq \dom{\theta}.
  \]
\end{definition}

\begin{proposition}[preservation of \(\del\) well-formedness]\label{app:prop:del-wellformedness}
  If \(\WF_{\del}(\config{M}{\theta})\) and \(\redD{\config{M}{\theta}}{\config{M'}{\theta'}}\), then \(\WF_{\del}(\config{M'}{\theta'})\).
\end{proposition}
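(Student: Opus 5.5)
The plan is to decompose the step \(\redD{\config{M}{\theta}}{\config{M'}{\theta'}}\) as \(M \equiv \plug{C}{R}\), \(M' \equiv \plug{C}{R'}\) with \(\redbetaD{\config{R}{\theta}}{\config{R'}{\theta'}}\). We then do a case analysis on the beta rule used and show \(\lb{\config{M'}{\theta'}} \subseteq \dom{\theta'}\). Two general facts will be used throughout. First, \(\dom{\theta} \subseteq \dom{\theta'}\) in every rule: (shift) adds a label, (throw) only overwrites an existing entry with \(\nil\), and the other rules leave the store unchanged. Second, \(\lb{\plug{C}{R'}} \subseteq \lb{\mathcal{C}} \cup \lb{R'}\), with \(\lb{\mathcal{C}} \subseteq \lb{M} \subseteq \dom{\theta}\). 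So it suffices to check that \(\lb{R'} \subseteq \dom{\theta'}\) and \(\lb{\theta'} \subseteq \dom{\theta'}\).

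For the \((\mam)\) rules the store is unchanged. I will use a routine substitution fact, \(\lb{N[V_1/x_1,\ldots]} \subseteq \lb{N} \cup \bigcup_i \lb{V_i}\), together with the observation that each \(\mam\) redex's reduct is built from subterms of the redex. This gives \(\lb{R'} \subseteq \lb{R}\). Rule (ret) is handled the same way via substitution.

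For (shift), with \(R = \dollar{\plug{H}{\shift{k}{N_0}}}{x}{N}\), we have \(R' = N_0[l/k]\) and \(\theta' = \theta[l := \abs{y}{\dollar{\plug{H}{\return{y}}}{x}{N}}]\). The labels of \(R'\) lie in \(\lb{N_0} \cup \{l\} \subseteq \dom{\theta} \cup \{l\} = \dom{\theta'}\). The new store entry has labels in \(\lb{R} \subseteq \dom{\theta}\), and the labels of older entries are unaffected.

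For (throw), \(R' = \dollar{\plug{H}{\return{V}}}{x}{N}\), where the continuation stored at \(l\) is a valid entry of \(\theta\). The labels of \(R'\) therefore lie in \(\lb{\theta(l)} \cup \lb{V} \subseteq \dom{\theta}\). Setting \(\theta(l)\) to \(\nil\) only shrinks \(\lb{\theta'}\), because \(\nil\) entries are excluded from the definition of \(\lb{\theta}\), while \(\dom{\theta'} = \dom{\theta}\). The (fail) rule produces \(\bot\), so it is outside the statement.

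The only mildly delicate point is the (throw) case. There we must use that labels appearing inside the stored continuation are accounted for by \(\lb{\theta}\), which is guaranteed precisely because that entry is valid. Everything else is a bookkeeping induction on term structure for the substitution fact.
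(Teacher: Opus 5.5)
Your proposal is correct and takes the same approach as the paper's proof. Both do a case analysis on the beta rule with the evaluation context unchanged, use substitution bookkeeping for the \((\mam)\) and \((\mathrm{ret})\) cases, add the fresh label to the domain in \((\mathrm{shift})\), and use the validity of \(\theta(l)\) to bound the reduct's labels in \((\mathrm{throw})\). Your explicit check that \(\lb{\theta'} \subseteq \dom{\theta'}\) in every case is, if anything, a little more careful than the paper, which mainly concentrates on the labels of the reduct.
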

\begin{proof}
  We prove this by case analysis on the \(\del\) beta-reduction rule used in the source step.  The surrounding context
  itself is unchanged by the reduction, so it suffices to check that the reduct \(M'\) contains only labels already
  defined in the updated store.

  In the \(\mam\) cases, no continuation label is generated and the store is unchanged.  Every continuation label
  occurring in the reduct already occurs in the redex, hence is contained in \(\dom{\theta}\) by the premise.

  In the \((\mathrm{ret})\) case,
  \[
    \config{\dollar{\return{V}}{x}{N}}{\theta}
    \arrD
    \config{N[V/x]}{\theta},
  \]
  and every label in \(N[V/x]\) already occurs in \(N\) or in \(V\).  Hence it already occurs in the premise and belongs
  to \(\dom{\theta}\).

  In the \((\mathrm{shift})\) case,
  \[
    \config{\dollar{\plug{H}{\shift{k}{M}}}{x}{N}}{\theta}
    \arrD
    \config{M[l/k]}{\theta[l:=\abs{y}{\dollar{\plug{H}{\return{y}}}{x}{N}}]},
  \]
  where \(l\) is fresh.  All old labels remain in \(\dom{\theta}\), and the only new label that can occur in either the
  current computation or the new store entry is \(l\), which is added to the store domain by the reduction.

  In the \((\mathrm{throw})\) case,
  \[
    \config{\throw{l}{V}}{\theta}
    \arrD
    \config{\dollar{\plug{H}{\return{V}}}{x}{N}}{\theta[l:=\nil]},
  \]
  with \(\theta(l)=\abs{y}{\dollar{\plug{H}{\return{y}}}{x}{N}}\).  The store domain is unchanged, and every label in
  \(H\), \(V\), or \(N\) already occurs either in the current computation or in the store entry for \(\theta(l)\) of the
  premise.  Hence all labels in the reduct lie in \(\dom{\theta[l:=\nil]}\).
  
  The \((\mathrm{fail})\) case does not occur.
\end{proof}

\begin{definition}
  For an \(\ac\) term \(E\), let \(\activelabels{E}\) be the set of active labels, i.e., labels occurring in a
  subterm of the form \(\labeledc{l}{M}\).  An \(\ac\) configuration \(\config{N}{\tau}\) is \emph{well-formed}, written
  \(\WF_{\ac}(\config{N}{\tau})\), if and only if
  \begin{enumerate}
  \item \(\wellformed{N}\) holds;
  \item \(\activelabels{V}=\emptyset\) for every value \(V\in\im{\tau}\);
  \item for every \(l\in\activelabels{N}\), \(\tau(l)=\nil\).
  \end{enumerate}
  Here, \(\wellformed{N}\) is a predicate inductively defined by the following rules
  \begin{mathpar}
    \inferrule
    {\activelabels{V} \cup \activelabels{M} = \emptyset}
    {\wellformed{\pcase{V}{x_1}{x_2}{M}}}
  \and
    \inferrule
    {\activelabels{V} \cup (\cup_i \activelabels{M_i}) = \emptyset}
    {\wellformed{\scase{V}{L_i}{x_i}{M_i}}}
  \\
    \inferrule
    {\activelabels{V} = \emptyset}
    {\wellformed{\force{V}}}
  \and
    \inferrule
    {\activelabels{V} = \emptyset}
    {\wellformed{\return{V}}}
  \and
    \inferrule
    {\wellformed{M} \\ \activelabels{N} = \emptyset}
    {\wellformed{\seq{x}{M}{N}}}
  \\
    \inferrule
    {\activelabels{M} = \emptyset}
    {\wellformed{\abs{x}{M}}}
  \and
    \inferrule
    {\wellformed{M} \\ \activelabels{N} = \emptyset}
    {\wellformed{\app{M}{V}}}
  \and
    \inferrule
    {\activelabels{M} \cup \activelabels{N} = \emptyset}
    {\wellformed{\cpair{M}{N}}}
  \\
    \inferrule
    {\wellformed{M}}
    {\wellformed{\prj{i}{M}}}
  \and
    \inferrule
    {\wellformed{M} \\ l \notin \activelabels{M}}
    {\wellformed{\labeledc{l}{M}}}
  \and
    \inferrule
    {\activelabels{V} = \emptyset}
    {\wellformed{\create{V}}}
  \and
  \inferrule
    {\activelabels{V} \cup \activelabels{W} = \emptyset}
    {\wellformed{\resume{V}{W}}}
  \and
  \inferrule
    {\activelabels{V} = \emptyset}
    {\wellformed{\yield{V}}}
  \end{mathpar}
\end{definition}

\begin{proposition}[preservation of \(\ac\) well-formedness]\label{prop:ac-wellformedness}
  Suppose that $\config{\plug{C}{M}}{\theta}$ is well-formed and $\redbetaAC{\config{M}{\theta}}{\config{M'}{\theta'}}$.
  Then, $\config{\plug{C}{M'}}{\theta'}$ is also well-formed.
\end{proposition}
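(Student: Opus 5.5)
We prove the statement by case analysis on the beta-reduction rule used in \(\redbetaAC{\config{M}{\theta}}{\config{M'}{\theta'}}\). First we establish the decomposition: \(\wellformed{\plug{C}{M}}\) holds iff \(\wellformed{M}\) holds, the context \(\context{C}\) is itself well-formed along its spine (every frame satisfies the premise of the corresponding rule, with non-spine subterms free of active labels), and no label is active both in \(\context{C}\) and in \(M\). This follows by induction on \(\context{C}\), since the \(\wellformed{}\) rules only allow active labels along the evaluation spine, and the rule for \(\labeledc{l}{-}\) forbids \(l\) from being active beneath it. With the decomposition, it suffices to show three things for the redex: \(\wellformed{M'}\) holds, \(\activelabels{M'} \subseteq \activelabels{M}\), and the store conditions (2) and (3) continue to hold. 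Containment of active labels ensures that gluing \(M'\) back into \(\context{C}\) keeps the disjointness, and hence preserves \(\wellformed{\plug{C}{M'}}\).

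In the \((\mam)\) cases the store is unchanged. Since \(\wellformed{M}\) forces every value and every non-spine subterm of the redex to have no active labels, substitution introduces none. The reduct \(M'\) is therefore built only from the spine part (for \((F)\), the returned \(V\) has no active labels and \(N\) has none) or from label-free pieces, and both \(\wellformed{M'}\) and \(\activelabels{M'}\subseteq\activelabels{M}\) follow. We also need a small substitution lemma: \(\wellformed{}\) is preserved under substituting values \(V\) with \(\activelabels{V}=\emptyset\), and \(\wellformed{}\) holds for every computation with no active labels.

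The remaining rules are the coroutine ones.
\begin{itemize}
\item For \((\mathrm{create})\), the fresh \(l\) receives a value with no active labels, so condition (2) holds. Moreover \(l\) is not active in \(\context{C}\), since by freshness \(l\notin\dom{\theta}\) and active labels of a well-formed configuration lie in \(\dom{\theta}\) by condition (3). This last point relies on condition (3) forcing \(\tau(l)=\nil\), which is defined.
\item For \((\mathrm{resume})\), \(\labeledc{l}{(\app{\force{\theta(l)}}{V})}\) is well-formed because \(\theta(l)\) and \(V\) have no active labels. The new store sets \(\theta'(l)=\nil\), which gives condition (3) for the newly active \(l\). We also need \(l\) not already active in \(\context{C}\): if it were, condition (3) would give \(\theta(l)=\nil\), contradicting the premise of \((\mathrm{resume})\).
\item For \((\mathrm{ret})\), the reduct \(\return{V}\) drops the active label \(l\), which is harmless because the store condition only constrains labels that are active.
\item For \((\mathrm{yield})\), the stored thunk \(\thunk{\abs{y}{\context{H}[\return{y}]}}\) must have no active labels. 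This holds because \(\context{H}\) is a pure context inside \(\labeledc{l}{-}\), so it consists only of \(\mathbf{let}\), application and projection frames. We need to show those frames, together with \(\yield{V}\), contain no \(\labeledc{}{}\) subterm: the spine of \(\context{H}\) contains no labeled frames because pure frames exclude them, and non-spine parts are label-free by \(\wellformed{}\). The label \(l\) becomes inactive and is mapped to a value with no active labels, so conditions (2) and (3) hold.
\end{itemize}

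The \((\mathrm{yield})\) case is the main obstacle, since it is the only one that moves a term from the computation into the store. Getting condition (2) there rests on the precise claim that a pure context under \(\wellformed{}\) carries no active labels. We will isolate this claim as a lemma, proved by induction on \(\context{H}\) using the rules for \(\mathbf{let}\), application and projection.
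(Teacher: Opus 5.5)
Your proposal is correct and takes essentially the same route as the paper. Both proofs do a case analysis on the $\beta$-rule and use the same key observations for $(\mathrm{create})$, $(\mathrm{resume})$ and $(\mathrm{yield})$; your up-front decomposition of $\wellformed{\plug{C}{M}}$ only makes explicit the paper's repeated ``by induction on $\context{C}$''. Note that in the $(\mathrm{resume})$ case the containment $\activelabels{M'} \subseteq \activelabels{M}$, which you say suffices, actually fails, since $l$ becomes newly active. Your separate argument that $l \notin \activelabels{\context{C}}$ supplies exactly the needed disjointness, however. In $(\mathrm{create})$, deriving $l \notin \activelabels{\context{C}}$ from condition (3) is more careful than the paper's bare appeal to freshness.
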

\begin{proof}
  We prove the proposition by case analysis on $\redbetaAC{\config{M}{\theta}}{\config{M'}{\theta'}}$.

  \paragraph{Case \((\mam)\).}
  We prove one representative case.
  Suppose that $\redbetaAC{\config{M}{\theta}}{\config{M'}{\theta'}}$ is
  \[
    \redbetaAC{\config{\seq{x}{\return{V}}{M}}{\theta}}{\config{M[V/x]}{\theta'}}.
  \]
  Since $\seq{x}{\return{V}}{M}$ is well-formed, $V$ and $M$ do not contain any active labels, implying $\activelabels{M[V/x]} = \emptyset$.
  By straightforward induction on $\context{C}$, we conclude that $\config{\plug{C}{M[V/x]}}{\theta}$ is well-formed.  

  \paragraph{Case \((\mathrm{create})\).}
  Suppose that $\redbetaAC{\config{M}{\theta}}{\config{M'}{\theta'}}$ is
  \[
    \redbetaACp{\create{V}}{\theta}{\return{l}}{\theta[l := V]},
  \]
  where \(l\) is a fresh label.
  Since $l$ is fresh, $l \notin \activelabels{\context{C}}$.
  The well-formedness of $\create{V}$ ensures $\activelabels{V} = \emptyset$.
  By induction on $\context{C}$, we obtain that $\config{\plug{C}{\return{l}}}{{\theta[l := V]}}$ is well-formed.

  \paragraph{Case \((\mathrm{resume})\).}
  Suppose that $\redbetaAC{\config{M}{\theta}}{\config{M'}{\theta'}}$ is
  \[
    \inferrule
    {
      l \in \dom{\theta} \\ \theta(l) \neq \nil
    }
    {
      \redbetaACp{\resume{l}{V}}{\theta}{\labeledc{l}{(\app{\force{\theta(l)}}{V})}}{\theta[l := \nil]}
    }.
  \]
  If $l$ is active in a well-formed configuration $\config{\plug{C}{\resume{l}{V}}}{\theta}$, it follows that $\theta(l) = \nil$, but this is a contradiction.
  Therefore, $l$ occurs only once as an active label in $\config{\plug{C}{\labeledc{l}{(\app{\force{\theta(l)}}{V})}}}{\theta[l := \nil]}$ and is mapped to $\nil$ in $\theta[l := \nil]$.
  From the well-formedness of $\config{\plug{C}{\resume{l}{V}}}{\theta}$, we have that $\activelabels{\theta(l)} = \emptyset$ and $\activelabels{V} = \emptyset$.
  Then, it is easy to check that this configuration is well-formed by induction on $\context{C}$.

  \paragraph{Case \((\mathrm{fail})\).}
  This case does not occur, as $\config{\plug{C}{M}}{\theta}$ reduces to $\config{\plug{C}{M'}}{\theta'}$, not $\bot$.

  \paragraph{Case \((\mathrm{ret})\).}
  This case is immediate from the definition of well-formedness.

  \paragraph{Case \((\mathrm{yield})\).}

  Suppose $\redbetaAC{\config{M}{\theta}}{\config{M'}{\theta'}}$ is
  \[
    \redbetaACp{\labeledc{l}{\context{H}[\yield{V}]}}{\theta}{\return{V}}{\theta[l := \thunk{\abs{y}{\context{H}[\return{y}]}}]}.
  \]
  The label $l$ is no longer active in
  $\config{\plug{C}{\return{V}}}{\theta[l :=
    \thunk{\abs{y}{\context{H}[\return{y}]}}]}$, while other active labels
  remain distinct and are mapped to $\nil$ in
  $\theta[l := \thunk{\abs{y}{\context{H}[\return{y}]}}]$.
  Moreover, from the well-formedness of $\plug{H}{\yield{V}}$, $\context{H}$ and
  $V$ do not contain any active labels.
  Therefore, by induction on $\context{C}$, we see that
  $\config{\plug{C}{\return{V}}}{\theta[l :=
    \thunk{\abs{y}{\context{H}[\return{y}]}}]}$ is well-formed.
\end{proof}

\begin{lemma}\label{lem:ac-wellformedness-subterm}
  For an \(\ac\) computation \(N\) and its subcomputation \(N'\), if \(\WF_{\ac}(N)\), then \(\WF_{\ac}(N')\).
  Moreover, for a store \(\tau\), \(\WF_{\ac}(\config{N}{\tau})\) implies \(\WF_{\ac}(\config{N'}{\tau})\).
\end{lemma}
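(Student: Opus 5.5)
The first claim goes by structural induction on \(N\) along the inductive definition of \(\wellformed{\cdot}\), and the second follows by combining it with the definition of \(\WF_{\ac}\) on configurations. Here ``subcomputation'' is read as a computation occurring as a subterm of \(N\); the same argument also covers subterms sitting inside values such as thunks.

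For the first claim, inspect the last rule used to derive \(\wellformed{N}\) and look at the immediate subcomputations of \(N\). Two kinds occur.

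\begin{itemize}
\item An immediate subcomputation in an evaluation position is the body \(M\) of \(\seq{x}{M}{N_2}\), of \(\app{M}{V}\), of \(\prj{i}{M}\), or of \(\labeledc{l}{M}\). For these the rule lists \(\wellformed{M}\) as a premise, so the claim holds directly.
\item Every other immediate subcomputation \(M\) (the continuation of a let, a case branch, the body of an abstraction, either component of a pair, or a computation under a thunk in a value) is required by the rule to satisfy \(\activelabels{M}=\emptyset\), possibly via the value containing it.
\end{itemize}

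For the second kind I will use an auxiliary fact, proved by its own structural induction: if \(\activelabels{E}=\emptyset\), then \(\wellformed{E}\) holds for every computation \(E\), and every computation subterm of \(E\) again has no active labels. This holds because each rule's side conditions only ask that certain subterms have no active labels, or are well-formed, and both follow inductively. The \(\labeledc{l}{-}\) rule never applies here, since \(E\) contains no labeled computation.

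Since the subterm relation is the transitive closure of the immediate-subterm relation, an induction on the depth of the occurrence of \(N'\) in \(N\) then gives \(\wellformed{N'}\).

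For the configuration part, \(\WF_{\ac}(\config{N'}{\tau})\) requires three conditions.

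\begin{itemize}
\item \(\wellformed{N'}\), which is the first claim.
\item \(\activelabels{V}=\emptyset\) for every \(V\in\im{\tau}\). This depends only on \(\tau\), so it is inherited unchanged from \(\WF_{\ac}(\config{N}{\tau})\).
\item \(\tau(l)=\nil\) for every \(l\in\activelabels{N'}\). This follows from \(\activelabels{N'}\subseteq\activelabels{N}\), which is immediate because any subterm \(\labeledc{l}{M}\) of \(N'\) is also a subterm of \(N\).
\end{itemize}

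The only step needing care is the auxiliary fact. The inductive definition of \(\wellformed{\cdot}\) has a rule for every computation constructor, including \(\mathbf{create}\), \(\mathbf{resume}\) and \(\mathbf{yield}\). I must check each of those rules to confirm that its side conditions are implied by the absence of active labels. This is routine but needs to be done for every case.
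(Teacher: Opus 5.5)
Your proof is correct and follows the same route as the paper. The first claim is a routine structural induction; you helpfully make explicit the auxiliary fact that a computation with no active labels is automatically well-formed, which the paper leaves implicit. The configuration part follows from \(\activelabels{N'}\subseteq\activelabels{N}\), together with the observation that the condition on \(\im{\tau}\) depends only on the store.
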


\begin{proof}
  The former statement can be proved by routine induction on the structure of \(N\).
  Since \(\activelabels{N'} \subseteq \activelabels{N}\), it is straightforward to check the other side conditions.
\end{proof}

\subsubsection{Invariants}

\begin{definition}
  A label map is a partial function
  \[
    \eta : \syntacticset{L}_{\mathbf{D}} \rightharpoonup \syntacticset{L}_{\mathbf{A}} \times \syntacticset{L}_{\mathbf{A}}.
  \]
  If \(\eta(l) = (mc, m)\), then \(mc\) is the reference cell that represents \(l\), and \(m\) is the coroutine label that \(mc\) may hold as \(\valid{m}\).
  We parameterize and extend the translation with $\eta$, by translating \(l\) to \(mc\). We call this extension a \emph{runtime translation} and denote it by \(\mtempty_{\eta}\).
  Similarly, we can extend the runtime translation to a translation on contexts by mapping a hole to a hole.
\end{definition}

\begin{definition}[coherence conditions]
  For a \(\del\) store \(\theta\) and a label map \(\eta\), we define \(\Coh(\theta, \eta)\) to hold if and only if the following conditions are satisfied:
  \begin{enumerate}
  \item[(C1)] \(\dom{\eta} = \dom{\theta}\);
  \item[(C2)] the first projection of \(\eta\) is injective: if \(\eta(l) = (mc, m)\) and \(\eta(l') = (mc, m')\), then \(l = l'\).
  \end{enumerate}
\end{definition}

\begin{definition}
  For a pure \(\del\) context \(H\) and a computation \(M\) with a free variable \(x\), define
  \[
    \Cont_{\eta}(H, x, M) \defeq \thunk{\abs{y}{\letin{x}{\mte{\plug{H}{\return{y}}}{\eta}}{\return{\thunk{\abs{\underscore}{\mte{M}{\eta}}}}}}}.
  \]
\end{definition}

\begin{definition}[invariant conditions]
  For a \(\del\) store \(\theta\), an \(\ac\) store \(\tau\), and a runtime
  label map \(\eta\), we define \(\Inv(\theta, \tau, \eta)\) to hold if and only
  if the following conditions are satisfied.
  \begin{enumerate}
  \item[(IC1)] If \(\theta(l) = \nil\) and \(\eta(l) = (mc, m)\), then
    \(\tau(mc) = \refcell{\invalid}\);
  \item[(IC2)] If \(\theta(l) = \abs{y}{\dollar{\plug{H}{\return{y}}}{x}{M}}\)
    and \(\eta(l) = (mc, m)\), then
    \begin{enumerate}
    \item \(\tau(mc) = \refcell{\valid{m}}\),
    \item \(\tau(m) = \Cont_{\eta}(H, x, M)\), and
    \item for any \(l' \in \dom{\theta}\setminus\{l\}\) and
      \(mc' \in \syntacticset{L}_{{\ac}}\), if \(\eta(l') = (mc', m)\), then
      \(\theta(l') = \nil\),
    \end{enumerate}
  \end{enumerate}
  where
  \[
    \Cont_{\eta}(H, x, M) \defeq
    \thunk{\abs{y}{\letin{x}{\mte{\plug{H}{\return{y}}}{\eta}}{\return{\thunk{\abs{\underscore}{\mte{M}{\eta}}}}}}}.
  \]
\end{definition}

  Condition (IC1) ensures that a reference cell representing an invalid
  continuation has \(\invalid\).
  Condition (IC2) states that a valid continuation corresponds to a valid
  reference cell containing a coroutine label that realizes the continuation.
  Moreover, condition (IC2-c) says that a coroutine label \(m\) realizes only one
  valid source continuation label.
  In other words, since the second component of \(\eta\) is not necessarily
  injective, the label \(m\) may have more than one corresponding source
  continuations labels, but condition (IC2-c) ensures that only one of them is
  valid.

\subsubsection{Simulation relation}

\begin{definition}
  For coroutine label \(m\), an \(\ac\) computation \(N\), and a \(\del\) computation \(M\) with a free variable \(x\), define
  \[
    \Act_{\eta}(N, x, M, m) \defeq
    \begin{pmatrix*}[l]
      \letin{\var{res}}{\labeledc{m}{\left( \letin{x}{N}{\return{\thunk{\abs{\underscore}{\mte{M}{\eta}}}}} \right)}}{\\\app{\force{\var{res}}}{\thunk{\app{\force{\var{ref}}}{\valid{m}}}}}
    \end{pmatrix*}
    ,
  \]
  and for an \(\ac\) evaluation context \(\context{D}\),
  \[
    \ActCtx_{\eta}(\context{D}, x, M, m) \defeq
    \begin{pmatrix*}[l]
      \letin{\var{res}}{\labeledc{m}{\left( \letin{x}{\context{D}}{\return{\thunk{\abs{\underscore}{\mte{M}{\eta}}}}} \right)}}{\\\app{\force{\var{res}}}{\thunk{\app{\force{\var{ref}}}{\valid{m}}}}}
    \end{pmatrix*}
  \]
\end{definition}

\begin{definition}[core relation]
  For a label map \(\eta\), define a relation \(\rsimue{\eta}\, \subseteq \mathrm{Conf}_{\del} \times \mathrm{Conf}_{\ac}\) inductively as follows.

  For every constructor \(E\) among
  \begin{mathpar}
    \pcase{V}{x_1}{x_2}{M},
    \and
    \scase{V}{L_{\mathnormal{i}}}{x_i}{M_i},
    \and
    \force{V},\and \return{V},\\
    \abs{x}{M},
    \and
    \cpair{M_1}{M_2},
    \and
    \shift{k}{M},
    \and
    \throw{V}{W},
  \end{mathpar}
  we have the structural clause
  \[
    \simue{\config{E}{\theta}}{\config{\mte{E}{\eta}}{\tau}}{\eta},
  \]
  for any \(\theta\) and \(\tau\).

  In addition,%
  \begin{mathpar}
    \inferrule
    {
      \simue{\config{M_1}{\theta}}{\config{N_1}{\tau}}{\eta}
    }
    {
      \simue
      {\config{\letin{x}{M_1}{M_2}}{\theta}}
      {\config{\letin{x}{N_1}{\mte{M_2}{\eta}}}{\tau}}
      {\eta}
    },
    \\
    \inferrule
    {
      \simue{\config{M}{\theta}}{\config{N}{\tau}}{\eta}
    }
    {
      \simue
      {\config{\app{M}{V}}{\theta}}
      {\config{\app{N}{\mte{V}{\eta}}}{\tau}}
      {\eta}
    },
    \and
    \inferrule
    {
      \simue{\config{M}{\theta}}{\config{N}{\tau}}{\eta}
    }
    {
      \simue
      {\config{\prj{i}{M}}{\theta}}
      {\config{\prj{i}{N}}{\tau}}
      {\eta}
    },
  \end{mathpar}
  and the dollar clause:
  \[
     \inferrule
    {
      \simue{\config{M_1}{\theta}}{\config{N_1}{\tau}}{\eta} \\
      \tau(m) = \nil \\
      \forall l,mc.\;\eta(l) = (mc, m) \Rightarrow \theta(l) = \nil
    }
    {
      \simue
      {\config{\dollar{M_1}{x}{M_2}}{\theta}}
      {\config{\Act_{\eta}(N_1, x, M_2, m)}{\tau}}
      {\eta}
    }.
  \]
\end{definition}

\begin{definition}[core relation on contexts]
  For a label map \(\eta\), define a relation \(\rsimuec{\eta} \subseteq \mathrm{CConf}_{\del} \times \mathrm{CConf}_{\ac}\) inductively as follows.
  \begin{mathpar}
    \simuec{\config{\hole}{\theta}}{\config{\hole}{\tau}}{\eta},
    \and
    \inferrule
    {
      \simuec{\config{\context{C}}{\theta}}{\config{\context{D}}{\tau}}{\eta}
    }
    {
      \simuec
      {\config{\letin{x}{\context{C}}{M_2}}{\theta}}
      {\config{\letin{x}{\context{D}}{\mte{M_2}{\eta}}}{\tau}}
      {\eta}
    },
    \and
    \inferrule
    {
      \simuec{\config{\context{C}}{\theta}}{\config{\context{D}}{\tau}}{\eta}
    }
    {
      \simuec
      {\config{\app{\context{C}}{V}}{\theta}}
      {\config{\app{\context{D}}{\mte{V}{\eta}}}{\tau}}
      {\eta}
    },
    \and
    \inferrule
    {
      \simuec{\config{\context{C}}{\theta}}{\config{\context{D}}{\tau}}{\eta}
    }
    {
      \simuec
      {\config{\prj{i}{\context{C}}}{\theta}}
      {\config{\prj{i}{\context{D}}}{\tau}}
      {\eta}
    },
    \and
    \inferrule
    {
      \simuec{\config{\context{C}}{\theta}}{\config{\context{D}}{\tau}}{\eta} \\
      \tau(m) = \nil \\
      \forall l,mc.\;\eta(l) = (mc, m) \Rightarrow \theta(l) = \nil
    }
    {
      \simuec
      {\config{\dollar{\context{C}}{x}{M_2}}{\theta}}
      {\config{\ActCtx_{\eta}(\context{D}, x, M_2, m)}{\tau}}
      {\eta}
    }.
  \end{mathpar}
\end{definition}

\begin{definition}
  For \(\del\) and \(\ac\) configurations \(C = \config{M}{\theta}\) and \(D = \config{N}{\tau}\), we write \(\simues{C}{D}{\eta}\)
  if and only if all of the following hold:
  \begin{enumerate}
  \item \(\WF_{\del}(C)\);
  \item \(\WF_{\ac}(D)\);
  \item \(\Coh(\theta, \eta)\);
  \item \(\Inv(\theta, \tau, \eta)\);
  \item \(\simue{C}{D}{\eta}\).
  \end{enumerate}
\end{definition}

\begin{definition}[simulation relation]
  For \(\del\) and \(\ac\) configurations \(C\) and \(D\), define \(C \sim D\) if and only if either \(C = D = \bot\), or there exist a label map \(\eta\) and an \(\ac\) configuration \(D'\) such that
  \[
    D \arrAC^{*} D' \quad\text{and}\quad \simues{C}{D'}{\eta}.
  \]
\end{definition}

\subsection{Proof}
\label{app:sec:DELoneToAC:proof}

\subsubsection{Lemmas}

\begin{lemma}[substitution]\label{lem:subst}
  For any \(\del\) term \(E\) with free variables \(x_1, \ldots, x_n\), \(\del\)
  values \(V_1, \ldots, V_n\), and label map \(\eta\),
  \[
    \mte{E}{\eta}[\mte{V_1}{\eta}/x_1, \ldots,\mte{V_n}{\eta}/x_n] \equiv \mte{E[V_1/x_1, \ldots, V_n/x_n]}{\eta}.
  \]

  For any \(\del\) pure context \(H\) and \(\del\) computation \(M\), and label map \(\eta\),
  \[
    \mte{\context{H}}{\eta}[\mte{M}{\eta}] \equiv \mte{\plug{H}{M}}{\eta}.
  \]
\end{lemma}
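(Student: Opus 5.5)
The first claim will be proved by structural induction on the \(\del\) term \(E\), simultaneously for values and computations. We will fix \(\eta\) and write \(\sigma\) for the source substitution \([V_1/x_1,\ldots,V_n/x_n]\) and \(\mte{\sigma}{\eta}\) for the translated one \([\mte{V_1}{\eta}/x_1,\ldots,\mte{V_n}{\eta}/x_n]\).

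The base cases come first. For a variable \(x_i\), both sides are \(\mte{V_i}{\eta}\). For any other variable, the unit value, or a continuation label \(l\), the translation, namely \(\eta(l)_1\) in the label case, contains no \(x_i\), so the substitution has no effect on either side.

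For \(\mam\) constructors, the translation is homomorphic. Since substitution is capture-avoiding, we may rename bound variables to be fresh for the \(x_i\) and for the free variables of the \(V_i\). The translation commutes with renaming of bound variables, so the claim follows directly from the induction hypotheses on the immediate subterms.

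The three \(\del\)-specific constructors \(\shift{k}{M}\), \(\dollar{M}{x}{N}\) and \(\throw{V}{W}\) are handled through the syntactic abstractions of Figure~\ref{fig:deltoac}. Each of them is \(\mte{E}{\eta} = \mathcal{A}[\mte{E_1}{\eta},\ldots]\), with holes sitting under binders such as \(k\), \(x\), \(z\), \(zc\), \(res\) and \(zc'\). The key observation is that the abstraction's own material is closed apart from its holes and those binders. The auxiliary thunks \(\var{ref}\), \(\var{get}\), \(\var{set}\), \(\var{fail}\), \(\var{loop}\) and \(\var{th}\) are closed terms. The binders introduced by the translation can be assumed fresh; we invoke the paper's standing freshness convention and \(\alpha\)-conversion for this. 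Consequently \((\mathcal{A}[N_1,\ldots])\mte{\sigma}{\eta} \equiv \mathcal{A}[N_1\mte{\sigma}{\eta},\ldots]\), and the induction hypothesis applies to each hole.

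Two binders need explicit care. The source binders \(k\) in shift0 and \(x\) in dollar are reused by the translation: \(k\) is rebound by \(\letin{k}{\force{x}}{\cdots}\), and \(x\) by \(\mathbf{let}\;x = \mte{M}{\eta}\). Since \(\sigma\) does not touch bound variables after renaming, this is harmless.

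The second claim will be proved by induction on the pure context \(\context{H}\). For a hole it is trivial. For a frame \(\mathcal{P}[\context{H}']\), which is \(\seq{x}{\hole}{N}\), \(\app{\hole}{V}\) or \(\prj{i}{\hole}\), the translation is homomorphic on the frame, so the claim reduces to the induction hypothesis. Because a pure context contains no dollar frame, it never meets a non-homomorphic case.

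The main obstacle is the bookkeeping of bound-variable freshness in the macro cases, in particular making sure that translation-introduced variables such as \(z\), \(zc\), \(res\) and \(zc'\) do not capture free variables of the \(\mte{V_i}{\eta}\). We will discharge this once, by observing that the translation is defined on \(\alpha\)-equivalence classes with these variables chosen fresh.
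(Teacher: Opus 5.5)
Your proposal is correct and takes the same approach as the paper, whose proof is just structural induction on \(E\) for the first claim and on the pure context \(\context{H}\) for the second. Your handling of the shift0, dollar and throw cases, using closed auxiliary thunks and fresh translation-introduced binders, spells out bookkeeping that the paper leaves implicit.
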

\begin{proof}
  Both are proved by induction on \(E\) and \(\context{H}\), respectively.
\end{proof}

\begin{lemma}\label{lem:calc}
The following AC reductions hold.
\begin{enumerate}
\item
  \[
    \config{\app{\force{\var{ref}}}{V}}{\tau}
    \arrAC^{+}
    \config{\return{l}}{\tau[l:=\refcell{V}]},
  \]
  where \(l\) is a fresh label.
\item If $\tau(l)=\refcell{V}$, then
  \[
    \config{\app{\app{\force{\var{get}}}{l}}{\unit}}{\tau}
    \arrAC^{+}
    \config{\return{V}}{\tau}.
  \]
\item If $\tau(l)=\refcell{V}$, then
  \[
    \config{\app{\app{\force{\var{set}}}{l}}{W}}{\tau}
    \arrAC^{+}
    \config{\return{\unit}}{\tau[l:=\refcell{W}]}.
  \]
\item $\config{\force{\var{fail}}}{\tau} \arrAC^{+} \bot$.
\end{enumerate}
\end{lemma}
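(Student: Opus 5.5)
The plan is to prove each of the four items by direct calculation. We unfold the definitions of \(\var{ref}\), \(\var{get}\), \(\var{set}\), \(\var{fail}\), \(\refcell{-}\), \(\var{loop}\) and \(\var{th}\) from Figure~\ref{fig:deltoac}, and follow the deterministic \(\arrAC\) reduction, using the rules of Figure~\ref{fig:ac_beta} together with the \(\mam\) rules \((U)\), \((\to)\), \((F)\) and \((+)\). Every step is a beta step at the root of the current configuration, so no nontrivial evaluation context has to be tracked.

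\textbf{Items 1 and 4.} For item 1, \(\app{\force{\var{ref}}}{V}\) reduces by \((U)\) and \((\to)\) to \(\create{\refcell{V}}\). Rule \((\mathrm{create})\) then yields \(\config{\return{l}}{\tau[l := \refcell{V}]}\) with \(l\) fresh. For item 4, \(\force{\var{fail}}\) creates a coroutine \(z\) with body \(\thunk{\abs{\underscore}{\return{\unit}}}\) and resumes it, reaching \(\labeledc{z}{\return{\unit}}\) with \(\tau(z)=\nil\). The coroutine then finishes by \((\mathrm{ret})\), and this leaves \(\tau(z)=\nil\). The second \(\resume{z}{\unit}\) therefore fires \((\mathrm{fail})\) and the configuration reaches \(\bot\).

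\textbf{Items 2 and 3.} Both reduce to one core computation: resuming a cell \(l\) with \(\tau(l)=\refcell{V}\) on a query \(q\). Write \(F \defeq \thunk{\abs{x}{\app{\force{\var{th}}}{\thunk{\app{\force{x}}{x}}}}}\), so that \(\var{loop} = \thunk{\app{\force{F}}{F}}\). The chain of steps is as follows.
\begin{enumerate}
\item \(\resume{l}{q}\) gives \(\labeledc{l}{(\app{\force{\refcell{V}}}{q})}\) with \(l \mapsto \nil\).
\item This steps to \(\labeledc{l}{\app{\app{\force{\var{loop}}}{V}}{q}}\).
\item Unfolding \(\var{loop}\) gives \(\app{\app{\app{\force{\var{th}}}{\thunk{\app{\force{F}}{F}}}}{V}}{q}\).
\item The \(\mathbf{case}\) in \(\var{th}\) selects the \(\mathrm{Get}\) or the \(\mathrm{Set}\) branch.
\item The resulting \(\mathbf{yield}\) is caught by \((\mathrm{yield})\) under the pure context \(\letin{q'}{\hole}{\cdots}\).
\end{enumerate}
This returns \(V\) (for \(\mathrm{Get}\)) or \(\unit\) (for \(\mathrm{Set}\) with argument \(W\)). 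For item 2, the application of \(\var{get}\) to the extra argument \(\unit\) is read as in the throw translation, i.e.\ as \(\app{\force{\var{get}}}{l}\).

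\textbf{Main obstacle.} The hard step is matching the store entry written by \((\mathrm{yield})\) with the claimed one. The entry produced is \(\thunk{\abs{y}{\letin{q'}{\return{y}}{\app{\app{\force{\thunk{\app{\force{F}}{F}}}}{S}}{q'}}}}\), with \(S=V\) for a \(\mathrm{Get}\) query and \(S=W\) for a \(\mathrm{Set}\) query. This is \(\refcell{S}\) except that \(\var{loop}\) appears in its unfolded form \(\thunk{\app{\force{F}}{F}}\), and that form is literally the thunk \(\var{loop}\). I will verify this syntactic identity carefully, because the self-application argument \(\thunk{\app{\force{x}}{x}}\) with \(x := F\) is exactly \(\thunk{\app{\force{F}}{F}} = \var{loop}\), so the entry is syntactically \(\refcell{S}\).

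If the bracketing of \(\app{\force{\var{loop}}}{\app{\var{v}}{\var{q'}}}\) in the definition of \(\refcell{-}\) differs from the one I assume, I would instead prove a small auxiliary invariant: every stored cell is of the form \(\refcell{S}\) up to this one unfolding, and both forms reduce identically on every query. That invariant is enough for all later uses of the lemma.
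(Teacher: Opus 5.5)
Your proposal is correct and is the same direct calculation as the paper's ``routine calculation'': in items 2 and 3 the thunk stored by \((\mathrm{yield})\) is syntactically \(\refcell{S}\), because the value substituted for \(f\) in \(\var{th}\) is literally \(\var{loop}\). Your fallback is unnecessary, since \(\app{\app{\force{\var{loop}}}{\var{v}}}{\var{q'}}\) is the only well-formed reading of \(\refcell{v}\).

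Two small corrections. First, item 2 as printed is false: the extra argument leaves the stuck configuration \(\config{\app{(\return{V})}{\unit}}{\tau}\), so reading it as \(\app{\force{\var{get}}}{l}\) is a required fix, not a convention. Second, the steps are not all at the root; most occur under frames such as \(\app{\hole}{V}\), \(\letin{z}{\hole}{\cdots}\) and \(\labeledc{l}{\hole}\), which the contextual closure of \(\arrAC\) handles.
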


\begin{proof}
  By routine calculation.
\end{proof}

\begin{lemma}[pure contexts]\label{lem:pure}
If \(\context{H}\) is a pure \(\del\) context, then for any \(\ac\) context \(\context{D}\),
\[
  \simuec{\config{\context{H}}{\theta}}{\config{\context{D}}{\tau}}{\eta}
  \qquad\text{iff}\qquad
  \context{D} \equiv \mte{H}{\eta}.
\]
\end{lemma}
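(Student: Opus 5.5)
We prove the lemma by induction on the structure of the pure context \(\context{H}\), establishing both directions at once. Recall that a pure \(\del\) context is generated by \(\hole\) and the pure frames \(\seq{x}{\hole}{N}\), \(\app{\hole}{V}\), and \(\prj{i}{\hole}\). The runtime translation \(\mtempty_{\eta}\) acts homomorphically on these frames, sending a hole to a hole. So \(\mte{H}{\eta}\) is the \(\ac\) context obtained by translating the frames one by one, with the bound computation \(N\) and the argument \(V\) translated by \(\mtempty_{\eta}\).

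For the ``if'' direction, we induct on \(\context{H}\). If \(\context{H} = \hole\), the hole clause of \(\rsimuec{\eta}\) gives \(\simuec{\config{\hole}{\theta}}{\config{\hole}{\tau}}{\eta}\) directly. For \(\context{H} = \seq{x}{\context{H}'}{N}\), we have \(\mte{H}{\eta} \equiv \seq{x}{\mte{H'}{\eta}}{\mte{N}{\eta}}\). The induction hypothesis gives \(\simuec{\config{\context{H}'}{\theta}}{\config{\mte{H'}{\eta}}{\tau}}{\eta}\), and the let-clause then yields the claim. The application and projection frames are handled in the same way by their clauses. No side conditions arise here, because the only clause carrying premises on stores and coroutine labels is the dollar clause, and it is never used.

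For the ``only if'' direction, we induct on the derivation of \(\simuec{\config{\context{H}}{\theta}}{\config{\context{D}}{\tau}}{\eta}\), equivalently on \(\context{H}\), and inspect the last rule used. The key observation is that each clause of \(\rsimuec{\eta}\) determines the outermost frame on the \(\del\) side: a hole, a let frame, an application frame, a projection frame, or a dollar frame. Since \(\context{H}\) is pure, its outermost frame is never a dollar frame, so the dollar clause (whose target is \(\ActCtx_{\eta}(\ldots)\)) cannot be the last rule. If \(\context{H} = \hole\), only the hole clause applies, so \(\context{D} = \hole = \mte{\hole}{\eta}\). If \(\context{H} = \seq{x}{\context{H}'}{N}\), only the let-clause applies, so \(\context{D} \equiv \seq{x}{\context{D}'}{\mte{N}{\eta}}\) with \(\simuec{\config{\context{H}'}{\theta}}{\config{\context{D}'}{\tau}}{\eta}\). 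Since \(\context{H}'\) is again pure, the induction hypothesis gives \(\context{D}' \equiv \mte{H'}{\eta}\), hence \(\context{D} \equiv \mte{H}{\eta}\). The remaining frames are analogous.

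The only point needing care is this inversion step, namely that the clauses are syntax-directed on the source context. This holds because the source-side conclusions of the five clauses have pairwise distinct outermost shapes. That distinctness is also what excludes the dollar clause for pure contexts. Everything else is routine.
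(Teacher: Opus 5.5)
Your proposal is correct and follows essentially the same approach as the paper, which proves the lemma by induction on \(\context{H}\). You make the inversion step explicit: the source-side shapes of the clauses of \(\rsimuec{\eta}\) are pairwise distinct, so for a pure context the dollar clause is excluded and each remaining clause forces the corresponding translated frame. That is exactly what the paper's one-line proof leaves implicit.
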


\begin{proof}
By induction on \(\context{H}\).
\end{proof}

\begin{lemma}\label{lem:plug}
  If \(\simuec{\config{\context{C}}{\theta}}{\config{\context{D}}{\tau}}{\eta}\) and \(\simue{\config{M}{\theta}}{\config{N}{\tau}}{\eta}\), then
  \[
    \simue{\config{\plug{C}{M}}{\theta}}{\config{\plug{D}{N}}{\tau}}{\eta}.
  \]
\end{lemma}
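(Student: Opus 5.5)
We proceed by induction on the derivation of \(\simuec{\config{\context{C}}{\theta}}{\config{\context{D}}{\tau}}{\eta}\), keeping \(M\), \(N\), \(\theta\), \(\tau\), and \(\eta\) fixed throughout. The key point is that each clause of the context relation mirrors exactly one inductive clause of the core relation \(\rsimue{\eta}\) on configurations. Moreover, the side conditions in the context clauses mention only \(\theta\), \(\tau\), \(\eta\), and \(m\), and never the contents of the hole. So plugging should commute with the relation clause by clause.

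\textbf{Base case.} Here \(\context{C} = \hole\) and \(\context{D} = \hole\). Then \(\plug{C}{M} = M\) and \(\plug{D}{N} = N\), so the conclusion is exactly the hypothesis \(\simue{\config{M}{\theta}}{\config{N}{\tau}}{\eta}\).

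\textbf{Let, application, and projection frames.} Take the let case, where \(\context{C} = \letin{x}{\context{C}'}{M_2}\) and \(\context{D} = \letin{x}{\context{D}'}{\mte{M_2}{\eta}}\).
\begin{enumerate}
\item The induction hypothesis gives \(\simue{\config{\plug{C'}{M}}{\theta}}{\config{\plug{D'}{N}}{\tau}}{\eta}\).
\item Applying the let clause of \(\rsimue{\eta}\) yields the relation for \(\letin{x}{\plug{C'}{M}}{M_2}\) and \(\letin{x}{\plug{D'}{N}}{\mte{M_2}{\eta}}\).
\item These are syntactically \(\plug{C}{M}\) and \(\plug{D}{N}\).
\end{enumerate}
We should take care that plugging is not capture-avoiding: the variable \(x\) bound by the frame is the same on both sides, so the identification is literal. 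The application and projection frames are handled identically.

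\textbf{Dollar frame.} Here \(\context{C} = \dollar{\context{C}'}{x}{M_2}\) and \(\context{D} = \ActCtx_{\eta}(\context{D}', x, M_2, m)\). The premises of this context clause are:
\begin{enumerate}
\item \(\simuec{\config{\context{C}'}{\theta}}{\config{\context{D}'}{\tau}}{\eta}\);
\item \(\tau(m) = \nil\);
\item \(\forall l,mc.\;\eta(l) = (mc, m) \Rightarrow \theta(l) = \nil\).
\end{enumerate}
The induction hypothesis gives \(\simue{\config{\plug{C'}{M}}{\theta}}{\config{\plug{D'}{N}}{\tau}}{\eta}\). Premises 2 and 3 are exactly the side conditions of the dollar clause of \(\rsimue{\eta}\), and they are unchanged because the store and label map are the same. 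Applying that clause gives the relation with target \(\Act_{\eta}(\plug{D'}{N}, x, M_2, m)\).

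It remains to check the identity \(\plug{\ActCtx_{\eta}(\context{D}', x, M_2, m)}{N} \equiv \Act_{\eta}(\plug{D'}{N}, x, M_2, m)\). This is immediate by comparing the two definitions: they differ only in having \(\context{D}'\) in place of \(N\) in the body of the labeled computation \(\labeledc{m}{\cdot}\).

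\textbf{Where the difficulty lies.} This dollar case is the only nonroutine step. Its difficulty is bookkeeping, not mathematics: the side conditions must be independent of what is plugged in, which holds because they speak only about \(\theta\), \(\tau\), \(\eta\), and \(m\). Since every case is purely syntactic, no reduction or property of stores is needed.
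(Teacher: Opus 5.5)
Your proposal is correct and follows the paper's proof essentially step for step. Like the paper, you induct on the derivation of the context relation, dispatch the hole case directly, rebuild the let, application, and projection frames with the matching clauses of the core relation, and close the dollar case by reusing its two side conditions together with the syntactic identity \(\Act_{\eta}(\plug{D'}{N}, x, M_2, m) \equiv \plug{\ActCtx_{\eta}(\context{D}', x, M_2, m)}{N}\).
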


\begin{proof}
  We prove the lemma by induction on the derivation of
  \[
    \simuec{\config{\context{C}}{\theta}}{\config{\context{D}}{\tau}}{\eta}.
  \]
  If both contexts are holes, then the conclusion is exactly the second premise.

  Suppose the last rule introduces a let-frame.  Then
  \[
    \context{C}=(\letin{x}{\context{C}_0}{M_2}),
    \qquad
    \context{D}=(\letin{x}{\context{D}_0}{\mte{M_2}{\eta}}),
  \]
  and
  \(
    \simuec{\config{\context{C}_0}{\theta}}{\config{\context{D}_0}{\tau}}{\eta}
  \).
  By the induction hypothesis,
  \[
    \simue{\config{\context{C}_0[M]}{\theta}}{\config{\context{D}_0[N]}{\tau}}{\eta}.
  \]
  Applying the let-clause of the core relation gives the required conclusion.
  The application-frame and projection-frame cases are identical, using the
  corresponding clauses of the core relation.

  Finally suppose the last rule introduces a dollar frame.  Then
  \[
    \context{C}=\dollar{\context{C}_0}{x}{M_2},
    \qquad
    \context{D}=\ActCtx_{\eta}(\context{D}_0, x, M_2,m),
  \]
  with
  \[
    \simuec{\config{\context{C}_0}{\theta}}{\config{\context{D}_0}{\tau}}{\eta},
    \qquad
    \tau(m)=\nil,
  \]
  and
  \[
    \forall l,mc.\; \eta(l)=(mc,m)\Rightarrow \theta(l)=\nil.
  \]
  By the induction hypothesis,
  \[
    \simue{\config{\context{C}_0[M]}{\theta}}{\config{\context{D}_0[N]}{\tau}}{\eta}.
  \]
  The dollar clause of the core relation, together with the two side conditions above, yields
  \[
    \simue
    {\config{\dollar{\context{C}_0[M]}{x}{M_2}}{\theta}}
    {\config{\Act_{\eta}(\context{D}_0[N], x, M_2,m)}{\tau}}
    {\eta}.
  \]
  Since
  \( \Act_{\eta}(\context{D}_0[N], x, M_2,m) \equiv
  (\ActCtx_{\eta}(\context{D}_0, x, M_2,m))[N], \) this is the required
  relation.
\end{proof}

\begin{lemma}\label{lem:decomp}
If
\[
  \simue{\config{\plug{C}{M}}{\theta}}{\config{P}{\tau}}{\eta},
\]
then there exist a target evaluation context \(\context{D}\) and a target computation \(N\)
such that
\[
  P \equiv \plug{D}{N},
  \qquad
  \simuec{\config{\context{C}}{\theta}}{\config{\context{D}}{\tau}}{\eta},
  \qquad
  \simue{\config{M}{\theta}}{\config{N}{\tau}}{\eta}.
\]
\end{lemma}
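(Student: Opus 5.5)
The proof will go by induction on the structure of the source evaluation context \(\context{C}\), following the same inductive shape as Lemma~\ref{lem:plug} but in the opposite direction. Throughout, the key fact is that the core relation \(\rsimue{\eta}\) is defined syntax-directed on the source computation. For each source head constructor there is exactly one clause whose conclusion has that shape on the left, so inverting the derivation of \(\simue{\config{\plug{C}{M}}{\theta}}{\config{P}{\tau}}{\eta}\) is deterministic.

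\textbf{Base case.} If \(\context{C} \equiv \hole\), we take \(\context{D} \defeq \hole\) and \(N \defeq P\). Then \(\simuec{\config{\hole}{\theta}}{\config{\hole}{\tau}}{\eta}\) holds by the hole clause, and the remaining premise is exactly the hypothesis.

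\textbf{Non-dollar frames.} Suppose \(\context{C} \equiv \mathcal{F}[\context{C}_0]\) for a frame \(\mathcal{F}\). If \(\mathcal{F}\) is \(\letin{x}{\hole}{M_2}\), then \(\plug{C}{M}\) is a let-term. The only clause of the core relation concluding with a let-term on the source side is the let-clause; the base clauses do not apply, since let is not among the listed constructors. Hence \(P \equiv \letin{x}{P_0}{\mte{M_2}{\eta}}\) with \(\simue{\config{\context{C}_0[M]}{\theta}}{\config{P_0}{\tau}}{\eta}\). The induction hypothesis yields \(\context{D}_0\) and \(N\) with \(P_0 \equiv \plug{D_0}{N}\), \(\simuec{\config{\context{C}_0}{\theta}}{\config{\context{D}_0}{\tau}}{\eta}\), and \(\simue{\config{M}{\theta}}{\config{N}{\tau}}{\eta}\). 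We then set \(\context{D} \defeq \letin{x}{\context{D}_0}{\mte{M_2}{\eta}}\) and conclude by the let-clause of \(\rsimuec{\eta}\). The application and projection frames are handled identically, using their respective clauses.

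\textbf{Dollar frame.} If \(\mathcal{F}\) is \(\dollar{\hole}{x}{M_2}\), inversion must use the dollar clause, because the base clauses exclude dollar terms. This gives
\(P \equiv \Act_{\eta}(P_0, x, M_2, m)\) for some \(m\), with \(\simue{\config{\context{C}_0[M]}{\theta}}{\config{P_0}{\tau}}{\eta}\), \(\tau(m)=\nil\), and \(\forall l,mc.\;\eta(l)=(mc,m)\Rightarrow\theta(l)=\nil\). We apply the induction hypothesis to \(P_0\), set \(\context{D} \defeq \ActCtx_{\eta}(\context{D}_0, x, M_2, m)\), and use the identity \(\Act_{\eta}(\plug{D_0}{N},x,M_2,m) \equiv \plug{D}{N}\). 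The context dollar clause then applies with the same side conditions, which are unchanged because \(\theta\), \(\tau\), and \(\eta\) are fixed throughout the statement.

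\textbf{Main obstacle.} The one point needing care is making the inversion rigorous. We must check that distinct clauses of \(\rsimue{\eta}\) never share a source-side head constructor. For the target side, we only need that \(P\) has the stated form; this is read off directly from the clause's conclusion, so no injectivity of the translation is required.
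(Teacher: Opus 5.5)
Your proof is correct and follows the paper's own argument. Both proceed by induction on \(\context{C}\), invert the unique clause of \(\rsimue{\eta}\) fixed by the source head constructor (let, application, projection, or dollar), apply the induction hypothesis to its premise, and rebuild \(\context{D}\) with the matching frame or with \(\ActCtx_{\eta}\), reusing the dollar clause's side conditions unchanged. Your explicit check that no two clauses share a source-side head constructor is exactly what justifies the paper's phrase ``the last rule must be''.
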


\begin{proof}
  We prove the lemma by induction on the structure of \(\context{C}\).

  If \(\context{C}=\hole\), take \(\context{D}=\hole\) and \(N=P\).  The contextual
  relation for holes is immediate, and the third conclusion is the premise.

  Suppose \(\context{C}=(\letin{x}{\context{C}_0}{M_2})\).  Then the last rule used in the derivation of
  \( \simue{\config{\letin{x}{\context{C}_0[M]}{M_2}}{\theta}}{\config{P}{\tau}}{\eta} \) must be the let-clause of the
  core relation.  Therefore
  \[
    P\equiv \letin{x}{P_0}{\mte{M_2}{\eta}}
  \]
  for some \(P_0\), and
  \[
    \simue{\config{\context{C}_0[M]}{\theta}}{\config{P_0}{\tau}}{\eta}.
  \]
  Applying the induction hypothesis to this gives \(\context{D}_0\) and
  \(N\) such that \(P_0 \equiv \context{D}_0[N]\),
  \(
    \simuec{\config{\context{C_0}}{\theta}}{\config{\context{D_0}}{\tau}}{\eta}
  \), and
  \(
    \simue{\config{M}{\theta}}{\config{N}{\tau}}{\eta}
  \).
  Take
  \[
    \context{D}=\letin{x}{\context{D}_0}{\mte{M_2}{\eta}}.
  \]
  The application and projection cases are the same.

  Suppose \(\context{C}=\dollar{\context{C}_0}{x}{M_2}\).  Then the last rule of
  the core relation derivation must be the dollar clause.  Hence for some
  \(m\) and \(P_0\),
  \[
    P\equiv \Act_{\eta}(P_0, x, M_2,m),
    \quad
    \simue{\config{\context{C}_0[M]}{\theta}}{\config{P_0}{\tau}}{\eta},
  \]
  and the side conditions
  \[
    \tau(m)=\nil,
    \qquad
    \forall l,mc.\;\eta(l)=(mc,m)\Rightarrow\theta(l)=\nil
    \tag{\(*\)}
  \]
  hold.  Applying the induction hypothesis to the subconfiguration involving \(P_0\)
  gives \(\context{D}_0\) and \(N\) such that
  \[
    P_0 \equiv \context{D}_0[N],
    \qquad
    \simuec{\config{\context{C}_0}{\theta}}{\config{\context{D}_0}{\tau}}{\eta},
    \qquad
    \simue{\config{M}{\theta}}{\config{N}{\tau}}{\eta}.
  \]
  Take
  \[
    \context{D}=\ActCtx_{\eta}(\context{D}_0, x, M_2,m).
  \]
  Applying the contextual dollar clause to \(\simuec{\config{\context{C}_0}{\theta}}{\config{\context{D}_0}{\tau}}{\eta}\) is justified by \((*)\), and the
  equality \(P\equiv\plug{D}{N}\) follows from the definition of \(\ActCtx\).
\end{proof}

\begin{lemma}\label{lem:shape}
  Let \(V\) be a \(\del\)-value.
  \begin{enumerate}
  \item If \(\mte{V}{\eta}\equiv\unit\), then \(V=\unit\).
  \item If \(\mte{V}{\eta}\equiv(W_1,W_2)\), then \(V=(V_1,V_2)\) for some
    \(V_1,V_2\), with \(W_i\equiv\mte{V_i}{\eta}\).
  \item If \(\mte{V}{\eta}\equiv\inj{L}{W}\), then \(V=\inj{L}{V'}\) for some \(V'\),
    with \(W\equiv\mte{V'}{\eta}\).
  \item If \(\mte{V}{\eta}\equiv\thunk{P}\), then \(V=\thunk{M}\) for some \(M\), with
    \(P\equiv\mte{M}{\eta}\).
  \item If \(\mte{V}{\eta}\equiv l\) for some coroutine label \(l\), then \(V=l'\) for
    some continuation label \(l'\) and \(\eta(l')=(l,m)\) for some \(m\).
  \end{enumerate}
\end{lemma}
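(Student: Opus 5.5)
The lemma is an inversion principle for the runtime translation \(\mtempty_{\eta}\) on values, so I will prove it by case analysis on the shape of the \(\del\) value \(V\). I will not need induction. Values of \(\del\) are variables, \(\unit\), pairs \(\vpair{V_1}{V_2}\), variants \(\inj{L}{V'}\), thunks \(\thunk{M}\), and continuation labels \(l\). For each of these I will record the outermost constructor of \(\mte{V}{\eta}\), reading it off the definition of the runtime translation:
\begin{itemize}
\item variables and \(\unit\) are mapped to themselves;
\item pairs, variants and thunks are mapped homomorphically, so \(\mte{\vpair{V_1}{V_2}}{\eta} \equiv \vpair{\mte{V_1}{\eta}}{\mte{V_2}{\eta}}\), \(\mte{\inj{L}{V'}}{\eta} \equiv \inj{L}{\mte{V'}{\eta}}\), and \(\mte{\thunk{M}}{\eta} \equiv \thunk{\mte{M}{\eta}}\);
\item a continuation label \(l\) with \(\eta(l) = (mc, m)\) is mapped to the coroutine label \(mc\).
\end{itemize}
The one point to stress is that none of the nontrivial clauses of Figure~\ref{fig:deltoac} (for shift0, dollar, and throw) applies to values. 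Those abstractions translate computations only, so the value grammar is translated strictly constructor-by-constructor.

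Given this table, each item follows by matching the top-level constructor of \(\mte{V}{\eta}\) against the six possible shapes, and then using injectivity of syntactic constructors (terms taken modulo \(\alpha\)-equivalence).

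For (1), \(\mte{V}{\eta} \equiv \unit\) excludes every shape except \(V = \unit\). A variable translates to itself, so it is not \(\unit\), and a label translates to a coroutine label, which is not \(\unit\) either.

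For (2) and (3), only a pair, respectively a variant with the same tag \(\mathrm{L}\), can produce the given outermost constructor. Injectivity then gives \(W_i \equiv \mte{V_i}{\eta}\), respectively \(W \equiv \mte{V'}{\eta}\).

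For (4), only \(V = \thunk{M}\) yields a thunk, and then \(P \equiv \mte{M}{\eta}\).

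For (5), a coroutine label can only arise from the label clause, because variables are disjoint from labels and no other constructor yields a bare label. So \(V = l'\) with \(\mte{l'}{\eta}\) defined. That means \(\eta(l') = (l, m)\) for some \(m\), since \(\mte{l'}{\eta}\) is by definition the first component of \(\eta(l')\).

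The main potential obstacle is purely bookkeeping. I must make sure that the runtime translation is genuinely homomorphic on every value former and that no value is sent to a term whose head constructor overlaps another case. An overlap could occur if the syntactic abstraction for some \(\del\) construct produced a value. This does not happen, because every \(\del\)-specific construct is a computation. I must also confirm that the syntactic classes of variables and of \(\ac\) labels are disjoint. Once these two facts are checked, every case is immediate.
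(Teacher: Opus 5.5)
Your proposal is correct and follows the paper's own argument: a case analysis on the syntax of \(V\), using that the runtime translation is homomorphic on the ordinary value constructors and sends a continuation label \(l'\) to the first component of \(\eta(l')\). Your remarks that variables and labels are disjoint and that the \(\del\)-specific clauses translate only computations make explicit what the paper leaves implicit.
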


\begin{proof}
  By case analysis on the syntax of \(V\).
  The translation is homomorphic on all ordinary constructors, and the only source values mapped to target labels are source continuation labels.
\end{proof}

\begin{lemma}\label{lem:return-shape}
  If \(\simue{\config{M}{\theta}}{\config{\return{W}}{\tau}}{\eta}\), then there exists a \(\del\) value \(V\) such that \(M=\return{V}\) and \(W\equiv\mte{V}{\eta}\).
\end{lemma}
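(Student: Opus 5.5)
We argue by case analysis on the last rule used to derive \(\simue{\config{M}{\theta}}{\config{\return{W}}{\tau}}{\eta}\), showing that every clause except the base clause for \(\return{V}\) yields a target computation whose outermost constructor is not \(\mathbf{return}\).

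First we dispose of the inductive clauses. The let-clause produces \(\letin{x}{N_1}{\mte{M_2}{\eta}}\), the application clause produces \(\app{N}{\mte{V}{\eta}}\), and the projection clause produces \(\prj{i}{N}\). The dollar clause produces \(\Act_{\eta}(N_1,x,M_2,m)\), which is syntactically a \(\mathbf{let}\). None of these is syntactically equal to \(\return{W}\), so these cases are impossible.

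Next come the base clauses, where the target is \(\mte{E}{\eta}\) for the listed constructors \(E\). Since the runtime translation is homomorphic on \(\mam\) constructors, we have:
\begin{itemize}
\item \(\mte{\pcase{V}{x_1}{x_2}{M}}{\eta}\) is a product case;
\item the variant case translates to a variant case;
\item \(\mte{\force{V}}{\eta}\) is a force;
\item \(\mte{\abs{x}{M}}{\eta}\) is an abstraction;
\item \(\mte{\cpair{M_1}{M_2}}{\eta}\) is a computation pair.
\end{itemize}
For the \(\del\)-specific constructors we read the outermost constructor off Figure~\ref{fig:deltoac}: \(\mte{\shift{k}{M}}{\eta}\) is a \(\mathbf{yield}\), and \(\mte{\throw{V}{W'}}{\eta}\) is a \(\mathbf{let}\). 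So none of these matches \(\return{W}\) either.

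The only remaining case is \(E=\return{V}\). Here \(\mte{\return{V}}{\eta}\equiv\return{\mte{V}{\eta}}\) by homomorphism, so \(M=\return{V}\) and \(W\equiv\mte{V}{\eta}\) by injectivity of the \(\mathbf{return}\) constructor.

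There is no real obstacle. The only point requiring care is confirming from Figure~\ref{fig:deltoac} that the head constructors of the shift0 and throw translations are not \(\mathbf{return}\), and that \(\Act_{\eta}\) is headed by \(\mathbf{let}\). Both are immediate from the definitions, since terms are compared only up to \(\alpha\)-equivalence and not up to reduction.
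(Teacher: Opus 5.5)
Your proof is correct and follows the same route as the paper: a case analysis on the last rule of the derivation, observing that only the structural clause for \(\mathbf{return}\) yields a target whose outermost constructor is \(\mathbf{return}\). The only difference is that you write out the head constructors the paper leaves implicit: \(\mathbf{let}\) for the throw translation and for \(\Act_{\eta}\), and \(\mathbf{yield}\) for the shift0 translation.
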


\begin{proof}
  By case analysis on the derivation of \(\simue{\config{M}{\theta}}{\config{\return{W}}{\tau}}{\eta}\).
  The only rule where the outermost constructor of the target \(\ac\) computation is \(\return\) is the structural clause for \(\mathbf{return}\).
\end{proof}

\begin{lemma}\label{lem:yield-shape}
Suppose \(\context{E}\) is a pure \(\ac\) context and
\[
  \simue{\config{M}{\theta}}{\config{\plug{E}{\yield{\thunk{P}}}}{\tau}}{\eta}.
\]
Then there exist a pure \(\del\) context \(\context{H}\), a variable \(k\), and a source
computation \(M_1\) such that
\[
  M=\plug{H}{\shift{k}{M_1}},
  \qquad
  \context{E} \equiv \mte{\context{H}}{\eta},
  \qquad
  P \equiv \abs{x}{\letin{k}{\force{x}}{\mte{M_1}{\eta}}}.
\]
\end{lemma}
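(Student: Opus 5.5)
We argue by induction on the structure of the pure \(\ac\) context \(\context{E}\), simultaneously inverting the derivation of the core relation \(\simue{\config{M}{\theta}}{\config{\plug{E}{\yield{\thunk{P}}}}{\tau}}{\eta}\).

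\textbf{Base case} (\(\context{E}=\hole\)). The target computation is \(\yield{\thunk{P}}\). This computation is neither a let, an application, a projection, nor of the form \(\Act_{\eta}(\ldots)\); the last of these is always a let whose bound computation is labeled. So the derivation must end with a base clause \(\simue{\config{E'}{\theta}}{\config{\mte{E'}{\eta}}{\tau}}{\eta}\) for some source constructor \(E'\). We then check the translation table of Figure~\ref{fig:deltoac} constructor by constructor. Case matching, force, return, abstraction, pairing and throw all translate to something whose head constructor is not \(\mathbf{yield}\); for throw the head is a let. The only constructor whose translation has head \(\mathbf{yield}\) is shift0, with \(\mte{\shift{k}{M_1}}{\eta}\equiv\yield{\thunk{\abs{x}{\letin{k}{\force{x}}{\mte{M_1}{\eta}}}}}\). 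Comparing syntactically gives \(M=\shift{k}{M_1}\), \(\context{H}=\hole\), and the required shape of \(P\), up to renaming of the bound \(x\).

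\textbf{Inductive case.} Here \(\context{E}=\context{P}[\context{E}_0]\) for a pure frame \(\context{P}\), which is a let-frame, an application frame or a projection frame. The target term is then headed by that frame. The dollar clause is excluded: \(\Act_{\eta}\) is a let, but its bound computation is \(\labeledc{m}{\ldots}\), and a labeled computation cannot be the plug of a pure context around a yield, since pure contexts contain no labeled frames. A base clause whose translation happens to be a let (throw) is also excluded. Its bound computation \(\app{\force{\var{get}}}{\mte{V}{\eta}}\) would have to be \(\plug{E_0}{\yield{\thunk{P}}}\), but it is headed by an application of a force and contains no yield in evaluation position. The same reasoning rules out the other base clauses.

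Hence the last rule is the let-, application- or projection-clause. For the let-clause we get \(M=\letin{x}{M'}{M_2}\), target \(\letin{x}{N_1}{\mte{M_2}{\eta}}\) with \(N_1\equiv\plug{E_0}{\yield{\thunk{P}}}\), and the premise \(\simue{\config{M'}{\theta}}{\config{N_1}{\tau}}{\eta}\). The induction hypothesis yields \(\context{H}_0\), \(k\) and \(M_1\). We set \(\context{H}=\letin{x}{\context{H}_0}{M_2}\), and then \(\context{E}\equiv\mte{\context{H}}{\eta}\). The application and projection frames are handled in the same way; for the application frame the argument satisfies \(\mte{V}{\eta}\), exactly as the clause prescribes.

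\textbf{Main obstacle.} The main difficulty is the exhaustive exclusion of the base clauses and the \(\Act\) clause when matching against \(\plug{E}{\yield{\cdots}}\). Several translations, namely throw and \(\Act\), are themselves lets, so these exclusions have to be done carefully by inspecting the bound subterms. It is cleanest to first state a small auxiliary observation: no yield occurs in evaluation position inside \(\mte{\throw{V}{W}}{\eta}\) or inside any labeled computation reachable through pure frames. Each exclusion then reduces to this observation.
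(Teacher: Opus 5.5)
Your proof is correct and follows the paper's argument. The paper likewise rules out every base clause except shift0, rules out the dollar clause because $\Act_{\eta}$ contains a labeled computation, and handles the let, application and projection clauses by applying the induction hypothesis beneath the frame. The differences are minor: you induct on the pure context $\mathcal{E}$ rather than on the derivation of the core relation, which produces the same case split, and you exclude the throw clause (whose translation is also headed by a let) more explicitly than the paper does.
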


\begin{proof}
  We prove the lemma by induction on the derivation of
  \(\simue{\config{M}{\theta}}{\config{\plug{E}{\yield{\thunk{P}}}}{\tau}}{\eta}\).  If the last rule is the structural
  clause for \(\shift{k}{M_1}\), then take \(\context{H}=\hole\).  The target computation is exactly
  \(\yield{\thunk{\abs{x}{\letin{k}{\force{x}}{\mte{M_1}{\eta}}}}}\).

  The structural clauses for product matching, variant matching, force, return, abstraction, lazy pairing, and throw are
  impossible because their target computations cannot be written as a pure-context plugged with a yield term.

  In the let case, the target has the form \(\letin{x}{N_1}{\mte{M_2}{\eta}}\).  Since the whole target computation is
  \(\plug{E}{\yield{\thunk{P}}}\) with \(E\) pure, the occurrence of \(\yield{}\) must lie in \(N_1\).  Apply the
  induction hypothesis to the premise relating the distinguished source subcomputation to \(N_1\), and then extend the
  source pure context and the target pure context by the same let-frame.  The application and projection cases are
  identical.

  The dollar clause is impossible.  Its target has the form \(\Act_{\eta}(N_1,x,M_2,m)\), which contains an active labeled
  computation and therefore cannot be written as a pure context plugged with a yield term.
\end{proof}

\subsubsection{Canonicalization}
\label{subsec:canonicalization}

\begin{proposition}[canonicalization]\label{prop:canonicalization}
  Assume \(\WF_{\del}(\config{M}{\theta})\), \(\WF_{\ac}(\config{\mte{M}{\eta}}{\tau})\), \(\Coh(\theta, \eta)\), and \(\Inv(\theta, \tau, \eta)\).
  Then, there exist an \(\ac\) computation \(N\) and an \(\ac\) store \(\tau'\) such that
  \[
    \config{\mte{M}{\eta}}{\tau} \arrAC^{*} \config{N}{\tau'}
  \]
  and
  \begin{mathpar}
    \simue{\config{M}{\theta}}{\config{N}{\tau'}}{\eta},
    \and
    \WF_{\ac}(\config{N}{\tau'}),
    \and
    \Inv(\theta, \tau', \eta),
  \end{mathpar}
  i.e., \(\simues{\config{M}{\theta}}{\config{N}{\tau'}}{\eta}\).
\end{proposition}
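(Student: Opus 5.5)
We prove the statement by structural induction on the \(\del\) computation \(M\). The induction hypothesis is taken in exactly the form of the statement, quantified over all \(\tau\) (and over \(\theta,\eta\) satisfying the hypotheses). The cases follow the shape of the core relation.

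\textbf{Base cases.} For each constructor with a base clause (case analyses, force, return, abstraction, lazy pairing, shift0, throw), we take zero steps: \(N \defeq \mte{M}{\eta}\) and \(\tau' \defeq \tau\). The base clause gives \(\simue{\config{M}{\theta}}{\config{N}{\tau}}{\eta}\) directly, and the other two conclusions are the hypotheses.

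\textbf{Frame cases.} These are let, application and projection. For \(M = \letin{x}{M_1}{M_2}\) we have \(\mte{M}{\eta} = \letin{x}{\mte{M_1}{\eta}}{\mte{M_2}{\eta}}\). The subconfiguration \(\config{M_1}{\theta}\) is \(\WF_{\del}\), since its labels are among those of \(M\). The subconfiguration \(\config{\mte{M_1}{\eta}}{\tau}\) is \(\WF_{\ac}\) by Lemma~\ref{lem:ac-wellformedness-subterm}. The induction hypothesis then gives \(\config{\mte{M_1}{\eta}}{\tau} \arrAC^{*} \config{N_1}{\tau'}\), with the three conclusions. This reduction lifts through the frame \(\letin{x}{\hole}{\mte{M_2}{\eta}}\), because \(\arrAC\) is closed under evaluation contexts, and the let-clause yields the core relation. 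Well-formedness of the whole configuration is obtained by Proposition~\ref{prop:ac-wellformedness} applied along the lifted reduction, starting from \(\WF_{\ac}(\config{\mte{M}{\eta}}{\tau})\). \(\Inv\) is inherited from the induction hypothesis. Application and projection are identical.

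\textbf{Dollar case.} Here \(M = \dollar{M_1}{x}{M_2}\). First we run the administrative prefix of \(\mte{M}{\eta}\) by direct calculation. The create step allocates a fresh \(m\) with \(\tau_1 \defeq \tau[m := \thunk{\abs{\underscore}{\letin{x}{\mte{M_1}{\eta}}{\return{\thunk{\abs{\underscore}{\mte{M_2}{\eta}}}}}}}]\). The binding of \(zc\) is a pure \((F)\) step that substitutes the thunk \(\thunk{\app{\force{\var{ref}}}{(\valid{m})}}\); no cell is allocated yet. The resume step sets \(m := \nil\), and a few \(\mam\) steps reach
\[
  \config{\Act_{\eta}(\mte{M_1}{\eta}, x, M_2, m)}{\tau_2}, \qquad \tau_2 \defeq \tau[m := \nil].
\]
Because \(m\) is fresh, \(m \notin \dom{\tau}\). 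Since \(\Inv(\theta,\tau,\eta)\) forces both components of every \(\eta(l)\) into \(\dom{\tau}\) (through (IC1) and (IC2)), no \(l\) has second component \(m\). Hence the side condition \(\forall l,mc.\;\eta(l)=(mc,m) \Rightarrow \theta(l)=\nil\) holds vacuously. For the same reason \(\Inv(\theta,\tau_2,\eta)\) holds: the only changed entry is the new label \(m\), which no cell or continuation mentioned by \(\Inv\) refers to.

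Next we apply the induction hypothesis to \(M_1\) with store \(\tau_2\). This gives \(\config{\mte{M_1}{\eta}}{\tau_2} \arrAC^{*} \config{N_1}{\tau'}\) with the three conclusions. The reduction lifts inside \(\ActCtx_{\eta}(\hole, x, M_2, m)\). The remaining side condition \(\tau'(m) = \nil\) comes from Proposition~\ref{prop:ac-wellformedness}: \(m\) stays active throughout the lifted reduction, so \(\WF_{\ac}\) of the enclosing configuration forces \(\tau'(m)=\nil\). The vacuous side condition is preserved because \(\eta\) and \(\theta\) are unchanged. The dollar clause then gives the core relation.

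\textbf{Main obstacle.} The delicate point is the freshness bookkeeping in the dollar case. The vacuous side condition rests on reading \(\Inv\) as covering every label in \(\dom{\eta}\), which is where \(\Coh\) (C1) is used. Freshness must also be preserved in the store threaded into the induction hypothesis. A further subtlety is that \(\WF_{\ac}(\config{\mte{M_1}{\eta}}{\tau_2})\), which the induction hypothesis needs, is derived from the whole configuration via Lemma~\ref{lem:ac-wellformedness-subterm}.
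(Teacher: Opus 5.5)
Your proposal follows the paper's proof essentially step for step. The induction is structural on \(M\), the base clauses take zero steps, and the induction hypothesis is lifted through let, application and projection frames. In the dollar case you run the administrative create/resume prefix to \(\Act_{\eta}(\mte{M_1}{\eta},x,M_2,m)\) with \(m\) mapped to \(\nil\), apply the induction hypothesis under the labeled frame, and get \(\tau'(m)=\nil\) from preservation of \(\WF_{\ac}\).

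One justification is wrong, though, and it is exactly the step you single out as the main obstacle. (IC1) constrains only the cell component, \(\tau(mc)=\refcell{\invalid}\); it says nothing about the coroutine component of an invalid label. So \(\Inv(\theta,\tau,\eta)\) does not put both components of every \(\eta(l)\) into \(\dom{\tau}\), and you cannot conclude that no \(l\) is mapped to the fresh \(m\).

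The side condition \(\forall l,mc.\;\eta(l)=(mc,m)\Rightarrow\theta(l)=\nil\) still holds, but not vacuously. If \(\eta(l)=(mc,m)\), then \(l\in\dom{\theta}\) by (C1). Moreover \(l\) cannot be valid: (IC2-b) would give \(\tau(m)=\Cont_{\eta}(\ldots)\), i.e.\ \(m\in\dom{\tau}\), contradicting freshness. Hence \(\theta(l)=\nil\).

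Strictly speaking, this uses the fact that every non-\(\nil\) entry of \(\theta\) has the shape \(\abs{y}{\dollar{\plug{H}{\return{y}}}{x}{M'}}\). That holds for stores reachable from \(\emptyset\), but it is not among the stated hypotheses. The paper's remark that \(m\) ``is fresh and thus occurs in no value of \(\eta\)'' tacitly relies on a similar unstated invariant.

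Your argument for \(\Inv(\theta,\tau_2,\eta)\) is unaffected. It only needs the entries that \(\Inv\) actually constrains to lie in \(\dom{\tau}\), and that is true.
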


\begin{proof}
  By induction on \(M\).
  In the cases where \(M\) has one of the forms covered by the structural clauses of \(\rsimue{\eta}\), take \(N \equiv \mte{M}{\eta}\) and \(\tau' = \tau\).
  If \(M = \letin{x}{M_1}{M_2}\), \(M = \app{M_1}{V}\), or \(M = \prj{i}{M_1}\), apply the induction hypothesis to \(M_1\) and then rebuild the constructor by the corresponding structural clause of \(\rsimue{\eta}\).

  Finally, suppose \(M = \dollar{M_1}{x}{M_2}\).
  The translation \(\mte{\dollar{M_1}{x}{M_2}}{\eta}\) evaluates to
  \[
    N \defeq
    \begin{pmatrix*}[l]
      \letin{\var{res}}{\labeledc{m}{\left( \letin{x}{\mte{M_1}{\eta}}{\return{\thunk{\abs{\underscore}{\mte{M_2}{\eta}}}}} \right)}}{\\\app{\force{\var{res}}}{\thunk{\app{\force{\var{ref}}}{\valid{m}}}}}
    \end{pmatrix*},
  \]
  where \(m\) is a fresh coroutine label mapped to \(\nil\) in the store.
  Let \(\tau'\) be \(\tau[m := \nil]\).
  It is straightforward to check that \(\Inv(\theta, \tau', \eta)\) holds.
  Apply the induction hypothesis to \(M_1\).
  Then, we obtain
  \[
    \config{\mte{M_1}{\eta}}{\tau'} \arrAC^{*} \config{N_1}{\tau''}
  \]
  and
  \begin{mathpar}
    \simue{\config{M_1}{\theta}}{\config{N_1}{\tau''}}{\eta},
    \and
    \WF_{\ac}(\config{N_1}{\tau''}),
    \and
    \Inv(\theta, \tau'', \eta),
  \end{mathpar}
  for some \(\ac\) computation \(N_1\) and \(\ac\) store \(\tau''\).

  To obtain
  \[
    \simues
    {\config{M}{\theta}}
    {\config{\begin{pmatrix*}[l]
      \letin{\var{res}}{\labeledc{m}{\left( \letin{x}{N_1}{\return{\thunk{\abs{\underscore}{\mte{M_2}{\eta}}}}} \right)}}{\\\app{\force{\var{res}}}{\thunk{\app{\force{\var{ref}}}{\valid{m}}}}}
    \end{pmatrix*}}{\tau''}}
    {\eta},
  \]
  it suffices to check the two side conditions of the dollar clause.
  The condition \(\tau''(m) = \nil\) holds because \(m\) is active in every
  configuration occurring in the reduction of \(\mte{M_1}{\eta}\) under the frame
  \(\labeledc{m}{\hole}\), because that reduction preserves well-formedness by
  Proposition~\ref{prop:ac-wellformedness}, and because the third clause of
  \(\ac\) well-formedness forces the content of an active label to be \(\nil\).
  The condition
  \(\forall l,mc.\;\eta(l) = (mc, m) \Rightarrow \theta(l) = \nil\) holds
  vacuously, since \(m\) is fresh and thus occurs in no value of \(\eta\).
\end{proof}

\begin{corollary}[initial simulation]\label{cor:initial-sim}
  For every \(\del\) program \(M\),
  \[
    \simu{\config{M}{\emptyset}}{\config{\mt{M}}{\emptyset}}.
  \]
\end{corollary}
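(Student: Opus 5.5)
The corollary follows from Proposition~\ref{prop:canonicalization}, applied with \(\theta = \emptyset\), \(\tau = \emptyset\), and the empty label map \(\eta = \emptyset\). First, since \(M\) is a \(\del\) program it contains no continuation labels, so \(\mte{M}{\emptyset} \equiv \mt{M}\). The runtime translation differs from \(\mtempty\) only on labels, and \(\mt{M}\) is defined on label-free terms by the same clauses; a routine induction on \(M\) gives this.

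Next, I check the four hypotheses of Proposition~\ref{prop:canonicalization}.
\begin{itemize}
\item \(\WF_{\del}(\config{M}{\emptyset})\): we have \(\lb{M} = \emptyset\) and \(\lb{\emptyset} = \emptyset\).
\item \(\WF_{\ac}(\config{\mt{M}}{\emptyset})\): \(\mt{M}\) contains no coroutine labels at all. Hence it has no labeled computations \(\labeledc{l}{\cdot}\), so \(\activelabels{\cdot}\) is empty on every subterm. Every rule defining \(\wellformed{\cdot}\) then has satisfied premises, by induction on \(\mt{M}\); the last rule's side condition \(l \notin \activelabels{M}\) never arises. The store conditions on \(\emptyset\) hold vacuously.
\item \(\Coh(\emptyset, \emptyset)\): (C1) holds since both domains are empty, and (C2) holds vacuously.
\item \(\Inv(\emptyset, \emptyset, \emptyset)\): (IC1) and (IC2) quantify over labels in \(\dom{\theta} = \emptyset\), so they hold vacuously.
\end{itemize}

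Proposition~\ref{prop:canonicalization} then yields \(N\) and \(\tau'\) with \(\config{\mt{M}}{\emptyset} \arrAC^{*} \config{N}{\tau'}\) and \(\simues{\config{M}{\emptyset}}{\config{N}{\tau'}}{\emptyset}\). Taking \(D' = \config{N}{\tau'}\) and \(\eta = \emptyset\) in Definition~\ref{def:deltoac-simulation} gives \(\simu{\config{M}{\emptyset}}{\config{\mt{M}}{\emptyset}}\).

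The only step with real content is verifying \(\WF_{\ac}\) for \(\mt{M}\). The point to check is that the syntactic abstractions of Figure~\ref{fig:deltoac} introduce no labeled computations, only \(\mathbf{create}\), \(\mathbf{resume}\) and \(\mathbf{yield}\). This is immediate by inspection, so I expect no genuine obstacle.
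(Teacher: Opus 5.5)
Your proposal is correct and matches the paper's proof. The paper instantiates Proposition~\ref{prop:canonicalization} with the empty stores and the empty label map, notes that \(\config{M}{\emptyset}\) is trivially well-formed and that \(\Coh\) and \(\Inv\) hold vacuously, and obtains \(\WF_{\ac}(\config{\mt{M}}{\emptyset})\) because no labeled computation occurs in the image of the translation. Your observation that \(\mte{M}{\emptyset} \equiv \mt{M}\) for label-free \(M\) is a detail the paper leaves implicit.
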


\begin{proof}
  \(\config{M}{\emptyset}\) is trivially well-formed, and the empty label map satisfies \(\Coh\) and \(\Inv\).
  Checking the well-formedness of \(\config{\mt{M}}{\emptyset}\) is straightforward since the computation of the form \(\labeledc{l}{M}\) is not in the image of the translation and so \(\activelabels{\mt{M}} = \emptyset\).
  Then, apply Proposition~\ref{prop:canonicalization}.
\end{proof}

\subsubsection{Simulation}
\label{subsec:simulation}

\begin{proposition}[simulation of beta-reduction]\label{app:prop:beta-sim}
  Assume \(\simues{\config{M}{\theta}}{\config{N}{\tau}}{\eta}\).
  Then:
  \begin{enumerate}
  \item If \(\redbetaD{\config{M}{\theta}}{\bot}\), then
    \[
      \redACplus{\config{N}{\tau}}{\bot}.
    \]

  \item If \(\redbetaD{\config{M}{\theta}}{\config{M'}{\theta'}}\) then there exist \(N'\), \(\tau'\), \(\eta'\) such that
    \[
      \redACplus{\config{N}{\tau}}{\config{N'}{\tau'}}
      \quad\text{and}\quad
      \simues{\config{M'}{\theta'}}{\config{N'}{\tau'}}{\eta'}.
    \]
  \end{enumerate}
\end{proposition}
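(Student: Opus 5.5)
The proof is a case analysis on the \(\del\) beta-reduction rule applied to \(\config{M}{\theta}\). In each case we first invert the core relation \(\simue{\config{M}{\theta}}{\config{N}{\tau}}{\eta}\) to find the shape of \(N\). This uses Lemmas~\ref{lem:return-shape}, \ref{lem:yield-shape}, \ref{lem:shape}, and \ref{lem:pure}. We then compute the target reduction with Lemma~\ref{lem:calc} and re-establish \(\simues{}{}{}\), canonicalizing where needed with Proposition~\ref{prop:canonicalization}. We take \(\eta' = \eta\) in every case except \((\mathrm{shift})\).

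\textbf{Case \((\mam)\).} The redex is related by a structural clause or by a let/app/prj clause whose premise is a structural \(\mathbf{return}\) or \(\lambda\) clause. In the latter situation \(N\) is the translated redex, which takes the same \(\mam\) step. We then use the substitution Lemma~\ref{lem:subst} to identify the result with \(\mte{M'}{\eta}\). The stores are unchanged, so canonicalization gives \(\simues{}{}{}\).

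\textbf{Case \((\mathrm{ret})\).} The core relation must come from the dollar clause with \(N_1 = \return{\mte{V}{\eta}}\), so \(N \equiv \Act_{\eta}(\return{\mte{V}{\eta}},x,M_2,m)\). Next comes \((F)\), then \((\mathrm{ret})\) for \(\labeledc{m}{}\), which produces \(\return{\thunk{\abs{\underscore}{\mte{M_2}{\eta}}}}\). Forcing and applying this thunk gives \(\mte{M_2}{\eta}[\mte{V}{\eta}/x]\equiv\mte{M_2[V/x]}{\eta}\). The store is unchanged except for a newly allocated fresh cell, which is harmless for \(\Inv\). However, \(\tau(m)\) stays \(\nil\): the coroutine \(m\) is now dead. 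We then canonicalize.

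\textbf{Case \((\mathrm{throw})\) and case \((\mathrm{fail})\).} These go exactly as in the sketch for Proposition~\ref{prop:beta-sim}.

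\textbf{Case \((\mathrm{shift})\).} This case is the main obstacle. We have \(M = \dollar{\plug{H}{\shift{k}{M_1}}}{x}{M_2}\). Inverting the dollar clause and using Lemma~\ref{lem:yield-shape}, we get
\[
  N \equiv \Act_{\eta}(\mte{H}{\eta}[\yield{\thunk{\abs{x'}{\letin{k}{\force{x'}}{\mte{M_1}{\eta}}}}}],x,M_2,m).
\]
The target first performs \((\mathrm{yield})\), which sets \(\tau(m) := \Cont_{\eta}(H,x,M_2)\). It then returns the thunk, forces it, and applies it to \(\thunk{\app{\force{\var{ref}}}{\valid{m}}}\). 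Forcing that argument creates a fresh cell \(mc\) with \(\tau'(mc)=\refcell{\valid{m}}\) (Lemma~\ref{lem:calc}(1)). We end at \(\mte{M_1}{\eta}[mc/k]\). Let \(\eta' = \eta[l := (mc,m)]\). Since \(l\) and \(mc\) are fresh, \(\mte{-}{\eta'}\) agrees with \(\mte{-}{\eta}\) on terms not containing \(l\), so Lemma~\ref{lem:subst} gives \(\mte{M_1[l/k]}{\eta'}\). The checks are as follows.
\begin{itemize}
\item (C1) and (C2) hold by freshness.
\item (IC1) is unaffected.
\item (IC2-a) and (IC2-b) hold for \(l\) by construction. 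For every other valid \(l'\), its cells are untouched, because \(mc\) is fresh and \(m\) is not the coroutine of any valid label.
\item (IC2-c) for \(l\) is exactly the dollar clause's side condition \(\forall l',mc'.\;\eta(l')=(mc',m)\Rightarrow\theta(l')=\nil\).
\item (IC2-c) for the other valid labels holds because their coroutine differs from \(m\).
\end{itemize}
Well-formedness follows from Proposition~\ref{prop:ac-wellformedness}. We finish by applying Proposition~\ref{prop:canonicalization} to \(\config{\mte{M_1[l/k]}{\eta'}}{\tau'}\). The delicate point throughout is bookkeeping that \(m\) was the active coroutine, with \(\tau(m)=\nil\) and no valid label mapping to it, and that this is precisely what makes the new \(\Inv\) hold.
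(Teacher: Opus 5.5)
Your proposal is correct and follows the paper's own proof. It uses the same case analysis on the source rule, inversion of the core relation, explicit target computation via Lemma~\ref{lem:calc}, and the extension \(\eta' = \eta[l := (mc,m)]\) in the \((\mathrm{shift})\) case, where (IC2-c) for \(l\) comes from the dollar clause's side condition, followed by a final appeal to Proposition~\ref{prop:canonicalization}. Two harmless slips: in the \((\mathrm{ret})\) case no cell is allocated at all, because the argument \(\thunk{\app{\force{\var{ref}}}{\valid{m}}}\) is discarded by \(\abs{\underscore}{\ldots}\); and in the \((\mathrm{shift})\) case the shape of the body of \(\Act_{\eta}\) should come from Lemma~\ref{lem:decomp} with Lemma~\ref{lem:pure}, as in the paper, not from Lemma~\ref{lem:yield-shape}, which infers the source shape from the target shape rather than the converse.
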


\begin{proof}
  We prove by case analysis on the beta-reduction rule used in the source.

  \paragraph{Case (\(\mam\)).}
  These cases are routine.
  We spell out one representative case.
  If \(M \equiv (\seq{x}{\return{V}}{M_0})\), then the target must be
  \[
    N \equiv (\seq{x}{\return{\mte{V}{\eta}}}{\mte{M_0}{\eta}}).
  \]
  Reducing this computation under \(\tau\) yields
  \[
    \config{\mte{M_0}{\eta}[\mte{V}{\eta}/x]}{\tau} = \config{\mte{M_0[V/x]}{\eta}}{\tau}
  \]
  by Lemma~\ref{lem:subst}.
  Then, canonicalize this by Proposition~\ref{prop:canonicalization} to obtain the desired conclusion.

  \paragraph{Case \((\mathrm{ret})\).} Similar to the \(\mam\) case.

  \paragraph{Case \((\mathrm{throw})\).}
  Suppose \(M = \throw{l}{V}\) and
  \(\theta(l) = \abs{y}{\dollar{\plug{H}{\return{y}}}{x}{M_0}}\).
  Then
  \[
    \config{M}{\theta}
    \arrD
    \config{\dollar{\plug{H}{\return{V}}}{x}{M_0}}{\theta'},
    \qquad
    \theta' \defeq \theta[l := \nil].
  \]
  Since \(\simues{\config{M}{\theta}}{\config{N}{\tau}}{\eta}\), the last rule
  used to derive the core relation is the structural clause for \(\mathbf{throw}\).
  Hence
  \(N \equiv \mte{\throw{l}{V}}{\eta}\).
  By (C1), there are target labels \(mc\) and \(m\) such that
  \(\eta(l)=(mc,m)\).
  By (IC2),
  \[
    \tau(mc)=\refcell{\valid{m}},
    \qquad
    \tau(m)=\Cont_{\eta}(H, x, M_0),
  \]
  and, if \(l'\neq l\) and \(\eta(l')=(mc',m)\), then \(\theta(l')=\nil\).
  Put
  \[
    \tau' \defeq \tau[mc := \refcell{\invalid},\; m := \nil].
  \]
  By the definition of the translation of \(\mathbf{throw}\), Lemma~\ref{lem:calc},
  and the \(\ac\) rule for \(\mathbf{resume}\), we have
  \[
    \config{\mte{\throw{l}{V}}{\eta}}{\tau}
    \arrAC^{+}
    \config{\Act_{\eta}(\mte{\plug{H}{\return{V}}}{\eta},x,M_0,m)}{\tau'}.
  \]
  In the last step, the resumed coroutine \(m\) runs
  \(\app{\force{\Cont_{\eta}(H, x, M_0)}}{\mte{V}{\eta}}\), and Lemma~\ref{lem:subst} identifies
  the body of the coroutine with
  \[
    \letin{x}{\mte{\plug{H}{\return{V}}}{\eta}}
         {\return{\thunk{\abs{\underscore}{\mte{M_0}{\eta}}}}}.
  \]

  We next check the hypotheses of Proposition~\ref{prop:canonicalization} for the
  source reduct \(\plug{H}{\return{V}}\), the stores \(\theta'\) and \(\tau'\),
  and the label map \(\eta\); the well-formedness conditions are immediate.
  First, \(\Coh(\theta',\eta)\) holds because \(\dom{\theta'}=\dom{\theta}\), so
  (C1) is preserved and (C2) is unaffected.
  Next, \(\Inv(\theta',\tau',\eta)\) holds as follows.
  The label \(l\) is now invalid and its reference cell \(mc\) has the invalid
  tag, since \(\theta'(l)=\nil\) and \(\tau'(mc)=\refcell{\invalid}\).
  All other invalid labels in \(\theta'\) keep their old invalid cells.
  Thus (IC1) for \(\Inv(\theta',\tau',\eta)\) holds.
  If \(l'\) is valid in \(\theta'\), then \(l'\neq l\), and by (C2), the
  first component of \(\eta(l')\) is not \(\var{mc}\).
  Moreover its associated coroutine cannot be \(m\); if so, (IC2-c) for the old valid label \(l\)
  would imply \(\theta(l')=\nil\).
  Thus the store contents relevant to \(l'\) are unchanged from \(\tau\), and
  (IC2) follows from \(\Inv(\theta,\tau,\eta)\).

  By applying Proposition~\ref{prop:canonicalization} with \(M\) taken as
  \(\plug{H}{\return{V}}\), we obtain an \(\ac\) computation \(N'\) and an \(\ac\)
  store \(\tau''\) such that
  \[
    \config{\mte{\plug{H}{\return{V}}}{\eta}}{\tau'} \arrAC^{*} \config{N'}{\tau''}
    \quad\text{and}\quad
    \simues{\config{\plug{H}{\return{V}}}{\theta'}}{\config{N'}{\tau''}}{\eta}.
  \]
  This reduction takes place in the body of the active coroutine \(m\), so it
  lifts to
  \[
    \config{\Act_{\eta}(\mte{\plug{H}{\return{V}}}{\eta}, x, M_0, m)}{\tau'}
    \arrAC^{*}
    \config{\Act_{\eta}(N', x, M_0, m)}{\tau''},
  \]
  and hence
  \(\redACplus{\config{N}{\tau}}{\config{\Act_{\eta}(N', x, M_0, m)}{\tau''}}\).

  It remains to show
  \(\simues{\config{\dollar{\plug{H}{\return{V}}}{x}{M_0}}{\theta'}}{\config{\Act_{\eta}(N', x, M_0, m)}{\tau''}}{\eta}\).
  The two well-formedness conditions, \(\Coh(\theta',\eta)\), and
  \(\Inv(\theta',\tau'',\eta)\) are all supplied by
  \(\simues{\config{\plug{H}{\return{V}}}{\theta'}}{\config{N'}{\tau''}}{\eta}\)
  obtained above; note that it is \(\tau''\), not \(\tau'\), for which the
  invariant conditions are thereby established.
  For the core relation, we apply the dollar clause to
  \(\simue{\config{\plug{H}{\return{V}}}{\theta'}}{\config{N'}{\tau''}}{\eta}\),
  so it suffices to check its two side conditions
  \[
    \tau''(m) = \nil
    \quad\text{and}\quad
    \forall l',mc'.\;\eta(l') = (mc', m) \Rightarrow \theta'(l') = \nil.
  \]
  The former holds as follows.
  The label \(m\) is active in every configuration occurring in the lifted
  reduction above, since each step of that reduction takes place inside the frame
  \(\labeledc{m}{\hole}\).
  Moreover, the first of these configurations is well-formed, because
  \(\tau'(m)=\nil\) by the definition of \(\tau'\), and hence so are all the
  others by Proposition~\ref{prop:ac-wellformedness}.
  The third clause of \(\ac\) well-formedness thus gives \(\tau''(m)=\nil\).
  As for the latter, (IC2-c) of \(\Inv(\theta, \tau, \eta)\) tells us that the label \(l\) is the only valid continuation label defined in \(\theta\) that is associated with the coroutine \(m\).
  This label is no longer valid in \(\theta'\), which implies that the condition above is satisfied.

  \paragraph{Case \((\mathrm{shift})\).}
    Suppose
  \[
    M \equiv \dollar{\plug{H}{\shift{k}{M_1}}}{x}{M_2},
  \]
  where \(H\) is a \(\del\) pure context.
  Let \(l\) be the fresh continuation label generated by the source reduction,
  and put
  \[
    \theta' \defeq
    \theta[l := \abs{y}{\dollar{\plug{H}{\return{y}}}{x}{M_2}}].
  \]
  Then
  \[
    \redbetaD
    {\config{\dollar{\plug{H}{\shift{k}{M_1}}}{x}{M_2}}{\theta}}
    {\config{M_1[l/k]}{\theta'}}.
  \]

  The rule used to derive the core relation
  \(\simues{\config{M}{\theta}}{\config{N}{\tau}}{\eta}\) is the clause for the
  dollar term.
  Hence there exist an \(\ac\) computation \(N_0\) and a coroutine label \(m\)
  such that
  \[
    N \equiv \Act_{\eta}(N_0,x,M_2,m),
  \]
  \[
    \simue{\config{\plug{H}{\shift{k}{M_1}}}{\theta}}{\config{N_0}{\tau}}{\eta},
    \qquad
    \tau(m)=\nil,
  \]
  and
  \[
    \forall l_0,mc_0.\;\eta(l_0)=(mc_0,m)\Rightarrow \theta(l_0)=\nil.
  \]
  By Lemma~\ref{lem:decomp} applied to the pure context \(H\), together with
  Lemma~\ref{lem:pure}, we may write
  \[
    N_0
    \equiv
    \mte{\context{H}}{\eta}
    \left[
      \yield{
        \thunk{\abs{u}{\letin{k}{\force{u}}{\mte{M_1}{\eta}}}}
      }
    \right],
  \]
  where \(u\) is chosen fresh.

  We now compute the corresponding target reduction.
  In the active coroutine \(m\), \((\mathrm{yield})\) occurs under the pure
  target context
  \[
    \letin{x}{\mte{\context{H}}{\eta}}
    {\return{\thunk{\abs{\underscore}{\mte{M_2}{\eta}}}}}.
  \]
  Therefore,
  \[
    \config{N}{\tau}
    \arrAC^{+}
    \config{
      \app{
        \force{
          \thunk{\abs{u}{\letin{k}{\force{u}}{\mte{M_1}{\eta}}}}
        }}
        {\thunk{\app{\force{\var{ref}}}{\valid{m}}}}
    }{
      \tau[m:=\Cont_{\eta}(H, x, M_2)]
    }.
  \]
  Reducing the force of the yielded thunk gives
  \[
    \config{
      \letin{k}{\force{\thunk{\app{\force{\var{ref}}}{\valid{m}}}}}
      {\mte{M_1}{\eta}}
    }{
      \tau[m:=\Cont_{\eta}(H, x, M_2)]
    }.
  \]
  By Lemma~\ref{lem:calc}, for a fresh coroutine label \(mc\), we have
  \begin{align*}
    &\config{\force{\thunk{\app{\force{\var{ref}}}{\valid{m}}}}}
    {\tau[m:=\Cont_{\eta}(H, x, M_2)]} \\
    &\arrAC^{+}
    \config{\return{mc}}
    {\tau[m:=\Cont_{\eta}(H, x, M_2),\; mc:=\refcell{\valid{m}}]}.    
  \end{align*}
  Hence the whole target computation reduces to
  \[
    \config{\mte{M_1}{\eta}[mc/k]}
    {\tau[m:=\Cont_{\eta}(H, x, M_2),\; mc:=\refcell{\valid{m}}]}.
  \]

  Define
  \[
    \eta' \defeq \eta[l := (mc,m)]
  \]
  and
  \[
    \tau' \defeq \tau[m:=\Cont_{\eta}(H, x, M_2),\; mc:=\refcell{\valid{m}}].
  \]
  Since \(l\) is fresh, it does not occur in \(M_1\), \(H\), or \(M_2\).
  Therefore Lemma~\ref{lem:subst} gives
  \[
    \mte{M_1}{\eta}[mc/k]
    \equiv
    \mte{M_1[l/k]}{\eta'}.
  \]
  Moreover,
  \[
    \Cont_{\eta}(H, x, M_2) \equiv \Cont_{\eta'}(H, x, M_2),
  \]
  again because \(l\) is fresh for \(H\) and \(M_2\).

  We verify the side conditions for
  \[
    \simues{\config{M_1[l/k]}{\theta'}}{\config{\mte{M_1[l/k]}{\eta'}}{\tau'}}{\eta'}.
  \]
  First, \(\WF_{\del}(\config{M_1[l/k]}{\theta'})\) follows from the freshness of
  \(l\) and from preservation of \(\del\)-well-formedness.
  Next, \(\Coh(\theta',\eta')\) holds: (C1) because both \(\theta\) and \(\eta\)
  are extended by the same fresh label \(l\), and (C2) because the new first
  component \(mc\) is fresh.
  We now check \(\Inv(\theta',\tau',\eta')\).
  For the new label \(l\), we have
  \[
    \theta'(l)=\abs{y}{\dollar{\plug{H}{\return{y}}}{x}{M_2}},
    \qquad
    \eta'(l)=(mc,m),
  \]
  and by definition of \(\tau'\),
  \[
    \tau'(mc)=\refcell{\valid{m}},
    \qquad
    \tau'(m)=\Cont_{\eta}(H, x, M_2)=\Cont_{\eta'}(H, x, M_2).
  \]
  Thus (IC2-a) and (IC2-b) hold for \(l\).
  If \(l_0\neq l\) and \(\eta'(l_0)=(mc_0,m)\), then \(l_0\in\dom{\eta}\).
  The side condition of the outer dollar clause gives \(\theta(l_0)=\nil\),
  hence \(\theta'(l_0)=\nil\).
  Therefore (IC2-c) also holds for the new valid label \(l\).

  For an old label \(l_0\in\dom{\eta}\), the only store entries changed from
  \(\tau\) to \(\tau'\) are those at \(m\) and at the fresh label \(mc\).
  If \(l_0\) is invalid, then (IC1) remains true because its first component is
  not \(mc\), by freshness, and its old cell is unchanged.
  If \(l_0\) is valid, then its associated coroutine cannot be \(m\): otherwise the
  side condition
  \[
    \forall l_0,mc_0.\;\eta(l_0)=(mc_0,m)\Rightarrow\theta(l_0)=\nil
  \]
  would force \(\theta(l_0)=\nil\), a contradiction.
  Hence the target entries relevant to such an old valid label are unchanged, and
  (IC2) follows from \(\Inv(\theta,\tau,\eta)\).
  Thus \(\Inv(\theta',\tau',\eta')\) holds.

  Finally, apply Proposition~\ref{prop:canonicalization} to
  \(M_1[l/k]\), \(\theta'\), \(\tau'\), and \(\eta'\).
  We obtain an \(\ac\) computation \(N'\) and an \(\ac\) store \(\tau''\) such that
  \[
    \config{\mte{M_1[l/k]}{\eta'}}{\tau'}
    \arrAC^{*}
    \config{N'}{\tau''}
  \]
  and
  \[
    \simues{\config{M_1[l/k]}{\theta'}}{\config{N'}{\tau''}}{\eta'}.
  \]
  Combining the reductions above gives
  \[
    \config{N}{\tau}
    \arrAC^{+}
    \config{N'}{\tau''}.
  \]

    \paragraph{Case \((\mathrm{fail})\).}
  Suppose \(M=\throw{l}{V}\) and \(\theta(l)=\nil\).
  Then \(\redbetaD{\config{M}{\theta}}{\bot}\).
  As in the previous case, the core relation must be the structural clause for
  \(\mathbf{throw}\), so
  \(N\equiv\mte{\throw{l}{V}}{\eta}\).
  By (C1), write \(\eta(l)=(mc,m)\).
  Since \(\theta(l)=\nil\), (IC1) gives
  \[
    \tau(mc)=\refcell{\invalid}.
  \]
  Therefore, by the definition of the translation and Lemma~\ref{lem:calc},
  \[
    \config{\mte{\throw{l}{V}}{\eta}}{\tau}
    \arrAC^{+}
    \config{\force{\var{fail}}}{\tau}
    \arrAC^{+}
    \bot.
  \]
\end{proof}

\begin{proposition}\label{app:prop:beta-sim-lift}
  Assume
  \begin{mathpar}
    \simues{\config{M}{\theta}}{\config{N}{\tau}}{\eta},
    \and
    \simuec{\config{\context{C}}{\theta}}{\config{\context{D}}{\tau}}{\eta},
    \and
    \WF_{\del}(\config{\plug{C}{M}}{\theta}),
    \and
    \WF_{\ac}(\config{\plug{D}{N}}{\tau}).
  \end{mathpar}
  Then,
  \begin{enumerate}
  \item If \(\redbetaD{\config{M}{\theta}}{\bot}\), then
    \[
      \redACplus{\config{\plug{D}{N}}{\tau}}{\bot}.
    \]

  \item If \(\redbetaD{\config{M}{\theta}}{\config{M'}{\theta'}}\) then there exist \(N'\), \(\tau'\), \(\eta'\) such that
    \begin{mathpar}
      \redACplus{\config{N}{\tau}}{\config{N'}{\tau'}},
      \and
      \simues{\config{\plug{C}{M'}}{\theta'}}{\config{\plug{D}{N'}}{\tau'}}{\eta'},
      \and
      \simuec{\config{\context{C}}{\theta'}}{\config{\context{D}}{\tau'}}{\eta'}.
    \end{mathpar}
  \end{enumerate}
\end{proposition}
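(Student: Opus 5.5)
We prove the statement by induction on the derivation of \(\simuec{\config{\context{C}}{\theta}}{\config{\context{D}}{\tau}}{\eta}\).

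\textbf{Base case.} If \(\context{C}=\context{D}=\hole\), both parts are exactly Proposition~\ref{app:prop:beta-sim}. The contextual relation \(\simuec{\config{\hole}{\theta'}}{\config{\hole}{\tau'}}{\eta'}\) holds trivially.

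\textbf{Error part (1).} The statement concerns the configuration as a whole, so we argue directly rather than through the induction hypothesis. Proposition~\ref{app:prop:beta-sim}(1) gives \(\redACplus{\config{N}{\tau}}{\bot}\). Since \(\context{D}\) is an \(\ac\) evaluation context, every step lifts through \(\context{D}\), including the final step to \(\bot\). Hence \(\redACplus{\config{\plug{D}{N}}{\tau}}{\bot}\).

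\textbf{Part (2), let, application and projection frames.} Apply the induction hypothesis to the inner context \(\context{C}_0,\context{D}_0\). This gives \(N'\), \(\tau'\), \(\eta'\) with \(\simues{\config{\context{C}_0[M']}{\theta'}}{\config{\context{D}_0[N']}{\tau'}}{\eta'}\) and \(\simuec{\config{\context{C}_0}{\theta'}}{\config{\context{D}_0}{\tau'}}{\eta'}\). Rebuilding the frame requires \(\mte{M_2}{\eta}\equiv\mte{M_2}{\eta'}\) and \(\mte{V}{\eta}\equiv\mte{V}{\eta'}\). For this I will first establish an auxiliary fact: in every case of Proposition~\ref{app:prop:beta-sim}, \(\eta'\) extends \(\eta\). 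Either \(\eta'=\eta\), or \(\eta'=\eta[l:=(mc,m)]\) with \(l\) and \(mc\) fresh. Well-formedness \(\WF_{\del}(\config{\plug{C}{M}}{\theta})\) ensures that labels in the frame lie in \(\dom{\theta}=\dom{\eta}\), so the translations of the frame agree under \(\eta\) and \(\eta'\). The side conditions of \(\rsimues{\eta'}\) for the enlarged configuration are obtained as follows.
\begin{itemize}
\item \(\Coh(\theta',\eta')\) and \(\Inv(\theta',\tau',\eta')\) are inherited from the induction hypothesis, since they concern only stores and label maps.
\item \(\WF_{\del}\) follows from Proposition~\ref{app:prop:del-wellformedness}.
\item \(\WF_{\ac}\) follows from Proposition~\ref{prop:ac-wellformedness}, applied along the lifted target reduction starting from \(\WF_{\ac}(\config{\plug{D}{N}}{\tau})\).
\end{itemize}
The core relation for \(\plug{C}{M'}\) then follows from Lemma~\ref{lem:plug}.

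\textbf{Part (2), dollar frame; this is the main obstacle.} Here \(\context{D}=\ActCtx_{\eta}(\context{D}_0,x,M_2,m)\), with \(\tau(m)=\nil\) and every \(l\) satisfying \(\eta(l)=(mc,m)\) invalid in \(\theta\). Both side conditions must be re-established for \(\theta'\), \(\tau'\), \(\eta'\).

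\emph{Condition \(\tau'(m)=\nil\).} Since \(m\) is active in \(\plug{D}{N}\), it stays active throughout the lifted reduction. Preservation of \(\WF_{\ac}\) (Proposition~\ref{prop:ac-wellformedness}), together with its third clause, forces \(\tau'(m)=\nil\).

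\emph{Condition that every \(l\) with \(\eta'(l)=(mc,m)\) is invalid in \(\theta'\).} I would argue case by case on the rule of Proposition~\ref{app:prop:beta-sim} that was used.
\begin{itemize}
\item \(\mam\) and \((\mathrm{ret})\): \(\theta'=\theta\) and \(\eta'=\eta\), so the condition is unchanged.
\item \((\mathrm{throw})\): only a label is invalidated, so an already invalid label stays invalid.
\item \((\mathrm{shift})\): the new label \(l\) is mapped to \((mc,m'')\), where \(m''\) is the coroutine of the innermost dollar inside \(\context{C}_0\). I must show \(m''\neq m\). Both \(m\) and \(m''\) are active in the well-formed configuration \(\plug{D}{N}\), and \(\wellformed{\cdot}\) forbids nesting \(\labeledc{m}{\cdots\labeledc{m}{\cdots}}\) (the clause \(l\notin\activelabels{M}\)). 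So the labels are distinct.
\end{itemize}
Canonicalization steps inside Proposition~\ref{app:prop:beta-sim} only create fresh labels, so they cannot affect \(m\). With both side conditions in hand, the dollar clause for contexts and Lemma~\ref{lem:plug} close the case.
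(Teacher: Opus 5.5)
Your proof is correct and follows the paper's route. It proceeds by induction on the derivation of the contextual relation: the base case is Proposition~\ref{app:prop:beta-sim}, pure frames are rebuilt directly, and the real work is the two dollar-frame side conditions, where you, like the paper, use the non-nesting clause of $\WF_{\ac}$ to get a coroutine distinct from $m$ in the $(\mathrm{shift})$ case. Your deviations---deriving $\tau'(m)=\nil$ from preservation of $\WF_{\ac}$ along the lifted reduction rather than checking rule by rule that the entry at $m$ is untouched, and lifting Proposition~\ref{app:prop:beta-sim}(1) straight through $\context{D}$ for the error part---are both sound and, if anything, slightly cleaner.
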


\begin{proof}
  We prove the proposition by induction on the derivation of
  \[
    \simuec{\config{\context{C}}{\theta}}{\config{\context{D}}{\tau}}{\eta}.
  \]
  If \(\context{C}=\hole\) and \(\context{D}=\hole\), the result is exactly Proposition~\ref{prop:beta-sim}.
  Indeed, in the non-error case, Proposition~\ref{prop:beta-sim} gives
  \[
    \redACplus{\config{N}{\tau}}{\config{N'}{\tau'}}
    \qquad\text{and}\qquad
    \simues{\config{M'}{\theta'}}{\config{N'}{\tau'}}{\eta'}.
  \]
  The contextual relation
  \[
    \simuec{\config{\hole}{\theta'}}{\config{\hole}{\tau'}}{\eta'}
  \]
  follows from the empty-context clause.

  If the last rule of the derivation of the contextual relation is a pure-frame rule, the induction hypothesis gives the desired target reduction and the desired relation for the immediate subcontext.
  We then rebuild the same pure frame by the corresponding clause of the contextual core relation and by the corresponding clause of the core relation on computations.
  The source well-formedness of the rebuilt configuration follows from preservation of \(\del\)-well-formedness, and the target well-formedness follows from preservation of \(\ac\)-well-formedness along the target reduction.
  The conditions \(\Coh\) and \(\Inv\) are supplied by the induction hypothesis and do not depend on the outer pure frame.
  This proves the pure-frame cases.

  It remains to consider the dollar-frame case.  Thus the last rule is of the form
  \[
    \inferrule
    {
      \simuec{\config{\context{C}_0}{\theta}}{\config{\context{D}_0}{\tau}}{\eta}\\
      \tau(m)=\nil\\
      \forall l,mc.\; \eta(l)=(mc,m) \Rightarrow \theta(l)=\nil
    }
    {
      \simuec
      {\config{\dollar{\context{C}_0}{x}{M_2}}{\theta}}
      {\config{\ActCtx_{\eta}(\context{D}_0,x,M_2,m)}{\tau}}
      {\eta}
    }.
  \]
  Therefore
  \[
    \context{C}=\dollar{\context{C}_0}{x}{M_2},
    \qquad
    \context{D}=\ActCtx_{\eta}(\context{D}_0,x,M_2,m).
  \]

  First suppose that
  \[
    \redbetaD{\config{M}{\theta}}{\bot}.
  \]
  By the induction hypothesis applied to the premise
  \[
    \simuec{\config{\context{C}_0}{\theta}}{\config{\context{D}_0}{\tau}}{\eta},
  \]
  we get
  \[
    \redACplus{\config{\context{D}_0[N]}{\tau}}{\bot}.
  \]
  Since
  \[
    \plug{D}{N}
    =
    (\ActCtx_{\eta}(\context{D}_0,x,M_2,m))[N]
    =
    \Act_{\eta}(\context{D}_0[N],x,M_2,m),
  \]
  contextual closure of \(\arrAC\) yields
  \[
    \redACplus{\config{\plug{D}{N}}{\tau}}{\bot}.
  \]
  This proves the error part in the dollar-frame case.

  Now suppose that
  \[
    \redbetaD{\config{M}{\theta}}{\config{M'}{\theta'}}.
  \]
  Applying the induction hypothesis to the premise
  \[
    \simuec{\config{\context{C}_0}{\theta}}{\config{\context{D}_0}{\tau}}{\eta}
  \]
  gives an \(\ac\) computation \(N'\), an \(\ac\) store \(\tau'\), and a label map \(\eta'\) such that
  \[
    \redACplus{\config{N}{\tau}}{\config{N'}{\tau'}},
  \]
  \[
    \simues
    {\config{\context{C}_0[M']}{\theta'}}
    {\config{\context{D}_0[N']}{\tau'}}
    {\eta'},
  \]
  and
  \[
    \simuec{\config{\context{C}_0}{\theta'}}{\config{\context{D}_0}{\tau'}}{\eta'}.
  \]
  We have to rebuild the outer dollar frame.  For this, it suffices to prove
  that the two side conditions of the dollar clause are still valid:
  \[
    \tau'(m)=\nil
    \tag{\(*\)}
  \]
  and
  \[
    \forall l,mc.\; \eta'(l)=(mc,m) \Rightarrow \theta'(l)=\nil.
    \tag{\(**\)}
  \]

  We verify \((*)\) and \((**)\) by inspecting the source beta reduction rule used in
  \[
    \redbetaD{\config{M}{\theta}}{\config{M'}{\theta'}}.
  \]
  For the \(\mam\) cases and for the \((\mathrm{ret})\) case, the label map is
  not extended with any source continuation label associated with the outer
  coroutine \(m\), and no store entry associated with \(m\) is modified.  Hence
  both \((*)\) and \((**)\) follow immediately from the premises of the outer
  dollar frame.

  Consider the \((\mathrm{throw})\) case.  Let the valid label used by the throw be \(l_0\), and write
  \[
    \eta(l_0)=(mc_0,m_0).
  \]
  In the target reduction, \(mc_0\) is invalidated, and the coroutine \(m_0\) is set to \(\nil\).
  We claim that \(m_0\neq m\).
  If \(m_0=m\), then the side condition of the outer dollar frame would imply
  \[
    \theta(l_0)=\nil,
  \]
  contradicting the assumption of the \((\mathrm{throw})\) rule.
  Hence the throw does not change the store entry at \(m\), and it does not create any new label mapped to \(m\).
  Therefore \((*)\) and \((**)\) are preserved.

  Finally consider the \((\mathrm{shift})\) case.  In this case the local redex
  has the form
  \[
    \dollar{\plug{H}{\shift{k}{M_1}}}{y}{M_0},
  \]
  and the proof of Proposition~\ref{prop:beta-sim} introduces a fresh label
  \(l_{\mathrm{new}}\) and extends the label map by
  \[
    \eta'(l_{\mathrm{new}})=(mc_{\mathrm{new}},m_0),
  \]
  where \(m_0\) is the coroutine label used by the inner dollar clause for this
  redex.  We claim that \(m_0\neq m\).  Indeed, if \(m_0=m\), then the target
  configuration \(\config{\plug{D}{N}}{\tau}\) would contain two active
  occurrences of the same coroutine label \(m\): one from the outer dollar frame
  and one from the inner dollar frame.
  This contradicts the assumed well-formedness
  \[
    \WF_{\ac}(\config{\plug{D}{N}}{\tau}).
  \]
  Thus \(m_0\neq m\).
  Since the newly added source label is associated with \(m_0\), not with \(m\), condition \((**)\) is not affected by the fresh extension of \(\eta\).
  Moreover, the store update performed by the shift changes the entry at \(m_0\), not at \(m\).
  Hence \((*)\) is also preserved.

  Therefore, we can now rebuild the contextual relation for the outer dollar frame:
  \[
    \simuec
    {\config{\dollar{\context{C}_0}{x}{M_2}}{\theta'}}
    {\config{\ActCtx_{\eta'}(\context{D}_0,x,M_2,m)}{\tau'}}
    {\eta'}.
  \]
  Since the extension from \(\eta\) to \(\eta'\) is on labels fresh for the outer context and for \(M_2\), we have
  \[
    \mte{M_2}{\eta'} \equiv \mte{M_2}{\eta}.
  \]
  Hence the target context above is the same target context \(\context{D}\), now considered with the updated label map and store.

  From the induction hypothesis we also have
  \[
    \simues
    {\config{\context{C}_0[M']}{\theta'}}
    {\config{\context{D}_0[N']}{\tau'}}
    {\eta'}.
  \]
  In particular,
  \[
    \simue
    {\config{\context{C}_0[M']}{\theta'}}
    {\config{\context{D}_0[N']}{\tau'}}
    {\eta'}.
  \]
  Using the dollar clause of the core relation together with \((*)\) and
  \((**)\), we obtain
  \[
    \simue
    {\config{\dollar{\context{C}_0[M']}{x}{M_2}}{\theta'}}
    {\config{\Act_{\eta'}(\context{D}_0[N'],x,M_2,m)}{\tau'}}
    {\eta'}.
  \]
  This is exactly
  \[
    \simue
    {\config{\plug{C}{M'}}{\theta'}}
    {\config{\plug{D}{N'}}{\tau'}}
    {\eta'}.
  \]

  Finally, the well-formedness conditions required for
  \[
    \simues
    {\config{\plug{C}{M'}}{\theta'}}
    {\config{\plug{D}{N'}}{\tau'}}
    {\eta'}
  \]
  follow from preservation of source and target well-formedness, while \(\Coh(\theta',\eta')\) and \(\Inv(\theta',\tau',\eta')\) are inherited from the induction hypothesis applied to
    \[
    \simuec
    {\config{\dollar{\context{C}_0}{x}{M_2}}{\theta'}}
    {\config{\ActCtx_{\eta'}(\context{D}_0,x,M_2,m)}{\tau'}}
    {\eta'}.
  \]
  Therefore,
  \[
    \simues
    {\config{\plug{C}{M'}}{\theta'}}
    {\config{\plug{D}{N'}}{\tau'}}
    {\eta'}
  \]
  holds, as required.
\end{proof}

\begin{theorem}[simulation]\label{app:thm:sim}
  If \(\simu{C}{D}\) and \(\redD{C}{C'}\), then there exists an \(\ac\) configuration \(D'\) such that
  \[
    \redACplus{D}{D'} \quad\text{and}\quad \simu{C'}{D'}.
  \]
\end{theorem}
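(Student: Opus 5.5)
The plan is to unfold the definition of \(\sim\) and then reduce the claim to Proposition~\ref{app:prop:beta-sim-lift}, exactly as in the main-text proof of Theorem~\ref{thm:sim}.

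Since \(\redD{C}{C'}\), the configuration \(C\) is not \(\bot\). So Definition~\ref{def:deltoac-simulation} supplies \(\eta\), \(P\) and \(\tau\) with
\[
D \arrAC^{*} \config{P}{\tau}
\quad\text{and}\quad
\simues{C}{\config{P}{\tau}}{\eta}.
\]
Write \(C=\config{M_0}{\theta}\). By the definition of \(\arrD\), we can decompose \(M_0\equiv\context{C}_0[M]\), where \(\context{C}_0\) is an evaluation context and \(M\) takes a \(\arrbetaD\)-step, either to \(\bot\) or to some \(\config{M'}{\theta'}\) with \(C'=\config{\context{C}_0[M']}{\theta'}\).

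Next I split the target accordingly. Lemma~\ref{lem:decomp}, applied to the core relation contained in \(\rsimues{\eta}\), gives \(\context{D}_0\) and \(N\) with
\[
P\equiv\context{D}_0[N],\qquad
\simuec{\config{\context{C}_0}{\theta}}{\config{\context{D}_0}{\tau}}{\eta},\qquad
\simue{\config{M}{\theta}}{\config{N}{\tau}}{\eta}.
\]
To obtain the full relation \(\simues{\config{M}{\theta}}{\config{N}{\tau}}{\eta}\), I supply the side conditions as follows:
\begin{itemize}
\item \(\Coh\) and \(\Inv\) mention only stores and \(\eta\), so they transfer unchanged.
\item \(\WF_{\del}\) of the subconfiguration holds because its labels are a subset of those in \(C\).
\item \(\WF_{\ac}\) of \(\config{N}{\tau}\) follows from Lemma~\ref{lem:ac-wellformedness-subterm}.
\end{itemize}
The remaining hypotheses of Proposition~\ref{app:prop:beta-sim-lift}, namely well-formedness of the plugged configurations, are exactly those of \(C\) and \(\config{P}{\tau}\).

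Then I apply Proposition~\ref{app:prop:beta-sim-lift}.
\begin{itemize}
\item \textbf{Error case.} It gives \(\redACplus{\config{P}{\tau}}{\bot}\). Prefixing \(D\arrAC^{*}\config{P}{\tau}\) yields \(\redACplus{D}{\bot}\), and \(\bot\sim\bot\) holds by definition.
\item \textbf{Non-error case.} It gives \(\redACplus{\config{N}{\tau}}{\config{N'}{\tau'}}\) together with \(\simues{\config{\context{C}_0[M']}{\theta'}}{\config{\context{D}_0[N']}{\tau'}}{\eta'}\). The first reduction lifts by the contextual rule of \(\arrAC\) to \(\redACplus{\config{\context{D}_0[N]}{\tau}}{\config{\context{D}_0[N']}{\tau'}}\). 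Composing with the prefix gives \(\redACplus{D}{D'}\) for \(D'\defeq\config{\context{D}_0[N']}{\tau'}\), and \(C'\sim D'\) holds by taking zero administrative steps.
\end{itemize}

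The main obstacle is the extraction of \(\simues{}{}{}\) for the redex from that of the whole term. The delicate part is \(\WF_{\ac}\): the subterm \(N\) may lie under \(\labeledc{m}{\hole}\) frames, and we must make sure the clause \(\tau(l)=\nil\) for active labels is still stated relative to the same store. Lemma~\ref{lem:ac-wellformedness-subterm} covers this because \(\activelabels{N}\subseteq\activelabels{P}\). Everything else is bookkeeping on transitive closures.
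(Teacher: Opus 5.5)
Your proposal is correct and follows the paper's proof essentially step for step: unfold \(\sim\), split the target with Lemma~\ref{lem:decomp}, restrict the side conditions to the redex, apply Proposition~\ref{app:prop:beta-sim-lift}, and then lift and compose the target reductions in both the error and non-error cases. Where the paper says in one line that the side conditions ``restrict to the subconfiguration'', you justify this explicitly, in particular \(\WF_{\ac}\) via Lemma~\ref{lem:ac-wellformedness-subterm}.
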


\begin{proof}
  Since \(\redD{C}{C'}\), the configuration \(C\) is not \(\bot\).
  By the definition of \(\sim\), there exist an \(\ac\) computation \(P\), an \(\ac\) store \(\tau\), and a label map \(\eta\) such that
  \[
    D \arrAC^{*} \config{P}{\tau}
    \quad\text{and}\quad
    \simues{C}{\config{P}{\tau}}{\eta}.
  \]
  Write \(C=\config{M_0}{\theta}\).
  By the definition of \(\arrD\), the source step decomposes into an evaluation context and a beta-reduction step.
  Thus, there exist a \(\del\) evaluation context \(\context{C}_0\) and a \(\del\) computation \(M\) such that
  \[
    M_0 \equiv \context{C}_0[M]
  \]
  and either
  \[
    \redbetaD{\config{M}{\theta}}{\bot}
    \quad\text{and}\quad
    C'=\bot,
  \]
  or there exist a \(\del\) computation \(M'\) and a \(\del\) store \(\theta'\) such that
  \[
    \redbetaD{\config{M}{\theta}}{\config{M'}{\theta'}}
    \quad\text{and}\quad
    C'=\config{\context{C}_0[M']}{\theta'}.
  \]

  From
  \[
    \simues{\config{\context{C}_0[M]}{\theta}}{\config{P}{\tau}}{\eta}
  \]
  we obtain in particular
  \[
    \simue{\config{\context{C}_0[M]}{\theta}}{\config{P}{\tau}}{\eta}.
  \]
  By Lemma~\ref{lem:decomp}, there exist an \(\ac\) evaluation context \(\context{D}_0\) and an \(\ac\) computation \(N\) such that
  \[
    P \equiv \context{D}_0[N],
  \]
  \[
    \simuec{\config{\context{C}_0}{\theta}}{\config{\context{D}_0}{\tau}}{\eta},
    \qquad
    \simue{\config{M}{\theta}}{\config{N}{\tau}}{\eta}.
  \]
  The side conditions contained in \(\config{\context{C}_0[M]}{\theta} \sim_{\sharp} \config{\context{D}_0[N]}{\tau}\) restrict to the subconfiguration; thus,
  \[
    \simues{\config{M}{\theta}}{\config{N}{\tau}}{\eta}.
  \]
  Moreover, the well-formedness assumptions required by Proposition~\ref{prop:beta-sim-lift} follow from \(\config{\context{C}_0[M]}{\theta} \sim_{\sharp} \config{\context{D}_0[N]}{\tau}\).
  We now apply Proposition~\ref{prop:beta-sim-lift} to
  \[
    \simues{\config{M}{\theta}}{\config{N}{\tau}}{\eta}
    \quad\text{and}\quad
    \simuec{\config{\context{C}_0}{\theta}}{\config{\context{D}_0}{\tau}}{\eta}.
  \]

  First suppose
  \[
    \redbetaD{\config{M}{\theta}}{\bot}.
  \]
  Proposition~\ref{prop:beta-sim-lift}(1) gives
  \[
    \config{P}{\tau} = {\config{\context{D}_0[N]}{\tau}} \arrAC^{+} {\bot}.
  \]
  Combining this with \(D\arrAC^{*}\config{P}{\tau}\) yields
  \[
    \redACplus{D}{\bot}.
  \]
  As \(C'=\bot\), we have \(\simu{C'}{\bot}\) by the definition of the simulation relation.
  Thus we may take \(D'=\bot\).

  Next suppose
  \[
    \redbetaD{\config{M}{\theta}}{\config{M'}{\theta'}}.
  \]
  Proposition~\ref{prop:beta-sim-lift}(2) yields an \(\ac\) computation \(N'\), an \(\ac\) store \(\tau'\), and a label map \(\eta'\) such that
  \[
    \redACplus{\config{N}{\tau}}{\config{N'}{\tau'}},
  \]
  and
  \[
    \simues
    {\config{\context{C}_0[M']}{\theta'}}
    {\config{\context{D}_0[N']}{\tau'}}
    {\eta'}.
  \]
  By the definition of \(\arrAC\), we obtain
  \[
    \redACplus{\config{\context{D}_0[N]}{\tau}}{\config{\context{D}_0[N']}{\tau'}}.
  \]
  Since \(P\equiv\context{D}_0[N]\), composing with \(D\arrAC^{*}\config{P}{\tau}\) gives
  \[
    \redACplus{D}{\config{\context{D}_0[N']}{\tau'}}.
  \]
  Also, because
  \[
    C'=\config{\context{C}_0[M']}{\theta'},
  \]
  we obtain
  \[
    \simu{C'}{\config{\context{D}_0[N']}{\tau'}}
  \]
  by the definition of the simulation relation.
\end{proof}

\subsubsection{Preservation of non-termination}
\label{app:sec:DELoneToAC:stuckness}

The next definition collects the shapes that an evaluation frame can consume.

\begin{definition}\label{def:headmatch}
  For a \(\del\) computation \(M\) and an \(\ac\) computation \(N\), we write
  \(\headmatch{M}{N}\) if all of the following hold:
  \begin{enumerate}
  \item[(H1)] if \(N \equiv \return{W}\), then \(M \equiv \return{V}\) for some
    \(\del\) value \(V\);
  \item[(H2)] if \(N \equiv \abs{x}{N_0}\), then \(M \equiv \abs{x}{M_0}\) for
    some \(\del\) computation \(M_0\);
  \item[(H3)] if \(N \equiv \cpair{N_1}{N_2}\), then
    \(M \equiv \cpair{M_1}{M_2}\) for some \(M_1\) and \(M_2\);
  \item[(H4)] if \(N \equiv \plug{E}{\yield{W}}\) for a pure \(\ac\) context
    \(\context{E}\), then \(M \equiv \plug{H}{\shift{k}{M_0}}\) for a pure
    \(\del\) context \(\context{H}\), a variable \(k\), and a computation
    \(M_0\).
  \end{enumerate}
\end{definition}

We say a \(\del\) or an \(\ac\) configuration \(C\) is \emph{stuck} if \(C\) is
neither \(\bot\) nor a return configuration, and no reduction rule is applicable
to \(C\).
We also say \(C\) is \emph{irreducible} if no reduction rule is applicable to
\(C\); thus a stuck configuration is irreducible, while a return configuration
is irreducible but not stuck.

\begin{lemma}\label{lem:stuck-shape}
  Assume \(\simues{\config{M}{\theta}}{\config{N}{\tau}}{\eta}\) and that
  \(\config{M}{\theta}\) is irreducible.
  Then there exist an \(\ac\) computation \(N'\) and an \(\ac\) store \(\tau'\)
  such that
  \[
    \config{N}{\tau} \arrAC^{*} \config{N'}{\tau'},
  \]
  the configuration \(\config{N'}{\tau'}\) is irreducible, and
  \(\headmatch{M}{N'}\).
\end{lemma}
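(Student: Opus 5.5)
We prove the lemma by induction on the derivation of \(\simue{\config{M}{\theta}}{\config{N}{\tau}}{\eta}\) contained in \(\simues{\config{M}{\theta}}{\config{N}{\tau}}{\eta}\), running the target until it is irreducible and checking that its head shape is forced by \(M\). First we classify irreducible source computations. Since \(\config{M}{\theta}\) is irreducible, \(M\) decomposes as \(\plug{C}{R}\), where either \(R\) is not a redex or the configuration is a return configuration. So \(M\) is one of the following: a value-returner \(\return{V}\), an abstraction, a computation pair, a \(\plug{H}{\shift{k}{M_0}}\) with no enclosing dollar, or a term whose head redex is ill-formed. The ill-formed heads are a force of a non-thunk, a case on a value of the wrong shape, a \(\throw{V}{W}\) with \(V\) not a label, or a frame applied to a head of the wrong kind.

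The base clauses come first. If \(M\) is \(\return{V}\), \(\abs{x}{M_0}\), or \(\cpair{M_1}{M_2}\), take \(N' = N = \mte{M}{\eta}\). It is irreducible, and (H1)--(H3) hold trivially. If \(M = \shift{k}{M_0}\), then \(N\) is a bare \(\yield{\ldots}\) with no surrounding labeled frame, hence irreducible, and (H4) holds with \(\context{H}=\hole\). For an ill-formed case, force, or throw, Lemma~\ref{lem:shape} shows that the translated value has the same bad shape. The \(\mam\) rules therefore do not fire, and we take \(N'=N\).

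The throw case is the delicate one. \(\mte{\throw{V}{W}}{\eta}\) first evaluates \(\app{\force{\var{get}}}{\mte{V}{\eta}}\). When \(V\) is not a label, this means resuming a non-label, which is stuck. Here we must check that no administrative steps occur before the resume, and then take \(N'\) to be that stuck intermediate configuration, whose head is a resume. Such an \(N'\) is neither a return, an abstraction, a pair, nor a pure context around a yield, so \(\headmatch{M}{N'}\) holds vacuously. Throws on a genuine label are never irreducible: by \(\WF_{\del}\) the label is in \(\dom{\theta}\), so \((\mathrm{throw})\) or \((\mathrm{fail})\) applies.

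For the inductive clauses (let, application, projection, dollar), irreducibility of \(M\) makes the subterm \(M_1\) irreducible as well. The induction hypothesis gives \(\config{N_1}{\tau}\arrAC^{*}\config{N_1'}{\tau'}\) with \(N_1'\) irreducible and \(\headmatch{M_1}{N_1'}\). We lift this reduction under the corresponding target frame, or under the \(\labeledc{m}{\hole}\) frame in the \(\Act_{\eta}\) case. Then we argue that the surrounding frame does not fire on \(N_1'\). If it did, the contrapositives of (H1)--(H4) would give the matching source shape of \(M_1\), and the source frame would fire on \(M_1\), contradicting irreducibility. Concretely:
\begin{itemize}
\item In the let case with \(N_1'=\return{W}\), (H1) makes \(M_1\) a return, so \((F)\) fires in the source.
\item In the dollar case with \(N_1' = \plug{E}{\yield{W}}\), the labeled frame would fire \((\mathrm{yield})\). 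But (H4) gives \(M_1=\plug{H}{\shift{k}{M_0}}\), so \((\mathrm{shift})\) fires in the source, again a contradiction.
\item In the dollar case with \(N_1'\) a return, (H1) lets the source \((\mathrm{ret})\) rule apply.
\end{itemize}
Finally we check \(\headmatch{M}{N'}\) for the whole term. Frames of let, application, projection, and \(\Act\) never give a return, abstraction, or pair at the top. The one exception is a pure frame around a yield, where (H4) for \(N_1'\) extends by the same frame to (H4) for \(M\), in the manner of Lemma~\ref{lem:yield-shape}.

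The main obstacle is the dollar case. We must show that the lifted \(\Act_{\eta}\) configuration is irreducible exactly when the source is, and in particular that the \(\ac\) rules \((\mathrm{ret})\) and \((\mathrm{yield})\) fire for the labeled frame only in situations mirrored by the source \((\mathrm{ret})\) and \((\mathrm{shift})\). This requires the strengthened head-matching invariant (H4) to be carried through the induction, and we rely on \(\WF_{\ac}\) to rule out spurious interactions with other active labels.
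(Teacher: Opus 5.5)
Your proof is correct and follows essentially the same route as the paper's. Both argue by induction on the derivation of the core relation, take \(N'=N\) in every base clause except throw, and in the inductive clauses lift the induction hypothesis through the frame, using (H1)--(H4) to rule out redexes that straddle the frame boundary. There are a few small slips, none of which affects the argument:
\begin{itemize}
\item In the throw case there \emph{are} administrative steps before the stuck \(\mathbf{resume}\): forcing \(\var{get}\), then the \(\beta\)-step. They leave the store unchanged and are permitted by \(\arrAC^{*}\).
\item In the dollar case, the target step that would mirror the source \((\mathrm{ret})\) is the \((F)\) step on the inner \(\mathbf{let}\) of \(\Act_{\eta}\). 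The \(\ac\) rule \((\mathrm{ret})\) for the labeled frame cannot fire there, since that frame's body is a \(\mathbf{let}\).
\item The appeal to \(\WF_{\ac}\) is not needed.
\end{itemize}
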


\begin{proof}
  We prove the statement by induction on the derivation of
  \[
    \simue{\config{M}{\theta}}{\config{N}{\tau}}{\eta}.
  \]
  Since \(\simues{\config{M}{\theta}}{\config{N}{\tau}}{\eta}\) holds,
  \(\WF_{\del}(\config{M}{\theta})\) and \(\Inv(\theta, \tau, \eta)\) hold.

  We consider the clauses where the target term is exactly the runtime
  translation of the source term.
  In all of them the store is unchanged, so we take \(\tau' = \tau\); and we
  take \(N' \equiv N\) except for the throw term, where the target does take
  administrative steps.

  Suppose first that
  \[
    M \equiv \pcase{V}{x_1}{x_2}{M_0}.
  \]
  Then
  \[
    N \equiv \pcase{\mte{V}{\eta}}{x_1}{x_2}{\mte{M_0}{\eta}}.
  \]
  Since \(\config{M}{\theta}\) is irreducible, the value \(V\) is not a pair.
  By Lemma~\ref{lem:shape}, \(\mte{V}{\eta}\) is not a pair either.
  Therefore the target product-matching term is not reducible.
  It also has no distinguished subcomputation to reduce, so \(\config{N}{\tau}\)
  is irreducible.
  As the outermost constructor of \(N\) is a product matching, none of (H1) to
  (H4) has its premise satisfied, so \(\headmatch{M}{N}\) holds vacuously.
  The variant-matching case and the force case are similar.

  In the cases
  \[
    M\equiv\return{V},\qquad
    M\equiv\abs{x}{M_0},\qquad
    M\equiv\cpair{M_1}{M_2},
  \]
  \(N\) is the runtime translation of \(M\) and so has the same outermost
  constructor.
  These computations have no beta-reduction rule and contain no reducible
  subcomputation, so their translations are irreducible as well.
  For \(\headmatch{M}{N}\), the premise of exactly one of (H1), (H2) and (H3) is
  satisfied, and its conclusion holds because \(M\) has that same outermost
  constructor; (H4) cannot happen.

  If
  \[
    M\equiv\shift{k}{M_0},
  \]
  then, by the translation in Figure~\ref{fig:deltoac},
  \[
    N\equiv
    \yield{\thunk{\abs{u}{\letin{k}{\force{u}}{\mte{M_0}{\eta}}}}}
  \]
  for a fresh variable \(u\).  A yield-redex exists only inside an active labeled computation.  This term is not under
  an active label, so the target configuration is irreducible.
  Only the premise of (H4) is satisfied, with \(\context{E} = \hole\), and its
  conclusion holds with \(\context{H} = \hole\).

  Finally suppose \(M\equiv\throw{V}{W}\) (and so
  \(N \equiv \mte{\throw{V}{W}}{\eta}\)).
  Since \(\config{M}{\theta}\) is irreducible and well-formed, \(V\) cannot be a
  continuation label.
  Indeed, if \(V=l\), then well-formedness gives \(l\in\dom{\theta}\), and hence
  either \(\theta(l)=\nil\), in which case the source \((\mathrm{fail})\) rule
  applies, or \(\theta(l)\neq\nil\), in which case the source
  \((\mathrm{throw})\) rule applies.
  Both contradict irreducibility.

  By Lemma~\ref{lem:shape}, \(\mte{V}{\eta}\) is not a coroutine label.
  In the reduction of \(N\), some administrative redexes are executed first;
  after that, the computation has the form
  \[
    N' \equiv \letin{zc}{\resume{\mte{V}{\eta}}{(\inj{Get}{\unit})}}{N_0},
  \]
  for some computation \(N_0\).
  In these steps, the store is leave unchanged.
  Since \(\mte{V}{\eta}\) is not a coroutine label, no \(\ac\) rule can be
  applied to the distinguished subcomputation, and the \((\mathrm{let})\) rule
  cannot be applied either, so \(\config{N'}{\tau}\) is irreducible.
  Thus, \(N\) reduces in finitely many steps to an irreducible target
  configuration.
  It is straightforward to check that \(\headmatch{M}{N'}\) holds vacuously.

  Then, we show the inductive step by case analysis on the last rule used to
  derive \(\simue{\config{M}{\theta}}{\config{N}{\tau}}{\eta}\).
  \paragraph{Let case.}
  Suppose the last rule is the let clause:
  \[
    \inferrule
    {
      \simue{\config{M_1}{\theta}}{\config{N_1}{\tau}}{\eta}
    }
    {
      \simue
      {\config{\letin{x}{M_1}{M_2}}{\theta}}
      {\config{\letin{x}{N_1}{\mte{M_2}{\eta}}}{\tau}}
      {\eta}
    }.
  \]
  Since \(\config{\letin{x}{M_1}{M_2}}{\theta}\) is irreducible, the subconfiguration \(\config{M_1}{\theta}\) is irreducible and \(M_1\) is not of the form \(\return{V}\).
  By the induction hypothesis, there are \(N_1'\) and \(\tau'\) such that
  \[
    \config{N_1}{\tau}\arrAC^{*}\config{N_1'}{\tau'},
  \]
  the configuration \(\config{N_1'}{\tau'}\) is irreducible, and
  \(\headmatch{M_1}{N_1'}\).
  By (H1), the computation \(N_1'\) is not of the form \(\return{W}\).
  Hence
  \[
    \config{\letin{x}{N_1}{\mte{M_2}{\eta}}}{\tau}
    \arrAC^{*}
    \config{\letin{x}{N_1'}{\mte{M_2}{\eta}}}{\tau'}.
  \]
  The target configuration is irreducible: the subcomputation \(N_1'\) is not reducible, and the \((\mathrm{let})\) rule cannot be applied since \(N_1'\) is not a return computation.

  It remains to check
  \(\headmatch{\letin{x}{M_1}{M_2}}{\letin{x}{N_1'}{\mte{M_2}{\eta}}}\).
  The premises of (H1), (H2) and (H3) are not satisfied.
  For (H4), suppose the target is of the form \(\plug{E}{\yield{W}}\) for a pure
  \(\ac\) context \(\context{E}\).
  Then there exists a pure context \(\context{E}_1\) such that
  \(\context{E} = (\letin{x}{\context{E}_1}{\mte{M_2}{\eta}})\) and
  \(N_1' \equiv \context{E}_1[\yield{W}]\).
  By (H4) for \(\headmatch{M_1}{N_1'}\), we get
  \(M_1 \equiv \context{H}_1[\shift{k}{M_0}]\) for a pure \(\del\) context
  \(\context{H}_1\), and therefore
  \(\letin{x}{M_1}{M_2} \equiv \plug{H}{\shift{k}{M_0}}\) for the pure context
  \(\context{H} = (\letin{x}{\context{H}_1}{M_2})\), as required.

  \paragraph{Application case.}
  Suppose the last rule is the application clause, so
  \[
    M\equiv\app{M_1}{V},
    \qquad
    N\equiv\app{N_1}{\mte{V}{\eta}},
  \]
  with
  \[
    \simue{\config{M_1}{\theta}}{\config{N_1}{\tau}}{\eta}.
  \]
  Since \(\config{\app{M_1}{V}}{\theta}\) is irreducible, the subconfiguration
  \(\config{M_1}{\theta}\) is irreducible and \(M_1\) is not an abstraction.  By the
  induction hypothesis, \(\config{N_1}{\tau}\) reduces to an irreducible target
  configuration \(\config{N_1'}{\tau'}\) with \(\headmatch{M_1}{N_1'}\).  By (H2),
  this target computation \(N_1'\) is not an abstraction either.  Therefore
  \[
    \config{\app{N_1}{\mte{V}{\eta}}}{\tau}
    \arrAC^{*}
    \config{\app{N_1'}{\mte{V}{\eta}}}{\tau'},
  \]
  and the target configuration is irreducible.
  The check of \(\headmatch{\app{M_1}{V}}{\app{N_1'}{\mte{V}{\eta}}}\) is as in
  the let case, with the application frame in place of the let frame.

  \paragraph{Projection case.}
  Suppose the last rule is the projection clause.  Then
  \[
    M\equiv\prj{i}{M_1},
    \qquad
    N\equiv\prj{i}{N_1},
  \]
  with
  \[
    \simue{\config{M_1}{\theta}}{\config{N_1}{\tau}}{\eta}.
  \]
  Since the source projection is irreducible, \(\config{M_1}{\theta}\) is irreducible and
  \(M_1\) is not a lazy pair.
  By the induction hypothesis, \(\config{N_1}{\tau}\) reduces to an irreducible target
  configuration \(\config{N_1'}{\tau'}\) with \(\headmatch{M_1}{N_1'}\), and by (H3)
  the computation \(N_1'\) is not a lazy pair either.
  Hence
  \[
    \config{\prj{i}{N_1}}{\tau}
    \arrAC^{*}
    \config{\prj{i}{N_1'}}{\tau'},
  \]
  and the target configuration is irreducible.
  The check of \(\headmatch{\prj{i}{M_1}}{\prj{i}{N_1'}}\) is again as in the let
  case.

  \paragraph{Dollar case.}
  Suppose the last rule is the dollar clause.  Thus
  \[
    M\equiv\dollar{M_1}{x}{M_2},
    \qquad
    N\equiv\Act_{\eta}(N_1,x,M_2,m),
  \]
  and
  \[
    \simue{\config{M_1}{\theta}}{\config{N_1}{\tau}}{\eta},
    \qquad
    \tau(m)=\nil,
  \]
  together with the side condition that every source label mapped to the
  coroutine \(m\) is invalid.

  Since \(\config{\dollar{M_1}{x}{M_2}}{\theta}\) is irreducible, \(M_1\) cannot
  be reduced, cannot be a return computation, and cannot be of the form
  \(\plug{H}{\shift{k}{M_0}}\) for any pure context \(H\).
  These are exactly the three ways in which a dollar term can be reduced: by
  reducing its body, by the ret rule, or by the shift rule.

  By the induction hypothesis, \(\config{N_1}{\tau}\) reduces to an irreducible
  target configuration \(\config{N_1'}{\tau'}\) with \(\headmatch{M_1}{N_1'}\).
  By (H1), the computation \(N_1'\) is not a return computation; by (H4), it is
  not of the form \(\plug{E}{\yield{W}}\) for a pure target context
  \(\context{E}\).

  Therefore
  \[
    \config{\Act_{\eta}(N_1,x,M_2,m)}{\tau}
    \arrAC^{*}
    \config{\Act_{\eta}(N_1',x,M_2,m)}{\tau'}.
  \]
  The target configuration is irreducible: the subcomputation \(N_1'\) cannot be
  reduced, the \((F)\) rule for the let frame cannot be applied since \(N_1'\)
  is not a return computation, and the \((\mathrm{yield})\) rule also cannot be
  applied since \(N_1'\) is not a pure-context occurrence of \(\mathbf{yield}\).
  Finally, \(\headmatch{M}{\Act_{\eta}(N_1',x,M_2,m)}\) holds vacuously: the
  outermost constructor of \(\Act_{\eta}(N_1',x,M_2,m)\) is a let, and it has as
  a subcomputation the labeled computation \(\labeledc{m}{\ldots}\), which
  cannot be represented as a pure-context occurrence of \(\mathbf{yield}\).
\end{proof}

\begin{proposition}\label{prop:stuck-refl}
  Suppose \(\simues{\config{M}{\theta}}{\config{N}{\tau}}{\eta}\).
  If \(\config{M}{\theta}\) is stuck, then \(\config{N}{\tau}\) reduces to a stuck configuration, i.e., there exist an \(\ac\)
  computation \(N'\) and an \(\ac\) store \(\tau'\) such that \(\config{N}{\tau} \arrAC^{*} \config{N'}{\tau'}\) and
  \(\config{N'}{\tau'}\) is stuck.
\end{proposition}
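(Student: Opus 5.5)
The plan is to derive Proposition~\ref{prop:stuck-refl} directly from Lemma~\ref{lem:stuck-shape}. A stuck configuration is in particular irreducible, so Lemma~\ref{lem:stuck-shape} applies to \(\config{M}{\theta}\). It yields \(N'\) and \(\tau'\) with \(\config{N}{\tau} \arrAC^{*} \config{N'}{\tau'}\), where \(\config{N'}{\tau'}\) is irreducible and \(\headmatch{M}{N'}\). It then remains to show that \(\config{N'}{\tau'}\) is \emph{stuck}, i.e., that it is neither \(\bot\) nor a return configuration.

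First, \(\config{N'}{\tau'}\) is a pair of a computation and a store by construction, so it is not \(\bot\). Second, suppose \(N' \equiv \return{W}\). Then (H1) of \(\headmatch{M}{N'}\) gives \(M \equiv \return{V}\) for some \(V\). This makes \(\config{M}{\theta}\) a return configuration and contradicts the assumption that it is stuck. Hence \(N'\) is not a returner, and \(\config{N'}{\tau'}\) is irreducible but neither \(\bot\) nor a return configuration, i.e., stuck.

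I expect essentially no obstacle here, since all the real work was done in Lemma~\ref{lem:stuck-shape}, in particular the dollar and throw cases that ensure no hidden target reduction remains. The one point to check carefully is that ``stuck'' for \(\del\) also rules out the \((\mathrm{fail})\) transition to \(\bot\), so that irreducibility in the sense required by the lemma really holds. This is immediate, because the definition of irreducible quantifies over all reduction rules, including the one producing \(\bot\).
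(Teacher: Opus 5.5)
Your proposal is correct and follows the paper's proof essentially verbatim: apply Lemma~\ref{lem:stuck-shape} to the irreducible source configuration, then use (H1) of \(\headmatch{M}{N'}\), together with the fact that \(M\) is not a returner, to conclude that the irreducible target \(\config{N'}{\tau'}\) is neither \(\bot\) nor a return configuration, hence stuck.
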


\begin{proof}
  Since \(\config{M}{\theta}\) is stuck, it is irreducible and \(M\) is not of
  the form \(\return{V}\).
  Lemma~\ref{lem:stuck-shape} therefore yields an \(\ac\) computation \(N'\) and
  an \(\ac\) store \(\tau'\) such that
  \(\config{N}{\tau} \arrAC^{*} \config{N'}{\tau'}\), the configuration
  \(\config{N'}{\tau'}\) is irreducible, and \(\headmatch{M}{N'}\).
  By (H1), the computation \(N'\) is not of the form \(\return{W}\).
  Hence \(\config{N'}{\tau'}\) is neither \(\bot\) nor a return configuration,
  and no reduction rule applies to it; that is, it is stuck.
\end{proof}

\begin{corollary}\label{cor:stuck-refl-sim}
  If \(\simu{C}{D}\) and \(C\) is stuck, then there exists an \(\ac\) configuration
  \(D'\) such that
  \[
    D\arrAC^{*}D'
  \]
  and \(D'\) is stuck.
\end{corollary}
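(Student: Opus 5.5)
The plan is to unfold the definition of \(\sim\) and then apply Proposition~\ref{prop:stuck-refl}.
Since \(C\) is stuck, it is not \(\bot\), so the first disjunct of Definition~\ref{def:deltoac-simulation} (\(C = D = \bot\)) is excluded.
Hence there exist a label map \(\eta\) and an \(\ac\) configuration \(D_0 = \config{N}{\tau}\) such that
\[
  D \arrAC^{*} D_0
  \quad\text{and}\quad
  \simues{C}{D_0}{\eta}.
\]
Here \(D_0\) cannot be \(\bot\), because \(\rsimues{\eta}\) only relates pairs of computations and stores.
Writing \(C = \config{M}{\theta}\), we thus have \(\simues{\config{M}{\theta}}{\config{N}{\tau}}{\eta}\), with \(\config{M}{\theta}\) stuck.

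We then invoke Proposition~\ref{prop:stuck-refl} on this relation.
It yields an \(\ac\) computation \(N'\) and an \(\ac\) store \(\tau'\) such that \(\config{N}{\tau} \arrAC^{*} \config{N'}{\tau'}\) and \(\config{N'}{\tau'}\) is stuck.
Composing the two reduction sequences, we set \(D' \defeq \config{N'}{\tau'}\) and obtain \(D \arrAC^{*} D_0 \arrAC^{*} D'\), with \(D'\) stuck as required.

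There is no real obstacle; the corollary is essentially bookkeeping.
The only points to state explicitly are two.
First, \(C = \bot\) is excluded by stuckness, since a stuck configuration is by definition neither \(\bot\) nor a return configuration.
Second, the reflexive–transitive closure \(\arrAC^{*}\) is closed under composition, so the administrative prefix \(D \arrAC^{*} D_0\) absorbed into \(\sim\) can be concatenated with the reduction supplied by Proposition~\ref{prop:stuck-refl}.
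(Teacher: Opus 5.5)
Your proposal is correct and matches the paper's proof: you exclude \(C=\bot\) by stuckness, unfold \(\sim\) to obtain \(D \arrAC^{*} D_0\) with \(\simues{C}{D_0}{\eta}\), apply Proposition~\ref{prop:stuck-refl}, and concatenate the two reduction sequences. Your extra remark that \(D_0\) cannot be \(\bot\) is harmless, since the paper simply writes \(D_0=\config{N}{\tau}\).
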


\begin{proof}
  Since \(C\) is stuck, it is not \(\bot\).
  Hence, we write \(C=\config{M}{\theta}\).
  By the definition of \(\sim\), there exist an \(\ac\) configuration \(D_0\)
  and a label map \(\eta\) such that
  \[
    D\arrAC^{*}D_0
    \qquad\text{and}\qquad
    \simues{\config{M}{\theta}}{D_0}{\eta}.
  \]
  Write \(D_0=\config{N}{\tau}\).  Proposition~\ref{prop:stuck-refl} gives
  \(N'\) and \(\tau'\) such that
  \[
    \config{N}{\tau}\arrAC^{*}\config{N'}{\tau'}
  \]
  and \(\config{N'}{\tau'}\) is stuck.  Composing this reduction with
  \(D\arrAC^{*}D_0\), we obtain the desired stuck target configuration.
\end{proof}

\begin{lemma}\label{lem:divergence-preserv}
  If \(\eval{\del}{M}\) diverges, then \(\eval{\ac}{\mt{M}}\) also diverges.
\end{lemma}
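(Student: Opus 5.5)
The plan is to iterate Theorem~\ref{thm:sim} along the infinite source reduction sequence. Suppose the maximal reduction sequence of \(\config{M}{\emptyset}\) in \(\del\) is infinite:
\[
  \config{M}{\emptyset} = C_0 \arrD C_1 \arrD C_2 \arrD \cdots .
\]
No \(C_i\) is \(\bot\), since \(\bot\) has no successor. By Corollary~\ref{cor:initial-sim} (equivalently Lemma~\ref{lem:deltoac:properties}(1)) we have \(\simu{C_0}{D_0}\) with \(D_0 \defeq \config{\mt{M}}{\emptyset}\).

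Next I apply Theorem~\ref{thm:sim} inductively. Given \(\simu{C_i}{D_i}\) and \(C_i \arrD C_{i+1}\), it yields \(D_{i+1}\) with \(\redACplus{D_i}{D_{i+1}}\) and \(\simu{C_{i+1}}{D_{i+1}}\). By dependent choice this produces an infinite chain \(D_0 \arrAC^{+} D_1 \arrAC^{+} D_2 \arrAC^{+} \cdots\). Each link contains at least one step, so concatenating the links gives an infinite \(\arrAC\)-reduction sequence starting from \(\config{\mt{M}}{\emptyset}\).

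Since \(\arrAC\) is deterministic, the maximal reduction sequence of \(\config{\mt{M}}{\emptyset}\) is unique. It therefore coincides with this infinite sequence, so the evaluation of \(\mt{M}\) diverges. In particular it never reaches \(\config{\return{V}}{\tau}\) or \(\bot\), and \(\eval{\ac}{\mt{M}}\) is undefined.

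The essential point is that Theorem~\ref{thm:sim} gives \(\arrAC^{+}\), with at least one step, rather than \(\arrAC^{*}\). Without strictness the target could stall while the source diverges, and this argument would fail. That strictness traces back to Proposition~\ref{prop:beta-sim}, where every source beta step is matched by at least one target step. The remaining check is that the intermediate \(D_i\) are never \(\bot\), which holds because \(\simu{C}{\bot}\) forces \(C=\bot\) and no \(C_i\) is \(\bot\). Beyond that, only the routine remark that an infinite sequence of nonempty finite segments concatenates to an infinite reduction sequence is needed.
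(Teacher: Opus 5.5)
Your proposal is correct and is essentially the paper's own proof. The paper also starts from the initial simulation (Corollary~\ref{cor:initial-sim}), iterates Theorem~\ref{thm:sim} along the infinite source sequence, and uses the fact that each source step is matched by at least one \(\arrAC\) step, together with determinism of \(\arrAC\).
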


\begin{proof}
  By Corollary~\ref{cor:initial-sim} and repeated use of Theorem~\ref{app:thm:sim}.
  Each source step induces at least one target step.
\end{proof}

\begin{lemma}\label{lem:error-preserv}
  If \(\redDplus{\config{M}{\emptyset}}{\bot}\), then \(\redACplus{\config{\mt{M}}{\emptyset}}{\bot}\).
\end{lemma}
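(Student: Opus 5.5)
The plan is to iterate the simulation theorem along the finite source reduction sequence, in the same way Lemma~\ref{lem:divergence-preserv} is obtained. Suppose \(\config{M}{\emptyset} \arrD^{n} \bot\) with \(n \geq 1\), and write the sequence as \(C_0 \arrD C_1 \arrD \cdots \arrD C_n = \bot\), where \(C_0 = \config{M}{\emptyset}\). Since \(\bot\) has no successor, every \(C_i\) with \(i<n\) is a genuine configuration.

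First, Corollary~\ref{cor:initial-sim} gives \(\simu{C_0}{D_0}\) with \(D_0 \defeq \config{\mt{M}}{\emptyset}\). Then, by induction on \(i < n\), we apply Theorem~\ref{app:thm:sim} to \(\simu{C_i}{D_i}\) and \(\redD{C_i}{C_{i+1}}\). This produces \(D_{i+1}\) with \(\redACplus{D_i}{D_{i+1}}\) and \(\simu{C_{i+1}}{D_{i+1}}\). After \(n\) steps we reach \(\simu{\bot}{D_n}\) together with \(\redACplus{\config{\mt{M}}{\emptyset}}{D_n}\); the reduction is nonempty because \(n \geq 1\) and each step is \(\arrAC^{+}\).

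It remains to conclude that \(D_n = \bot\). By Definition~\ref{def:deltoac-simulation}, \(\simu{C}{D}\) holds either when \(C = D = \bot\), or when \(D \arrAC^{*} D'\) and \(\simues{C}{D'}{\eta}\) for some \(\eta\). The second alternative requires \(C\) to be a pair \(\config{M}{\theta}\), since \(\WF_{\del}\), \(\Coh\), \(\Inv\) and the core relation are defined only on such configurations. So \(C = \bot\) forces \(D = \bot\). This is also consistent with the proof of Theorem~\ref{app:thm:sim}, which in the error case explicitly takes \(D' = \bot\).

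I expect no real obstacle. The only point requiring care is this last step, namely that \(\sim\) relates \(\bot\) only to \(\bot\) and not to some target configuration that merely reduces to \(\bot\). That follows directly from the shape of Definition~\ref{def:deltoac-simulation}, as explained above.
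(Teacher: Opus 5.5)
Your proposal is correct and follows the paper's proof, which is exactly Corollary~\ref{cor:initial-sim} followed by repeated use of Theorem~\ref{app:thm:sim} along the source sequence. You also make explicit a detail the paper leaves implicit: since \(\rsimues{\eta}\) is only defined on pair configurations, \(\sim\) relates \(\bot\) only to \(\bot\), so the final target configuration must be \(\bot\). That observation is correct.
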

\begin{proof}
  By Corollary~\ref{cor:initial-sim} and repeated use of Theorem~\ref{app:thm:sim}.
\end{proof}

\subsubsection{Strong macro-expressibility}
\label{subsec:strong-macro}

\begin{theorem}[preservation of semantics]\label{app:thm:prev-sem}
  For every \(\del\) program \(M\), \(\eval{\del}{M}\) is defined if and only if \(\eval{\ac}{\mt{M}}\) is defined.
\end{theorem}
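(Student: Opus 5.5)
We prove the two directions separately, assembling the appendix lemmas in the same way as in the main-text argument for Theorem~\ref{thm:prev-sem}.

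For the ``only if'' direction, suppose \(\redDclos{\config{M}{\emptyset}}{\config{\return{V}}{\theta}}\). Corollary~\ref{cor:initial-sim} gives \(\simu{\config{M}{\emptyset}}{\config{\mt{M}}{\emptyset}}\). We then induct on the length of the source reduction, applying Theorem~\ref{app:thm:sim} at each step. This yields an \(\ac\) configuration \(D\) with \(\redACclos{\config{\mt{M}}{\emptyset}}{D}\) and \(\simu{\config{\return{V}}{\theta}}{D}\). The left side is not \(\bot\), so by the definition of \(\sim\) we get \(D\arrAC^{*}\config{N}{\tau}\) and \(\simues{\config{\return{V}}{\theta}}{\config{N}{\tau}}{\eta}\) for some \(\eta\). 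The only clause of the core relation whose source is \(\return{V}\) is the structural clause, so \(N\equiv\return{\mte{V}{\eta}}\). Hence \(\eval{\ac}{\mt{M}}\) is defined.

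For the ``if'' direction we argue by contraposition. Assume \(\eval{\del}{M}\) is undefined. Since \(\arrD\) is deterministic, the maximal reduction sequence from \(\config{M}{\emptyset}\) falls into exactly one of three cases.
\begin{enumerate}
\item It is infinite. Lemma~\ref{lem:divergence-preserv} then makes the target diverge, because each source step induces at least one target step.
\item It reaches \(\bot\). Lemma~\ref{lem:error-preserv} then gives \(\redACplus{\config{\mt{M}}{\emptyset}}{\bot}\).
\item It reaches a stuck configuration \(C\). Iterating Theorem~\ref{app:thm:sim} gives \(\redACclos{\config{\mt{M}}{\emptyset}}{D}\) with \(\simu{C}{D}\), and Corollary~\ref{cor:stuck-refl-sim} then gives a stuck \(D'\) with \(D\arrAC^{*}D'\).
\end{enumerate}
In every case, determinism of \(\arrAC\) means the unique maximal target sequence diverges, ends in \(\bot\), or ends in a stuck non-return configuration. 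None of these reaches \(\config{\return{W}}{\tau}\), so \(\eval{\ac}{\mt{M}}\) is undefined.

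The main obstacle is already discharged by the earlier results. The point needing care is the stuck case, because the simulation relation only relates up to administrative target reductions. We therefore must ensure that a canonical representative exists which reduces to a genuinely irreducible target with no return shape, and not merely to one further step. This is exactly what Lemma~\ref{lem:stuck-shape}, with its head-matching invariant (H1)--(H4), provides. A smaller point: in the divergence case, the \(\arrAC^{+}\) in Theorem~\ref{app:thm:sim} ensures that the infinitely many source steps give an infinite target sequence rather than one that stalls.
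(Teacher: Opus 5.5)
Your proposal is correct and follows the paper's proof step by step. Like the paper, you use the initial simulation plus iterated Theorem~\ref{app:thm:sim} for the ``only if'' direction, and for the contrapositive the same three-way split (Lemma~\ref{lem:divergence-preserv}, Lemma~\ref{lem:error-preserv}, Corollary~\ref{cor:stuck-refl-sim}) together with determinism of \(\arrAC\). The only difference is cosmetic: you obtain \(N\equiv\return{\mte{V}{\eta}}\) by inversion on the core relation, since only the structural \(\mathbf{return}\) clause has a source of that shape, whereas the paper cites Lemma~\ref{lem:return-shape}. Your inversion is, if anything, the more direct justification.
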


\begin{proof}
  We prove the two directions separately.

  First suppose that \(\eval{\del}{M}\) is defined.  Then there exist a
  \(\del\)-value \(V\) and a \(\del\)-store \(\theta\) such that
  \[
    \redDclos{\config{M}{\emptyset}}{\config{\return{V}}{\theta}}.
  \]
  By Corollary~\ref{cor:initial-sim},
  \[
    \simu{\config{M}{\emptyset}}{\config{\mt{M}}{\emptyset}}.
  \]
  Repeatedly applying Theorem~\ref{app:thm:sim} along the source reduction, we obtain an \(\ac\)-configuration
  \(D\) such that
  \[
    \redACclos{\config{\mt{M}}{\emptyset}}{D}
    \qquad\text{and}\qquad
    \simu{\config{\return{V}}{\theta}}{D}.
  \]
  By the definition of \(\sim\), either both configurations are \(\bot\), which is
  impossible because the source configuration is \(\config{\return{V}}{\theta}\),
  or there exist an \(\ac\)-computation \(N\), an \(\ac\)-store \(\tau\), and a label
  map \(\eta\) such that
  \[
    D\arrAC^{*}\config{N}{\tau}
    \qquad\text{and}\qquad
    \simues{\config{\return{V}}{\theta}}{\config{N}{\tau}}{\eta}.
  \]
  The latter yields
  \[
    \simue{\config{\return{V}}{\theta}}{\config{N}{\tau}}{\eta}.
  \]
  By Lemma~\ref{lem:return-shape}, there exists an \(\ac\)-value \(W\) such that
  \[
    N\equiv\return{W}.
  \]
  Hence
  \[
    \redACclos{\config{\mt{M}}{\emptyset}}{\config{\return{W}}{\tau}},
  \]
  and therefore \(\eval{\ac}{\mt{M}}\) is defined.

  We prove the converse direction by contraposition.
  Assume that \(\eval{\del}{M}\) is undefined.
  Since \(\arrD\) is deterministic, exactly one of the following alternatives
  holds.
  \begin{enumerate}
  \item The source evaluation diverges.  Then \(\eval{\ac}{\mt{M}}\) diverges by
    Lemma~\ref{lem:divergence-preserv}.

  \item The source evaluation reaches the error state \(\bot\).  Thus
    \[
      \redDplus{\config{M}{\emptyset}}{\bot}.
    \]
    By Lemma~\ref{lem:error-preserv},
    \[
      \redACplus{\config{\mt{M}}{\emptyset}}{\bot}.
    \]
    Since \(\arrAC\) is deterministic and \(\bot\) is not a return configuration,
    the target cannot also terminate successfully.

  \item The source evaluation reaches a stuck configuration.  Then
    there exist a \(\del\)-computation \(M'\) and a \(\del\)-store \(\theta\) such
    that
    \[
      \redDclos{\config{M}{\emptyset}}{\config{M'}{\theta}},
    \]
    the configuration \(\config{M'}{\theta}\) is stuck; in particular, \(M'\) is
    not of the form \(\return{V}\).
    By Corollary~\ref{cor:initial-sim} and repeated applications of
    Theorem~\ref{app:thm:sim} along this source reduction sequence, there exists
    an \(\ac\)-configuration \(D\) such that
    \[
      \redACclos{\config{\mt{M}}{\emptyset}}{D}
      \qquad\text{and}\qquad
      \simu{\config{M'}{\theta}}{D}.
    \]
    By Corollary~\ref{cor:stuck-refl-sim}, there exists an \(\ac\)-configuration
    \(D'\) such that
    \[
      D\arrAC^{*}D'
    \]
    and \(D'\) is stuck.  Therefore
    \[
      \config{\mt{M}}{\emptyset}\arrAC^{*}D'.
    \]
    Since \(\arrAC\) is deterministic and a stuck configuration is not a return configuration, the target cannot also
    reduce to a return configuration. This implies that \(\eval{\ac}{\mt{M}}\) is undefined.
  \end{enumerate}

  Therefore, in all cases, \(\eval{\ac}{\mt{M}}\) is undefined.
\end{proof}

\begin{theorem}[strong macro-translation]\label{app:thm:strong-macro}
  The translation \(M \mapsto \mt{M}\) of Figure~\ref{fig:deltoac} is a strong macro-translation from \(\del\) to \(\ac\).
\end{theorem}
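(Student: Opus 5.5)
The plan is to check the four clauses of Definition~\ref{def:macro-expressibility} directly for the translation \(M \mapsto \mt{M}\) of Figure~\ref{fig:deltoac}. The semantic clause is already settled by Theorem~\ref{app:thm:prev-sem}, so only the syntactic clauses need attention.

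\textbf{Clause (1): programs go to programs.} Since \(\del\) programs contain no continuation labels, the translation is defined on every program. The only case where the translation is undefined is a continuation label, and by definition this never occurs in a program. Moreover, the image contains no coroutine labels and no labeled computations \(\labeledc{l}{\cdot}\). This holds because every right-hand side in Figure~\ref{fig:deltoac}, and every auxiliary thunk \(\var{ref}\), \(\var{get}\), \(\var{set}\), \(\var{fail}\), \(\var{loop}\), \(\var{th}\), \(\refcell{\cdot}\), is label-free. Hence \(\mt{M}\) is an \(\ac\) program. This is the same observation used in Corollary~\ref{cor:initial-sim}.

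\textbf{Clause (2): homomorphism on \(\mam\) constructors.} This holds by definition, since all \(\mam\) constructors, including the binding constructors viewed as families indexed by their bound variable, are translated homomorphically.

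\textbf{Clause (3): local syntactic abstractions.} For each \(\del\)-specific constructor we exhibit a fixed syntactic abstraction in \(\ac\).
\begin{itemize}
\item For \(\throw{\hole_1}{\hole_2}\), the abstraction is the body in Figure~\ref{fig:deltoac}, with \(\mt{V}\) and \(\mt{W}\) replaced by holes.
\item For \(\dollar{\hole_1}{x}{\hole_2}\), indexed by \(x\), the abstraction is the create/resume block, with \(\mt{M}\) and \(\mt{N}\) replaced by holes.
\item For \(\shift{k}{\hole}\), indexed by \(k\), the abstraction is \(\yield{\thunk{\abs{x}{\letin{k}{\force{x}}{\hole}}}}\).
\end{itemize}

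The one subtle point is hygiene. The auxiliary variables \(z\), \(zc\), \(zc'\), \(res\), \(x\) introduced by these abstractions must not capture free variables of the plugged subterms, except for the intended binders \(x\) and \(k\). We handle this by the paper's standing freshness convention, choosing the auxiliary names outside the source variable set, or equivalently by working modulo \(\alpha\)-renaming. The auxiliary thunks are closed terms, so inlining them introduces no free variables.

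\textbf{Clause (4): preservation of semantics.} This is exactly Theorem~\ref{app:thm:prev-sem}, which gives the ``if and only if'' required for a \emph{strong} macro-translation rather than only the ``only if'' of a weak one.

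The main obstacle, the semantic clause, is thus entirely discharged by Theorem~\ref{app:thm:prev-sem}. That theorem rests on the simulation result (Theorem~\ref{app:thm:sim}) and on the preservation of divergence, error, and stuckness (Lemmas~\ref{lem:divergence-preserv} and~\ref{lem:error-preserv}, and Corollary~\ref{cor:stuck-refl-sim}). What remains is only the routine syntactic checking, with the variable-capture point being the single place that needs care.
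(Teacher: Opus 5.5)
Your proposal is correct and follows essentially the same route as the paper. The syntactic clauses are read off directly from Figure~\ref{fig:deltoac}: the image is label-free, the translation is homomorphic on the \(\mam\) constructors, and shift0, dollar, and throw each get a fixed syntactic abstraction; the semantic ``if and only if'' clause is then discharged by Theorem~\ref{app:thm:prev-sem}, and your extra remark on the hygiene of auxiliary binders matches the paper's standing freshness convention, only making explicit what the paper leaves implicit.
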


\begin{proof}
  The translation maps every \(\del\) program to an \(\ac\) program and is homomorphic on the \(\mam\) constructors by definition.
  Every constructor peculiar to \(\del\) is translated by a fixed syntactic abstraction, as shown in Figure~\ref{fig:deltoac}.
  The preservation of semantics is shown by Theorem~\ref{app:thm:prev-sem}.
\end{proof}

%% file: appB-efftodel.tex
\section{Supplementary proofs for Section~\ref{sec:EFFoneToDELone}}

\subsection{From \texorpdfstring{$\eff$}{EFFone} to \texorpdfstring{$\del$}{DELone}}\label{app:EFFoneToDELone}

In this section, we prove the correctness of the macro-translation defined in
Figure~\ref{app:fig:efftodel}.

\begin{figure}[tb]
  \centering
  \[
  \begin{array}{rll}
    \mt{\opcall{op}{V}} & \defeq & \mathbf{S_0} \var{k}.\;\lambda \var{h}.\;(\mathbf{let}\;\var{y} = (\shift{k'}{\app{\force{h}}{\inj{\op{op}}{\vpair{\mt{V}}{\var{k'}}}}})\;\mathbf{in}\\
                        & & \phantom{(\mathbf{S_0} \var{k}.\;\lambda \var{h}.\;}
                            \app{\throw{k}{y}}{h}) \\
    
    \mt{\throw{V_1}{V_2}} & \defeq & \throw{\mt{V_1}}{\mt{V_2}} \\
    \mt{\handle{H}{M}} & \defeq & \dollar{\app{\dollart{\mt{M}}{H^{\mathrm{ret}}}}{\thunk{H^{\mathrm{ops}}}}}{z}{\return{z}} \\
                        & \mathrm{where} & (\handler{x}{M_{\mathrm{ret}}}{\ldots})^{\mathrm{ret}} = x. \abs{\underscore}{\mt{M_{\mathrm{ret}}}} \\
                        & & (\handler{x}{M_{\mathrm{ret}}}{(\app{\op{op}_i}{\app{p_i}{k_i}} \mapsto M_i)_i})^{\mathrm{ops}} \\
                        & &\quad= \abs{c}{\scase{c}{\op{op}_i}{\vpair{p_i}{k_i}}{\mt{M_i}}}
  \end{array}
  \]
  \caption{Translation from $\eff$ to $\del$}
  \label{app:fig:efftodel}
\end{figure}

\subsubsection{Well-formedness}\label{app:sec:efftodel:well-formedness}

\begin{definition}
  For an \(\eff\) computation \(M\), let \(\lb{M}\) be the set of continuation
  labels occurring in \(M\).
  For an \(\eff\) store \(\theta\), define
  \[
    \lb{\theta} \defeq \bigcup \{\lb{\theta(l)} \mid l \in \dom{\theta}, \theta(l) \neq \nil \}.
  \]
  For an \(\eff\) configuration \(C = \config{M}{\theta}\), define
  \(\lb{C} \defeq \lb{M} \cup \lb{\theta}\).
\end{definition}

\begin{definition}\label{app:def:efftodel:WF}
  An \(\eff\) configuration \(\config{M}{\theta}\) is \emph{well-formed},
  written \(\WF_{\eff}(\config{M}{\theta})\), if and only if
  \[
    \lb{\config{M}{\theta}} \subseteq \dom{\theta}.
  \]
\end{definition}

\begin{proposition}[preservation of \(\eff\) well-formedness]\label{lem:efftodel:eff_well_formed}
  If \(\WF_{\eff}(\config{M}{\theta})\) and
  \(\redE{\config{M}{\theta}}{\config{M'}{\theta'}}\), then
  \(\WF_{\eff}(\config{M'}{\theta'})\).
\end{proposition}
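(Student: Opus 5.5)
The argument mirrors Proposition~\ref{app:prop:del-wellformedness} for \(\del\). We unfold \(\redE{\config{M}{\theta}}{\config{M'}{\theta'}}\) as \(M \equiv \plug{C}{M_0}\) and \(M' \equiv \plug{C}{M_0'}\) with \(\redbetaE{\config{M_0}{\theta}}{\config{M_0'}{\theta'}}\). The evaluation context \(\context{C}\) is unchanged by the step. We will check in every case that \(\dom{\theta} \subseteq \dom{\theta'}\), so the labels of \(\context{C}\), which lie in \(\dom{\theta}\) by hypothesis, remain in \(\dom{\theta'}\). It therefore suffices to show two things: \(\lb{M_0'} \subseteq \dom{\theta'}\), and \(\lb{\theta'} \subseteq \dom{\theta'}\). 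We use \(\lb{\plug{C}{M_0}} = \lb{\context{C}} \cup \lb{M_0}\) freely. We then do a case analysis on the \(\eff\) beta rule (Figure~\ref{fig:eff_beta}).

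\textbf{Cases \((\mam)\) and \((\mathrm{ret})\).} The store is unchanged. The reduct is obtained by substitution or projection (e.g.\ \(M[V/x]\), or \(M_{\mathrm{ret}}[V/x]\) where \(H\) and \(V\) occur in the redex). Hence \(\lb{M_0'} \subseteq \lb{M_0} \subseteq \dom{\theta}\), which needs only a trivial lemma \(\lb{N[V/x]} \subseteq \lb{N} \cup \lb{V}\).

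\textbf{Case \((\mathrm{op})\).} A fresh label \(l\) is added, so \(\dom{\theta'} = \dom{\theta} \cup \{l\}\). The reduct \(M_j[V/p_j, l/k_j]\) has labels drawn from \(\lb{H} \cup \lb{V} \cup \{l\}\), and these are contained in \(\dom{\theta'}\). The new store entry \(\abs{x}{\handle{H}{\plug{H}{\return{x}}}}\) has labels in \(\lb{H} \cup \lb{\context{H}}\), which are again in \(\dom{\theta}\) because they occurred in the redex. The other store entries are unchanged.

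\textbf{Case \((\mathrm{throw})\).} The domain is unchanged, but the entry for \(l\) becomes \(\nil\). This entry therefore drops out of \(\lb{\theta'}\), which can only shrink it. The reduct \(\handle{H}{\plug{H}{\return{V}}}\) takes its labels from \(\theta(l)\), which lie in \(\lb{\theta} \subseteq \dom{\theta}\), and from \(V\), which lies in \(\lb{M_0}\).

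\textbf{Case \((\mathrm{fail})\).} This case does not arise, since the target is a configuration and not \(\bot\).

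The only point needing care, and the closest thing to an obstacle, is the \((\mathrm{throw})\) case. There the continuation body moves from the store into the term, so we must use the \(\lb{\theta}\) half of the hypothesis rather than \(\lb{M}\). Also, \(\lb{\theta}\) counts only non-\(\nil\) entries, so we must note that invalidating an entry never breaks the inclusion. Everything else is routine bookkeeping.
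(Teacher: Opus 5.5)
Your proof is correct and follows the paper's argument essentially verbatim: a case analysis on the \(\eff\) beta rule with the evaluation context left unchanged, where (op) adds the fresh label to both the reduct and the store domain, and (throw) uses the \(\lb{\theta}\) half of the hypothesis while noting that invalidating an entry only shrinks \(\lb{\theta'}\). Your explicit check that \(\dom{\theta} \subseteq \dom{\theta'}\), so that the labels of the surrounding context stay in the domain, is a small piece of bookkeeping that the paper leaves implicit.
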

\begin{proof}
  We prove this by case analysis on the \(\eff\) beta-reduction rule used in the
  source step.
  Since an evaluation context is unchanged by the reduction and
  \(\lb{\plug{C}{M}} = \lb{\context{C}} \cup \lb{M}\), it suffices to check that
  the reduct contains only labels defined in the updated store.

  In the \(\mam\) cases and the case \((\mathrm{ret})\), no continuation label is
  generated and the store is unchanged.
  Each of these rules reduces a redex to a term built from its subterms, so
  every continuation label occurring in the reduct already occurs in the redex,
  and the claim follows from the premise.

  In the case \((\mathrm{op})\), we have
  \begin{align*}
    \config{\handle{H}{\paren{\plug{H}{\opcall{op}{V}}}}}{\theta}
    &\arrE \config{M_j[V/p_j, l/k_j]}{\theta'}, \\
    \theta' &\defeq \theta[l := \abs{x}{\handle{H}{\plug{H}{\return{x}}}}]
  \end{align*}
  for a fresh label \(l \notin \dom{\theta}\).
  The reduct contains only the labels occurring in the redex together with the
  fresh label \(l\), all of which belong to
  \(\dom{\theta'} = \dom{\theta} \cup \{l\}\).
  The new store entry contains only the labels occurring in \(H\) and
  \(\context{H}\), which occur in the redex and hence belong to \(\dom{\theta}\).
  The other entries are unchanged.

  In the case \((\mathrm{throw})\), we have
  \[
    \config{\throw{l}{V}}{\theta} \arrE
    \config{\handle{H}{\plug{H}{\return{V}}}}{\theta[l := \nil]},
  \]
  where \(\theta(l) = \abs{x}{\handle{H}{\plug{H}{\return{x}}}}\).
  The reduct contains only the labels occurring in \(\theta(l)\) and in \(V\);
  the former belong to \(\lb{\theta} \subseteq \dom{\theta}\) by the premise, and
  the latter to \(\lb{M} \subseteq \dom{\theta}\).
  Since \(\dom{\theta[l := \nil]} = \dom{\theta}\) and
  \(\lb{\theta[l := \nil]} \subseteq \lb{\theta}\), the claim follows.

  In the case \((\mathrm{fail})\), the configuration reduces to \(\bot\), for
  which the claim is vacuously true.
\end{proof}

A label map \(\eta\) is a partial function that sends an \(\eff\) continuation
label \(l\) to a pair \(\eta(l) = (\eta(l)_1, \eta(l)_2)\) of \(\del\)
continuation labels.
The label map \(\eta\) keeps track of the source-to-target correspondence
between continuation labels.
We then parameterize and extend the translation with \(\eta\) by
\[
  \mte{l}{\eta} \defeq \eta(l)_1,
\]
provided \(\eta(l)\) is defined.
We call this extension a \emph{runtime translation} and write it as
\(\mtempty_{\eta}\); we extend runtime translations to translations on contexts
by mapping a hole to a hole.

Using label maps and the runtime translation, we define coherence conditions and
invariant conditions.

\begin{definition}[coherence conditions]\label{app:def:efftodel:Coh}
  For an \(\eff\) store \(\theta\) and a label map \(\eta\), we define
  \(\Coh(\theta, \eta)\) to hold if and only if the following conditions are
  satisfied.
  \begin{enumerate}
  \item[(C1)] \(\dom{\theta} = \dom{\eta}\);
  \item[(C2)] for all \(l, l' \in \dom{\theta}\) and \(i, j \in \{1, 2\}\), if
    \(\eta(l)_i = \eta(l')_j\), then \(l = l'\) and \(i = j\).
  \end{enumerate}
\end{definition}

\begin{definition}[invariant conditions]\label{app:def:efftodel:IC}
  For an \(\eff\) store \(\theta\), a \(\del\) store \(\tau\), and a label map
  \(\eta\), we define \(\Inv(\theta, \tau, \eta)\) to hold if and only if the
  following conditions are satisfied for every \(l \in \dom{\theta}\).
  \begin{enumerate}
  \item[(IC1)] \(\eta(l)_1 \in \dom{\tau}\) and \(\eta(l)_2 \in \dom{\tau}\);
  \item[(IC2)] if \(\theta(l) = \nil\), then \(\tau(\eta(l)_1) = \nil\);
  \item[(IC3)] if \(\theta(l) = \abs{y}{\handle{H}{\plug{H}{\return{y}}}}\) for
    an \(\eff\) handler \(H\) and an \(\eff\) pure context \(\context{H}\), then
    \begin{align*}
      \tau(\eta(l)_1) &= \abs{x}{\dollar{\letin{y}{\return{x}}{\app{\throw{\eta(l)_2}{y}}{\thunk{H^{\mathrm{ops}}_{\eta}}}}}{z}{\return{z}}}, \\
      \tau(\eta(l)_2) &= \abs{y}{\dollart{\mte{\plug{H}{\return{y}}}{\eta}}{H^{\mathrm{ret}}_{\eta}}},
    \end{align*}
  \end{enumerate}
  where \(H^{\mathrm{ops}}_{\eta}\) (resp.\ \(H^{\mathrm{ret}}_{\eta}\)) denotes
  the translation of the operational clauses (resp.\ return clause) of \(H\)
  with respect to \(\eta\).
\end{definition}

\begin{definition}[simulation relation]\label{app:def:efftodel:simulation}
  For an \(\eff\) configuration \(C\) and a \(\del\) configuration \(D\), we
  write \(\simu{C}{D}\) if and only if either \(C = D = \bot\), or there exist
  an \(\eff\) computation \(M\), a \(\del\) computation \(N\), an \(\eff\)
  store \(\theta\), a \(\del\) store \(\tau\), and a label map \(\eta\) such
  that
  \begin{enumerate}
  \item \(C = \config{M}{\theta}\) and \(D = \config{N}{\tau}\);
  \item \(N \equiv \mte{M}{\eta}\);
  \item \(\WF_{\eff}(C)\);
  \item \(\Coh(\theta, \eta)\) and \(\Inv(\theta, \tau, \eta)\).
  \end{enumerate}
\end{definition}

Intuitively, $\simu{\config{M}{\theta}}{\config{N}{\tau}}$ means the
correspondence between the two configurations: $M$ is translated into $N$ and
the continuations stored in $\theta$ are translated into those in $\tau$.

\begin{lemma}\label{lem:EFFoneToDELone:property}
  The following statements hold.
  \begin{enumerate}
  \item For any $\eff$ computation $M$,
    $\simu{\config{M}{\emptyset}}{\config{\mt{M}}{\emptyset}}$ holds.

  \item For any value $V$,
    $\simu{\config{\return{V}}{\theta}}{\config{N}{\tau}}$ implies that
    $N \equiv \return{\mte{V}{\eta}}$ for some $\eta$.
  \end{enumerate}
\end{lemma}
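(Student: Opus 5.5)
Both parts follow directly from Definition~\ref{app:def:efftodel:simulation}; the plan is to exhibit the witnesses for (1) and to inspect the translation for (2).

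For (1), I would take the empty label map \(\eta = \emptyset\) and set \(\theta = \tau = \emptyset\). The key observation is that an \(\eff\) computation \(M\) that is a program contains no continuation labels. Hence the runtime translation \(\mte{M}{\emptyset}\) never consults \(\eta\), and a routine induction on \(M\) shows \(\mte{M}{\emptyset} \equiv \mt{M}\). The remaining clauses are then immediate. \(\WF_{\eff}(\config{M}{\emptyset})\) holds because \(\lb{M} = \emptyset\) and \(\lb{\emptyset} = \emptyset\). \(\Coh(\emptyset, \emptyset)\) holds because both domains are empty and (C2) is vacuous. \(\Inv(\emptyset, \emptyset, \emptyset)\) holds vacuously, since all of (IC1)--(IC3) quantify over \(l \in \dom{\theta} = \emptyset\).

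One point needs care. The statement says ``any \(\eff\) computation \(M\)'', but if \(M\) contains labels, then \(\WF_{\eff}\) fails for the empty store. I would therefore read (1) as applying to programs, i.e.\ label-free computations. This is the only case used in Proposition~\ref{prop:eff_to_del_preservation_of_sem}, and I would state this restriction explicitly.

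For (2), suppose \(\simu{\config{\return{V}}{\theta}}{\config{N}{\tau}}\). The configuration is not \(\bot\), so the definition supplies some \(\eta\) with \(N \equiv \mte{\return{V}}{\eta}\). The translation is homomorphic on the \(\mam\) constructor \(\mathbf{return}\), which gives \(\mte{\return{V}}{\eta} \equiv \return{\mte{V}{\eta}}\). This \(\eta\) is the required witness.

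I do not expect a real obstacle. The only non-trivial step is the induction in (1) showing that \(\mte{M}{\eta}\) is independent of \(\eta\) on label-free terms. This includes the handler case, where \(H^{\mathrm{ret}}_{\eta}\) and \(H^{\mathrm{ops}}_{\eta}\) likewise coincide with \(H^{\mathrm{ret}}\) and \(H^{\mathrm{ops}}\), because the clause bodies contain no labels.
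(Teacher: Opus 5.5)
Your proof is correct and follows the paper's (very terse) argument: for (1) the empty label map witnesses the relation, with \(\WF_{\eff}\), \(\Coh\), and \(\Inv\) holding vacuously, and for (2) you unfold Definition~\ref{app:def:efftodel:simulation} and use that the translation is homomorphic on \(\mathbf{return}\). Restricting (1) to label-free computations is the right reading, since otherwise \(\WF_{\eff}(\config{M}{\emptyset})\) fails, and programs are the only case the paper actually uses.
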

\begin{proof}
  By case analysis on the source configuration.
\end{proof}

\begin{lemma}\label{lem:eff_to_del_subst}
  If an $\eff$ computation $M$ has the free variables
  $x_1, \ldots, x_n$, then we have
  $\mte{M[V_1/x_1,\ldots, V_n/x_n]}{\eta} \equiv \mte{M}{\eta}[\mte{V_1}{\eta}/x_1,
  \ldots, \mte{V_n}{\eta}/x_n]$ for any values $V_1, \ldots, V_n$ and $\eta$.
\end{lemma}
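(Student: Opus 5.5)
The plan is a structural induction on the $\eff$ term $M$, carried out simultaneously for computations, values, and handlers, since these syntactic categories are mutually recursive. We prove the identity for the simultaneous substitution $\sigma = [V_1/x_1,\ldots,V_n/x_n]$ and its translated counterpart $\mte{\sigma}{\eta} = [\mte{V_1}{\eta}/x_1,\ldots,\mte{V_n}{\eta}/x_n]$.

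\textbf{Base cases.} The base cases for values are immediate:
\begin{itemize}
\item[$\bullet$] For a variable $x_i$, both sides equal $\mte{V_i}{\eta}$.
\item[$\bullet$] For another variable, $\unit$, or a continuation label $l$, substitution acts trivially. In particular $\mte{l}{\eta} = \eta(l)_1$ is a closed label, unaffected by substitution.
\end{itemize}

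\textbf{Homomorphic cases.} For every $\mam$ constructor and for $\throw{V}{W}$, the translation is homomorphic. Substitution also commutes with each constructor, under the usual convention that bound variables are chosen fresh for the $x_i$ and for the free variables of the $V_i$. So the claim follows directly from the induction hypotheses for the immediate subterms.

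\textbf{Operation calls.} Here $\mte{\opcall{op}{V}}{\eta}$ is a fixed syntactic abstraction with the single hole filled by $\mte{V}{\eta}$. It binds $k$, $h$, $y$, $k'$, and the freshness convention lets us assume these are distinct from the $x_i$ and do not occur free in any $\mte{V_i}{\eta}$. Substitution therefore only reaches the hole, and the induction hypothesis for $V$ applies. For this we use that the translation of a value introduces no new free variables, i.e.\ $\mathrm{FV}(\mte{V}{\eta}) = \mathrm{FV}(V)$, which is a trivial auxiliary induction.

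\textbf{Handle terms.} For $\handle{H}{M'}$, the translation wraps $\mte{M'}{\eta}$, $H^{\mathrm{ret}}_\eta$, and $\thunk{H^{\mathrm{ops}}_\eta}$ in two dollar terms binding $z$ and the return variable $x$. The dispatcher binds $c$, $p_i$, and $k_i$. Again we rename these binders apart, so substitution distributes to:
\begin{itemize}
\item[$\bullet$] $\mte{M'}{\eta}$;
\item[$\bullet$] $\mte{M_{\mathrm{ret}}}{\eta}$;
\item[$\bullet$] each $\mte{M_i}{\eta}$.
\end{itemize}
Each of these is a strict subterm, so the induction hypothesis applies to it. Reassembling gives $\mte{(\handle{H}{M'})\sigma}{\eta}$, because substitution into a handler is defined clausewise.

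\textbf{Main obstacle.} The only real subtlety is the bookkeeping of bound variables introduced by the syntactic abstractions (the $k$, $h$, $y$, $k'$, $z$, $c$ that have no source counterpart). We handle it with the paper's standing freshness convention together with the fact that the translation preserves free variables. Once this is stated explicitly, every case is a one-line appeal to the induction hypothesis, in the same way as Lemma~\ref{lem:subst} for the $\del$-to-$\ac$ translation.
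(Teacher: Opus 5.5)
Your proposal is correct and takes the same route as the paper, whose proof is likewise a straightforward structural induction on $M$; the paper writes out only a representative homomorphic case. Your explicit handling of the binders introduced by the syntactic abstractions for operation calls and handle terms, via the freshness convention and preservation of free variables, supplies details the paper leaves implicit.
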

\begin{proof}
  By straightforward induction on $M$. For example, if $M \equiv \app{M_1}{M_2}$, then
  \begin{align*}
    \mte{M[V_1/x_1,\ldots, V_n/x_n]}{\eta} &\equiv \mte{(\app{M_1}{M_2})[V_1/x_1,\ldots, V_n/x_n]}{\eta} \\
                                           &\equiv \app{\mte{M_1[V_1/x_1,\ldots, V_n/x_n]}{\eta}}{\mte{M_2[V_1/x_1,\ldots, V_n/x_n]}{\eta}} \\
                                           &\equiv \app{(\mte{M_1}{\eta}[\mte{V_1}{\eta}/x_1, \ldots, \mte{V_n}{\eta}/x_n])}{(\mte{M_2}{\eta}[\mte{V_1}{\eta}/x_1, \ldots, \mte{V_n}{\eta}/x_n])} \\
                                           &\equiv (\app{\mte{M_1}{\eta}}{\mte{M_2}{\eta}})[\mte{V_1}{\eta}/x_1, \ldots, \mte{V_n}{\eta}/x_n].
  \end{align*}
\end{proof}

\begin{lemma}\label{lem:eff_to_del_beta_sim}
  If $\simu{C}{D}$ and $\redbetaE{C}{C'}$, then there exists a $\del$ configuration $D'$ such that $\redDplus{D}{D'}$ and $\simu{C'}{D'}$.
\end{lemma}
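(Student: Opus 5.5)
Since \(C \rightarrow_{\mathbf{E}}^{\beta} C'\) and \(C' \neq \bot\) or \(C' = \bot\), we have \(C = \config{M}{\theta}\) with \(\simu{C}{D}\) witnessed by some \(\eta\), so \(D = \config{\mte{M}{\eta}}{\tau}\) with \(\WF_{\eff}(C)\), \(\Coh(\theta,\eta)\) and \(\Inv(\theta,\tau,\eta)\). I will proceed by case analysis on the \(\eff\) beta rule, reading off the shape of \(\mte{M}{\eta}\) in each case from the translation in Figure~\ref{app:fig:efftodel}. In every non-error case I will also establish \(\WF_{\eff}(C')\) by Proposition~\ref{lem:efftodel:eff_well_formed}, and I will use the substitution lemma (Lemma~\ref{lem:eff_to_del_subst}) to commute the runtime translation with substitution. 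For the \((\mam)\) cases, the translation is homomorphic, so \(\mte{M}{\eta}\) is a redex of the same rule. It reduces in one step to \(\mte{M'}{\eta}\), and the store, \(\eta\), \(\Coh\) and \(\Inv\) are unchanged.

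In the \((\mathrm{ret})\) case, \(D\) is
\(\dollar{\app{\dollart{\return{\mte{V}{\eta}}}{x.\abs{\underscore}{\mte{M_{\mathrm{ret}}}{\eta}}}}{\thunk{H^{\mathrm{ops}}_{\eta}}}}{z}{\return{z}}\).
By the \((\mathrm{ret})\) rule of \(\del\) it steps to an application of \(\abs{\underscore}{\mte{M_{\mathrm{ret}}}{\eta}[\mte{V}{\eta}/x]}\). A \((\to)\) step then gives \(\dollar{\mte{M_{\mathrm{ret}}[V/x]}{\eta}}{z}{\return{z}}\).

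Here I hit the main obstacle: the target is not literally \(\mte{M_{\mathrm{ret}}[V/x]}{\eta}\), because an extra outer dollar remains, yet Definition~\ref{app:def:efftodel:simulation} demands syntactic equality. I would first check whether the paper intends \(\mte{\cdot}{\eta}\) to absorb this. If it does not, I would not fix it by reducing the dollar away, since the body may perform further shifts. Instead I would generalize the relation to a congruence that tolerates ``administrative'' outer delimiters of the form \(\dollar{N}{z}{\return{z}}\) around translated terms.

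The point to verify for that generalization is that such a delimiter is transparent. No translated operation can capture up to it, because each \(\mt{\opcall{op}{V}}\) is always caught first by the two dollars of an enclosing translated handler. The only remaining interaction is that the delimiter eventually meets a return, and the \((\mathrm{ret})\) rule then removes it in one extra \(\del\) step. With this adjustment the \((\mathrm{ret})\) case closes with \(\redDplus{D}{D'}\).

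The \((\mathrm{op})\), \((\mathrm{throw})\) and \((\mathrm{fail})\) cases follow the computations already displayed in the proof of Proposition~\ref{prop:eff_to_del_sim}.

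In the \((\mathrm{op})\) case, the two successive shift0's allocate fresh \(m,n\), and I set \(\eta' = \eta[l:=(m,n)]\). (C1) and (C2) follow from the freshness of \(l,m,n\). (IC1) and (IC3) for \(l\) hold by construction of \(\tau'\), and the old entries are untouched.

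In the \((\mathrm{throw})\) case, (IC3) supplies the contents of \(\tau(\eta(l)_1)\) and \(\tau(\eta(l)_2)\). The target then runs the throw of \(\eta(l)_1\), an \((F)\) step, and the throw of \(\eta(l)_2\), ending in \(\mte{\handle{H}{\plug{H}{\return{V}}}}{\eta}\) with both labels set to \(\nil\), which gives (IC2). For any other valid \(l'\), (IC3) is preserved because (C2) makes its labels distinct from those of \(l\).

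In the \((\mathrm{fail})\) case, (IC2) gives \(\tau(\eta(l)_1)=\nil\), so \(D \arrD \bot\), and \(\bot\sim\bot\).
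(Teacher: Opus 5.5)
Your $(\mam)$, $(\mathrm{op})$, $(\mathrm{throw})$ and $(\mathrm{fail})$ cases match the paper's proof. You are also right that $(\mathrm{ret})$ is not routine. With the paper's syntactic relation the statement actually fails there. For instance, take $M_{\mathrm{ret}} \equiv \letin{y}{\return{x}}{\return{y}}$: the deterministic target reduction passes only through terms $\dollar{\cdots}{z}{\return{z}}$ and then $\return{\mte{V}{\eta}}$, never through a let-term.

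The gap is in your repair. The leftover delimiter is not transparent, because $\mte{M_{\mathrm{ret}}[V/x]}{\eta}$ runs \emph{inside} it. When the return clause performs an operation, the first $\mathbf{S_0}$ of the translated operation is caught by this leftover delimiter, which is now the nearest one. It is not caught by the inner delimiter of an enclosing translated handler. Your claim that every translated operation is ``caught first by the two dollars of an enclosing translated handler'' overlooks exactly this situation.

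Concretely, let $\syntacticset{O}=\{\op{op}\}$, and set
$H \equiv \handler{x}{\opcall{op}{x}}{\opcall{op}{\app{p}{k}} \mapsto \return{p}}$ and
$H' \equiv \handler{x}{\return{x}}{\opcall{op}{\app{p}{k}} \mapsto \return{p}}$.
The source $\handle{H'}{(\handle{H}{\return{\unit}})}$ evaluates to $\unit$ by $(\mathrm{ret})$ and then $(\mathrm{op})$. Its translation, after $(\mathrm{ret})$, $(\to)$ and one $(\mathrm{shift})$, reaches
$\dollar{\app{\dollar{\abs{h}{\cdots}}{x}{\abs{\underscore}{\return{x}}}}{\thunk{H'^{\mathrm{ops}}}}}{z}{\return{z}}$.
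This term is stuck, since no rule applies to a dollar whose body is an abstraction. So no relaxation of $\sim$ can close the $(\mathrm{ret})$ case: the translation of Figure~\ref{fig:efftodel} is itself not semantics-preserving when a return clause performs an operation.

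What is needed is a change to the translation, not to the relation. Let the return clause discard the outer delimiter before running, e.g.\ $H^{\mathrm{ret}} \defeq x.\abs{\underscore}{\shift{k_0}{\mt{M_{\mathrm{ret}}}}}$ with $k_0$ fresh. Then $(\mathrm{ret})$ is simulated by exactly three $\del$ steps: $(\mathrm{ret})$, $(\to)$ and $(\mathrm{shift})$. These end in $\config{\mte{M_{\mathrm{ret}}[V/x]}{\eta}}{\tau[l_0 := \abs{y}{\dollar{\return{y}}{z}{\return{z}}}]}$ for a fresh $l_0$. Since $l_0$ lies outside the image of $\eta$, $\Coh$ and $\Inv$ are unaffected and the paper's syntactic relation holds. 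Your remaining cases then go through unchanged.
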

\begin{proof}
  We prove this by case analysis on $\redbetaE{C}{C'}$, giving only nontrivial
  cases.
  The other cases follow by routine arguments.

  \noindent \textbf{Case} $(\mathrm{op})$:
  \begin{caseindent}
    Suppose $C = \config{\handle{H}{\paren{\plug{H}{\opcall{op}{V}}}}}{\theta}$ and
  \[
    C \arrE C' = \config{M_{\op{op}}[V/p, l/k]}{\theta'},
  \]
  where
  $\theta' \defeq \theta[l := \abs{x}{\handle{H}{\plug{H}{\return{x}}}}]$,
  $H \equiv \handler{x}{M_{\mathrm{ret}}}{\ldots (\opcall{op}{\app{p}{k}}
    \mapsto M_{\op{op}}) \ldots}$, and $l$ is a fresh label.
  By the definition of $\simu{C}{D}$, there exist \(\tau\) and \(\eta\) such
  that
  \[
    D = \config{\dollar{\app{\dollart{N}{H^{\mathrm{ret}}_{\eta}}}{\thunk{H^{\mathrm{ops}}_{\eta}}}}{z}{\return{z}}}{\;\tau},
  \]
  where
  \[
    N \equiv \plug{\mte{\context{H}}{\eta}}{
              \begin{array}{@{}l@{}}
                \mathbf{S_0} \var{k}.\;\lambda \var{h}.\;(\mathbf{let}\;\var{y} = (\shift{k'}{\app{\force{h}}{\inj{\op{op}}{\vpair{\mte{V}{\eta}}{\var{k'}}}}})\;\mathbf{in}\\
                \phantom{(\mathbf{S_0} \var{k}.\;\lambda \var{h}.\;}
                \app{\throw{k}{y}}{h})
              \end{array}
            }.
  \]
  The configuration $D$ evaluates to
  \[
    D' \defeq \config{\mte{M_{\op{op}}}{\eta}\left[\mte{V}{\eta}/p, m/k\right]}
    {\tau'}
  \]
  where $m$ and \(n\) are fresh, and
  \[
    \tau' \defeq \tau\left[
      \begin{array}{@{}l@{}}
        m := \abs{x}{\dollar{\letin{y}{\return{x}}{\app{\throw{n}{y}}{\thunk{H^{\mathrm{ops}}_{\eta}}}}}{z}{\return{z}}}, \\
        n := \abs{y}{\dollart{\plug{\mte{\context{H}}{\eta}}{\return{y}}}{H^{\mathrm{ret}}_{\eta}}}
      \end{array} \right].
  \]
  Let $\eta'$ be $\eta[ l := (m, n) ]$.
  By Lemma~\ref{lem:eff_to_del_subst}, it follows that
  \[
    \mte{M_{\op{op}}}{\eta}[\mte{V}{\eta}/p, m/k] \equiv \mte{M_{\op{op}}[V/p, l/k]}{\eta'}.
  \]
  Note that
  $\mte{M_{\op{op}}}{\eta} \equiv \mte{M_{\op{op}}}{\eta'}$,
  $\mte{\context{H}}{\eta} \equiv \mte{\context{H}}{\eta'}$,
  $H^{\mathrm{ret}}_{\eta} \equiv H^{\mathrm{ret}}_{\eta'}$, and
  $H^{\mathrm{ops}}_{\eta} \equiv H^{\mathrm{ops}}_{\eta'}$ hold
  since $l$ does not appear in any of $M_{\op{op}}$, $\context{H}$,
  $H^{\mathrm{ret}}$, or $H^{\mathrm{ops}}$.
  Finally, we conclude
  \[
    C' \sim D',
  \]
  witnessed by \(\eta'\).
  It is straightforward to check that \(\Coh(\theta', \eta')\) and
  \(\Inv(\theta', \tau', \eta')\) hold, where the freshness of \(l\), \(m\), and
  \(n\) gives (C1) and (C2).
  \end{caseindent}

  \noindent \textbf{Case} $(\mathrm{throw})$:
  \begin{caseindent}
    Suppose $C = \config{\throw{l}{V}}{\theta}$ and
    $C' = \config{\handle{H}{\plug{H}{\return{V}}}}{\theta'}$, where
    $\theta(l) = \abs{x}{\handle{H}{\plug{H}{\return{x}}}}$ and
    $\theta' \defeq \theta[l := \nil]$.
    By the definition of $\simu{C}{D}$, there exist \(\tau\) and \(\eta\) such that
    \(D = \config{\throw{\eta(l)_1}{\mte{V}{\eta}}}{\tau}\),
    \[
      \tau(\eta(l)_1) = \abs{x}{\dollar{\letin{y}{\return{x}}{\app{\throw{\eta(l)_2}{y}}{\thunk{H^{\mathrm{ops}}_{\eta}}}}}{z}{\return{z}}},
    \]
    and
    \[
      \tau(\eta(l)_2) = \abs{y}{\dollart{\mte{\plug{H}{\return{y}}}{\eta}}{H^{\mathrm{ret}}_{\eta}}}.
    \]
    Thus, $D$ evaluates to
    \[
      D' = \config{\dollar{\app{\paren{\dollart{\mte{\plug{H}{\return{y}}}{\eta}}{H^{\mathrm{ret}}_{\eta}}\left[\mte{V}{\eta}/y\right]}}{\thunk{H^{\mathrm{ops}}_{\eta}}}}{z}{\return{z}}}{\tau'},
    \]
    where $\tau' \defeq \tau[\eta(l)_1 := \nil, \eta(l)_2 := \nil]$.
    Then, from Lemma~\ref{lem:eff_to_del_subst}, we obtain
    \[
      \dollart{\mte{\plug{H}{\return{y}}}{\eta}}{H^{\mathrm{ret}}_{\eta}}[\mte{V}{\eta}/y] \equiv \dollart{\mte{\plug{H}{\return{V}}}{\eta}}{H^{\mathrm{ret}}_{\eta}}.
    \]
    Therefore, we conclude that
    \[
    C' \sim D',
    \]
    witnessed by \(\eta\), since
    \[
      \mte{\handle{H}{\plug{H}{\return{V}}}}{\eta} \equiv \dollar{\app{\dollart{\mte{\plug{H}{\return{V}}}{\eta}}{H^{\mathrm{ret}}_{\eta}}}{\thunk{H^{\mathrm{ops}}_{\eta}}}}{z}{\return{z}}.
    \]
    Note that \(\Coh(\theta', \eta)\) holds since
    \(\dom{\theta'} = \dom{\theta}\), and that \(\Inv(\theta', \tau', \eta)\)
    also holds.
  \end{caseindent}

  \noindent \textbf{Case} $(\mathrm{fail})$:
  \begin{caseindent}
    Suppose $C = \config{\throw{l}{V}}{\theta}$, $\theta(l) = \nil$, and $\redbetaE{C}{\bot}$.
    By the definition of $\simu{C}{D}$, there exist \(\tau\) and \(\eta\) such that
    \(D = \config{\throw{\eta(l)_1}{\mte{V}{\eta}}}{\tau}\) and $\tau(\eta(l)_1) = \nil$.
    Hence, the invocation of $\eta(l)_1$ fails and $D$ evaluates to $\bot$, which
    concludes the current case.
  \end{caseindent}
  Thus, we are done.
\end{proof}

\begin{lemma}\label{lem:eff_to_del_plug}
  For every evaluation context \(\context{K}\) and computation $M$, we have
  $\mte{\plug{K}{M}}{\eta} \equiv \mte{\context{K}}{\eta}[\mte{M}{\eta}]$.
\end{lemma}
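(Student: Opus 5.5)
The plan is a structural induction on the evaluation context \(\context{K}\), following the grammar of \(\eff\) evaluation contexts: \(\context{K}\) is either \(\hole\) or \(\mathcal{F}[\context{K}_0]\), where the frame \(\mathcal{F}\) is one of \(\seq{x}{\hole}{N}\), \(\app{\hole}{V}\), \(\prj{i}{\hole}\) or \(\handle{H}{\hole}\). Before starting, we fix the translation of contexts: the runtime translation \(\mtempty_{\eta}\) on contexts is the same clause-by-clause definition as on computations, with the hole sent to the hole. With that convention, each case of the lemma just unfolds one clause of the translation.

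The base case \(\context{K} = \hole\) is immediate, since both sides are \(\mte{M}{\eta}\). For the pure frames, the translation is homomorphic on \(\mam\) constructors. For example, with \(\context{K} = \seq{x}{\context{K}_0}{N}\) we get
\(\mte{\plug{K}{M}}{\eta} \equiv \seq{x}{\mte{\context{K}_0[M]}{\eta}}{\mte{N}{\eta}}\).
The induction hypothesis rewrites the first component as \(\mte{\context{K}_0}{\eta}[\mte{M}{\eta}]\), and the result is, by definition, \(\mte{\context{K}}{\eta}[\mte{M}{\eta}]\). The application and projection frames are handled the same way.

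The only case needing attention is the handler frame \(\context{K} = \handle{H}{\context{K}_0}\). By Figure~\ref{app:fig:efftodel},
\[
  \mte{\handle{H}{\context{K}_0[M]}}{\eta} \equiv \dollar{\app{\dollart{\mte{\context{K}_0[M]}{\eta}}{H^{\mathrm{ret}}_{\eta}}}{\thunk{H^{\mathrm{ops}}_{\eta}}}}{z}{\return{z}}.
\]
The translated context \(\mte{\context{K}}{\eta}\) has the same shape, with a hole in place of \(\mte{\context{K}_0[M]}{\eta}\). Applying the induction hypothesis inside the inner dollar and re-associating the plugging, we obtain \(\mte{\context{K}}{\eta}[\mte{M}{\eta}]\).

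We expect the main (mild) obstacle to be variable capture in this last case. The binders \(z\), the one bound by \(H^{\mathrm{ret}}\), and those bound in \(H^{\mathrm{ops}}\) all enclose the hole, so plugging could capture a free variable of \(\mte{M}{\eta}\). We handle this by working up to \(\alpha\)-equivalence and using the paper's freshness convention for variables introduced by translations: \(z\) and the other translation-introduced binders are chosen not free in \(M\), and the binders of the source handler \(H\) already scope over \(\context{K}_0[M]\) in the source, so capture coincides on both sides. We also note that \(\mte{\context{K}}{\eta}\) is a genuine \(\del\) evaluation context, since the dollar and application frames are computational frames of \(\del\); this is not needed for the equality itself.
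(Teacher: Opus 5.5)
Your proof is correct and follows the paper's argument, which is simply a straightforward induction on $\context{K}$, using the context translation that sends the hole to the hole. One correction: the capture concern in your last paragraph is vacuous, because no binder of an evaluation context encloses its hole. In $\seq{x}{\hole}{N}$ and $\dollar{\hole}{x}{N}$ the variable $x$ scopes only over $N$, and the handler's variables scope only over their own clauses. In the translated handler frame the hole sits in the body of the inner dollar, outside the scope of $z$, of the $x$ in $H^{\mathrm{ret}}$, and of the thunk $\thunk{H^{\mathrm{ops}}}$. So plugging is capture-free on both sides and no freshness convention is needed.
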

\begin{proof}
  By straightforward induction on \(\context{K}\).
\end{proof}

\begin{lemma}\label{lem:eff_to_del_decomp}
  Suppose that $\simu{\config{M}{\theta}}{\config{N}{\tau}}$ holds, and let
  $\eta$ be a label map witnessing it, so that $N \equiv \mte{M}{\eta}$.
  If $M$ is decomposed into an evaluation context $\context{C}$ and a redex
  $M'$, then $N$ is decomposed into the evaluation context
  $\mte{\context{C}}{\eta}$ and the redex $\mte{M'}{\eta}$.
\end{lemma}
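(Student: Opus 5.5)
By Lemma~\ref{lem:eff_to_del_plug}, we already have \(N \equiv \mte{\plug{C}{M'}}{\eta} \equiv \mte{\context{C}}{\eta}[\mte{M'}{\eta}]\). So the task reduces to two checks. First, \(\mte{\context{C}}{\eta}\) must be a \(\del\) evaluation context. Second, \(\mte{M'}{\eta}\) must be the computation at which \(\arrD\) locates its redex, that is, the decomposition of \(N\) has to line up with this syntactic split. Since decomposition into an evaluation context and a redex is unique, it suffices to exhibit this particular split as \emph{a} valid decomposition; uniqueness then makes it \emph{the} decomposition.

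I will establish the first point by induction on \(\context{C}\), using the frames of \(\eff\). The pure \(\mam\) frames \(\letin{x}{\hole}{N'}\), \(\app{\hole}{V}\) and \(\prj{i}{\hole}\) are translated homomorphically, so each becomes the same \(\del\) frame. The remaining \(\eff\) frame is \(\handle{H}{\hole}\). It becomes \(\dollar{\app{\dollart{\hole}{H^{\mathrm{ret}}_{\eta}}}{\thunk{H^{\mathrm{ops}}_{\eta}}}}{z}{\return{z}}\), a composite of a dollar frame, an application frame, and another dollar frame, all of which are \(\del\) computational frames. Consequently the translated context is an evaluation context, although it may be made of several target frames for each source frame. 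Pure \(\eff\) contexts are mapped to pure \(\del\) contexts, which is needed when the redex is an (op) redex.

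The second point is the main obstacle, because the statement is subtle when the source redex translates into something whose own head is not a \(\del\) redex. I will interpret ``redex'' as the subterm to which a beta rule of \(\eff\) applies. I will then proceed by case analysis on that rule, showing in each case that \(\mte{M'}{\eta}\) is the subterm at which the next target step takes place, possibly after moving an inner part of it into the context.

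\(\mam\) rules: the translation is homomorphic and commutes with substitution (Lemma~\ref{lem:eff_to_del_subst}), so \(\mte{M'}{\eta}\) is itself a \(\mam\) redex.

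(throw) and (fail): \(\mte{\throw{l}{V}}{\eta} = \throw{\eta(l)_1}{\mte{V}{\eta}}\), which is a \(\del\) throw redex.

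(ret): \(\mte{\handle{H}{\return{V}}}{\eta}\) has its inner dollar applied to \(\return{\mte{V}{\eta}}\), which triggers \(\del\)'s (ret).

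(op): \(\handle{H}{\plug{H}{\opcall{op}{V}}}\) translates to an outer dollar around an application whose function part is a dollar around \(\mte{\context{H}}{\eta}[\mathbf{S_0} k.\ldots]\), with \(\mte{\context{H}}{\eta}\) pure. The active \(\del\) redex is therefore the inner dollar with the shift0 under a pure context, i.e.\ a \(\del\) (shift) redex lying inside \(\mte{M'}{\eta}\).

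In the (ret) and (op) cases the precise target redex sits strictly inside \(\mte{M'}{\eta}\), under the frames \(\dollar{\app{\hole}{\ldots}}{z}{\return{z}}\). Here I will read the lemma as saying that the target decomposition refines the translated one. Concretely, \(N \equiv \mte{\context{C}}{\eta}[\context{F}[R]]\) for the fixed frame composite \(\context{F}\), and the head of \(\mte{M'}{\eta}\) is the active one. I will phrase the induction so that it yields exactly this: no reducible position of \(N\) lies outside \(\mte{M'}{\eta}\) in \(\mte{\context{C}}{\eta}\), and \(\mte{M'}{\eta}\) is in redex position. This is what the lifting argument in Proposition~\ref{prop:eff_to_del_sim}(2) needs.
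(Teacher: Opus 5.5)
Your proposal is correct and follows essentially the same route as the paper, which argues by induction on the frames of \(\context{C}\) that the runtime translation sends frames to evaluation frames and redexes to redexes. Your refinement is more accurate than the paper's one-line argument. The frame \(\handle{H}{\hole}\) becomes a composite of three \(\del\) frames, and for \((\mathrm{ret})\) and \((\mathrm{op})\) redexes \(\mte{M'}{\eta}\) is not itself a \(\del\) redex (the actual one sits under \(\dollar{\app{\hole}{\thunk{H^{\mathrm{ops}}_{\eta}}}}{z}{\return{z}}\)), so the lemma holds only in your refined sense. That refined sense is all Proposition~\ref{prop:EFFoneToDELone:sim} needs: lifting the reduction from Lemma~\ref{lem:eff_to_del_beta_sim} only requires Lemma~\ref{lem:eff_to_del_plug} and the fact that \(\mte{\context{C}}{\eta}\) is an evaluation context.
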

\begin{proof}
  It is easy to check this by induction on the number of frames of $\context{C}$. Note that the runtime translation maps frames to frames and redexes to redexes.
\end{proof}

\begin{proposition}[simulation]\label{prop:EFFoneToDELone:sim}
  If $\simu{C}{D}$ and $\redE{C}{C'}$, then there exists a $\del$
  configuration $D'$ such that $\redDplus{D}{D'}$ and $\simu{C'}{D'}$.
\end{proposition}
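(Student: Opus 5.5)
The plan is to lift Lemma~\ref{lem:eff_to_del_beta_sim} through evaluation contexts, using Lemma~\ref{lem:eff_to_del_decomp} and Lemma~\ref{lem:eff_to_del_plug}. If $C=\bot$ there is nothing to prove, since $\bot$ does not reduce. Otherwise write $C=\config{M}{\theta}$ and $D=\config{N}{\tau}$, and fix a witnessing label map $\eta$, so that $N\equiv\mte{M}{\eta}$ and $\WF_{\eff}(C)$, $\Coh(\theta,\eta)$, $\Inv(\theta,\tau,\eta)$ hold. By the definition of $\arrE$, the step $\redE{C}{C'}$ decomposes as $M\equiv\plug{C}{M_0}$ with $\redbetaE{\config{M_0}{\theta}}{C_0'}$. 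Here either $C_0'=\bot$ (and then $C'=\bot$), or $C_0'=\config{M_0'}{\theta'}$ and $C'=\config{\plug{C}{M_0'}}{\theta'}$. By Lemma~\ref{lem:eff_to_del_decomp}, $N\equiv\mte{\context{C}}{\eta}[\mte{M_0}{\eta}]$.

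Next, I would check that $\simu{\config{M_0}{\theta}}{\config{\mte{M_0}{\eta}}{\tau}}$, witnessed by the same $\eta$. The translation condition holds by construction, and $\Coh$ and $\Inv$ are independent of the computation. Well-formedness restricts to subterms, because $\lb{M_0}\subseteq\lb{M}$. Lemma~\ref{lem:eff_to_del_beta_sim} then gives $D_0'$ with $\redDplus{\config{\mte{M_0}{\eta}}{\tau}}{D_0'}$ and $\simu{C_0'}{D_0'}$. In the error case we have $D_0'=\bot$, and the contextual rule for $\bot$ in $\arrD$ gives $\redDplus{D}{\bot}$. One subtlety: the last step may be an inner step to $\bot$ that has to be reassembled under $\mte{\context{C}}{\eta}$. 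This is fine, because every reduction rule of $\del$ is closed under evaluation contexts.

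In the non-error case, $D_0'=\config{\mte{M_0'}{\eta'}}{\tau'}$ for some $\eta'$ with $\Coh(\theta',\eta')$ and $\Inv(\theta',\tau',\eta')$. Closing under $\mte{\context{C}}{\eta}$ gives $\redDplus{D}{\config{\mte{\context{C}}{\eta}[\mte{M_0'}{\eta'}]}{\tau'}}$. It remains to show that this configuration is $\mte{\plug{C}{M_0'}}{\eta'}$, and this is the main obstacle. It needs $\mte{\context{C}}{\eta}\equiv\mte{\context{C}}{\eta'}$, that is, that $\eta'$ agrees with $\eta$ on every label occurring in $\context{C}$.

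I would get this agreement by inspecting Lemma~\ref{lem:eff_to_del_beta_sim}, which I would slightly strengthen to state that $\eta'$ extends $\eta$. Indeed, in the $(\mathrm{op})$ case $\eta'=\eta[l:=(m,n)]$ with $l$ fresh; in the $(\mathrm{throw})$ case $\eta'=\eta$; and in the $\mam$ and $(\mathrm{ret})$ cases $\eta$ is unchanged. Since $\WF_{\eff}(C)$ gives $\lb{\context{C}}\subseteq\dom{\theta}=\dom{\eta}$ and $l\notin\dom{\theta}$, the extension does not affect $\mte{\context{C}}{\eta}$. Lemma~\ref{lem:eff_to_del_plug} then yields the required form. $\WF_{\eff}(C')$ follows from Proposition~\ref{lem:efftodel:eff_well_formed}, which completes $\simu{C'}{D'}$.
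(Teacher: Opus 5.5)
Your lifting argument is exactly the paper's. You decompose, apply Lemma~\ref{lem:eff_to_del_beta_sim} to the redex, close under $\mte{\context{C}}{\eta}$, and use that $\eta'$ extends $\eta$ only at labels outside $\dom{\theta}\supseteq\lb{\context{C}}$. As a lifting step this is sound, and you are slightly more careful than the paper about well-formedness and the error case. The gap is in the base case: you assert that the $(\mathrm{ret})$ case of Lemma~\ref{lem:eff_to_del_beta_sim} goes through with $\eta$ unchanged, and it does not. Under Figure~\ref{fig:efftodel}, once the inner dollar fires and the application to $\thunk{H^{\mathrm{ops}}_{\eta}}$ is contracted, the target is $\dollar{\mte{M_{\mathrm{ret}}[V/x]}{\eta}}{z}{\return{z}}$. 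The return clause thus runs \emph{inside} the outer delimiter, which disappears only when its body returns or a $\mathbf{S_0}$ captures up to it. Since $\sim$ demands $N\equiv\mte{M}{\eta}$ on the nose, this configuration is not related to $\config{M_{\mathrm{ret}}[V/x]}{\theta}$, and the failure is not merely technical.

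Concretely, let $\syntacticset{O}=\{\op{op}\}$, $H\defeq\handler{x}{\opcall{op}{\unit}}{\opcall{op}{\app{p}{k}} \mapsto \return{p}}$ and $C\defeq\config{\handle{H}{\return{\unit}}}{\emptyset}$. Then $C\sim\config{\mt{\handle{H}{\return{\unit}}}}{\emptyset}$ via $\eta=\emptyset$, and $\redE{C}{\config{\opcall{op}{\unit}}{\emptyset}}$. The target's deterministic reduction passes through $\dollar{\app{(\abs{\underscore}{\mt{\opcall{op}{\unit}}})}{\thunk{H^{\mathrm{ops}}}}}{z}{\return{z}}$ and then $\dollar{\mt{\opcall{op}{\unit}}}{z}{\return{z}}$. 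Next, the outer $\mathbf{S_0}$ of $\mt{\opcall{op}{\unit}}$ captures up to this leftover delimiter and leaves the irreducible abstraction $\abs{h}{\cdots}$. No configuration on this path has the form $\config{\mte{\opcall{op}{\unit}}{\eta'}}{\tau'}$, so the proposition is false as stated. No change of relation can repair this. Take $H_1\defeq\handler{x}{\return{x}}{\opcall{op}{\app{p}{k}} \mapsto \return{p}}$: the program $\handle{H_1}{\handle{H}{\return{\unit}}}$ evaluates to $\unit$ in $\eff$, while its translation gets stuck with that abstraction as the body of $H_1$'s inner dollar. What is missing is a change to the translation itself, so that the return clause executes outside the outer delimiter. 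For example, take $H^{\mathrm{ret}}\defeq x.\abs{\underscore}{\return{\thunk{\mt{M_{\mathrm{ret}}}}}}$ and use $\dollar{\cdots}{z}{\force{z}}$ as the outer delimiter, adjusting (IC3) accordingly. Operation clauses are unaffected, since the second $\mathbf{S_0}$ has already removed that delimiter and a resumed $k'$ reinstalls it. With this change the $(\mathrm{ret})$ case reaches exactly $\mte{M_{\mathrm{ret}}[V/x]}{\eta}$, and your lifting argument then applies verbatim.
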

\begin{proof}
  Since $C$ is not in normal form, there exist $M$ and $\theta$ such that
  $C = \config{M}{\theta}$, and according to the semantics of $\eff$, $M$
  can be decomposed into an evaluation context $\context{C}$ and a redex $M'$:
  \(M \equiv \plug{C}{M'}\).
  By the definition of $\simu{C}{D}$, there exist $\eta$ and $\tau$
  such that $D = \config{\mte{M}{\eta}}{\tau}$ and both
  \(\Coh(\theta, \eta)\) and \(\Inv(\theta, \tau, \eta)\) hold.
  By Lemma~\ref{lem:eff_to_del_decomp}, $\mte{M}{\eta}$ also can be decomposed
  into $\mte{\context{C}}{\eta}$ and a redex $\mte{M'}{\eta}$, namely
  $\mte{M}{\eta} \equiv \mte{\context{C}}{\eta}\left[\mte{M'}{\eta}\right]$.

  If $\redbetaE{\config{M'}{\theta}}{\bot}$, then we have $\redE{C}{\bot}$.
  By Lemma~\ref{lem:eff_to_del_beta_sim}, $\config{\mte{M'}{\eta}}{\tau}$ also
  evaluates to $\bot$.
  Hence, we conclude that
  $\redDplus{\config{\mte{\context{C}}{\eta}\left[\mte{M'}{\eta}\right]}{\tau}}{\bot}$,
  that is, $\redDplus{D}{\bot}$.

  Next, suppose $\redbetaEp{M'}{\theta}{M''}{\theta'}$.
  By Lemma~\ref{lem:eff_to_del_beta_sim}, we obtain
  $\redDplus{\config{\mte{M'}{\eta}}{\tau}}{\config{\mte{M''}{\eta'}}{\tau'}}$
  and $\simu{\config{M''}{\theta'}}{\config{\mte{M''}{\eta'}}{\tau'}}$ for some
  $\theta'$, $\tau'$, and $\eta'$ with \(\Coh(\theta', \eta')\)
  and \(\Inv(\theta', \tau', \eta')\).
  Moreover, $\mte{\context{C}}{\eta} \equiv \mte{\context{C}}{\eta'}$ holds
  since \(\eta'\) is, by construction, an extension of \(\eta\), and the labels
  in $\dom{\eta'}\setminus\dom{\eta}$ are fresh and do not appear in
  $\context{C}$.
  Therefore, we conclude that
  $\simu{\config{\plug{C}{M''}}{\theta'}}{\config{\mte{\plug{C}{M''}}{\eta'}}{\tau'}}$
  by Lemma~\ref{lem:eff_to_del_plug}.
\end{proof}

To establish the strong macro-expressibility of $\eff$ in $\del$, we prove three
lemmas.

\begin{lemma}\label{lem:efftodel:diverge}
  Suppose that the evaluation of an \(\eff\) program $M$ diverges, i.e., the
  reduction sequence starting from $M$ is infinite.
  Then the evaluation of $\mt{M}$ also diverges.
\end{lemma}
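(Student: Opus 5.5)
The argument combines the initial simulation with Proposition~\ref{prop:EFFoneToDELone:sim}. By Lemma~\ref{lem:EFFoneToDELone:property}(1), we have \(\simu{\config{M}{\emptyset}}{\config{\mt{M}}{\emptyset}}\). Since the evaluation of \(M\) diverges, there is an infinite reduction sequence
\[
  \config{M}{\emptyset} = C_0 \arrE C_1 \arrE C_2 \arrE \cdots,
\]
and no \(C_i\) is \(\bot\), because \(\bot\) is irreducible.

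We build target configurations \(D_i\) by induction on \(i\), starting from \(D_0 \defeq \config{\mt{M}}{\emptyset}\). The invariant is \(\simu{C_i}{D_i}\) together with \(\redDplus{D_i}{D_{i+1}}\). At step \(i\), we have \(\simu{C_i}{D_i}\) and \(\redE{C_i}{C_{i+1}}\). Proposition~\ref{prop:EFFoneToDELone:sim} then supplies \(D_{i+1}\) with \(\redDplus{D_i}{D_{i+1}}\) and \(\simu{C_{i+1}}{D_{i+1}}\).

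Concatenating these nonempty segments gives an infinite \(\arrD\)-sequence starting from \(\config{\mt{M}}{\emptyset}\). By determinism of \(\arrD\), this sequence is the unique maximal reduction sequence of \(\mt{M}\). Hence the evaluation of \(\mt{M}\) diverges and never reaches a return configuration.

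The only point needing care is that each source step yields at least one target step, so the target sequence really is infinite rather than stalling. This is exactly why Proposition~\ref{prop:EFFoneToDELone:sim} is stated with \(\arrD^{+}\). Unlike Section~\ref{sec:DELoneToAC}, no canonicalization step is involved, since the relation here relates each source term directly to its runtime translation.
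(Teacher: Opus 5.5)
Your proof is correct and follows the paper's argument. Both start from the initial simulation of Lemma~\ref{lem:EFFoneToDELone:property}(1) and apply Proposition~\ref{prop:EFFoneToDELone:sim} repeatedly, using that each source step yields at least one \(\arrD\) step, to obtain an infinite target sequence; determinism of \(\arrD\) then gives divergence of \(\mt{M}\).
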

\begin{proof}
  By applying Proposition~\ref{prop:EFFoneToDELone:sim} repeatedly, we also obtain
  an infinite reduction sequence of $\mt{M}$.
  Since the reduction of $\del$ is deterministic, this implies that the
  evaluation of $\mt{M}$ also diverges.
\end{proof}

\begin{lemma}\label{lem:efftodel:stuck}
  Suppose that the evaluation of an \(\eff\) program $M$ gets stuck: there
  exists a term $M'$ and a store $\theta$ such that $\redEclos{\config{M}{\emptyset}}{\config{M'}{\theta}}$,
  $M' \neq \return{V}$ for any value $V$, and there is no rule that can reduce
  $\config{M'}{\theta}$.
  Then, the evaluation of $\mt{M}$ also gets stuck.
\end{lemma}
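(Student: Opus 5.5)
The plan is to transport the source reduction to the target and then show that the translation of a stuck configuration is itself stuck. First, iterate Proposition~\ref{prop:EFFoneToDELone:sim} along \(\redEclos{\config{M}{\emptyset}}{\config{M'}{\theta}}\), starting from \(\simu{\config{M}{\emptyset}}{\config{\mt{M}}{\emptyset}}\) (Lemma~\ref{lem:EFFoneToDELone:property}(1)). This gives \(\redDclos{\config{\mt{M}}{\emptyset}}{\config{\mte{M'}{\eta}}{\tau}}\) for some \(\eta\) and \(\tau\) with \(\WF_{\eff}\), \(\Coh(\theta,\eta)\) and \(\Inv(\theta,\tau,\eta)\). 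Since \(\arrD\) is deterministic, it then suffices to show that \(\config{\mte{M'}{\eta}}{\tau}\) is stuck. I will prove this directly, without administrative steps, because here the simulation relates a source term exactly to its runtime translation.

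Second, I will prove by induction on \(M'\) a head-shape statement analogous to Lemma~\ref{lem:stuck-shape}. The statement is: if \(\config{M'}{\theta}\) is irreducible and well-formed and the invariants hold, then \(\config{\mte{M'}{\eta}}{\tau}\) is irreducible. Moreover, \(\mte{M'}{\eta}\) is a return, an abstraction, a lazy pair, or a pure-context occurrence of \(\mathbf{S_0}\) only if \(M'\) is respectively a return, an abstraction, a lazy pair, or a pure-context occurrence of an operation call.

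The base cases use a value-shape inversion lemma like Lemma~\ref{lem:shape}, which holds because labels map to labels and every other value constructor is homomorphic. A stuck case, force, or application on a wrong value therefore stays stuck. A stuck \(\throw{V}{W}\) must have \(V\) not a label, since by well-formedness and (C1) the \((\mathrm{throw})\) or \((\mathrm{fail})\) rule would otherwise fire. Its translation \(\throw{\mte{V}{\eta}}{\mte{W}{\eta}}\) is then stuck too, because \(\mte{V}{\eta}\) is not a label. An operation call cannot be stuck at the top level with \(\mte{}{}\) reducing, since its translation begins with \(\mathbf{S_0}\) and has no enclosing dollar. The let, application and projection frames follow by the induction hypothesis together with the head-shape clauses, exactly as in Lemma~\ref{lem:stuck-shape}.

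Third, consider \(\handle{H}{M_1}\), which I expect to be the main obstacle. Irreducibility means three things: \(M_1\) is irreducible, \(M_1\) is not a return, and \(M_1\) is not a pure-context operation call. The translation is \(\dollar{\app{\dollart{\mte{M_1}{\eta}}{H^{\mathrm{ret}}}}{\thunk{H^{\mathrm{ops}}}}}{z}{\return{z}}\). The inner dollar can fire only if \(\mte{M_1}{\eta}\) is a return or contains a pure-context \(\mathbf{S_0}\), and the head-shape clauses exclude both. The outer dollar's body is an application frame around the inner dollar, so neither \((\mathrm{ret})\) nor \((\mathrm{shift})\) applies there, since the inner dollar term is not a pure context. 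Finally, the translated handle contains no pure-context \(\mathbf{S_0}\) visible from outside, which preserves the head-shape invariant for enclosing frames. The delicate point is the \(\mathbf{S_0}\) clause: an \(\mathbf{S_0}\) can occur in \(\mte{M_1}{\eta}\) only from translated operation calls, so the pure-context occurrence must come from a pure-context operation call in \(M_1\). The \(\mathbf{S_0}k'\) nested inside \(\lambda h\) is not in evaluation position. With the head-shape result established, \(\mte{M'}{\eta}\) is irreducible and not a return, so \(\mt{M}\) gets stuck.
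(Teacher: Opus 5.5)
Your proposal is correct and follows essentially the paper's proof. You use the simulation, which here relates a source configuration exactly to its runtime translation, to reduce to showing that the translation of a stuck configuration satisfying \(\WF_{\eff}\), \(\Coh\) and \(\Inv\) is stuck; you then argue by induction on the source term, with the handle case resting on the translated body being neither a return nor a pure-context occurrence of \(\mathbf{S_0}\). The only difference is organizational: you fold these head-shape facts into the induction hypothesis, whereas the paper obtains them by a separate induction on the handled computation.
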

\begin{proof}
  It suffices to show the following statement:
  \begin{quote}
    Let $M$, $N$, $\theta$, $\tau$, and $\eta$ be such that
    $N \equiv \mte{M}{\eta}$, \(\WF_{\eff}(\config{M}{\theta})\),
    \(\Coh(\theta, \eta)\), and \(\Inv(\theta, \tau, \eta)\) hold.
    If $\config{M}{\theta}$ is stuck, then $\config{N}{\tau}$ is also stuck.
  \end{quote}
  We prove it by induction on the structure of $M$, giving only five cases for
  brevity.
  The other cases follow by similar reasoning.

  \noindent \textbf{Case} (product matching):
  \begin{caseindent}
    Suppose \(M \equiv \pcase{V}{x_1}{x_2}{M'}\).
    Since $\config{M}{\theta}$ is stuck, we know that
    $V \neq \vpair{V_1}{V_2}$ for any values $V_1$ and $V_2$.
    Consequently, $\mte{V}{\eta}$ also cannot be of the form $\vpair{W_1}{W_2}$
    for any values $W_1$ and $W_2$.
    Therefore, the configuration
    $\config{\mte{\pcase{V}{x_1}{x_2}{M'}}{\eta}}{\tau}$ is stuck, and
    this concludes the current case.
  \end{caseindent}

  \noindent \textbf{Case} (variant matching):
  \begin{caseindent}
    Suppose \(M \equiv \scase{V}{L_{\mathnormal{i}}}{x_i}{M_i}\).
    Since $\config{M}{\theta}$ is stuck, we know that $V \neq \inj{L}{V'}$
    for any variable label \(L\) of \(\eff\) and \(\eff\) value $V'$.
    Then, by definition, $\mte{V}{\eta}$ is also not a variant value.
    Thus the configuration \(\config{\mte{M}{\eta}}{\tau}\) is stuck, which
    concludes the case.
  \end{caseindent}

  \noindent \textbf{Case} (throw):
  \begin{caseindent}
    Suppose \(M \equiv \throw{V}{W}\).
    It follows that $V$ is not a continuation label, since if $V = l$ for some
    $l$, then well-formedness implies
    $l \in \dom{\theta}$, which contradicts the assumption that
    $\config{\throw{V}{W}}{\theta}$ is stuck.
    Therefore, \(\mte{V}{\eta}\) is also not a continuation label, so
    \(\config{\mte{M}{\eta}}{\tau}\) is also stuck.
  \end{caseindent}

  \noindent \textbf{Case} (sequencing):
  \begin{caseindent}
    Suppose \(M \equiv \seq{x}{M_1}{M_2}\).
    Since $\config{M}{\theta}$ is stuck, $M_1$ cannot be of the form
    $\return{V}$ for any value $V$.
    By the induction hypothesis, we obtain that $\mte{M_1}{\eta}$ is stuck and that
    $\mte{M_1}{\eta} \neq \return{W}$ for any value $W$.
    Consequently, the configuration
    $\config{\seq{x}{\mte{M_1}{\eta}}{\mte{M_2}{\eta}}}{\tau}$ is also stuck, which
    completes the current case.
  \end{caseindent}

  \noindent \textbf{Case} (handle):
  \begin{caseindent}
    Suppose \(M \equiv \handle{H}{M'}\).
    By the definition of the translation, we obtain
    \[
      N \equiv \dollar{\app{\dollart{\mte{M'}{\eta}}{H^{\mathrm{ret}}_{\eta}}}{\thunk{H^{\mathrm{ops}}_{\eta}}}}{z}{\return{z}}.
    \]
    Since $\config{M}{\theta}$ is stuck, $M'$ can be neither of the form
    $\plug{H}{\app{\op{op}}{V}}$ nor of the form $\return{V}$ for any pure
    context $\context{H}$, operation symbol $\op{op}$, and value $V$.
    Then, it follows by induction on $M'$ that $\mte{M'}{\eta}$ can be
    neither of the form $\plug{P}{\shift{k}{N'}}$ nor of the form $\return{W}$,
    for any pure context $\context{P}$, computation $N'$, and value $W$.
    Therefore, the configuration $\config{N}{\tau}$ is also stuck, and this
    completes the proof.
  \end{caseindent}
\end{proof}

\begin{lemma}\label{lem:efftodel:bot}
  If the evaluation of an \(\eff\) program $M$ reaches the error state $\bot$,
  i.e., $\redEplus{\config{M}{\emptyset}}{\bot}$, then the evaluation of
  $\mt{M}$ also reaches $\bot$.
\end{lemma}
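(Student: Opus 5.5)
This follows the pattern of Lemma~\ref{lem:efftodel:diverge}: iterate the simulation along the source reduction sequence. Suppose
\[
\config{M}{\emptyset} = C_0 \arrE C_1 \arrE \cdots \arrE C_n = \bot
\]
with \(n \geq 1\). By Lemma~\ref{lem:EFFoneToDELone:property}(1) we have \(\simu{C_0}{\config{\mt{M}}{\emptyset}}\). We then apply Proposition~\ref{prop:EFFoneToDELone:sim} step by step. For each \(i < n\), this yields a \(\del\) configuration \(D_{i+1}\) with \(\redDplus{D_i}{D_{i+1}}\) and \(\simu{C_{i+1}}{D_{i+1}}\), starting from \(D_0 = \config{\mt{M}}{\emptyset}\).

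At the last step, \(\simu{\bot}{D_n}\) must hold. By Definition~\ref{app:def:efftodel:simulation}, the only clause relating \(\bot\) is the one with \(C = D = \bot\), because every other clause requires \(C = \config{M}{\theta}\). Hence \(D_n = \bot\). Composing the target reductions gives \(\redDplus{\config{\mt{M}}{\emptyset}}{\bot}\).

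It is worth checking that Proposition~\ref{prop:EFFoneToDELone:sim} really covers the final step into \(\bot\). Its proof treats the case \(\redbetaE{\config{M'}{\theta}}{\bot}\) explicitly and concludes \(\redDplus{D}{\bot}\). This relies on the \((\mathrm{fail})\) case of Lemma~\ref{lem:eff_to_del_beta_sim}, where (IC2) gives \(\tau(\eta(l)_1) = \nil\). It also uses the error-propagation rule of \(\arrD\), which lifts a \(\bot\)-producing beta step through the evaluation context \(\mte{\context{C}}{\eta}\) supplied by Lemma~\ref{lem:eff_to_del_decomp}. So the final step yields \(\bot\) directly, with no intermediate configuration.

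The argument is essentially bookkeeping, and I expect no real obstacle. The only point needing care is that the intermediate \(\del\) configurations \(D_i\) are proper configurations rather than \(\bot\), so that the simulation can be reapplied at the next step. This holds because each \(C_i\) with \(i < n\) is not \(\bot\), and therefore, by the definition of \(\sim\), neither is \(D_i\).
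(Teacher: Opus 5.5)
Your proof is correct and is essentially the paper's argument. The paper simply says the lemma follows directly from Proposition~\ref{prop:EFFoneToDELone:sim}. You have spelled out that iteration: you start from the initial relation of Lemma~\ref{lem:EFFoneToDELone:property}(1), and you note that \(\bot\) is related only to \(\bot\).
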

\begin{proof}
  This lemma directly follows from Proposition~\ref{prop:EFFoneToDELone:sim}.
\end{proof}

\begin{proposition}[preservation of semantics]\label{cor:eff_to_del_correctness}
  $\eval{\eff}{M}$ is defined if and only if $\eval{\del}{\mt{M}}$ is defined.
\end{proposition}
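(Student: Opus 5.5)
We prove the two directions separately, mirroring the proof of Theorem~\ref{app:thm:prev-sem}, but now using the simulation from $\eff$ to $\del$ (Proposition~\ref{prop:EFFoneToDELone:sim}) together with the three lemmas just established.

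For the ``only if'' direction, suppose $\redEclos{\config{M}{\emptyset}}{\config{\return{V}}{\theta}}$. We first use Lemma~\ref{lem:EFFoneToDELone:property}(1) to obtain the initial relation $\simu{\config{M}{\emptyset}}{\config{\mt{M}}{\emptyset}}$; this holds with the empty label map, since $\config{M}{\emptyset}$ contains no labels and so $\WF_{\eff}$, $\Coh$, and $\Inv$ are trivial. We then apply Proposition~\ref{prop:EFFoneToDELone:sim} once per source step, by induction on the length of the source reduction. This yields a $\del$ configuration $D$ with $\redDclos{\config{\mt{M}}{\emptyset}}{D}$ and $\simu{\config{\return{V}}{\theta}}{D}$. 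Since the source side is not $\bot$, $D$ is a pair $\config{N}{\tau}$. By Lemma~\ref{lem:EFFoneToDELone:property}(2), $N \equiv \return{\mte{V}{\eta}}$ for the witnessing $\eta$, so $\eval{\del}{\mt{M}}$ is defined. Note that, unlike Section~\ref{sec:DELoneToAC}, no canonicalization step is needed, because this simulation relates a source term to its runtime translation directly.

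For the ``if'' direction we argue by contraposition. Assume $\eval{\eff}{M}$ is undefined. Since $\arrE$ is deterministic, exactly one of the following holds: the evaluation diverges, reaches $\bot$, or reaches a stuck configuration. In these three cases we invoke Lemmas~\ref{lem:efftodel:diverge}, \ref{lem:efftodel:bot}, and \ref{lem:efftodel:stuck}, respectively. The evaluation of $\mt{M}$ then diverges, reaches $\bot$, or reaches a stuck configuration. Because $\arrD$ is deterministic and none of these outcomes is a return configuration, $\mt{M}$ cannot also reach some $\config{\return{W}}{\tau}$. Hence $\eval{\del}{\mt{M}}$ is undefined.

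The only delicate point is the stuck case. Lemma~\ref{lem:efftodel:stuck} is phrased for a target configuration that is literally the runtime translation of the stuck source configuration. We must therefore first transport the source reduction to the stuck state $\config{M'}{\theta}$ through the simulation. This gives some $\config{N}{\tau}$ with $N \equiv \mte{M'}{\eta}$ and the side conditions $\WF_{\eff}$, $\Coh$, and $\Inv$, reached from $\config{\mt{M}}{\emptyset}$ by $\arrD^*$. Only then do we apply the inner statement proved in that lemma. Everything else is bookkeeping already supplied by the earlier results.
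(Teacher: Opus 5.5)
Your proof is correct and follows the paper's argument: for the ``only if'' direction you take the initial relation from Lemma~\ref{lem:EFFoneToDELone:property}(1), iterate Proposition~\ref{prop:EFFoneToDELone:sim}, and read off the return shape with Lemma~\ref{lem:EFFoneToDELone:property}(2); for the ``if'' direction you argue by contraposition through Lemmas~\ref{lem:efftodel:diverge}, \ref{lem:efftodel:stuck}, and \ref{lem:efftodel:bot}. Your extra remark on the stuck case spells out the transport along the simulation that the paper leaves implicit in the ``it suffices to show'' step inside the proof of Lemma~\ref{lem:efftodel:stuck}; that lemma's statement is already about programs, so you could also cite it directly.
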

\begin{proof}
  We first show the ``only if'' direction.
  Suppose that $\redEclos{\config{M}{\emptyset}}{\config{\return{V}}{\theta}}$
  for some $\theta$.
  Lemma~\ref{lem:EFFoneToDELone:property} yields that
  $\simu{\config{M}{\emptyset}}{\config{\mt{M}}{\emptyset}}$.
  By applying Proposition~\ref{prop:EFFoneToDELone:sim} iteratively, there
  exist \(\tau\) and $\eta$ such that
  $\redDclos{\config{\mt{M}}{\emptyset}}{\config{\return{\mte{V}{\eta}}}{\tau}}$.
  This implies that $\eval{\del}{\mt{M}}$ is defined.
  We then show the ``if'' direction.
  We prove the contrapositive of the proposition: if $\eval{\eff}{M}$ is
  undefined, $\eval{\del}{\mt{M}}$ is also undefined.
  There are three cases where $\eval{\eff}{M}$ is undefined: the evaluation of
  \(M\) (1) diverges, (2) gets stuck, or (3) reaches $\bot$.
  In each case, Lemmas~\ref{lem:efftodel:diverge}, \ref{lem:efftodel:stuck}, and
  \ref{lem:efftodel:bot} imply that the evaluation of $\mt{M}$ diverges, gets
  stuck, or reaches $\bot$, respectively.
  Thus, $\eval{\del}{\mt{M}}$ is undefined, which completes the proof.
\end{proof}

\subsection{From \texorpdfstring{$\del$}{DELone} to \texorpdfstring{$\eff$}{EFFone}}\label{app:DELoneToEFFone}
\label{app:subsec:deltoeff_macro_translation}

As in Appendix~\ref{app:EFFoneToDELone}, we use label maps: here a label map
\(\eta\) is a partial function that sends a \(\del\) continuation label \(l\) to
an \(\eff\) continuation label \(\eta(l)\).
We parameterize and extend the translation with \(\eta\) by
\[
  \mte{l}{\eta} \defeq \eta(l),
\]
provided \(\eta(l)\) is defined.
As before, we call this extension a \emph{runtime translation} and write it as
\(\mtempty_{\eta}\); we extend runtime translations to translations on contexts
by mapping a hole to a hole.

Using label maps and the runtime translation, we define coherence conditions and
invariant conditions.

\begin{definition}[coherence conditions]\label{app:def:deltoeff:Coh}
  For a \(\del\) store \(\theta\) and a label map \(\eta\), we define
  \(\Coh(\theta, \eta)\) to hold if and only if the following conditions are
  satisfied.
  \begin{enumerate}
  \item[(C1)] \(\dom{\theta} = \dom{\eta}\);
  \item[(C2)] \(\eta\) is injective.
  \end{enumerate}
\end{definition}

\begin{definition}[invariant conditions]\label{app:def:deltoeff:IC}
  For a \(\del\) store \(\theta\), an \(\eff\) store \(\tau\), and a label map
  \(\eta\), we define \(\Inv(\theta, \tau, \eta)\) to hold if and only if the
  following conditions are satisfied for every \(l \in \dom{\theta}\).
  \begin{enumerate}
  \item[(IC1)] \(\eta(l) \in \dom{\tau}\);
  \item[(IC2)] if \(\theta(l) = \nil\), then \(\tau(\eta(l)) = \nil\);
  \item[(IC3)] if \(\theta(l) = \abs{y}{\dollar{\plug{H}{\return{y}}}{x}{M}}\)
    for a \(\del\) pure context \(\context{H}\) and a \(\del\) computation
    \(M\), then
    \[
      \tau(\eta(l)) = \abs{y}{\handle{\handler{x}{\mte{M}{\eta}}{\opcall{shift0}{\app{p}{k}} \mapsto \app{\force{p}}{k}}}{\mte{\plug{H}{\return{y}}}{\eta}}}.
    \]
  \end{enumerate}
\end{definition}

\begin{definition}[simulation relation]\label{app:def:deltoeff:simulation}
  For a \(\del\) configuration \(C\) and an \(\eff\) configuration \(D\), we
  write \(\simu{C}{D}\) if and only if either \(C = D = \bot\), or there exist
  a \(\del\) computation \(M\), an \(\eff\) computation \(N\), a \(\del\)
  store \(\theta\), an \(\eff\) store \(\tau\), and a label map \(\eta\) such
  that
  \begin{enumerate}
  \item \(C = \config{M}{\theta}\) and \(D = \config{N}{\tau}\);
  \item \(N \equiv \mte{M}{\eta}\);
  \item \(\WF_{\del}(C)\);\footnote{See
      Appendix~\ref{app:sec:well-formedness} for the definition.}
  \item \(\Coh(\theta, \eta)\) and \(\Inv(\theta, \tau, \eta)\).
  \end{enumerate}
\end{definition}

\begin{lemma}\label{lem:deltoeff:property}
  \phantom{}
  \begin{enumerate}
  \item
    For any $\del$ computation $M$, $\simu{\config{M}{\emptyset}}{\config{\mt{M}}{\emptyset}}$ holds.

  \item
    For any $\del$ value $V$, $\simu{\config{\return{V}}{\theta}}{\config{N}{\tau}}$ implies that $N \equiv \return{\mte{V}{\eta}}$ for some $\eta$.
  \end{enumerate}
\end{lemma}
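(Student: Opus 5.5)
The lemma has two parts, and I would prove each separately, following the pattern of Lemma~\ref{lem:EFFoneToDELone:property}.

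For part (1), I take the empty label map \(\eta = \emptyset\). A \(\del\) program contains no continuation labels, so \(\mte{M}{\emptyset}\) is defined and coincides with \(\mt{M}\). This is a straightforward induction on \(M\): the runtime translation agrees with the static one on label-free terms, since the only extra clause concerns labels. The remaining conditions of Definition~\ref{app:def:deltoeff:simulation} then follow directly.
\begin{itemize}
\item \(\WF_{\del}(\config{M}{\emptyset})\) holds because \(\lb{M} = \emptyset\) and \(\lb{\emptyset} = \emptyset\).
\item \(\Coh(\emptyset,\emptyset)\) holds because both domains are empty and the empty map is trivially injective.
\item \(\Inv(\emptyset,\emptyset,\emptyset)\) holds vacuously, since its clauses quantify over \(l \in \dom{\emptyset}\).
\end{itemize}
Strictly, the statement says ``any \(\del\) computation'', but \(\mt{M}\) is only defined on label-free terms. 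I will read \(M\) as a program, which is how the lemma is used.

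For part (2), I unfold \(\simu{\config{\return{V}}{\theta}}{\config{N}{\tau}}\). The configuration \(\config{\return{V}}{\theta}\) is not \(\bot\), so the first alternative of the definition is excluded. Hence there is a witnessing \(\eta\) with \(N \equiv \mte{\return{V}}{\eta}\). Since the translation is homomorphic on the \(\mam\) constructor \(\mathbf{return}\), this gives \(N \equiv \return{\mte{V}{\eta}}\), which is the claim.

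Neither part has a real obstacle. The only point needing care is the auxiliary fact that \(\mte{M}{\emptyset} \equiv \mt{M}\) for label-free \(M\). I would check it by induction on \(M\), noting that each of the three \(\del\)-specific clauses of the runtime translation has the same shape as the static translation of Theorem~\ref{thm:deltoeff}, with \(\eta\) threaded through.
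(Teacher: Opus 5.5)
Your proof is correct and follows the paper's very terse argument (``by case analysis''): for (1) you take the empty label map, and for (2) you unfold the definition of $\sim$ and use that the runtime translation is homomorphic on $\mathbf{return}$. Your restriction of (1) to label-free $M$ is necessary, not just a convenience: for an $M$ containing labels, $\mt{M}$ is undefined and $\WF_{\del}(\config{M}{\emptyset})$ fails, so the statement must be read for programs, as you do.
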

\begin{proof}
  By case analysis on $M$.
\end{proof}

\begin{lemma}\label{lem:deltoeff:subst}
  If a $\del$ computation $M$ has the free variables
  $x_1, \ldots, x_n$, then we have
  $\mte{M[V_1/x_1,\ldots, V_n/x_n]}{\eta} \equiv \mte{M}{\eta}[\mte{V_1}{\eta}/x_1,
  \ldots, \mte{V_n}{\eta}/x_n]$ for any values $V_1, \ldots, V_n$ and $\eta$.
\end{lemma}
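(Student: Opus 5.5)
We prove the statement by structural induction on the \(\del\) computation \(M\), simultaneously with the analogous statement for \(\del\) values \(V\) (since computations contain values and thunks contain computations). The translation of Theorem~\ref{thm:deltoeff}, extended by \(\mte{l}{\eta} \defeq \eta(l)\), is compositional: every constructor is sent either to the same constructor applied to the translated subterms, or to a fixed syntactic abstraction into which the translated subterms are plugged. Substitution commutes with such a mapping. The plan is the same routine induction as in Lemma~\ref{lem:eff_to_del_subst}; throughout we write \(\sigma\) for \([V_1/x_1,\ldots,V_n/x_n]\) and \(\mte{\sigma}{\eta}\) for \([\mte{V_1}{\eta}/x_1,\ldots,\mte{V_n}{\eta}/x_n]\).

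\textbf{Base cases.} For a variable \(x_i\), both sides equal \(\mte{V_i}{\eta}\). For any other variable, the unit value, or a continuation label \(l\), substitution does nothing, and \(\mte{l}{\eta}=\eta(l)\) is a closed \(\eff\) label.

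\textbf{\(\mam\) constructors.} Each is translated homomorphically, so the claim follows from the induction hypotheses, exactly as in the displayed application case of Lemma~\ref{lem:eff_to_del_subst}. For binders (\(\mathbf{let}\), \(\lambda\), the case constructs), we first rename the bound variable by \(\alpha\)-conversion so that it is distinct from the \(x_i\) and does not occur free in any \(V_i\). This freshness transfers to \(\mte{V_i}{\eta}\), because the translation does not introduce free variables.

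\textbf{\(\del\)-specific cases.} For \(\throw{V}{W}\), the case is immediate. For \(\shift{k}{M}\), we have \(\mte{\shift{k}{M}}{\eta} = \opcall{shift0}{\thunk{\abs{k}{\mte{M}{\eta}}}}\); after choosing \(k\) fresh, the substitution passes under \(\lambda k\), and the induction hypothesis on \(M\) closes the case. For \(\dollar{M}{x}{N}\), the target is \(\handle{H_N}{\mte{M}{\eta}}\), where \(H_N\) has return clause \(x\mapsto\mte{N}{\eta}\) and operation clause \(\opcall{shift0}{\app{p}{k}}\mapsto\app{\force{p}}{k}\). This operation clause is closed, so substitution leaves it unchanged. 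The remaining subterms \(\mte{M}{\eta}\) and \(\mte{N}{\eta}\), with \(x\) chosen fresh, are handled by the induction hypotheses.

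The only step needing care, and the main potential obstacle, is hygiene: we must make sure the variables that the syntactic abstractions introduce (here \(p\) and \(k\) in the handler) cannot capture free variables of the \(\mte{V_i}{\eta}\). They cannot, because those variables are bound only inside the closed clause and the holes are not in their scope. Together with the paper's standing freshness convention, this closes all cases.
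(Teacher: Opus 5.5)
Your proposal is correct and takes the same route as the paper, whose proof is a straightforward structural induction on \(M\). The simultaneous induction over values and the hygiene argument for the binders introduced by the translations of \(\mathbf{S_0}\) and dollar only spell out checks the paper leaves implicit. In particular, the handler's operation clause is closed, as are any forwarding clauses added so that the handler covers all operations, so substitution leaves it unchanged.
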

\begin{proof}
  By straightforward induction on $M$.
\end{proof}

\begin{lemma}\label{lem:deltoeff:beta_sim}
  If $\simu{C}{D}$ and $\redbetaD{C}{C'}$, then there exists an $\eff$
  configuration $D'$ such that $\redEplus{D}{D'}$ and $\simu{C'}{D'}$.
\end{lemma}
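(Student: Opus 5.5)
We proceed by case analysis on the beta-reduction rule used in \(\redbetaD{C}{C'}\), mirroring the proof of Lemma~\ref{lem:eff_to_del_beta_sim}. Since \(\simu{C}{D}\) with \(C \neq \bot\), fix \(\eta\) with \(D = \config{\mte{M}{\eta}}{\tau}\), \(\WF_{\del}(C)\), \(\Coh(\theta,\eta)\), and \(\Inv(\theta,\tau,\eta)\). The \(\mam\) cases are immediate: the target performs the same \(\mam\) step, and Lemma~\ref{lem:deltoeff:subst} identifies the reduct with \(\mte{M'}{\eta}\). The store and label map are unchanged. The \((\mathrm{ret})\) case is handled the same way. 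Here \(\mte{\dollar{\return{V}}{x}{N}}{\eta}\) is a handle term whose body is \(\return{\mte{V}{\eta}}\), so the \(\eff\) \((\mathrm{ret})\) rule gives \(\mte{N}{\eta}[\mte{V}{\eta}/x] \equiv \mte{N[V/x]}{\eta}\).

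In the \((\mathrm{shift})\) case, we have \(C = \config{\dollar{\plug{H}{\shift{k}{M_1}}}{x}{N}}{\theta}\). First we check, by induction on \(\context{H}\), that \(\mte{\context{H}}{\eta}\) is an \(\eff\) pure context, as in Lemma~\ref{lem:eff_to_del_plug}. The translated redex is then a handler around \(\plug{\mte{H}{\eta}}{\opcall{shift0}{\thunk{\abs{k}{\mte{M_1}{\eta}}}}}\), so the \((\mathrm{op})\) rule fires with a fresh \(\eff\) label \(n\). This gives \(\app{\force{\thunk{\abs{k}{\mte{M_1}{\eta}}}}}{n}\), which reduces in two further \(\mam\) steps to \(\mte{M_1}{\eta}[n/k]\). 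We take \(\eta' \defeq \eta[l := n]\), where \(l\) is the fresh source label. Lemma~\ref{lem:deltoeff:subst} together with the freshness of \(l\) gives \(\mte{M_1}{\eta}[n/k] \equiv \mte{M_1[l/k]}{\eta'}\). The new \(\eff\) store entry is exactly the form required by (IC3), because \(\mte{H}{\eta} \equiv \mte{H}{\eta'}\) and \(\mte{N}{\eta} \equiv \mte{N}{\eta'}\). (C1) and (C2) follow from the freshness of \(l\) and \(n\), and the old invariants are untouched.

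In the \((\mathrm{throw})\) case, \(\theta(l)\) is a valid continuation, so (IC3) gives \(\tau(\eta(l))\), and the \(\eff\) \((\mathrm{throw})\) rule produces the handle term with \(\return{\mte{V}{\eta}}\) plugged in. This is \(\mte{\dollar{\plug{H}{\return{V}}}{x}{M}}{\eta}\) by Lemma~\ref{lem:deltoeff:subst}. Both stores set the respective label to \(\nil\), so (IC2) holds for \(l\), and the other entries are unchanged. In the \((\mathrm{fail})\) case, (IC2) gives \(\tau(\eta(l)) = \nil\), so the \(\eff\) \((\mathrm{fail})\) rule yields \(\bot\).

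I expect the main obstacle to be the \((\mathrm{shift})\) case. It requires checking that the translation of a \(\del\) pure context is an \(\eff\) pure context, so that the \((\mathrm{op})\) rule applies. It also requires checking that the captured \(\eff\) continuation syntactically matches the shape that (IC3) demands, including the administrative steps \((U)\) and \((\to)\) that the target needs before reaching \(\mte{M_1[l/k]}{\eta'}\). In every case the target takes at least one step, so \(\redEplus{D}{D'}\) holds.
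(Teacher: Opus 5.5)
Your proposal is correct and follows the paper's proof. It does a case analysis on the \(\del\) beta rule. The \((\mathrm{shift})\) case is one \((\mathrm{op})\) step, followed by the administrative \((U)\) and \((\to)\) steps, with the label map extended to \(\eta[l := n]\). The \((\mathrm{throw})\) case goes through (IC3) and the substitution lemma.

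One point to make explicit: in the \((\mathrm{throw})\) case, your claim that ``the other entries are unchanged'' depends on the injectivity (C2) of \(\eta\). That is exactly where the paper cites (C2).
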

\begin{proof}
  We prove this by case analysis on $\redbetaD{C}{C'}$, giving only nontrivial
  cases.
  The other cases follow by routine arguments.

  \noindent \textbf{Case} $(\mathrm{shift})$:
  \begin{caseindent}
    Suppose that
    $C = \config{\dollar{\plug{H}{\shift{k}{M}}}{x}{N}}{\theta}$,
    $l$ is a fresh label in this configuration, and
    $C' = \config{M[l/k]}{\theta'}$, where
    $\theta' \defeq \theta[l :=
    \abs{y}{\dollar{\plug{H}{\return{y}}}{x}{N}}]$.
    By the definition of $\simu{C}{D}$, there exist \(\tau\) and \(\eta\) such
    that
    \[
      D = \config {\handle
        {H}
        {\plug{\mte{\context{H}}{\eta}}{\opcall{shift0}{\thunk{\abs{k}{\mte{M}{\eta}}}}}}}
      {\tau},
    \]
    where
    $H \equiv \handler{x}{\mte{N}{\eta}}{\opcall{shift0}{\app{p}{k}}
      \mapsto \app{\force{p}}{k}}$.

    $D$ evaluates to the following configuration:
    \[
      D' \defeq \config{\mte{M}{\eta}[m_H/k]}{\tau'},
    \]
    where $\tau' \defeq \tau[m_H := \abs{y}{\handle{H}{\plug{\mte{\context{H}}{\eta}}{\return{y}}}}]$ and $m_H$ is a fresh label.
    Let $\eta'$ be $\eta[l := m_H]$.
    By Lemma~\ref{lem:deltoeff:subst}, we obtain
    \[
      \mte{M}{\eta}[m_H/k] \equiv \mte{M[l/k]}{\eta'}.
    \]
    Moreover, since $m_H$ is taken as a fresh label in $D$, we know that
    \begin{align*}
      & \abs{y}{\handle{H}{\plug{\mte{\context{H}}{\eta}}{\return{y}}}} \\
      &\equiv \abs{y}{\handle{H}{\plug{\mte{\context{H}}{\eta'}}{\return{y}}}} \\
      &\equiv \abs{y}{\handle{H}{\mte{\plug{H}{\return{y}}}{\eta'}}},
    \end{align*}
    and this implies \(\Inv(\theta', \tau', \eta')\).
    The freshness of \(l\) and \(m_H\) gives \(\Coh(\theta', \eta')\).

    Therefore, we conclude that
    \[
      \simu{\config{M[l/k]}{\theta'}}{\config{\mte{M}{\eta}[m_H/k]}{\tau'}},
    \]
    which completes the current case.
  \end{caseindent}

  \noindent \textbf{Case} $(\mathrm{throw})$:
  \begin{caseindent}
    Suppose that $C = \config{\throw{l}{V}}{\theta}$,
    $\theta(l) = \abs{y}{\dollar{\plug{H}{\return{y}}}{x}{N}}$, and
    \[
      C' = \config{\dollar{\plug{H}{\return{V}}}{x}{N}}{\theta[l := \nil]}.
    \]
    By the definition of $\simu{C}{D}$, there exist \(\tau\) and \(\eta\) such
    that
    \[
      D = \config{\throw{\eta(l)}{\mte{V}{\eta}}}{\tau}
    \]
    and, by (IC3),
    \[
      \tau(\eta(l)) = \abs{y}{\handle{\handler{x}{\mte{N}{\eta}}{\opcall{shift0}{\app{p}{k}} \mapsto \app{\force{p}}{k}}}{\mte{\plug{H}{\return{y}}}{\eta}}}.
    \]

    $D$ evaluates to
    \begin{align*}
      D' &\defeq \config{\handle{\handler{x}{\mte{N}{\eta}}{\opcall{shift0}{\app{p}{k}} \mapsto \app{\force{p}}{k}}}{\plug{\mte{\context{H}}{\eta}}{\return{\mte{V}{\eta}}}}}{\tau[\eta(l) := \nil]} \\
         &= \config{\handle{\handler{x}{\mte{N}{\eta}}{\opcall{shift0}{\app{p}{k}} \mapsto \app{\force{p}}{k}}}{\mte{\plug{\context{H}}{\return{V}}}{\eta}}}{\tau[\eta(l) := \nil]}.
    \end{align*}
    Note that \(\Coh(\theta[l := \nil], \eta)\) holds since
    \(\dom{\theta[l := \nil]} = \dom{\theta}\), and that
    \(\Inv(\theta[l := \nil], \tau[\eta(l) := \nil], \eta)\) also holds by (C2).
    Therefore, we obtain $\simu{C'}{D'}$, and this completes the current case.
  \end{caseindent}
\end{proof}

\begin{lemma}\label{lem:deltoeff:plug}
  For every evaluation context \(\context{K}\) and computation $M$, we have
  $\mte{\plug{K}{M}}{\eta} \equiv \mte{\context{K}}{\eta}[\mte{M}{\eta}]$.
\end{lemma}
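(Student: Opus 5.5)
We prove the statement by structural induction on the evaluation context \(\context{K}\), following the grammar of frames for \(\del\) in Figure~\ref{fig:del_syntax}. The runtime translation \(\mtempty_{\eta}\) on contexts is defined frame by frame, sending a hole to a hole. Because of this, the equation should reduce at each step to the definition of the translation together with the induction hypothesis.

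The base case \(\context{K} = \hole\) is immediate. Both sides are \(\mte{M}{\eta}\), since \(\mte{\hole}{\eta} = \hole\).

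For the inductive step, write \(\context{K} = \context{F}[\context{K}_0]\) for a computational frame \(\context{F}\), and assume \(\mte{\plug{K_0}{M}}{\eta} \equiv \mte{\context{K}_0}{\eta}[\mte{M}{\eta}]\). We split on the shape of \(\context{F}\).
\begin{itemize}
\item \emph{Pure frames.} These are \(\seq{x}{\hole}{N}\), \(\app{\hole}{V}\) and \(\prj{i}{\hole}\). The translation is homomorphic on \(\mam\) constructors, so for example
\(\mte{\seq{x}{\plug{K_0}{M}}{N}}{\eta} \equiv \seq{x}{\mte{\plug{K_0}{M}}{\eta}}{\mte{N}{\eta}}\).
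Applying the induction hypothesis turns this into \((\seq{x}{\mte{\context{K}_0}{\eta}}{\mte{N}{\eta}})[\mte{M}{\eta}]\), which is by definition \(\mte{\context{K}}{\eta}[\mte{M}{\eta}]\).
\item \emph{Dollar frame} \(\dollar{\hole}{x}{N}\). The translation sends \(\dollar{M'}{x}{N}\) to \(\handle{H_N}{\mte{M'}{\eta}}\), where \(H_N \equiv \handler{x}{\mte{N}{\eta}}{\opcall{shift0}{\app{p}{k}} \mapsto \app{\force{p}}{k}}\) does not depend on \(M'\). So \(\handle{H_N}{\hole}\) is an \(\eff\) computational frame, and the same calculation as for pure frames goes through.
\end{itemize}

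The one point needing care is variable capture. The hole in \(\seq{x}{\hole}{N}\) or \(\dollar{\hole}{x}{N}\) does not lie in the scope of \(x\), and on the translated side it is not under the bound \(p\) or \(k\) of \(H_N\). Plugging into a context is ordinary (non-capture-avoiding) hole filling, and the translation commutes with it syntactically. No substitution lemma is therefore needed, and I expect no real obstacle.
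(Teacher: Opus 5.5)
Your proposal is correct and follows the same route as the paper, which proves this by a straightforward induction on \(\context{K}\). Your case split into the pure frames and the dollar frame (translated to a \(\mathbf{handle}\) frame) is exactly what that induction amounts to.
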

\begin{proof}
  By straightforward induction on \(\context{K}\).
\end{proof}

\begin{lemma}\label{lem:deltoeff:decomp}
  Suppose that $\simu{\config{M}{\theta}}{\config{N}{\tau}}$ holds, and let
  $\eta$ be a label map witnessing it, so that $N \equiv \mte{M}{\eta}$.
  If $M$ is decomposed into an evaluation context $\context{C}$ and a redex
  $M'$, then $N$ is decomposed into the evaluation context
  $\mte{\context{C}}{\eta}$ and the redex $\mte{M'}{\eta}$.
\end{lemma}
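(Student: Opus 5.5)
The proof proceeds by induction on the number of frames of \(\context{C}\), mirroring Lemma~\ref{lem:eff_to_del_decomp}. We first record two observations.

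First, by Lemma~\ref{lem:deltoeff:plug}, \(N \equiv \mte{M}{\eta} \equiv \mte{\context{C}}{\eta}[\mte{M'}{\eta}]\). So it suffices to show that \(\mte{\context{C}}{\eta}\) is an \(\eff\) evaluation context and that \(\mte{M'}{\eta}\) is an \(\eff\) redex. Since \(\eff\)'s decomposition is unique, this identifies the decomposition of \(N\).

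Second, the runtime translation maps frames to frames. The pure frames \(\seq{x}{\hole}{N_0}\), \(\app{\hole}{V}\) and \(\prj{i}{\hole}\) are translated homomorphically into the corresponding \(\eff\) pure frames. The dollar frame \(\dollar{\hole}{x}{N_0}\) is translated into the handle frame \(\handle{\handler{x}{\mte{N_0}{\eta}}{\opcall{shift0}{\app{p}{k}} \mapsto \app{\force{p}}{k}}}{\hole}\), which is a computational frame of \(\eff\). In particular, pure \(\del\) contexts go to pure \(\eff\) contexts. The induction on \(\context{C}\) is therefore immediate: peel off the outermost frame, apply the induction hypothesis, and re-wrap with the translated frame.

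The remaining work is checking that redexes go to redexes, by case analysis on the \(\del\) beta-rule applicable to \(M'\).

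For the \(\mam\) rules, Lemma~\ref{lem:shape}-style reasoning for this translation gives the result: values are translated homomorphically except labels, which go to labels. So a pair, variant, thunk, abstraction, lazy pair, or \(\return{V}\) keeps its shape, and the \(\mam\) redex shape is preserved.

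For \((\mathrm{ret})\), \(\dollar{\return{V}}{x}{N_0}\) becomes a handle of \(\return{\mte{V}{\eta}}\), which is an \(\eff\) \((\mathrm{ret})\) redex.

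For \((\mathrm{shift})\), \(\dollar{\plug{H}{\shift{k}{M_0}}}{x}{N_0}\) becomes a handle of \(\mte{\context{H}}{\eta}[\opcall{shift0}{\thunk{\abs{k}{\mte{M_0}{\eta}}}}]\). Here \(\mte{\context{H}}{\eta}\) is pure, so this is an \((\mathrm{op})\) redex.

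For \((\mathrm{throw})\) and \((\mathrm{fail})\), \(\throw{l}{V}\) becomes \(\throw{\eta(l)}{\mte{V}{\eta}}\), with \(\eta(l)\) defined by (C1) and well-formedness. It is a \((\mathrm{throw})\) or \((\mathrm{fail})\) redex according to (IC2)/(IC3).

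The one delicate point, and the step I expect to need the most care, is that a redex in \(\del\) is a whole dollar term. Its decomposition must not split the dollar frame off when the body is \(\plug{H}{\shift{k}{M_0}}\). So I will fix the convention, as in the \(\eff\) proof, that the decomposition takes the maximal redex, i.e.\ the innermost dollar enclosing a shift0 through a pure context. I will then check that the translation respects this: the handle is the innermost handler above \(\opcall{shift0}{}\) precisely because only dollar frames become handle frames.
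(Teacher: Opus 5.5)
Your proposal is correct and follows the paper's argument: induction on the frames of $\mathcal{C}$, using that the runtime translation sends pure frames to pure frames, dollar frames to handle frames, and each $\mathbf{DEL}_{\mathrm{one}}$ redex to an $\mathbf{EFF}_{\mathrm{one}}$ redex of the matching rule, with (C1) and (IC1)--(IC3) supplying the store side for throw and fail. You simply write out the case analysis that the paper leaves implicit, and the ``convention'' in your last paragraph is unnecessary: a pure context around $\mathbf{S_0} \mathit{k}.\; M_0$ is never itself a redex, so the decomposition is already unique in both calculi.
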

\begin{proof}
  It is easy to check this by induction on the number of frames of $\context{C}$. Note that the runtime translation maps frames to frames and redexes to redexes.
\end{proof}

\begin{proposition}[simulation]\label{prop:deltoeff:sim}
  If $\simu{C}{D}$ and $\redD{C}{C'}$, then there exists an $\eff$
  configuration $D'$ such that $\redEplus{D}{D'}$ and $\simu{C'}{D'}$.
\end{proposition}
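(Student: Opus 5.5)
We follow the same pattern as the proof of Proposition~\ref{prop:EFFoneToDELone:sim}, now relying on Lemmas~\ref{lem:deltoeff:beta_sim}, \ref{lem:deltoeff:plug}, and~\ref{lem:deltoeff:decomp}. Since \(C\) takes a step, it is not \(\bot\). We therefore write \(C = \config{M}{\theta}\), and by the definition of \(\simu{C}{D}\) we fix \(\eta\) and \(\tau\) with \(D = \config{\mte{M}{\eta}}{\tau}\), \(\WF_{\del}(C)\), \(\Coh(\theta,\eta)\), and \(\Inv(\theta,\tau,\eta)\). From the definition of \(\arrD\) we decompose \(M \equiv \plug{C}{M'}\) with \(\config{M'}{\theta}\) a beta-redex. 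Lemma~\ref{lem:deltoeff:decomp} then gives \(\mte{M}{\eta} \equiv \mte{\context{C}}{\eta}[\mte{M'}{\eta}]\), where \(\mte{\context{C}}{\eta}\) is an \(\eff\) evaluation context; a dollar frame translates to a \(\mathbf{with}\ldots\mathbf{handle}\) frame, which is computational.

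To apply Lemma~\ref{lem:deltoeff:beta_sim} to the redex, we first need \(\simu{\config{M'}{\theta}}{\config{\mte{M'}{\eta}}{\tau}}\). The conditions \(\Coh\) and \(\Inv\) depend only on the stores and \(\eta\). Well-formedness restricts to subterms because \(\lb{M'} \subseteq \lb{M}\). There is one subtlety in the \((\mathrm{shift})\) case: the redex is the whole \(\dollar{\plug{H}{\shift{k}{M_0}}}{x}{N}\), so the decomposition must take that dollar term as the redex and \(\context{C}\) as the context around it. We expect this to be consistent with Lemma~\ref{lem:deltoeff:decomp}.

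\textbf{Error case.} If \(\config{M'}{\theta} \arrbetaD \bot\), Lemma~\ref{lem:deltoeff:beta_sim} gives \(\redEplus{\config{\mte{M'}{\eta}}{\tau}}{\bot}\). Closing this under the context \(\mte{\context{C}}{\eta}\), using the error rule of \(\arrE\), yields \(\redEplus{D}{\bot}\), and \(\simu{\bot}{\bot}\) holds by definition.

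\textbf{Non-error case.} If \(\config{M'}{\theta} \arrbetaD \config{M''}{\theta'}\), Lemma~\ref{lem:deltoeff:beta_sim} gives \(\redEplus{\config{\mte{M'}{\eta}}{\tau}}{\config{\mte{M''}{\eta'}}{\tau'}}\), with \(\eta'\) extending \(\eta\) by at most a fresh label and \(\Coh(\theta',\eta')\), \(\Inv(\theta',\tau',\eta')\) holding. We lift this reduction through \(\mte{\context{C}}{\eta}\) by the congruence rule of \(\arrE\). The new labels are fresh, so they do not occur in \(\context{C}\), and hence \(\mte{\context{C}}{\eta} \equiv \mte{\context{C}}{\eta'}\). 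Lemma~\ref{lem:deltoeff:plug} then identifies the resulting target term with \(\mte{\plug{C}{M''}}{\eta'}\). Finally, \(\WF_{\del}(\config{\plug{C}{M''}}{\theta'})\) follows from Proposition~\ref{app:prop:del-wellformedness}, which gives \(\simu{C'}{D'}\).

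The main obstacle is the bookkeeping that makes the beta-level lemma usable. First, the restricted relation on the redex must genuinely satisfy all of the side conditions of \(\sim\). Second, the shift-case redex and its translation must line up as redexes under the chosen decomposition: the target redex is \(\handle{H}{\plug{\mte{\context{H}}{\eta}}{\opcall{shift0}{\ldots}}}\), and the pure context \(\context{H}\) must translate to an \(\eff\) pure context. Both points reduce to routine inductions on contexts.
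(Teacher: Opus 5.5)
Your proposal is correct and follows essentially the same route as the paper's proof. You decompose the source step, transfer the decomposition with Lemma~\ref{lem:deltoeff:decomp}, apply Lemma~\ref{lem:deltoeff:beta_sim} to the redex, and lift back through \(\mte{\context{C}}{\eta} \equiv \mte{\context{C}}{\eta'}\) using freshness and Lemma~\ref{lem:deltoeff:plug}; your explicit restriction of \(\WF_{\del}\) to the redex and its re-establishment for \(C'\) via Proposition~\ref{app:prop:del-wellformedness} just spell out bookkeeping the paper leaves implicit.
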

\begin{proof}
  Since $C$ is not in normal form, there exist $M$ and $\theta$ such that $C = \config{M}{\theta}$ and according to the semantics of $\del$, $M$ can be decomposed into an evaluation context $\context{C}$ and a redex $M'$.
  By the definition of $\simu{C}{D}$, there exist $\eta$ and $\tau$ such that $D = \config{\mte{M}{\eta}}{\tau}$ and both \(\Coh(\theta, \eta)\) and \(\Inv(\theta, \tau, \eta)\) hold.
  By Lemma~\ref{lem:deltoeff:decomp}, $\mte{M}{\eta}$ can also be decomposed into $\mte{\context{C}}{\eta}$ and a redex $\mte{M'}{\eta}$, namely $\mte{M}{\eta} \equiv \mte{\context{C}}{\eta}\left[\mte{M'}{\eta}\right]$.

  If $\redbetaD{\config{M'}{\theta}}{\bot}$, then we have $\redD{C}{\bot}$.
  By Lemma~\ref{lem:deltoeff:beta_sim}, $\config{\mte{M'}{\eta}}{\tau}$ also evaluates to $\bot$.
  Hence, we conclude that
  $\redEplus{\config{\mte{\context{C}}{\eta}\left[\mte{M'}{\eta}\right]}{\tau}}{\bot}$,
  that is, $\redEplus{D}{\bot}$.

  Next, suppose $\redbetaDp{M'}{\theta}{M''}{\theta'}$.
  By Lemma~\ref{lem:deltoeff:beta_sim}, we obtain $\redEplus{\config{\mte{M'}{\eta}}{\tau}}{\config{\mte{M''}{\eta'}}{\tau'}}$ and $\simu{\config{M''}{\theta'}}{\config{\mte{M''}{\eta'}}{\tau'}}$ for some $\theta'$, $\tau'$, and $\eta'$ with \(\Coh(\theta', \eta')\) and \(\Inv(\theta', \tau', \eta')\).
  Moreover, $\mte{\context{C}}{\eta} \equiv \mte{\context{C}}{\eta'}$ holds since the labels in $\dom{\eta'}\setminus\dom{\eta}$ are fresh and do not appear in $\context{C}$.
  Therefore, we conclude that $\simu{\config{\plug{C}{M''}}{\theta'}}{\config{\mte{\plug{C}{M''}}{\eta'}}{\tau'}}$ by Lemma~\ref{lem:deltoeff:plug}.
\end{proof}

To establish the strong macro-expressibility of $\del$ in $\eff$, we prove three
lemmas.
Recall that \(\WF_{\del}\) is preserved by reduction
(Proposition~\ref{app:prop:del-wellformedness}).

\begin{lemma}\label{lem:deltoeff:diverge}
  Suppose that the evaluation of a \(\del\) program $M$ diverges, i.e., the
  reduction sequence of $M$ is infinite.
  Then the evaluation of $\mt{M}$ also diverges.
\end{lemma}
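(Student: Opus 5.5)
The argument mirrors Lemma~\ref{lem:efftodel:diverge}. By Lemma~\ref{lem:deltoeff:property}(1) we have \(\simu{\config{M}{\emptyset}}{\config{\mt{M}}{\emptyset}}\). Write the infinite source reduction sequence as \(\config{M}{\emptyset} = C_0 \arrD C_1 \arrD C_2 \arrD \cdots\). No \(C_i\) is \(\bot\), since \(\bot\) is irreducible and the sequence is infinite. So each step \(C_i \arrD C_{i+1}\) reduces a configuration to another configuration.

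We construct \(\eff\) configurations \(D_0 = \config{\mt{M}}{\emptyset}\), \(D_1, D_2, \ldots\) by induction on \(i\), maintaining \(\simu{C_i}{D_i}\). At stage \(i\) we apply Proposition~\ref{prop:deltoeff:sim} to \(\simu{C_i}{D_i}\) and \(\redD{C_i}{C_{i+1}}\). This yields \(D_{i+1}\) with \(\redEplus{D_i}{D_{i+1}}\) and \(\simu{C_{i+1}}{D_{i+1}}\).

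The hypotheses needed along the way, in particular \(\WF_{\del}\) of each \(C_i\), are part of the definition of \(\sim\). They are maintained because Proposition~\ref{prop:deltoeff:sim} itself returns the relation for the next configuration, and \(\WF_{\del}\) is preserved by reduction (Proposition~\ref{app:prop:del-wellformedness}).

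Concatenating the nonempty segments \(D_i \arrE^{+} D_{i+1}\) gives an infinite \(\arrE\)-sequence starting from \(\config{\mt{M}}{\emptyset}\). Each segment contributes at least one step, so the concatenation really is infinite. The key point is that Proposition~\ref{prop:deltoeff:sim} gives \(\arrE^{+}\) rather than \(\arrE^{*}\); this is what excludes an infinite source run being matched by a finite target run.

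Finally, \(\arrE\) is deterministic, so the maximal reduction sequence from \(\config{\mt{M}}{\emptyset}\) is unique. It must therefore be this infinite one, and the evaluation of \(\mt{M}\) diverges.

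There is no real obstacle once Proposition~\ref{prop:deltoeff:sim} is available. The only care needed is the strictness of each simulation step, which holds because every source step in the proof of Lemma~\ref{lem:deltoeff:beta_sim} is matched by at least one target step. The same observation covers the ret and \(\mam\) cases.
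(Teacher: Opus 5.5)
Your proposal is correct and follows the paper's proof. Both take the initial simulation from Lemma~\ref{lem:deltoeff:property}(1), apply Proposition~\ref{prop:deltoeff:sim} repeatedly to build an infinite $\arrE$-sequence (each step contributes at least one target step via $\arrE^{+}$), and conclude by determinism of $\arrE$; you simply make explicit details the paper leaves implicit, such as the absence of $\bot$ in the source run and the maintenance of $\WF_{\del}$.
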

\begin{proof}
  By applying Proposition~\ref{prop:deltoeff:sim} repeatedly, we also obtain an infinite reduction sequence of $\mt{M}$.
  Since the reduction of $\eff$ is deterministic, this implies that the
  evaluation of $\mt{M}$ also diverges.
\end{proof}

\begin{lemma}\label{lem:deltoeff:stuck}
  Suppose that the evaluation of a \(\del\) program $M$ gets stuck: there
  exists a term $M'$ and a store $\theta$ such that
  $\redDclos{\config{M}{\emptyset}}{\config{M'}{\theta}}$,
  $M' \neq \return{V}$ for any value $V$, and there is no rule that
  can reduce $\config{M'}{\theta}$.
  Then, the evaluation of $\mt{M}$ also gets stuck.
\end{lemma}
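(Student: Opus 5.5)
The plan mirrors the proof of Lemma~\ref{lem:efftodel:stuck}. By Lemma~\ref{lem:deltoeff:property} and repeated application of Proposition~\ref{prop:deltoeff:sim} along $\config{M}{\emptyset}\arrD^{*}\config{M'}{\theta}$, we obtain $\config{\mt{M}}{\emptyset}\arrE^{*}\config{\mte{M'}{\eta}}{\tau}$ for some $\eta$ and $\tau$ with $\WF_{\del}(\config{M'}{\theta})$, $\Coh(\theta,\eta)$ and $\Inv(\theta,\tau,\eta)$. Well-formedness is preserved by Proposition~\ref{app:prop:del-wellformedness}. It therefore suffices to prove the following claim: if $N\equiv\mte{M}{\eta}$, $\WF_{\del}(\config{M}{\theta})$, $\Coh(\theta,\eta)$ and $\Inv(\theta,\tau,\eta)$ hold, and $\config{M}{\theta}$ is stuck, then $\config{N}{\tau}$ is stuck.

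I would prove the claim by induction on $M$, strengthened with a head-shape invariant so that the frame cases go through. The invariant says that if $M$ is irreducible, then $\mte{M}{\eta}$ is irreducible, and in addition:
\begin{itemize}
\item $\mte{M}{\eta}$ is a return, abstraction or lazy pair only when $M$ is;
\item $\mte{M}{\eta}$ has the form $\plug{E}{\opcall{shift0}{V}}$ with $\context{E}$ pure only when $M\equiv\plug{H}{\shift{k}{M_0}}$ with $\context{H}$ pure.
\end{itemize}

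The base cases are as follows.
\begin{itemize}
\item Product and variant matching and force are handled as before, because the translation of values is homomorphic except on labels, and labels map to labels, so the shape of a value is preserved and reflected.
\item For throw, well-formedness and (C1) rule out $V$ being a label, since otherwise (throw) or (fail) would fire. Hence $\mte{V}{\eta}$ is not a label and the target throw is stuck.
\item For a bare $\shift{k}{M_0}$, the translation $\opcall{shift0}{\thunk{\ldots}}$ with no enclosing handler is irreducible, and the second clause of the invariant holds with $\context{H}=\hole$.
\end{itemize}

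For the let, application and projection frames, the induction hypothesis gives that the inner target is irreducible and of the wrong head shape for the frame's $\beta$-rule.

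The main obstacle is the dollar case $M\equiv\dollar{M_1}{x}{M_2}$, whose translation is the handler
\[
\handle{\handler{x}{\mte{M_2}{\eta}}{\opcall{shift0}{\app{p}{k}}\mapsto\app{\force{p}}{k}}}{\mte{M_1}{\eta}}.
\]
Stuckness of the source gives three facts: $M_1$ is irreducible, $M_1$ is not a return, and $M_1$ is not a pure-context occurrence of $\mathbf{S_0}$. By the strengthened induction hypothesis, $\mte{M_1}{\eta}$ is then irreducible, not a return, and not of the form $\plug{E}{\opcall{shift0}{V}}$. Consequently neither (ret) nor (op) of $\eff$ applies, and nothing inside reduces.

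Two points need care here. First, the pure-context correspondence $\mte{\context{H}}{\eta}$ must be used to reflect the $\mathbf{shift0}$ shape, much as Lemma~\ref{lem:yield-shape} is used in the previous appendix. Second, an operation call nested inside an inner handler must not count as a pure occurrence; this holds because handle frames are not pure frames. Finally, a stuck configuration is irreducible and is neither a return nor $\bot$, so $\config{N}{\tau}$ is stuck. Since $\arrE$ is deterministic, the evaluation of $\mt{M}$ gets stuck.
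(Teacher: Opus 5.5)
Your proposal is correct and follows the paper's proof. Like the paper, you use Lemma~\ref{lem:deltoeff:property} and Proposition~\ref{prop:deltoeff:sim} to reduce to the claim that a stuck source configuration, related to its runtime translation under $\WF_{\del}$, $\Coh$, and $\Inv$, has a stuck target, and you prove that claim by induction on $M$ with the dollar case as the crux. Your head-shape strengthening (irreducibility plus reflection of $\mathbf{return}$ and pure-context $\op{shift0}$ occurrences) simply folds into the main induction the auxiliary ``induction on $M_1$'' that the paper invokes in the dollar case, and it also makes the application and projection frame cases go through cleanly.
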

\begin{proof}
  It suffices to show the following statement:
  \begin{quote}
    Let $M$, $N$, $\theta$, $\tau$, and $\eta$ be such that
    $N \equiv \mte{M}{\eta}$, \(\WF_{\del}(\config{M}{\theta})\),
    \(\Coh(\theta, \eta)\), and \(\Inv(\theta, \tau, \eta)\) hold.
    If $\config{M}{\theta}$ is stuck, then $\config{N}{\tau}$ is
    also stuck.
  \end{quote}
  We prove it by induction on the structure of $M$, giving only four cases for
  brevity.
  The other cases follow by similar reasoning.

  \noindent \textbf{Case} (product matching):
  \begin{caseindent}
    Suppose \(M \equiv \pcase{V}{x_1}{x_2}{M'}\).
    Since $\config{M}{\theta}$ is stuck, we know that
    $V \neq \vpair{V_1}{V_2}$ for any values $V_1$ and $V_2$.
    Consequently, $\mte{V}{\eta}$ also cannot be of the form
    $\vpair{W_1}{W_2}$ for any values $W_1$ and $W_2$.
    Therefore, the configuration
    $\config{\mte{\pcase{V}{x_1}{x_2}{M'}}{\eta}}{\tau}$ is stuck,
    and this concludes the current case.
  \end{caseindent}

  \noindent \textbf{Case} (throw):
  \begin{caseindent}
    Suppose \(M \equiv \throw{V}{W}\).
    It follows that $V \neq l$ for any continuation label $l$,
    since if $V = l$ for some $l$, then by
    \(\WF_{\del}(\config{\throw{V}{W}}{\theta})\), we know that
    $l \in \dom{\theta}$. However, this contradicts the assumption that
    $\config{\throw{V}{W}}{\theta}$ is stuck.
    This implies that \(\mte{V}{\eta}\) is not an \(\eff\) continuation label
    and so $\mte{\throw{V}{W}}{\eta}$ is also stuck, which concludes the current case.
  \end{caseindent}

  \noindent \textbf{Case} (sequencing):
  \begin{caseindent}
    Suppose \(M \equiv \seq{x}{M_1}{M_2}\).
    Since $\config{M}{\theta}$ is stuck, $M_1$
    cannot be of the form $\return{V}$ for any value $V$.
    By the induction hypothesis, we obtain that $N_1$ is stuck and
    that $N_1 \neq \return{W}$ for any value $W$.
    Consequently, the configuration
    $\config{\seq{x}{N_1}{\mte{M_2}{\eta}}}{\tau}$ is also stuck,
    which completes the current case.
  \end{caseindent}

  \noindent \textbf{Case} (dollar):
  \begin{caseindent}
    Suppose \(M \equiv \dollar{M_1}{x}{M_2}\).
    By the definition of the translation, we obtain
    \[
      N \equiv \handle{\handler{x}{\mte{M_2}{\eta}}{\opcall{shift0}{\app{p}{k}} \mapsto \app{\force{p}}{k}}}{\mte{M_1}{\eta}}.
    \]
    Since $\config{M}{\theta}$ is stuck, $M_1$ cannot be of
    the form $\plug{H}{\shift{k}{L}}$ or of the form $\return{V}$ for
    any pure context $\context{H}$, computation $L$, and value $V$.
    Then, it follows by induction on $M_1$ that $\mte{M_1}{\eta}$
    cannot be of the form $\plug{P}{\opcall{op}{W}}$ or of
    the form $\return{W}$, for any pure context $\context{P}$, and
    value $W$.
    Therefore, the configuration $\config{N}{\tau}$ is also stuck, and
    this completes the current case.
\end{caseindent}
\end{proof}

\begin{lemma}\label{lem:deltoeff:bot}
  If the evaluation of a \(\del\) program $M$ reaches the error state $\bot$,
  i.e., $\redDplus{\config{M}{\emptyset}}{\bot}$, then the evaluation of
  $\mt{M}$ also reaches $\bot$.
\end{lemma}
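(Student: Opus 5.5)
This follows from the simulation result for the translation from \(\del\) to \(\eff\), in the same way as Lemma~\ref{lem:efftodel:bot} follows from Proposition~\ref{prop:EFFoneToDELone:sim}. By Lemma~\ref{lem:deltoeff:property}(1), \(\simu{\config{M}{\emptyset}}{\config{\mt{M}}{\emptyset}}\) holds.

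Fix the source reduction sequence
\(\config{M}{\emptyset} = C_0 \arrD C_1 \arrD \cdots \arrD C_n = \bot\), with \(n \geq 1\). We argue by induction on \(i < n\) that there is an \(\eff\) configuration \(D_i\) with \(\redEclos{\config{\mt{M}}{\emptyset}}{D_i}\) and \(\simu{C_i}{D_i}\). The base case is the initial simulation above. For the step, we apply Proposition~\ref{prop:deltoeff:sim} to \(\simu{C_i}{D_i}\) and \(\redD{C_i}{C_{i+1}}\). This yields \(D_{i+1}\) with \(\redEplus{D_i}{D_{i+1}}\) and \(\simu{C_{i+1}}{D_{i+1}}\).

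At the last step \(C_n = \bot\), so we get \(\simu{\bot}{D_n}\). By the definition of the simulation relation (Definition~\ref{app:def:deltoeff:simulation}), a \(\bot\) source configuration is related only to \(D_n = \bot\): the other clause requires the source to have the form \(\config{M}{\theta}\). Composing the target reductions gives \(\redEplus{\config{\mt{M}}{\emptyset}}{\bot}\), which is the statement.

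The only point needing care is the \(\bot\)-producing step, where the redex is \(\throw{l}{V}\) with \(\theta(l) = \nil\). This is covered by the error branch in the proof of Proposition~\ref{prop:deltoeff:sim}, which relies on Lemma~\ref{lem:deltoeff:beta_sim}. That lemma does not write out the \((\mathrm{fail})\) case, so I would supply it explicitly. By \(\WF_{\del}\) and (C1), \(\eta(l)\) is defined, and the target redex is \(\throw{\eta(l)}{\mte{V}{\eta}}\). By (IC2), \(\tau(\eta(l)) = \nil\). Hence the \(\eff\) rule \((\mathrm{fail})\) fires, and by Lemma~\ref{lem:deltoeff:decomp} it fires under the translated evaluation context, giving \(\bot\) in one step. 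I expect this to be the main (though minor) obstacle; everything else is bookkeeping.
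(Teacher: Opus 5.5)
Your proposal is correct and takes the same route as the paper, which simply says the lemma follows directly from Proposition~\ref{prop:deltoeff:sim}. Your explicit treatment of the omitted $(\mathrm{fail})$ case, using (C1) and (IC2) to get $\tau(\eta(l)) = \nil$, is exactly what the error branch of that proposition relies on, and the observation that $\bot \sim D$ forces $D = \bot$ correctly closes the argument.
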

\begin{proof}
  This lemma directly follows from Proposition~\ref{prop:deltoeff:sim}.
\end{proof}

\begin{proposition}[preservation of semantics]\label{cor:deltoeff:correctness}
  $\eval{\del}{M}$ is defined if and only if $\eval{\eff}{\mt{M}}$ is defined.
\end{proposition}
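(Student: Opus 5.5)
The proof is the same two-direction argument as Proposition~\ref{cor:eff_to_del_correctness}, now built on the simulation machinery for the translation of Theorem~\ref{thm:deltoeff}.

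For the ``only if'' direction, suppose \(\redDclos{\config{M}{\emptyset}}{\config{\return{V}}{\theta}}\).
\begin{enumerate}
\item Lemma~\ref{lem:deltoeff:property}(1) gives the initial relation \(\simu{\config{M}{\emptyset}}{\config{\mt{M}}{\emptyset}}\).
\item Apply Proposition~\ref{prop:deltoeff:sim} once for each step of the source reduction sequence. This yields an \(\eff\) configuration \(D\) with \(\redEclos{\config{\mt{M}}{\emptyset}}{D}\) and \(\simu{\config{\return{V}}{\theta}}{D}\).
\item The source configuration is not \(\bot\), so \(D\) is not \(\bot\). By Lemma~\ref{lem:deltoeff:property}(2), \(D\) is \(\config{\return{\mte{V}{\eta}}}{\tau}\) for some \(\eta\) and \(\tau\).
\item Hence \(\eval{\eff}{\mt{M}}\) is defined.
\end{enumerate}

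For the ``if'' direction I argue by contraposition. Suppose \(\eval{\del}{M}\) is undefined. Because \(\arrD\) is deterministic, exactly one of three things happens: the evaluation of \(M\) diverges, it reaches \(\bot\), or it reaches a stuck configuration that is not a return configuration. The three cases are handled as follows.
\begin{enumerate}
\item Divergence is transported to \(\mt{M}\) by Lemma~\ref{lem:deltoeff:diverge}.
\item Reaching \(\bot\) is transported by Lemma~\ref{lem:deltoeff:bot}.
\item Stuckness is transported by Lemma~\ref{lem:deltoeff:stuck}.
\end{enumerate}
In each case \(\arrE\) is deterministic. Divergence, \(\bot\), and a stuck configuration each rule out reaching a return configuration. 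Therefore \(\eval{\eff}{\mt{M}}\) is undefined.

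All the steps are routine compositions of the earlier results, so most of the work has already been done. The one point I would check is the stuck case. Lemma~\ref{lem:deltoeff:stuck} is proved via a statement about an arbitrary related pair \(\config{M'}{\theta}\) and \(\config{\mte{M'}{\eta}}{\tau}\). To apply it I must know that the configuration reached from \(\config{\mt{M}}{\emptyset}\) has exactly this form. That is the case here: the simulation relation of Definition~\ref{app:def:deltoeff:simulation} relates a source configuration to its runtime translation directly, with no administrative reductions, and it carries \(\WF_{\del}\), \(\Coh\), and \(\Inv\), which are exactly the hypotheses the stuck lemma needs.
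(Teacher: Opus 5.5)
Your proposal is correct and follows the paper's proof essentially step for step: the initial relation from Lemma~\ref{lem:deltoeff:property}, iterated simulation via Proposition~\ref{prop:deltoeff:sim} for the ``only if'' direction, and contraposition through Lemmas~\ref{lem:deltoeff:diverge}, \ref{lem:deltoeff:stuck}, and \ref{lem:deltoeff:bot} for the ``if'' direction. Your remark on the stuck case makes explicit a step the paper leaves implicit: the reached target configuration is literally the runtime translation of the source configuration and carries \(\WF_{\del}\), \(\Coh\), and \(\Inv\), so the stuck lemma applies.
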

\begin{proof}
  We first show the ``only if'' direction.
  Suppose that
  $\redDplus{\config{M}{\emptyset}}{\config{\return{V}}{\theta}}$ for
  some $\theta$.
  Lemma~\ref{lem:deltoeff:property} yields that
  $\simu{\config{M}{\emptyset}}{\config{\mt{M}}{\emptyset}}$.
  By applying Proposition~\ref{prop:deltoeff:sim} iteratively,
  there exists $\eta$ and $\tau$ such that
  $\redEplus{\config{\mt{M}}{\emptyset}}{\config{\return{\mte{V}{\eta}}}{\tau}}$.
  This implies that $\eval{\eff}{\mt{M}} = V$.
  We then show the ``if'' direction.
  We prove the contrapositive of the proposition: if $\eval{\del}{M}$
  is undefined, $\eval{\eff}{\mt{M}}$ is also undefined.
  There are three cases where $\eval{\del}{M}$ is undefined: the evaluation of
  \(M\) (1) diverges, (2) gets stuck, or (3) reaches $\bot$.
  In each case, Lemmas~\ref{lem:deltoeff:diverge},
  \ref{lem:deltoeff:stuck}, and \ref{lem:deltoeff:bot} imply that the
  evaluation of $\mt{M}$ diverges, gets stuck, or reaches $\bot$,
  respectively.
  Thus, $\eval{\eff}{\mt{M}}$ is undefined, which completes the proof.
\end{proof}

%% file: appC-non-reftodel.tex
\newcommand{\reach}[2]{\mathrm{Reach}(#1, #2)}
\newcommand{\supp}[1]{\mathrm{Supp}(#1)}

\section{Supplementary proofs for Section~\ref{sec:nonexistent:reftodel}}
\label{app:sec:nonexistent:reftodel}

In this section, we give a proof of Theorem~\ref{thm:reftoeff_nonexist} and
Theorem~\ref{thm:reftoac}.

\subsection{Proof of Theorem~\ref{thm:reftoeff_nonexist}}
\label{app:subsec:nonexistent:reftodel}

In this section, we give the analysis of stores omitted from the proof in
Section~\ref{sec:nonexistent:reftodel}.

Recall that \(\lb{M}\) denotes the set of continuation labels occurring in a
\(\del\) computation \(M\) (Appendix~\ref{app:sec:well-formedness}); we use the
same notation for \(\del\) values.

We then define the labels \emph{reachable} from a term.

\begin{definition}
  For a \(\del\) term \(E\) and a store \(\theta\), define \(\reach{E}{\theta}\)
  to be the smallest subset of \(\dom{\theta}\) such that:
  \begin{itemize}
  \item if \(l \in \lb{E} \cap \dom{\theta}\), then
    \(l \in \reach{E}{\theta}\);
  \item if \(l \in \reach{E}{\theta}\) and \(\theta(l) = K \neq \nil\), then
    every label in \(\lb{K} \cap \dom{\theta}\) also belongs to
    \(\reach{E}{\theta}\).
  \end{itemize}
\end{definition}

Next, we define a relation on stores that captures invalidation.

\begin{definition}
  For a set \(S\) of labels, write
  \[
    \theta \succeq_S \theta'
  \]
  if, for every \(l \in S\), either \(\theta'(l) = \theta(l) \neq \nil\) or
  \(\theta'(l) = \nil\).
  In other words, restricted on \(S\), \(\theta'\) is obtained from \(\theta\)
  by possibly invalidating some entries.
  Note that \(\succeq_{S}\) is transitive given a fixed set \(S\).
\end{definition}

The following lemma states that a reduction sequence changes the entries of the
initial store only by invalidating reachable ones, and that the set of reachable
labels never grows.

\begin{lemma}\label{lem:reduction_and_reach}
  Suppose \(\WF_{\del}(\config{M}{\theta})\) and
  \(\config{M}{\theta} \arrD^{*} \config{M'}{\theta'}\), and let
  \(S \defeq \reach{M}{\theta}\).
  Then,
  \begin{enumerate}
  \item \(\theta \succeq_{S} \theta'\);
  \item if \(l \in \dom{\theta} \setminus S\), then \(\theta'(l) = \theta(l)\);
  \item \(\reach{M}{\theta'} \subseteq S\);
  \item \(\reach{M'}{\theta'} \cap \dom{\theta} \subseteq S\).
  \end{enumerate}
\end{lemma}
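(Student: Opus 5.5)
The plan is to prove all four claims simultaneously by induction on the length of the reduction sequence $\config{M}{\theta} \arrD^{*} \config{M'}{\theta'}$. I strengthen the induction hypothesis to an invariant relating the current configuration $\config{M'}{\theta'}$ to the initial data $S = \reach{M}{\theta}$ and $\dom{\theta}$. The invariant is claims (1), (2), (4) together with the auxiliary fact
\[
  \reach{K}{\theta'} \cap \dom{\theta} \subseteq S
\]
for every fresh entry $K = \theta'(l)$ with $l \notin \dom{\theta}$ and $K \neq \nil$. The base case is immediate: $\theta' = \theta$ and $M' = M$. Claim (3) then follows from (1) and (2). Reachability from $M$ in $\theta'$ starts from $\lb{M} \cap \dom{\theta} \subseteq S$ and follows only entries $\theta'(l)$ with $l \in S$. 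By (1), each such entry is either $\nil$ (no edges) or equal to $\theta(l)$, whose labels already lie in $S$ because $S$ is closed. Hence the closure stays inside $S$. Throughout I use preservation of well-formedness (Proposition~\ref{app:prop:del-wellformedness}).

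For the inductive step, I take one more step $\config{M'}{\theta'} \arrD \config{M''}{\theta''}$, decompose it as $\plug{C}{R}$ with a beta-redex $R$, and do a case analysis on the rule. The $(\mam)$ and $(\mathrm{ret})$ rules leave the store unchanged and satisfy $\lb{M''} \subseteq \lb{M'}$, so $\reach{M''}{\theta''} \subseteq \reach{M'}{\theta'}$ and all claims persist.

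\textbf{Case} $(\mathrm{shift})$. A fresh label $l \notin \dom{\theta'} \supseteq \dom{\theta}$ is added. Its entry consists of subterms of $M'$, so its labels lie in $\reach{M'}{\theta'}$. Old entries are untouched, so (1) and (2) hold. For (4), reachability from $M''$ can pass through $l$, but everything reachable from $l$'s entry is already reachable from $M'$ in $\theta'$; intersecting with $\dom{\theta}$ therefore gives a subset of $S$ by the induction hypothesis.

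\textbf{Case} $(\mathrm{throw})$. The rule sets $\theta''(l) = \nil$. If $l \in \dom{\theta}$, I must show $l \in S$, so that (2) is not violated; this holds because $l$ occurs in $M'$, so $l \in \reach{M'}{\theta'} \cap \dom{\theta} \subseteq S$ by (4). Invalidating an entry preserves $\succeq_S$, which gives (1). The new computation $M''$ contains the labels of $V$ and of the former entry $\theta'(l)$, both of which are reachable from $M'$ in $\theta'$. Since entries only shrink, $\reach{M''}{\theta''} \subseteq \reach{M'}{\theta'}$, and (4) follows.

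The main obstacle is the bookkeeping for (4) in the presence of fresh labels. Reachability passes through labels outside $\dom{\theta}$, so I must ensure that chains through fresh entries never lead to an old label outside $S$. The auxiliary invariant on fresh entries stated above is exactly what discharges this. I would check it first in the shift case, and then observe that the throw case can only shrink the reachable set.
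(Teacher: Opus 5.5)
Your proof is correct and is essentially the paper's: the same induction on the length of the reduction and the same \((\mathrm{shift})\) and \((\mathrm{throw})\) arguments, including the key step that an invalidated label in \(\dom{\theta}\) lies in \(S\) by claim (4) of the induction hypothesis. The one real difference is that you obtain (3) afterwards from (1) together with \(\WF_{\del}(\config{M}{\theta})\), whereas the paper carries (3) through the induction; this is slightly more economical. Well-formedness is needed here beyond the closedness of \(S\), so that all labels of \(M\) and of each valid \(\theta(l)\) lie in \(\dom{\theta}\). Your auxiliary invariant on fresh entries, however, is never used and can be dropped: \(\mathrm{Reach}(M', \theta')\) already contains every chain through fresh entries reachable from \(M'\), so claim (4) alone discharges both cases, exactly as your own case arguments do.
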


\begin{proof}
  By induction on the length of the reduction sequence.
  For the base case, there is nothing to prove.
  For the inductive step, assume
  \[
    \config{M}{\theta} \arrD^{*} \config{M_0}{\theta_0} \arrD
    \config{M'}{\theta'}.
  \]
  By the induction hypothesis, we obtain
  \begin{itemize}
  \item \(\theta \succeq_{S} \theta_0\);
  \item if \(l \in \dom{\theta} \setminus S\), then \(\theta_0(l) = \theta(l)\);
  \item \(\reach{M}{\theta_0} \subseteq S\);
  \item \(\reach{M_0}{\theta_0} \cap \dom{\theta} \subseteq S\).
  \end{itemize}
  Moreover, \(\WF_{\del}(\config{M_0}{\theta_0})\) holds, since well-formedness
  is preserved by reduction (Proposition~\ref{app:prop:del-wellformedness}).
  We conduct case analysis on the rule applied in
  \(\config{M_0}{\theta_0} \arrD \config{M'}{\theta'}\).

  \noindent\textbf{Cases} \((\mam)\) and \((\mathrm{ret})\):
  \begin{caseindent}
    These rules do not change the store, so (1)--(3) trivially hold.
    Since every label occurring in \(M'\) already occurs in \(M_0\), we have
    \(\reach{M'}{\theta'} \subseteq \reach{M_0}{\theta_0}\), and (4) follows
    from the induction hypothesis.
  \end{caseindent}

  \noindent\textbf{Case} \((\mathrm{shift})\):
  \begin{caseindent}
    In this case, a fresh label \(l\) is taken and bound in the store:
    \(\theta' = \theta_0[l := K]\) for some computation \(K\).
    Since \(l \notin \dom{\theta}\), (1) and (2) trivially hold.\footnote{Note
      that \(\succeq_S\) is transitive.}
    By \(\WF_{\del}(\config{M}{\theta})\) and
    \(\WF_{\del}(\config{M_0}{\theta_0})\), the label
    \(l \notin \dom{\theta_0}\) occurs neither in \(M\) nor in \(\theta_0\), so
    \(l\) is not reachable from \(M\).
    Thus \(l \notin \reach{M}{\theta'}\) and
    \(\reach{M}{\theta'} = \reach{M}{\theta_0} \subseteq S\).
    For (4), every label occurring in \(M'\) or in \(K\) is either \(l\) or
    occurs in \(M_0\), so
    \(\reach{M'}{\theta'} \subseteq \reach{M_0}{\theta_0} \cup \{l\}\); since
    \(l \notin \dom{\theta}\), (4) follows from the induction hypothesis.
  \end{caseindent}

  \noindent\textbf{Case} \((\mathrm{throw})\):
  \begin{caseindent}
    This rule invalidates a label \(l\) in \(\dom{\theta_0}\):
    \(\theta' = \theta_0[l := \nil]\).
    Thus, (1) trivially holds.
    Since \(l\) occurs in \(M_0\), we have \(l \in \reach{M_0}{\theta_0}\);
    hence, if \(l \in \dom{\theta}\), then \(l \in S\) by the induction
    hypothesis, and (2) holds.
    We shall prove \(\reach{M}{\theta'} \subseteq \reach{M}{\theta_0}\) by
    induction on the well-founded relation on \(\reach{M}{\theta'}\) induced
    from reachability.
    Let \(m \in \reach{M}{\theta'}\).
    Then, there are two possible cases:
    \begin{enumerate}[label=\arabic*.]
    \item \(m \in \lb{M} \cap \dom{\theta'}\), or
    \item there exists a label \(m' \in \reach{M}{\theta'}\) such that \(\theta'(m') \neq \nil\) and
      \(m \in \lb{\theta'(m')} \cap \dom{\theta'}\).
    \end{enumerate}

    For case 1, \(m\) also belongs to \(\dom{\theta_0}\) since
    \(\dom{\theta'} = \dom{\theta_0}\).
    Thus, \(m \in \reach{M}{\theta_0}\).
    For case 2, \(\theta'(m') \neq \nil\) implies \(m' \neq l\), so
    \(\theta'(m') = \theta_0(m') \neq \nil\).
    Thus \(m \in \lb{\theta_0(m')} \cap \dom{\theta_0}\), and with the
    induction hypothesis that \(m' \in \reach{M}{\theta_0}\), we obtain
    \(m \in \reach{M}{\theta_0}\).

    Therefore, \(\reach{M}{\theta'} \subseteq \reach{M}{\theta_0} \subseteq S\)
    holds.

    Finally, every label in \(\dom{\theta_0}\) occurring in \(M'\) occurs in
    \(M_0\) or in \(\theta_0(l)\), and hence belongs to
    \(\reach{M_0}{\theta_0}\), since \(l \in \reach{M_0}{\theta_0}\) and
    \(\theta_0(l) \neq \nil\).
    By the same argument as above, we obtain
    \(\reach{M'}{\theta'} \subseteq \reach{M_0}{\theta_0}\), and (4) follows
    from the induction hypothesis.
  \end{caseindent}
  This completes the proof.
\end{proof}

Since reduction generates fresh labels, two reduction sequences of the same
computation from different stores may differ in the choice of labels.
We account for this difference by \emph{finite permutations}.

\begin{definition}
  A \emph{finite permutation} \(\pi\) of continuation labels is a bijection on
  \(\syntacticset{L}_{\mathbf{D}}\) whose \emph{support}
  \(\supp{\pi} \defeq \{l \in \syntacticset{L}_{\mathbf{D}} \mid \pi(l) \neq
  l\}\) is finite.
  It acts on values and computations by renaming labels using \(\pi\).
  We write the action as \(\pi \cdot E\) where \(E\) is a value or computation.
  For a store \(\theta\), we define the action \(\pi \cdot \theta\) by
  \[
    (\pi \cdot \theta)(l) \defeq \pi \cdot (\theta(\pi^{-1}(l))),
  \]
  provided \(\pi^{-1}(l) \in \dom{\theta}\).
\end{definition}

We can now state the key lemma: a successfully terminating reduction sequence
can be also obtained from a store with fewer invalidated entries.

\begin{lemma}\label{lem:replay}
  Let \(\theta\) and \(\theta'\) be \(\del\) stores, \(S\) be a subset of
  \(\dom{\theta} \cap \dom{\theta'}\), and \(M\) be a computation such that
  \(\WF_{\del}(\config{M}{\theta'})\).
  Assume there exist a \(\del\) value \(V\) and a store \(\sigma'\) such that
  \[
    \theta \succeq_{S} \theta', \quad \reach{M}{\theta'} \subseteq S, \quad
    \text{and} \quad \config{M}{\theta'} \arrD^* \config{\return{V}}{\sigma'}.
  \]
  Then, there exist a store \(\sigma\) and a finite permutation \(\pi\) with
  \(\pi|_{S} = \mathrm{id}\) such that
  \[
    \config{M}{\theta} \arrD^* \config{\return{(\pi \cdot V)}}{\sigma}.
  \]
\end{lemma}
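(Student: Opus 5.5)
We prove a strengthened, step-indexed replay statement by induction on the length of the reduction sequence \(\config{M}{\theta'} \arrD^* \config{\return{V}}{\sigma'}\), carrying a relation between the two running configurations. Define a relation \(R\) on pairs of configurations: \(\config{N}{\rho} \mathrel{R} \config{N'}{\rho'}\) via \(\pi\) holds if \(N = \pi\cdot N'\), and there is a set \(S' \supseteq \reach{N'}{\rho'}\) such that \(\pi\) is the identity on \(S\), \(\pi\) maps \(S'\) into \(\dom{\rho}\), and for every \(l \in S'\) either \(\rho'(l) = \nil\), or \(\rho(\pi(l)) = \pi\cdot\rho'(l) \neq \nil\). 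In words, the replayed configuration \(\config{N}{\rho}\) agrees with the original one up to renaming on all reachable labels, except that it may hold valid continuations where the original has \(\nil\). Initially \(\config{M}{\theta} \mathrel{R} \config{M}{\theta'}\) with \(\pi = \mathrm{id}\) and \(S' = S\), using \(\theta \succeq_S \theta'\) and \(\reach{M}{\theta'}\subseteq S\).

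The key step is a one-step simulation: if \(\config{N}{\rho} \mathrel{R} \config{N'}{\rho'}\) and \(\config{N'}{\rho'} \arrD \config{N''}{\rho''}\), where the latter is not \(\bot\), then \(\config{N}{\rho} \arrD \config{\hat N}{\hat\rho}\) with the results again related, possibly via an extended \(\pi\). We argue by case analysis on the beta rule, using the unique decomposition into evaluation context and redex; the decomposition of \(N = \pi\cdot N'\) mirrors that of \(N'\) because renaming labels preserves the shape of contexts and redexes.
\begin{enumerate}
\item The \((\mam)\) and \((\mathrm{ret})\) cases are immediate: labels only move around, so reachability only shrinks (as in Lemma~\ref{lem:reduction_and_reach}(4)).
\item In the \((\mathrm{shift})\) case, the original picks the least fresh \(l' \notin\dom{\rho'}\) and the replay picks the least fresh \(l \notin \dom{\rho}\). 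These may differ, so we extend \(\pi\) by the transposition of \(\pi(l')\) with \(l\) (adjusted so \(\pi\) stays a bijection with finite support). Since both labels are fresh, this does not touch \(S\) or the image of \(S'\). Then add \(l'\) to \(S'\).
\item In the \((\mathrm{throw})\) case, \(l \in \lb{N'}\) and \(\rho'(l)\) is valid, so \(l \in S'\) and the invariant gives \(\rho(\pi(l)) = \pi\cdot\rho'(l)\). The replay therefore throws to the same continuation up to \(\pi\), and both sides set the entry to \(\nil\).
\item The \((\mathrm{fail})\) case cannot occur, since the original sequence ends in a return.
\end{enumerate}

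The replay may have more valid entries than the original, but it never needs to take a step the original does not, so determinism is not an issue. At the end, \(\return{V}\) corresponds to \(\return{(\pi\cdot V)}\), with \(\pi|_S = \mathrm{id}\) maintained throughout, and \(\sigma\) is the final replayed store.

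The main obstacle is keeping the invariant correct under throw when the throw rule's store entry contains labels that were not previously in \(S'\). We handle this by enlarging \(S'\) to \(\reach{N''}{\rho''}\) and checking that these labels were already reachable through the entry \(\rho'(l)\), which was valid and hence covered by the invariant, including its nested labels since reachability is closed under store contents. A second delicate point is well-formedness: we need \(\pi\cdot\) to commute with substitution and plugging, and freshness in the replay to correspond to freshness in the original. For this we use \(\WF_{\del}\) from Proposition~\ref{app:prop:del-wellformedness} to ensure every label appearing in the terms lies in the store domain.
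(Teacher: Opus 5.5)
Your argument is essentially the paper's. You replay the given sequence from \(\config{M}{\theta}\), keep \(N = \pi\cdot N'\), update \(\pi\) at each \((\mathrm{shift})\) by the transposition sending the original's fresh label to the replay's, and replay each \((\mathrm{throw})\) using agreement on reachable valid entries.

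One piece of bookkeeping fails as written, however. You initialize \(S' = S\) and only ever add fresh labels to it. But \(S\) need not be closed under the contents of valid entries. The shift case needs \(\pi_{\mathrm{new}}\cdot\rho'(m) = \pi\cdot\rho'(m)\) for every \(m \in S'\) with \(\rho'(m)\neq\nil\), i.e.\ that the transposition fixes \(\pi(\lb{\rho'(m)})\); you only check that it fixes \(S\) and \(\pi(S')\).

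Here is a concrete instance. Let \(a < m < b\) be the three least labels, and put \(K \defeq \abs{y}{\dollar{\return{y}}{x}{\throw{a}{x}}}\), \(\theta' \defeq \{a \mapsto \nil,\ m \mapsto K\}\), \(\theta \defeq \{m \mapsto K\}\), \(S \defeq \{m\}\), and \(M \defeq \dollar{\shift{k}{\return{\unit}}}{x}{\return{x}}\). All hypotheses of the lemma hold; note that it does not assume \(\WF_{\del}(\config{M}{\theta})\). The original shift allocates \(b\), while the replay allocates \(a\). So \(\pi_{\mathrm{new}}\) swaps \(a\) and \(b\), and your invariant at \(m\) would require \(K = \pi_{\mathrm{new}}\cdot K\), which is false. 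This is harmless for the conclusion, since \(m\) is unreachable, but it means the induction as stated does not go through.

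The repair is immediate within your own formulation, because your relation is existential in \(S'\). Always take \(S' \defeq \mathrm{Reach}(N',\rho')\), starting from \(\mathrm{Reach}(M,\theta') \subseteq S\). Well-formedness of the original run makes this set contain every label of every valid entry of its members. Hence the transposition leaves \(\pi\) unchanged on everything the invariant mentions. This is exactly the paper's choice, via its facts (a) and (b). Alternatively, adding the hypothesis \(\WF_{\del}(\config{M}{\theta})\), which holds where the lemma is applied, also rescues your version.

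Relatedly, the throw case is not the obstacle you describe. By Lemma~\ref{lem:reduction_and_reach}(4), \(\mathrm{Reach}(N'',\rho'') \subseteq \mathrm{Reach}(N',\rho')\), so nothing needs enlarging.
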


\begin{proof}
  Let
  \[
    \config{M}{\theta'} = C_0 \arrD C_1 \arrD \cdots \arrD C_n =
    \config{\return{V}}{\sigma'}
  \]
  be the given reduction sequence, where \(C_i = \config{M_i}{\theta_i}\).
  We say that a label is \emph{reachable in \(C_i\)} if it belongs to
  \(\reach{M_i}{\theta_i}\).
  Every \(C_i\) is well-formed by
  Proposition~\ref{app:prop:del-wellformedness}, and thus we have the following
  facts.
  \begin{enumerate}[label=(\alph*)]
  \item Every label occurring in \(M_i\), or in \(\theta_i(l)\) for some label
    \(l\) reachable in \(C_i\), is reachable in \(C_i\).
  \item Every label reachable in \(C_{i+1}\) is either reachable in \(C_i\) or
    generated by the step \(C_i \arrD C_{i+1}\)
    (by Lemma~\ref{lem:reduction_and_reach}(4)).
  \item \(S \subseteq \dom{\theta'} \subseteq \dom{\theta_i}\), since a
    reduction step never removes an entry from the store.
  \end{enumerate}

  By induction on \(i \leq n\), we construct a reduction sequence
  \[
    \config{M}{\theta} = \bar{C}_0 \arrD \bar{C}_1 \arrD \cdots \arrD \bar{C}_i,
  \]
  where \(\bar{C}_i = \config{\bar{M}_i}{\bar{\theta}_i}\), together with a
  finite permutation \(\pi_i\) with \(\pi_i|_{S} = \mathrm{id}\) such that
  \begin{enumerate}
  \item \(\bar{M}_i = \pi_i \cdot M_i\);
  \item for every label \(l\) reachable in \(C_i\), we have
    \(\pi_i(l) \in \dom{\bar{\theta}_i}\), and if \(\theta_i(l) \neq \nil\),
    then \(\bar{\theta}_i(\pi_i(l)) = \pi_i \cdot \theta_i(l)\).
  \end{enumerate}
  In other words, \(\bar{C}_i\) agrees with \(C_i\) on its reachable part, up to
  the renaming \(\pi_i\).
  The lemma then follows by taking \(\pi \defeq \pi_n\) and
  \(\sigma \defeq \bar{\theta}_n\), since
  \(\bar{M}_n = \pi_n \cdot \return{V} = \return{(\pi_n \cdot V)}\).

  For \(i = 0\), we take \(\pi_0 \defeq \mathrm{id}\), so that (1) trivially
  holds.
  For (2), every label \(l\) reachable in \(C_0\) belongs to
  \(S \subseteq \dom{\theta}\) by assumption, and if \(\theta'(l) \neq \nil\),
  then \(\theta \succeq_S \theta'\) implies \(\theta(l) = \theta'(l)\).

  For the inductive step, we proceed by case analysis on the step
  \(C_i \arrD C_{i+1}\).
  In each case, we display only the redex of \(M_i\), leaving the surrounding
  evaluation context implicit: by (1), the redex and the evaluation context of
  \(\bar{M}_i\) are the images of those of \(M_i\) under \(\pi_i\).

  \noindent\textbf{Cases} \((\mam)\) and \((\mathrm{ret})\):
  \begin{caseindent}
    The step is determined by the shape of \(M_i\) and does not change the
    store.
    Since \(\bar{M}_i = \pi_i \cdot M_i\), the same rule applies to
    \(\bar{C}_i\), yielding \(\bar{M}_{i+1} = \pi_i \cdot M_{i+1}\) with the
    store unchanged.
    We take \(\pi_{i+1} \defeq \pi_i\).
    By (b), every label reachable in \(C_{i+1}\) is reachable in \(C_i\), and
    the stores are unchanged; thus (2) for \(i\) gives (2) for \(i+1\).
  \end{caseindent}

  \noindent\textbf{Case} \((\mathrm{shift})\):
  \begin{caseindent}
    Suppose that the step reduces the redex as
    \[
      \config{\dollar{\plug{H}{\shift{k}{P}}}{x}{N}}{\theta_i} \arrD
      \config{P[l/k]}{\theta_i[l := K]},
      \qquad K \defeq \abs{y}{\dollar{\plug{H}{\return{y}}}{x}{N}},
    \]
    where \(l\) is a fresh label.
    By (1), the same rule applies to \(\bar{C}_i\), generating a fresh label
    \(l'\) and storing \(\pi_i \cdot K\) under it.
    Let \(\tau\) be the transposition that swaps \(\pi_i(l)\) and \(l'\) and
    define
    \[
      \pi_{i+1} \defeq \tau \circ \pi_i,
    \]
    so that \(\pi_{i+1}(l) = \tau(\pi_i(l)) = l'\).

    We check that \(\pi_{i+1}\) satisfies the required conditions.
    Since \(\tau\) moves only \(\pi_i(l)\) and \(l'\), it suffices to show that
    these two labels do not interfere with the correspondence established so
    far, namely that
    \begin{enumerate}[label=(\roman*)]
    \item neither of them belongs to \(S\);
    \item neither of them is \(\pi_i(m)\) for a label \(m\) reachable in
      \(C_i\).
    \end{enumerate}
    Consider \(\pi_i(l)\) first.
    Since \(l \notin \dom{\theta_i} \supseteq S\) by (c) and \(\pi_i\) is a
    bijection acting on \(S\) identically, we have \(\pi_i(l) \notin S\).
    Since a label \(m\) reachable in \(C_i\) belongs to \(\dom{\theta_i}\), we
    have \(m \neq l\), and hence \(\pi_i(m) \neq \pi_i(l)\) by injectivity.
    Consider \(l'\) next.
    We have \(l' \notin \dom{\bar{\theta}_i}\), whereas
    \(S \subseteq \dom{\bar{\theta}_i}\) by (c) and
    \(\pi_i(m) \in \dom{\bar{\theta}_i}\) by (2).

    By (i), \(\tau\) acts on \(S\) identically, and hence
    \(\pi_{i+1}|_{S} = \mathrm{id}\).
    By (ii), \(\tau\) fixes \(\pi_i(m)\) for every label \(m\) reachable in
    \(C_i\), and hence \(\pi_{i+1}(m) = \pi_i(m)\).
    By (a), every label occurring in \(M_i\), or in \(\theta_i(m)\) for such
    \(m\), is reachable in \(C_i\); therefore
    \(\pi_{i+1} \cdot M_i = \pi_i \cdot M_i\) and
    \(\pi_{i+1} \cdot \theta_i(m) = \pi_i \cdot \theta_i(m)\).
    In particular, \(\pi_{i+1} \cdot P = \pi_i \cdot P\) and
    \(\pi_{i+1} \cdot K = \pi_i \cdot K\), since every label occurring in
    \(P\) or in \(K\) occurs in \(M_i\).

    We now verify (1) and (2) for \(i + 1\).
    The step in \(\bar{C}_i\) replaces the redex by
    \((\pi_i \cdot P)[l'/k]\) and sets
    \(\bar{\theta}_{i+1} = \bar{\theta}_i[l' := \pi_i \cdot K]\).
    For (1), we have
    \[
      (\pi_i \cdot P)[l'/k]
      = (\pi_{i+1} \cdot P)[\pi_{i+1}(l)/k]
      = \pi_{i+1} \cdot (P[l/k]).
    \]
    For (2), let \(m\) be a label reachable in \(C_{i+1}\).
    By (b), either \(m = l\) or \(m\) is reachable in \(C_i\).
    If \(m = l\), then \(\pi_{i+1}(l) = l' \in \dom{\bar{\theta}_{i+1}}\) and
    \(\bar{\theta}_{i+1}(l') = \pi_i \cdot K = \pi_{i+1} \cdot
    \theta_{i+1}(l)\).
    Otherwise, \(m \neq l\) and \(\pi_{i+1}(m) = \pi_i(m) \neq l'\), so the
    entries of \(m\) in \(\theta_{i+1}\) and of \(\pi_{i+1}(m)\) in
    \(\bar{\theta}_{i+1}\) are the same as those in \(\theta_i\) and
    \(\bar{\theta}_i\), respectively.
    Thus, (2) for \(i\) gives the claim.
  \end{caseindent}

  \noindent\textbf{Case} \((\mathrm{throw})\):
  \begin{caseindent}
    Suppose that the step reduces the redex as
    \[
      \config{\throw{l}{W}}{\theta_i} \arrD
      \config{\dollar{\plug{H}{\return{W}}}{x}{N}}{\theta_i[l := \nil]},
    \]
    where \(\theta_i(l) = \abs{y}{\dollar{\plug{H}{\return{y}}}{x}{N}}\).
    Since \(l\) occurs in \(M_i\), it is reachable in \(C_i\) by (a), and (2)
    gives \(\bar{\theta}_i(\pi_i(l)) = \pi_i \cdot \theta_i(l)\).
    Hence, the same rule applies to \(\bar{C}_i\), yielding
    \(\bar{M}_{i+1} = \pi_i \cdot M_{i+1}\) and
    \(\bar{\theta}_{i+1} = \bar{\theta}_i[\pi_i(l) := \nil]\).
    We take \(\pi_{i+1} \defeq \pi_i\).
    (1) is trivial.
    For (2), let \(m\) be a label reachable in \(C_{i+1}\).
    By (b), \(m\) is reachable in \(C_i\), so
    \(\pi_i(m) \in \dom{\bar{\theta}_i} = \dom{\bar{\theta}_{i+1}}\).
    If \(m = l\), there is nothing more to check, since
    \(\theta_{i+1}(l) = \nil\).
    Otherwise, \(\pi_i(m) \neq \pi_i(l)\), so the entries of \(m\) and of
    \(\pi_i(m)\) are not changed by the step, and (2) for \(i\) gives the claim.
  \end{caseindent}

  \noindent\textbf{Case} \((\mathrm{fail})\):
  \begin{caseindent}
    This case is impossible.
  \end{caseindent}
  This concludes the proof.
\end{proof}

We also restate Lemma~\ref{lem:reset-free-decomp} of the main text.

\begin{lemma}\label{app:lem:reset-free-decomp}
  If \(\config{\seq{x}{M}{N}}{\theta} \arrD^* \config{\return{V}}{\theta''}\),
  then there exist a value \(V'\) and a store \(\theta'\) such that
  \(\config{M}{\theta} \arrD^* \config{\return{V'}}{\theta'}\) and
  \(\config{N[V'/x]}{\theta'} \arrD^* \config{\return{V}}{\theta''}\).
\end{lemma}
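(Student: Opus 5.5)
We prove the statement by induction on the length \(n\) of the reduction sequence \(\config{\seq{x}{M}{N}}{\theta} = C_0 \arrD C_1 \arrD \cdots \arrD C_n = \config{\return{V}}{\theta''}\). The central observation is an invariant: as long as \(M\) has not yet become a returner, every configuration \(C_i\) has the form \(\config{\seq{x}{M_i}{N}}{\theta_i}\) with \(\config{M}{\theta} \arrD^* \config{M_i}{\theta_i}\). Once some \(M_i\) is \(\return{V'}\), the next step is forced to be \((F)\) and yields \(N[V'/x]\) with the store \(\theta_i\) unchanged. We then take \(\theta' \defeq \theta_i\), and the remaining suffix of the sequence is exactly \(\config{N[V'/x]}{\theta'} \arrD^* \config{\return{V}}{\theta''}\).

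To establish the invariant we analyse a single step \(\config{\seq{x}{M_i}{N}}{\theta_i} \arrD C_{i+1}\) in which \(M_i\) is not a returner. By the unique decomposition into evaluation context and redex, the evaluation context of \(\seq{x}{M_i}{N}\) has the shape \(\seq{x}{\context{C}}{N}\), where \(\context{C}[R] \equiv M_i\) and \(R\) is the redex. The redex cannot be the whole \(\seq{x}{M_i}{N}\), because the only rule with a let at the head is \((F)\), and that rule needs a returner.

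We then check that every beta rule applied to \(R\) inside \(\context{C}\) acts exactly as it would inside \(\seq{x}{\context{C}}{N}\). The \((\mam)\), \((\mathrm{ret})\) and \((\mathrm{throw})\) rules are local to \(R\) and the store. For \((\mathrm{shift})\), the redex is \(\dollar{\plug{H}{\shift{k}{P}}}{y}{N_0}\) and must lie entirely inside \(M_i\): the outer let frame is not a dollar frame, so a shift0 can only capture up to a dollar that sits inside \(M_i\). Consequently \(\config{M_i}{\theta_i} \arrD \config{M_{i+1}}{\theta_{i+1}}\) and \(C_{i+1} = \config{\seq{x}{M_{i+1}}{N}}{\theta_{i+1}}\). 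Fresh labels are chosen from the store alone (the least label not in \(\dom{\theta_i}\)), so the choice agrees in both runs.

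It remains to exclude the \((\mathrm{fail})\) rule. If it fired at some step, the configuration would become \(\bot\), which is irreducible, so the sequence could not end in \(\config{\return{V}}{\theta''}\). The invariant must also be reached in finitely many steps: because the final term is a returner rather than a let, \(M_i\) has to become a returner at some index \(i < n\).

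The main obstacle is making precise that a shift0 redex never straddles the let frame, i.e. that the decomposition of \(\seq{x}{M_i}{N}\) always puts the let frame outermost in the evaluation context. This follows from the grammar of evaluation contexts together with the fact that the let frame is pure. Once that is in place, the remaining steps are routine bookkeeping.
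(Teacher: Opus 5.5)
Your proposal is correct and follows essentially the same route as the paper: induction along the reduction sequence, using that the let frame \(\seq{x}{\hole}{N}\) contains no dollar, so every \((\mathrm{shift})\) redex lies inside \(M\), and that \((F)\) is the only rule that can consume the frame. Your extra bookkeeping---excluding \((\mathrm{fail})\) because \(\bot\) is irreducible, and noting that fresh labels are determined by the store alone---makes explicit details that the paper's proof leaves implicit.
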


\begin{proof}
  By induction on the length of the reduction sequence.
  Since the frame \(\seq{x}{\hole}{N}\) contains no dollar frame, redexes for
  the \((\mathrm{shift})\) rule can occur only within \(M\).
  The frame is preserved until \(M\) becomes \(\return{V'}\), as the only rule
  that can consume the frame is \((F)\).
\end{proof}

We now prove Theorem~\ref{thm:reftoeff_nonexist}.

\begin{proof}[Proof of Theorem~\ref{thm:reftoeff_nonexist}]
  Let
  \( \Omega \defeq
  \app{(\abs{x}{\app{\force{x}}{x}})}{\thunk{\abs{x}{\app{\force{x}}{x}}}} \)
  and
  \[
    \Delta_{\mathrm{A}, \mathrm{B}}(x, y) \defeq
    \begin{pmatrix*}[l]
      \mathbf{case}\;(x, y)\;\mathbf{of}\; \{\\
      \quad (\inj{A}{\underscore}, \inj{B}{\underscore}) \mapsto \return{\unit} \\
      \quad \underscore \mapsto \Omega \}
    \end{pmatrix*}.
  \]
  Consider the \(\reflang\) program
  \[
    M_{\neq} \defeq
    \begin{pmatrix*}[l]
      \letin{r}{\create{\inj{A}{\unit}}}
      {\\\letin{i}{\get{r}}
      {\\\letin{\underscore}{\set{r}{\inj{B}{\unit}}}
      {\\\letin{j}{\get{r}}
      {\\\Delta_{\mathrm{A}, \mathrm{B}}(i, j)}}}}
    \end{pmatrix*}.
  \]
  In \(\reflang\), the first \(\mathbf{get}\) returns \(\inj{A}{\unit}\) and the
  second \(\mathbf{get}\) returns \(\inj{B}{\unit}\), and therefore
  \(\eval{\reflang}{M_{\neq}}\) is defined.

  Suppose that there exists a weak macro-translation \(\mtempty\) from
  \(\reflang\) to \(\del\), and let \(\ccreate\), \(\cget\), \(\cset\) be the
  syntactic abstractions corresponding to \textbf{create}, \textbf{get}, and
  \textbf{set}, respectively.
  Then, since \(\mtempty\) acts homomorphically on \(\mam\) constructors, the
  translation \(\mt{M_{\neq}}\) has the shape
  \[
    \begin{pmatrix*}[l]
      \letin{r}{\ccreate\left[\inj{A}{\unit}\right]}
      {\\\letin{i}{\cget\left[r\right]}
      {\\\letin{\underscore}{\cset\left[r, \inj{B}{\unit}\right]}
      {\\\letin{j}{\cget[r]}
      {\\\Delta_{\mathrm{A}, \mathrm{B}}(i, j)}}}}
    \end{pmatrix*},
  \]
  and contains no continuation label.
  Since \(\eval{\reflang}{M_{\neq}}\) is defined, so is
  \(\eval{\del}{\mt{M_{\neq}}}\):
  there exists a store \(\theta\) such that
  \[
    \config{\mt{M_{\neq}}}{\emptyset} \arrD^* \config{\return{\unit}}{\theta}.
  \]

  Applying Lemma~\ref{app:lem:reset-free-decomp} repeatedly to this reduction
  sequence (note that there is no dollar frame enclosing \(\mt{M_{\neq}}\))
  yields values \(X, V_1, W, V_2\) and stores
  \(\theta_0, \theta_1, \theta_2, \theta_3\) such that
  \begin{align}
    \config{\ccreate\left[\inj{A}{\unit}\right]}{\emptyset} &\arrD^* \config{\return{X}}{\theta_0}, \label{eq:appC:1}\\
    \config{\cget\left[X\right]}{\theta_0} &\arrD^* \config{\return{V_1}}{\theta_1}, \label{eq:appC:2}\\
    \config{\cset\left[X, \inj{B}{\unit}\right]}{\theta_1} &\arrD^* \config{\return{W}}{\theta_2}, \label{eq:appC:3}\\
    \config{\cget[X]}{\theta_2} &\arrD^* \config{\return{V_2}}{\theta_3}, \label{eq:appC:4}
  \end{align}
  and moreover the evaluation of \(\Delta_{\mathrm{A}, \mathrm{B}}(V_1, V_2)\)
  terminates successfully.
  Thus, \(V_1 \equiv \inj{A}{V_1'}\) and \(V_2 \equiv \inj{B}{V_2'}\) for some
  values \(V_1'\) and \(V_2'\).

  Note that every configuration appearing above is well-formed:
  \(\config{\mt{M_{\neq}}}{\emptyset}\) is well-formed, well-formedness is
  preserved by reduction (Proposition~\ref{app:prop:del-wellformedness}), and
  a
  subterm of the computation of a well-formed configuration again forms a
  well-formed configuration with the same store.

  Let
  \(S \defeq \reach{\cget\left[X\right]}{\theta_0} = \reach{X}{\theta_0}\) and
  \(S' \defeq \reach{\cset\left[X, \inj{B}{\unit}\right]}{\theta_1} =
  \reach{X}{\theta_1}\) (note that the syntactic abstractions \(\cget\) and
  \(\cset\) contain no continuation label).
  By applying Lemma~\ref{lem:reduction_and_reach} to~\eqref{eq:appC:2}, we
  obtain \(\theta_0 \succeq_S \theta_1\) and \(S' \subseteq S\).
  By applying it to~\eqref{eq:appC:3}, we obtain
  \(\theta_1 \succeq_{S'} \theta_2\), \(\theta_2(l) = \theta_1(l)\) for every
  \(l \in \dom{\theta_1} \setminus S'\), and
  \(\reach{X}{\theta_2} \subseteq S'\).
  The first two imply \(\theta_1 \succeq_S \theta_2\), since
  \(S \subseteq \dom{\theta_1}\).
  Hence, we have
  \[
    \theta_0 \succeq_S \theta_1 \succeq_S \theta_2
    \quad\text{and}\quad
    \reach{\cget\left[X\right]}{\theta_2} = \reach{X}{\theta_2} \subseteq S.
  \]
  Thus, by applying Lemma~\ref{lem:replay} to~\eqref{eq:appC:2}
  and~\eqref{eq:appC:4}, we obtain
  \[
    \config{\cget\left[X\right]}{\theta_0} \arrD^* \config{\pi \cdot \return{V_2}}{\sigma_0'},
  \]
  for some finite permutation \(\pi\) and store \(\sigma_0'\).
  The determinism of the reduction implies
  \[
    V_1 = \pi \cdot V_2.
  \]
  Since finite permutations act homomorphically on constructors other than
  continuation labels, we obtain
  \begin{align*}
    V_1 &= \pi \cdot V_2 \\
        &= \pi \cdot (\inj{B}{V_2'}) \\
        &= \inj{B}{(\pi \cdot V_2')},
  \end{align*}
  which contradicts \(V_1 \equiv \inj{A}{V_1'}\).
  Therefore, there is no weak macro-translation from \(\reflang\) to \(\del\).
\end{proof}

\subsection{Proof of Theorem~\ref{thm:reftoac}}
\label{subsec:reftoac}

As in Section~\ref{sec:DELoneToAC:simulation}, we use label maps: here a label
map \(\eta\) is a partial function
\(\eta : \syntacticset{L}_{\mathbf R} \rightharpoonup \syntacticset{L}_{\ac}\)
that sends a reference cell \(l\) to a coroutine label \(\eta(l)\).
We parameterize and extend the translation \(M \mapsto \mt{M}\) with \(\eta\) by
\[
  \mte{l}{\eta} \defeq \eta(l),
\]
provided \(\eta(l)\) is defined.
We call this extension a \emph{runtime translation} and write it as
\(\mtempty_{\eta}\).

Using label maps and the runtime translation, we define invariant conditions and
the simulation relation.

\begin{definition}[invariant conditions]\label{app:def:reftoac:IC}
  For a \(\reflang\) store \(\theta\), an \(\ac\) store \(\tau\), and a label
  map \(\eta\), we define \(\Inv(\theta, \tau, \eta)\) to hold if and only if
  the following conditions are satisfied.
  \begin{enumerate}
  \item[(IC1)] \(\eta\) is injective;
  \item[(IC2)] \(\eta(\dom{\theta}) \subseteq \dom{\tau}\);
  \item[(IC3)] for any \(l \in \dom{\theta}\), if \(\theta(l) = V\), then
    \[
      \tau(\eta(l)) = \thunk{\abs{x}{\letin{y}{\return{x}}{\app{\app{\force{\var{loop}}}{\mte{V}{\eta}}}{y}}}}.
    \]
  \end{enumerate}
\end{definition}

\begin{definition}[simulation relation]\label{app:def:reftoac:simulation}
  For a \(\reflang\) configuration \(C\) and an \(\ac\) configuration \(D\), we
  write \(\simu{C}{D}\) if and only if there exist a \(\reflang\) computation
  \(M\), an \(\ac\) computation \(N\), a \(\reflang\) store \(\theta\), an
  \(\ac\) store \(\tau\), and a label map \(\eta\) such that
  \begin{enumerate}
  \item \(C = \config{M}{\theta}\) and \(D = \config{N}{\tau}\);
  \item \(N \equiv \mte{M}{\eta}\);
  \item \(\Inv(\theta, \tau, \eta)\).
  \end{enumerate}
\end{definition}

We prove the following property of $\var{loop}$, where \(\var{loop}\) and
\(\var{th}\) are the thunks defined in Figure~\ref{fig:deltoac}.
\begin{lemma}\label{lem:reftoac_loop}
  For any $\ac$ store $\theta$ and value $V$,
  \[
    \redACplus{\config{\app{\force{\var{loop}}}{V}}{\theta}}{\config{\app{\app{\force{\var{th}}}{\var{loop}}}{V}}{\theta}}.
  \]
\end{lemma}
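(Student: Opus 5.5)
This is a direct calculation: unfold $\var{loop}$ one step and observe that the self-application reproduces $\var{loop}$ itself. Write $\Theta \defeq \thunk{\abs{x}{\app{\force{\var{th}}}{\thunk{\app{\force{\var{x}}}{\var{x}}}}}}$. Then $\var{loop} \equiv \thunk{\app{\force{\Theta}}{\Theta}}$, and no rule below touches the store, so $\theta$ stays unchanged throughout and we may work at the level of $\arrM$ lifted by the $(\mam)$ rule.

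The reduction of $\config{\app{\force{\var{loop}}}{V}}{\theta}$ proceeds inside the evaluation context $\app{\hole}{V}$, which consists of pure frames.
\begin{itemize}
\item One $(U)$ step turns $\force{\var{loop}}$ into $\app{\force{\Theta}}{\Theta}$.
\item A second $(U)$ step, now in the context $\app{\app{\hole}{\Theta}}{V}$, yields $\app{(\abs{x}{\app{\force{\var{th}}}{\thunk{\app{\force{\var{x}}}{\var{x}}}}})}{\Theta}$.
\item A $(\to)$ step substitutes $\Theta$ for $x$, producing $\app{\force{\var{th}}}{\thunk{\app{\force{\Theta}}{\Theta}}}$.
\end{itemize}

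The key observation is syntactic. Since $\var{th}$ is closed, the substitution leaves it untouched, and the thunk $\thunk{\app{\force{\Theta}}{\Theta}}$ is literally $\var{loop}$. So the term obtained is $\app{\app{\force{\var{th}}}{\var{loop}}}{V}$ with store $\theta$, after three steps, which is a nonempty sequence as $\arrAC^{+}$ requires.

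The only point needing care is this identification of the reduct with $\var{loop}$ up to $\alpha$-equivalence, together with the fact that $\var{th}$ is closed. Both follow directly from the definitions in Figure~\ref{fig:deltoac}, so I expect no real obstacle.
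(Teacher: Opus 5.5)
Your proposal is correct and takes the same approach as the paper, whose proof says only ``By straightforward calculation''; you have written that calculation out. It consists of two $(U)$ steps and one $(\to)$ step under the pure frame $\app{\hole}{V}$, with the store untouched, after which the argument $\thunk{\app{\force{\Theta}}{\Theta}}$ is syntactically identical to $\var{loop}$ (and $\var{th}$ is closed, so substitution does not affect it).
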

\begin{proof}
  By straightforward calculation.
\end{proof}

Next, we establish a substitution property for the translation.
\begin{lemma}\label{lem:reftoac_subst}
  Suppose that a $\reflang$ computation $M$ has the free variables
  $x_1, \ldots, x_n$.
  Then for any values $V_1, \ldots, V_n$ and label map $\eta$, the following
  equality holds:
  \[
    \mte{M[V_1/x_1,\ldots, V_n/x_n]}{\eta} \equiv \mte{M}{\eta}[\mte{V_1}{\eta}/x_1, \ldots, \mte{V_n}{\eta}/x_n].
  \]
\end{lemma}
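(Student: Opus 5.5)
We prove the statement by structural induction on the \(\reflang\) computation \(M\), simultaneously with the analogous statement for \(\reflang\) values \(W\):
\[
  \mte{W[V_1/x_1,\ldots,V_n/x_n]}{\eta} \equiv \mte{W}{\eta}[\mte{V_1}{\eta}/x_1,\ldots,\mte{V_n}{\eta}/x_n].
\]
Values and computations are mutually recursive: thunks contain computations, and computations contain values. Write \(\sigma\) for the source substitution \([V_1/x_1,\ldots,V_n/x_n]\) and \(\mte{\sigma}{\eta}\) for the target substitution \([\mte{V_1}{\eta}/x_1,\ldots,\mte{V_n}{\eta}/x_n]\).

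For values, there are three base cases.
\begin{itemize}
\item A variable \(x_i\): both sides are \(\mte{V_i}{\eta}\).
\item A variable not among the \(x_i\), or \(\unit\): both sides are the term itself.
\item A reference cell \(l\): substitution does not affect it, and \(\mte{l}{\eta}=\eta(l)\) is a closed coroutine label, so both sides are \(\eta(l)\). This is the only place where the label map matters.
\end{itemize}
The remaining value cases (pairs, injections, thunks) follow immediately from the induction hypothesis, since the translation is homomorphic on them.

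For computations, every \(\mam\) constructor is translated homomorphically. The argument is the one displayed for application in Lemma~\ref{lem:eff_to_del_subst}: push \(\sigma\) through the constructor, apply the induction hypothesis to the immediate subterms, and pull \(\mte{\sigma}{\eta}\) back out. For binders (\(\mathbf{let}\), \(\lambda\), and the case constructs) we rely on terms being identified up to \(\alpha\)-equivalence. We choose each bound variable fresh for the \(x_i\) and for the free variables of the \(V_i\). The translation of a value introduces no free variables, so the bound variable is also fresh for each \(\mte{V_i}{\eta}\), and the substitution commutes past the binder.

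The three \(\reflang\)-specific constructors are translated as
\[
  \mte{\create{W}}{\eta}=\create{\refcell{\mte{W}{\eta}}},\quad
  \mte{\set{W}{W'}}{\eta}=\resume{\mte{W}{\eta}}{(\inj{Set}{\mte{W'}{\eta}})},\quad
  \mte{\get{W}}{\eta}=\resume{\mte{W}{\eta}}{(\inj{Get}{\unit})}.
\]
Each right-hand side is a fixed syntactic abstraction into which the translated arguments are plugged. The \textbf{set} and \textbf{get} cases then follow from the value induction hypothesis. The \textbf{create} case is the only one needing attention, because \(\refcell{v}\) places its argument under the binder \(\lambda y\) and mentions the free name \(\var{loop}\). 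We must check two things.
\begin{itemize}
\item The auxiliary thunks \(\var{loop}\) and \(\var{th}\) are closed terms containing no labels. Substitution therefore leaves them unchanged, and they cannot capture any \(x_i\).
\item The bound variables \(y\) and \(q'\) of \(\refcell{-}\) can be chosen fresh. This is the paper's standing freshness convention for translations.
\end{itemize}
Given both, \((\refcell{\mte{W}{\eta}})\mte{\sigma}{\eta} \equiv \refcell{\mte{W}{\eta}\mte{\sigma}{\eta}}\), and the induction hypothesis on \(W\) finishes the case. This capture-avoidance check for the abstraction \(\refcell{-}\) is the only point requiring care; everything else is routine bookkeeping.
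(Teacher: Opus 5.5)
Your proposal is correct and takes the same approach as the paper, whose proof is just ``by induction on \(M\)''. Your mutual structural induction over values and computations supplies the details the paper leaves implicit: the reference-cell case, \(\alpha\)-renaming at binders, the closedness of \(\var{loop}\) and \(\var{th}\), and the freshness of \(y\) and \(q'\) inside \(\refcell{-}\).
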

\begin{proof}
  By induction on $M$.
\end{proof}

\begin{lemma}\label{lem:reftoac_beta_sim}
  If $\simu{C}{D}$ and $\redbetaR{C}{C'}$, then there exists an $\ac$
  configuration $D'$ such that $\redACplus{D}{D'}$ and $\simu{C'}{D'}$.
\end{lemma}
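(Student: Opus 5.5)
We prove the lemma by case analysis on the beta-reduction rule applied in \(\redbetaR{C}{C'}\). By the definition of \(\sim\) (Definition~\ref{app:def:reftoac:simulation}) we have \(C=\config{M}{\theta}\), \(D=\config{\mte{M}{\eta}}{\tau}\) and \(\Inv(\theta,\tau,\eta)\).

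\textbf{The \((\mam)\) cases.} Each \(\mam\) redex translates homomorphically to the same kind of redex, so \(D\) performs the corresponding step. For example, \(\seq{x}{\return{V}}{N}\) translates to \(\seq{x}{\return{\mte{V}{\eta}}}{\mte{N}{\eta}}\), which reduces to \(\mte{N}{\eta}[\mte{V}{\eta}/x]\). Lemma~\ref{lem:reftoac_subst} identifies this with \(\mte{N[V/x]}{\eta}\). The store is unchanged, so \(\Inv\) is preserved with the same \(\eta\).

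\textbf{The \((\mathrm{create})\) case.} We have \(\config{\create{V}}{\theta}\) with \(l\) fresh. The target \(\create{\refcell{\mte{V}{\eta}}}\) reduces to \(\return{m}\) for a fresh coroutine label \(m\), with \(\tau(m)=\refcell{\mte{V}{\eta}}\). Unfolding \(\refcell{\cdot}\), this is exactly the thunk required by (IC3), since \(\refcell{v}\) has body \(\letin{q'}{\return{y}}{\app{\force{\var{loop}}}{\app{v}{q'}}}\). Set \(\eta'=\eta[l:=m]\). Freshness of \(m\) gives (IC1) and (IC2). Freshness of \(l\) gives \(\mte{\cdot}{\eta}=\mte{\cdot}{\eta'}\) on old terms, so (IC3) for the old labels is unaffected.

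\textbf{The \((\mathrm{get})\) and \((\mathrm{set})\) cases.} These are the main calculations, and we treat both uniformly. Take \(\get{l}\), which translates to \(\resume{\eta(l)}{(\inj{Get}{\unit})}\). By (IC2) and (IC3), \(\tau(\eta(l))\) is the stored thunk, which is not \(\nil\), so \((\mathrm{resume})\) fires. The configuration becomes \(\labeledc{\eta(l)}{(\ldots)}\) with \(\eta(l)\mapsto\nil\). Beta steps lead to \(\app{\app{\force{\var{loop}}}{\mte{V}{\eta}}}{(\inj{Get}{\unit})}\). Lemma~\ref{lem:reftoac_loop} unfolds \(\var{loop}\) to \(\var{th}\,\var{loop}\), and the \(\mathrm{Get}\) branch gives \(\labeledc{\eta(l)}{(\letin{q'}{\yield{\mte{V}{\eta}}}{\app{\app{\force{\var{loop}}}{\mte{V}{\eta}}}{q'}})}\). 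The \((\mathrm{yield})\) rule then returns \(\return{\mte{V}{\eta}}\) and stores \(\thunk{\abs{y}{\letin{q'}{\return{y}}{\app{\app{\force{\var{loop}}}{\mte{V}{\eta}}}{q'}}}}\) under \(\eta(l)\). This is precisely the (IC3) shape, so we may take \(\eta'=\eta\).

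The \((\mathrm{set})\) case is identical except for two points. The \(\mathrm{Set}\) branch yields \(\unit\). The recursive call is made with the new contents \(\mte{W}{\eta}\), so (IC3) holds for \(\theta[l:=W]\). Injectivity of \(\eta\), i.e.\ (IC1), ensures that no other label's entry is touched.

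\textbf{Expected obstacle.} The main difficulty is the bookkeeping of the \(\var{loop}\) fixed-point unfolding. We must check that after the yield, the stored thunk coincides syntactically with the (IC3) form rather than a \(\beta\)-variant of it. The shape of \(\var{th}\) and Lemma~\ref{lem:reftoac_loop} appear to make it match exactly, but this should be verified, including the \(\var{x}\)/\(\var{y}\) naming up to \(\alpha\)-equivalence.
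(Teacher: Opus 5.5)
Your proposal is correct and follows the paper's proof essentially step for step. Both do a case analysis on the \(\reflang\) beta rule. In the \((\mathrm{create})\) case, \(\eta\) is extended with the fresh coroutine label. In the \((\mathrm{get})\) and \((\mathrm{set})\) cases, \((\mathrm{resume})\), Lemma~\ref{lem:reftoac_loop}, the \(\var{th}\) dispatch and \((\mathrm{yield})\) are chained to re-establish (IC3) with the old or updated contents.

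The obstacle you flag is harmless. The continuation stored by \((\mathrm{yield})\), \(\thunk{\abs{y}{\letin{q'}{\return{y}}{\app{\app{\force{\var{loop}}}{\mte{V}{\eta}}}{\var{q'}}}}}\), is \(\alpha\)-equivalent to the (IC3) thunk, and the paper works with terms modulo renaming of bound variables.
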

\begin{proof}
  We prove this by case analysis on $\redbetaR{C}{C'}$, giving only the cases
  for the rules specific to $\reflang$.
  The other cases follow by routine arguments.

  \noindent \textbf{Case} $(\mathrm{create})$:
  \begin{caseindent}
    Suppose that $C = \config{\create{V}}{\theta}$ and
    $C' = \config{\return{l}}{\theta'}$, where \(\theta' \defeq \theta[l := V]\)
    and $l$ is a fresh label.
    By the definition of $\simu{C}{D}$, there exists $\eta$ such that
    \[
      D = \config{\create{\thunk{\abs{x}{\letin{y}{\return{x}}{\app{\app{\force{\var{loop}}}{\mte{V}{\eta}}}{y}}}}}}{\tau}.
    \]
    The configuration $D$ evaluates to $D' = \config{\return{l'}}{\tau'}$,
    where \(l'\) is a fresh label and
    \[
      \tau' \defeq \tau\left[ l' := \thunk{\abs{x}{\letin{y}{\return{x}}{\app{\app{\force{\var{loop}}}{\mte{V}{\eta}}}{y}}}}\right].
    \]
    Let $\eta'$ be $\eta[l := l']$.
    By Lemma~\ref{lem:reftoac_subst}, we have $\return{l'} \equiv \mte{\return{l}}{\eta'}$.
    Therefore, we conclude that
    \[
      \simu{C'}{D'},
    \]
    witnessed by \(\eta'\).
    Note that \(\mte{V}{\eta} \equiv \mte{V}{\eta'}\) since the fresh label
    \(l\) does not occur in \(V\), and that \(\Inv(\theta', \tau', \eta')\)
    holds.
  \end{caseindent}
 
  \noindent \textbf{Case} $(\mathrm{set})$:
  \begin{caseindent}
      Suppose that $C = \config{\set{l}{V}}{\theta}$ and $C' = \config{\return{\unit}}{\theta[l := V]}$ for some $l \in \dom{\theta}$.
  By the definition of $\simu{C}{D}$, there exists $\eta$ such that
  \[
    D = \config{\resume{(\eta(l))}{\paren{\inj{Set}{\mte{V}{\eta}}}}}{\tau}
  \]
  and
  \[
    \tau(\eta(l)) = \thunk{\abs{x}{\letin{y}{\return{x}}{\app{\app{\force{\var{loop}}}{\mte{\theta(l)}{\eta}}}{y}}}}.
  \]
  Let \(\tau_0 \defeq \tau[\eta(l) := \nil]\).
  By Lemma~\ref{lem:reftoac_loop} and the definition of \(\var{th}\) in
  Figure~\ref{fig:deltoac}, the reduction of $D$ proceeds as follows:
  \begin{align*}
    D &= \config{\resume{(\eta(l))}{\paren{\inj{Set}{\mte{V}{\eta}}}}}{\tau} \\
      &\arrAC \config{\labeledcp{(\eta(l))}{\app{\force{\thunk{\abs{x}{\letin{y}{\return{x}}{\app{\app{\force{\var{loop}}}{\mte{\theta(l)}{\eta}}}{y}}}}}}{\paren{\inj{Set}{\mte{V}{\eta}}}}}}{\tau_0} \\
      &\arrAC^{+} \config{\labeledcp{(\eta(l))}{\app{\app{\app{\force{\var{th}}}{\var{loop}}}{\mte{\theta(l)}{\eta}}}{\paren{\inj{Set}{\mte{V}{\eta}}}}}}{\tau_0} \\
      &\arrAC^{+} \config{
          \labeledcp{(\eta(l))}{{\begin{array}{l}
            \mathbf{case}\;\paren{\inj{Set}{\mte{V}{\eta}}}\;\mathbf{of} \{\\
            \quad(\inj{Set}{\var{v}}) \mapsto \letin{q'}{\yield{\unit}}{\app{\app{\force{\var{loop}}}{\var{v}}}{\var{q'}}} \\
            \quad(\inj{Get}{\underscore}) \mapsto \letin{q'}{\yield{\mte{\theta(l)}{\eta}}}{\app{\app{\force{\var{loop}}}{\mte{\theta(l)}{\eta}}}{\var{q'}}}\}
          \end{array}}}
        }{\tau_0} \\
      &\arrAC \config{\labeledcp{(\eta(l))}{\letin{q'}{\yield{\unit}}{\app{\app{\force{\var{loop}}}{\mte{V}{\eta}}}{\var{q'}}}}}{\tau_0} \\
      &\arrAC \config{\return{\unit}}{\tau[(\eta(l)) := \thunk{\abs{x}{\letin{y}{\return{x}}{\app{\app{\force{\var{loop}}}{\mte{V}{\eta}}}{y}}}}]}
        \end{align*}        
        Therefore, we obtain
        \[
          \simu{C'}{\config{\return{\unit}}{\tau'}},
        \]
        where \(\tau' \defeq \tau[(\eta(l)) := \thunk{\abs{x}{\letin{y}{\return{x}}{\app{\app{\force{\var{loop}}}{\mte{V}{\eta}}}{y}}}}]\), since \(\Inv(\theta[l := V], \tau', \eta)\) holds.
  \end{caseindent}

  \noindent \textbf{Case} $(\mathrm{get})$:
  \begin{caseindent}

  Suppose that $C = \config{\get{l}}{\theta}$ and $C' = \config{\return{V}}{\theta}$, where $l \in \dom{\theta}$ and $\theta(l) = V$.
  By the definition of $\simu{C}{D}$, there exists $\eta$ such that
  \[
    D = \config{\resume{(\eta(l))}{\paren{\inj{Get}{\unit}}}}{\tau}
  \]
  and
  \[
    \tau(\eta(l)) = \thunk{\abs{x}{\letin{y}{\return{x}}{\app{\app{\force{\var{loop}}}{\mte{\theta(l)}{\eta}}}{y}}}}.
  \]
  Let \(\tau_0 \defeq \tau[\eta(l) := \nil]\).
  By Lemma~\ref{lem:reftoac_loop} and the definition of \(\var{th}\) in
  Figure~\ref{fig:deltoac}, the reduction of $D$ proceeds as follows:
    \begin{align*}
    D &= \config{\resume{(\eta(l))}{\paren{\inj{Get}{\unit}}}}{\tau} \\
      &\arrAC \config{\labeledcp{(\eta(l))}{\app{\force{\thunk{\abs{x}{\letin{y}{\return{x}}{\app{\app{\force{\var{loop}}}{\mte{\theta(l)}{\eta}}}{y}}}}}}{\paren{\inj{Get}{\unit}}}}}{\tau_0} \\
      &\arrAC^{+} \config{\labeledcp{(\eta(l))}{\app{\app{\app{\force{\var{th}}}{\var{loop}}}{\mte{\theta(l)}{\eta}}}{\paren{\inj{Get}{\unit}}}}}{\tau_0} \\
      &\arrAC^{+} \config{
          \labeledcp{(\eta(l))}{{\begin{array}{l}
            \mathbf{case}\;\paren{\inj{Get}{\unit}}\;\mathbf{of} \{\\
            \quad(\inj{Set}{\var{v}}) \mapsto \letin{q'}{\yield{\unit}}{\app{\app{\force{\var{loop}}}{\var{v}}}{\var{q'}}} \\
            \quad(\inj{Get}{\underscore}) \mapsto \letin{q'}{\yield{\mte{\theta(l)}{\eta}}}{\app{\app{\force{\var{loop}}}{\mte{\theta(l)}{\eta}}}{\var{q'}}}\}
          \end{array}}}
        }{\tau_0} \\
      &\arrAC \config{\labeledcp{(\eta(l))}{\letin{q'}{\yield{\mte{\theta(l)}{\eta}}}{\app{\app{\force{\var{loop}}}{\mte{\theta(l)}{\eta}}}{\var{q'}}}}}{\tau_0} \\
      &\arrAC \config{\return{\mte{\theta(l)}{\eta}}}{\tau[(\eta(l)) := \thunk{\abs{x}{\letin{y}{\return{x}}{\app{\app{\force{\var{loop}}}{\mte{\theta(l)}{\eta}}}{y}}}}]}
        \end{align*}
  
        Therefore, we obtain
        \[
          \simu{C'}{\config{\return{\mte{\theta(l)}{\eta}}}{\tau'}},
        \]
        where \(\tau' \defeq \tau[(\eta(l)) := \thunk{\abs{x}{\letin{y}{\return{x}}{\app{\app{\force{\var{loop}}}{\mte{\theta(l)}{\eta}}}{y}}}}]\), since \(\Inv(\theta, \tau', \eta)\) holds.
      \end{caseindent}
     
      \end{proof}

We also extend runtime translations to translations on contexts by mapping a hole to a hole, and write $\mte{\context{K}}{\eta}$ for the translation of a context $\context{K}$.

\begin{lemma}\label{lem:reftoac_plug}
  For every evaluation context \(\context{K}\) and computation $M$, we have
  $\mte{\plug{K}{M}}{\eta} \equiv \mte{\context{K}}{\eta}[\mte{M}{\eta}]$.
\end{lemma}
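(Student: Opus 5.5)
We argue by structural induction on the evaluation context \(\context{K}\), following the grammar of \(\reflang\) evaluation contexts. Since the frames of \(\reflang\) are exactly those of \(\mam\) (Figure~\ref{fig:ref_syntax}), \(\context{K}\) is either the hole \(\hole\) or of the form \(\mathcal{F}[\context{K}_0]\), where \(\mathcal{F}\) is one of the pure frames \(\seq{x}{\hole}{N}\), \(\app{\hole}{V}\), or \(\prj{i}{\hole}\).

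In the base case \(\context{K} = \hole\), both sides are \(\mte{M}{\eta}\), because the runtime translation maps a hole to a hole. For the inductive step, the translation \(\mtempty_{\eta}\) acts homomorphically on every \(\mam\) constructor. For instance, in the let case we have
\(\mte{\seq{x}{\plug{K_0}{M}}{N}}{\eta} \equiv \seq{x}{\mte{\plug{K_0}{M}}{\eta}}{\mte{N}{\eta}}\).
By the induction hypothesis this equals \(\seq{x}{\mte{\context{K}_0}{\eta}[\mte{M}{\eta}]}{\mte{N}{\eta}}\), which is exactly \(\mte{\context{K}}{\eta}[\mte{M}{\eta}]\) by the definition of the translation on contexts. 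The application and projection frames are handled identically, using \(\mte{\app{M}{V}}{\eta} \equiv \app{\mte{M}{\eta}}{\mte{V}{\eta}}\) and \(\mte{\prj{i}{M}}{\eta} \equiv \prj{i}{\mte{M}{\eta}}\).

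The only point needing care is the let frame, where a binder could in principle capture a variable. Plugging is not capture-avoiding, so we must check that variables free in \(M\) and bound by \(x\) in \(\context{K}\) are treated the same way on both sides. This holds because the translation of \(\seq{x}{-}{-}\) keeps the same binder \(x\) and introduces no new binders around the hole: the syntactic abstractions for \(\mathbf{create}\), \(\mathbf{get}\), and \(\mathbf{set}\) never surround an evaluation position. Consequently, plugging commutes with translation syntactically, and not merely up to \(\alpha\)-equivalence.
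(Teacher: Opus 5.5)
Your proof is correct and follows the paper's argument: a straightforward induction on \(\context{K}\), using that the runtime translation sends the hole to the hole and is homomorphic on the \(\mam\) frames. The capture worry in your last paragraph is unnecessary, though: in the frame \(\seq{x}{\hole}{N}\) the binder \(x\) scopes only over \(N\), not over the hole, so no evaluation context places the hole under a binder.
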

\begin{proof}
  By straightforward induction on \(\context{K}\).
\end{proof}

\begin{lemma}\label{lem:reftoac_decomp}
  Suppose that $\simu{\config{M}{\theta}}{\config{N}{\tau}}$ holds, and let
  $\eta$ be a label map witnessing it, so that $N \equiv \mte{M}{\eta}$.
  If $M$ is decomposed into an evaluation context $\context{C}$ and a redex
  $M'$, then $N$ is decomposed into the evaluation context
  $\mte{\context{C}}{\eta}$ and the redex $\mte{M'}{\eta}$.
\end{lemma}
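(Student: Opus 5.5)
We argue by induction on the number of frames in the evaluation context \(\context{C}\), exactly as in Lemma~\ref{lem:eff_to_del_decomp} and Lemma~\ref{lem:deltoeff:decomp}. By Lemma~\ref{lem:reftoac_plug}, \(\mte{M}{\eta} \equiv \mte{\plug{C}{M'}}{\eta} \equiv \mte{\context{C}}{\eta}[\mte{M'}{\eta}]\). It therefore remains to show two things: that \(\mte{\context{C}}{\eta}\) is an \(\ac\) evaluation context, and that \(\mte{M'}{\eta}\) occupies the redex position of the target term.

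\textbf{Contexts.} Since \(\reflang\) has exactly the frames of \(\mam\), each frame of \(\context{C}\) is one of
\[
\seq{x}{\hole}{N}, \qquad \app{\hole}{V}, \qquad \prj{i}{\hole}.
\]
The translation is homomorphic on \(\mam\) constructors, so it sends these to
\[
\seq{x}{\hole}{\mte{N}{\eta}}, \qquad \app{\hole}{\mte{V}{\eta}}, \qquad \prj{i}{\hole},
\]
respectively. These are pure frames of \(\ac\), so induction on the frames shows that \(\mte{\context{C}}{\eta}\) is a pure, hence evaluation, \(\ac\) context.

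\textbf{Redexes.} We do a case analysis on the redex \(M'\).

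For the \(\mam\) redexes, translation preserves the head constructor, and we use the value-shape analogue of Lemma~\ref{lem:shape} for \(\reflang\). A pair translates to a pair, an injection to an injection, a thunk to a thunk, \(\return{V}\) to \(\return{\mte{V}{\eta}}\), an abstraction to an abstraction, and a lazy pair to a lazy pair. Consequently each \(\mam\) beta-redex is mapped to a \(\mam\) beta-redex.

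For the \(\reflang\)-specific redexes, \(\create{V}\), \(\set{l}{V}\) and \(\get{l}\) translate to \(\create{\refcell{\mte{V}{\eta}}}\) and \(\resume{\eta(l)}{(\ldots)}\), which are \(\ac\) redexes. The resume redex needs \(\eta(l)\) to be a coroutine label, which holds since \(l \in \dom{\theta} \subseteq \dom{\eta}\).

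\textbf{Main obstacle.} The expected difficulty is that the decomposition be the \emph{unique} one in the target. In particular, \(\mte{M'}{\eta}\) must not itself be \(\return{W}\) sitting under a let frame that would form a larger redex. This cannot happen, because \(M'\) is a redex in the source, so its translation is a redex of the same kind. Uniqueness of decomposition in \(\ac\) then gives the claim.

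The redex may be read loosely: the translation of \(\create{V}\) is a create redex only after \(\refcell{\cdot}\) is substituted. This is immediate, since \(\refcell{\mte{V}{\eta}}\) is already a value.
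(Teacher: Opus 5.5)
Your proposal is correct and follows the paper's own argument: you induct on the frames of $\context{C}$, observe that the translation sends $\mam$ frames to $\mam$ frames and redexes to redexes (with \textbf{create}/\textbf{set}/\textbf{get} going to \textbf{create}/\textbf{resume}), and conclude via Lemma~\ref{lem:reftoac_plug} and uniqueness of decomposition. One small refinement: if ``redex'' is read operationally, the translated $\resume{\eta(l)}{\ldots}$ needs $\eta(l)\in\dom{\tau}$, not merely that $\eta(l)$ is a coroutine label, and this is supplied by the invariant $\eta(\dom{\theta})\subseteq\dom{\tau}$ of the simulation relation.
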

\begin{proof}
  It is easy to check this by induction on the number of frames of $\context{C}$.
  Note that the runtime translation maps frames to frames and redexes to redexes.
\end{proof}

\begin{proposition}[simulation]\label{prop:reftoac_sim}
  If $\simu{C}{D}$ and $\redR{C}{C'}$, then there exists an $\ac$
  configuration $D'$ such that $\redACplus{D}{D'}$ and $\simu{C'}{D'}$.
\end{proposition}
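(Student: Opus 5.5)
The proof follows the same pattern as Propositions~\ref{prop:EFFoneToDELone:sim} and~\ref{prop:deltoeff:sim}: decompose the source step into an evaluation context and a $\beta$-redex, then transport the decomposition through the runtime translation. Since $\redR{C}{C'}$, the configuration $C$ is not in normal form. So $C = \config{M}{\theta}$ with $M \equiv \plug{C}{M_0}$ for an evaluation context $\context{C}$, a redex $M_0$, and $\redbetaRp{M_0}{\theta}{M_0'}{\theta'}$, giving $C' = \config{\plug{C}{M_0'}}{\theta'}$. Unlike $\del$ and $\eff$, there is no $\bot$ case to treat separately, because $\reflang$ has no failure rule.

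From $\simu{C}{D}$ we obtain $\eta$ and $\tau$ with $D = \config{\mte{M}{\eta}}{\tau}$ and $\Inv(\theta,\tau,\eta)$. Lemma~\ref{lem:reftoac_decomp} gives $\mte{M}{\eta} \equiv \mte{\context{C}}{\eta}[\mte{M_0}{\eta}]$, with $\mte{M_0}{\eta}$ in redex position. Since $\simu{\config{M_0}{\theta}}{\config{\mte{M_0}{\eta}}{\tau}}$ holds with the same $\eta$, Lemma~\ref{lem:reftoac_beta_sim} yields
\[
  \redACplus{\config{\mte{M_0}{\eta}}{\tau}}{\config{\mte{M_0'}{\eta'}}{\tau'}}
\]
with $\Inv(\theta',\tau',\eta')$. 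Inspecting its cases, $\eta'$ is either $\eta$ or $\eta[l := l']$ for a fresh $l$ and a fresh $l'$. The $\ac$ reduction rules are closed under evaluation contexts, so this lifts to
\[
  \redACplus{D}{\config{\mte{\context{C}}{\eta}[\mte{M_0'}{\eta'}]}{\tau'}}.
\]

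It then remains to show $\mte{\context{C}}{\eta} \equiv \mte{\context{C}}{\eta'}$. This holds because the only label added to the domain of $\eta'$ is fresh with respect to $\theta$, and therefore does not occur in $\context{C}$. Lemma~\ref{lem:reftoac_plug} then rewrites the target as $\mte{\plug{C}{M_0'}}{\eta'}$, and $\simu{C'}{D'}$ follows, witnessed by $\eta'$.

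The main obstacle is precisely this freshness claim. The definition of $\simu{C}{D}$ for this translation carries no well-formedness clause, so nothing explicitly rules out $\context{C}$ mentioning the label created by $(\mathrm{create})$. I would handle it either by the convention that fresh labels are chosen outside every label occurring in the configuration, or by adding $\WF(C)$, with $\lb{M} \subseteq \dom{\theta}$, to the relation; this is preserved by $\reflang$ reduction by the same argument as Proposition~\ref{app:prop:del-wellformedness}. A related point to check is that (IC3) for the other cells is unaffected by the extension $\eta'$, because $\mte{V}{\eta} \equiv \mte{V}{\eta'}$ for every stored $V$.
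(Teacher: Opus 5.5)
Your proof is the paper's argument. You decompose the source step, transport the decomposition with Lemma~\ref{lem:reftoac_decomp}, simulate the redex by Lemma~\ref{lem:reftoac_beta_sim} (reading off the concrete $\eta'$ from its cases), lift through $\mte{\context{C}}{\eta}$, and rebuild the target using freshness and Lemma~\ref{lem:reftoac_plug}. Your concern about freshness is justified: the paper simply asserts that the new label does not occur in $\context{C}$, but this, like the $(\mathrm{create})$ case of Lemma~\ref{lem:reftoac_beta_sim}, genuinely requires building $\WF_{\reflang}$ (Definition~\ref{app:def:reftoac:WF}, preserved by Proposition~\ref{app:prop:reftoac:WF}) into the relation, and preferably also $\dom{\eta} = \dom{\theta}$ to get injectivity of $\eta[l := l']$, so your proposed repair is the right one.
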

\begin{proof}
  By the definition of $\simu{C}{D}$, there exist $M$, $\theta$, $\tau$, and $\eta$ such that $C = \config{M}{\theta}$, $D = \config{\mte{M}{\eta}}{\tau}$, and \(\Inv(\theta, \tau, \eta)\) holds.
  Since $C$ is not in normal form, $M$ can be decomposed into an evaluation context $\context{C}$ and a redex $M'$.
  By Lemma~\ref{lem:reftoac_decomp}, $\mte{M}{\eta}$ can also be decomposed into $\mte{\context{C}}{\eta}$ and a redex $\mte{M'}{\eta}$, namely, $\mte{M}{\eta} \equiv \mte{\context{C}}{\eta}\left[\mte{M'}{\eta}\right]$.
  
  Now, suppose $\redbetaRp{M'}{\theta}{M''}{\theta'}$.
  By Lemma~\ref{lem:reftoac_beta_sim}, there exist $\eta'$ and $\tau'$ such that $\redACplus{\config{\mte{M'}{\eta}}{\tau}}{\config{\mte{M''}{\eta'}}{\tau'}}$, $\simu{\config{M''}{\theta'}}{\config{\mte{M''}{\eta'}}{\tau'}}$, and \(\Inv(\theta', \tau', \eta')\) holds.
  Moreover, $\mte{\context{C}}{\eta} \equiv \mte{\context{C}}{\eta'}$ holds since the labels in $\dom{\eta'}\setminus\dom{\eta}$ are fresh and do not appear in $C$.
  Hence, we conclude that $\simu{\config{\plug{C}{M''}}{\theta'}}{\config{\mte{\plug{C}{M''}}{\eta'}}{\tau'}}$ by Lemma~\ref{lem:reftoac_plug}.
\end{proof}

\begin{lemma}\label{lem:reftoac_initial}
  For any $\reflang$ computation $M$, $\simu{\config{M}{\emptyset}}{\config{\mt{M}}{\emptyset}}$ holds.
\end{lemma}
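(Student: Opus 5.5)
The plan is to unfold Definition~\ref{app:def:reftoac:simulation} with the empty label map \(\eta_0\) (nowhere defined) and check its three clauses for the configurations \(\config{M}{\emptyset}\) and \(\config{\mt{M}}{\emptyset}\). Clause (1) is immediate by taking \(\theta = \tau = \emptyset\).

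For clause (2), I will show that \(\mte{M}{\eta_0} \equiv \mt{M}\). Since \(M\) is an \(\reflang\) program, it contains no reference-cell labels. The runtime translation \(\mtempty_{\eta}\) differs from the static translation \(\mtempty\) only on labels, where it consults \(\eta\). So a routine structural induction on \(M\) proves the claim. The cases are:
\begin{itemize}
\item the homomorphic \(\mam\) constructors;
\item the three syntactic abstractions for \(\mathbf{create}\), \(\mathbf{set}\), and \(\mathbf{get}\), which introduce no labels themselves and so commute with the induction hypothesis;
\item the label case, which cannot occur.
\end{itemize}
To make this precise I would state the slightly more general fact that \(\mte{E}{\eta} \equiv \mt{E}\) for every term \(E\) with \(\lb{E} = \emptyset\) and every \(\eta\). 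This is proved by induction on \(E\), since both translations are defined clause by clause in the same way apart from labels.

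For clause (3), \(\Inv(\emptyset, \emptyset, \eta_0)\) holds vacuously. (IC1) holds because the empty function is injective. (IC2) holds because \(\eta_0(\dom{\emptyset}) = \emptyset \subseteq \dom{\emptyset}\). (IC3) quantifies over \(l \in \dom{\emptyset}\), which is empty.

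I do not expect a genuine obstacle. The only step needing any care is the identification of \(\mte{M}{\eta_0}\) with \(\mt{M}\), which depends on programs being label-free.
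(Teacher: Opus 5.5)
Your proposal is correct and matches the paper's proof, which is just ``by induction on \(M\)''. That induction is exactly what identifies \(\mte{M}{\eta}\) with \(\mt{M}\) for the empty label map \(\eta\), and \(\Inv(\emptyset,\emptyset,\eta)\) then holds vacuously. Restricting to label-free \(M\) is also the only meaningful reading of the statement, since the translation is undefined on reference cells.
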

\begin{proof}
  By induction on $M$.
\end{proof}

\begin{definition}\label{app:def:reftoac:WF}
  As in Appendix~\ref{app:sec:well-formedness}, for a \(\reflang\) configuration
  \(C = \config{M}{\theta}\), let \(\lb{C}\) be the set of reference cells
  occurring in \(M\) or in \(\theta(l)\) for some \(l \in \dom{\theta}\).
  The configuration \(\config{M}{\theta}\) is \emph{well-formed}, written
  \(\WF_{\reflang}(\config{M}{\theta})\), if and only if
  \[
    \lb{\config{M}{\theta}} \subseteq \dom{\theta}.
  \]
\end{definition}

\begin{proposition}[preservation of \(\reflang\) well-formedness]\label{app:prop:reftoac:WF}
  If \(\WF_{\reflang}(\config{M}{\theta})\) and
  \(\redRp{M}{\theta}{M'}{\theta'}\), then
  \(\WF_{\reflang}(\config{M'}{\theta'})\).
\end{proposition}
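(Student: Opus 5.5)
We argue by case analysis on the \(\reflang\) beta-reduction rule used in the step, mirroring the proofs of Propositions~\ref{app:prop:del-wellformedness} and~\ref{lem:efftodel:eff_well_formed}. The rule \(\redRp{\plug{C}{M_0}}{\theta}{\plug{C}{M_0'}}{\theta'}\) leaves the evaluation context \(\context{C}\) unchanged, and \(\lb{\plug{C}{M_0}} = \lb{\context{C}} \cup \lb{M_0}\). Also, \(\dom{\theta} \subseteq \dom{\theta'}\) in every rule, since no rule removes an entry. So it suffices to show two things: every reference cell occurring in the reduct \(M_0'\) lies in \(\dom{\theta'}\), and every cell occurring in some \(\theta'(l)\) lies in \(\dom{\theta'}\).

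The cases are as follows.

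\textbf{\((\mam)\).} The store is unchanged, and the reduct is built from subterms of the redex, possibly after substituting values that themselves occur in the redex. Hence \(\lb{M_0'} \subseteq \lb{M_0} \subseteq \dom{\theta}\).

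\textbf{\((\mathrm{create})\).} The redex \(\create{V}\) becomes \(\return{l}\) with \(\theta' = \theta[l := V]\) and \(l\) fresh. The only new cell in the computation is \(l\), which lies in \(\dom{\theta'}\). The new store entry \(V\) contains only cells of the redex, and these lie in \(\dom{\theta}\).

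\textbf{\((\mathrm{set})\).} Here \(\set{l}{V}\) becomes \(\return{\unit}\) with \(\theta' = \theta[l := V]\). The domain is unchanged because \(l \in \dom{\theta}\). The new content \(V\) occurs in the redex, so \(\lb{V} \subseteq \dom{\theta}\). Overwriting the old entry can only shrink \(\lb{\theta'}\) relative to \(\lb{\theta} \cup \lb{V}\).

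\textbf{\((\mathrm{get})\).} Here \(\get{l}\) becomes \(\return{V}\) with \(\theta(l) = V\) and the store unchanged. The cells of \(V\) belong to \(\lb{\theta} \subseteq \dom{\theta}\) by the premise.

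No step here is a real obstacle. The only points needing care are checking that \(\lb{\theta'}\) is controlled in the set case, and recording the monotonicity of \(\dom{\theta}\) so that cells in the context stay defined. For the \((\mam)\) case we would cite the routine fact that \(\lb{M[V/x]} \subseteq \lb{M} \cup \lb{V}\).
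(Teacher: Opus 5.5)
Your proposal is correct and follows essentially the same route as the paper: a case analysis on the four $\reflang$ beta rules, with the evaluation context left unchanged and each of the $(\mam)$, $(\mathrm{create})$, $(\mathrm{set})$ and $(\mathrm{get})$ cases handled by the same observations the paper makes. Your explicit remarks on domain monotonicity and on $\lb{M[V/x]} \subseteq \lb{M} \cup \lb{V}$ spell out points the paper leaves implicit.
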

\begin{proof}
  We prove this by case analysis on the \(\reflang\) beta-reduction rule used in
  the step.
  Since an evaluation context is unchanged by the reduction, it suffices to
  check that the reduct and the updated store contain only reference cells
  defined in the updated store.

  In the \(\mam\) cases, the store is unchanged, and every reference cell
  occurring in the reduct already occurs in the redex.
  In the case \((\mathrm{create})\), the reduct \(\return{l}\) contains only the
  fresh reference cell \(l \in \dom{\theta[l := V]}\), and the new entry \(V\)
  occurs in the redex.
  In the case \((\mathrm{set})\), the reduct contains no reference cell, the new
  entry \(V\) occurs in the redex, and \(\dom{\theta[l := V]} = \dom{\theta}\).
  In the case \((\mathrm{get})\), the store is unchanged, and the reduct
  \(\return{\theta(l)}\) contains only reference cells in
  \(\lb{\config{M}{\theta}} \subseteq \dom{\theta}\).
\end{proof}

\begin{lemma}\label{lem:reftoac:diverge}
  Suppose that the evaluation of a \(\reflang\) program $M$ diverges, i.e., the
  reduction sequence of $M$ is infinite.
  Then the evaluation of $\mt{M}$ also diverges.
\end{lemma}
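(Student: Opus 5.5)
The plan mirrors Lemmas~\ref{lem:efftodel:diverge} and~\ref{lem:deltoeff:diverge}, using the simulation for \(\reflang\) and \(\ac\) just established.

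First, Lemma~\ref{lem:reftoac_initial} gives
\(\simu{\config{M}{\emptyset}}{\config{\mt{M}}{\emptyset}}\).

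Next, I would write the infinite source reduction as \(\config{M}{\emptyset} = C_0 \arrR C_1 \arrR C_2 \arrR \cdots\). By induction on \(i\), using Proposition~\ref{prop:reftoac_sim}, I would build \(\ac\) configurations \(D_0 = \config{\mt{M}}{\emptyset}\), \(D_1, D_2, \ldots\) with \(\redACplus{D_i}{D_{i+1}}\) and \(\simu{C_i}{D_i}\).

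The hypotheses of Proposition~\ref{prop:reftoac_sim} hold at every stage, for two reasons. The relation \(\sim\) here has no \(\bot\) clause, since \(\reflang\) has no failure rule. Each \(C_i\) reduces, so it is not in normal form and decomposes into a context and a redex as that proposition requires. Well-formedness (Proposition~\ref{app:prop:reftoac:WF}) is not needed by the simulation proof itself. Its only role is to make sure that a \(\mathbf{get}\) or \(\mathbf{set}\) on a label hits \(\dom{\theta}\), and this is already guaranteed whenever the source step fires.

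Concatenating the nonempty segments \(D_i \arrAC^{+} D_{i+1}\) gives an infinite \(\ac\) reduction sequence from \(\config{\mt{M}}{\emptyset}\). Because \(\arrAC\) is deterministic, the maximal reduction sequence of \(\mt{M}\) is unique, so it is this infinite one, and the evaluation of \(\mt{M}\) diverges.

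The only point needing care is that every simulated step is strict (\(\arrAC^{+}\) rather than \(\arrAC^{*}\)), so that infinitely many source steps yield infinitely many target steps. This strictness is exactly what Proposition~\ref{prop:reftoac_sim} states. In the \(\mam\) cases it holds because each source beta step maps to the same beta step on the translated redex; in the \(\mathbf{create}\), \(\mathbf{get}\) and \(\mathbf{set}\) cases the calculations in Lemma~\ref{lem:reftoac_beta_sim} make at least one step. So I expect no real obstacle beyond checking the induction formally.
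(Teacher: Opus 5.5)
Your proposal is correct and follows the paper's own proof: start from Lemma~\ref{lem:reftoac_initial}, apply Proposition~\ref{prop:reftoac_sim} repeatedly along the infinite source sequence to obtain an infinite \(\ac\) sequence, and use determinism of \(\arrAC\). Your added checks are accurate and spell out details the paper leaves implicit: the relation has no \(\bot\) clause, each simulated segment is strict (\(\arrAC^{+}\)), and well-formedness is not needed for the simulation step.
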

\begin{proof}
  By applying Proposition~\ref{prop:reftoac_sim} repeatedly, we also obtain an infinite reduction sequence of $\mt{M}$.
  Since the reduction of $\ac$ is deterministic, this implies that the
  evaluation of $\mt{M}$ also diverges.
\end{proof}

\begin{lemma}\label{lem:reftoac:stuck}
  Suppose that the evaluation of a \(\reflang\) program $M$ gets stuck: there
  exists a term $M'$ and a store $\theta$ such that
  $\redRclos{\config{M}{\emptyset}}{\config{M'}{\theta}}$,
  $M' \neq \return{V}$ for any value $V$, and there is no rule that
  can reduce $\config{M'}{\theta}$.
  Then, the evaluation of $\mt{M}$ also gets stuck.
\end{lemma}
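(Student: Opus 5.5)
The plan follows the pattern of Lemma~\ref{lem:efftodel:stuck} and Lemma~\ref{lem:deltoeff:stuck}. We first reduce the claim to a one-configuration statement about the simulation relation of Definition~\ref{app:def:reftoac:simulation}. Starting from $\config{M}{\emptyset}$, Lemma~\ref{lem:reftoac_initial} gives $\simu{\config{M}{\emptyset}}{\config{\mt{M}}{\emptyset}}$. Repeated application of Proposition~\ref{prop:reftoac_sim} along the source sequence $\redRclos{\config{M}{\emptyset}}{\config{M'}{\theta}}$ then yields $\tau$ and $\eta$ with $\redACclos{\config{\mt{M}}{\emptyset}}{\config{\mte{M'}{\eta}}{\tau}}$ and $\Inv(\theta,\tau,\eta)$. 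By Proposition~\ref{app:prop:reftoac:WF} we also have $\WF_{\reflang}(\config{M'}{\theta})$. Because $\arrAC$ is deterministic, it then suffices to prove the following: if $\WF_{\reflang}(\config{M'}{\theta})$ and $\Inv(\theta,\tau,\eta)$ hold and $\config{M'}{\theta}$ is stuck, then $\config{\mte{M'}{\eta}}{\tau}$ is stuck as well. No administrative steps arise here, since the translation of $\reflang$ primitives is already a single redex shape.

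We prove this auxiliary statement by induction on the structure of $M'$, carrying the invariant that the translated term is not of the form $\return{W}$ (immediate, since $\mtempty_{\eta}$ preserves outermost constructors).
\begin{itemize}
\item \emph{Matching and force.} These go exactly as in Lemma~\ref{lem:efftodel:stuck}. If $V$ is not a pair, variant, or thunk, then neither is $\mte{V}{\eta}$, since a value is sent to a coroutine label only when it is a reference cell.
\item \emph{Let, application, projection.} Stuckness of the source forces the subcomputation to be stuck and not to be of the shape consumed by the frame. The induction hypothesis and the constructor-preservation invariant then carry over.
\item \emph{$\create{V}$.} This is never stuck, so the case cannot arise.
\item \emph{$\get{V}$ and $\set{V}{W}$.} These are the new cases. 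If $V$ were a reference cell $l$, well-formedness would give $l\in\dom{\theta}$, and (get)/(set) would apply, contradicting stuckness. Hence $V$ is not a label, and so $\mte{V}{\eta}$ is not a coroutine label. Then $\resume{\mte{V}{\eta}}{\ldots}$ matches neither (resume) nor (fail), and since it is not under a labeled frame, (ret) and (yield) are also inapplicable. So the target is stuck.
\end{itemize}

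The main obstacle is ruling out stray $\ac$-specific redexes. The translated term $\mte{M'}{\eta}$ contains no labeled computation $\labeledc{m}{\cdot}$, because the translation never introduces one. Therefore (ret) and (yield) can fire only inside a resumed coroutine, and no such coroutine is present. We check this by noting that the translation of a source term contains only $\mathbf{create}$ and $\mathbf{resume}$ at top level. Once this is settled, every target redex is the image of a source redex, so irreducibility transfers; together with the non-return invariant, this shows the target configuration is stuck.
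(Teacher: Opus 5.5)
Your proposal is correct and follows essentially the paper's proof. Both reduce, via the initial simulation and Proposition~\ref{prop:reftoac_sim}, to the statement that stuckness transfers from $\config{M}{\theta}$ to $\config{\mte{M}{\eta}}{\tau}$ under $\WF_{\reflang}$ and $\Inv$, and then induct on the source computation, with the $\mathbf{set}$/$\mathbf{get}$ case settled by well-formedness exactly as in the paper. One small slip: in the application and projection cases the subcomputation is only guaranteed to be irreducible (it may be $\return{V'}$, which is not stuck), but then the translated frame is immediately irreducible as well, so stating the induction for irreducible configurations repairs it.
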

\begin{proof}
  It suffices to show the following statement:
  \begin{quote}
    Let $M$, $N$, $\theta$, $\tau$, and $\eta$ be such that
    $N \equiv \mte{M}{\eta}$, \(\WF_{\reflang}(\config{M}{\theta})\), and
    \(\Inv(\theta, \tau, \eta)\) hold.
    If $\config{M}{\theta}$ is stuck, then $\config{N}{\tau}$ is also stuck.
  \end{quote}
  We prove it by induction on the structure of $M$, giving only three cases for
  brevity.
  The other cases follow by similar reasoning.

  \noindent \textbf{Case} (product matching):
  \begin{caseindent}
    Suppose \(M \equiv \pcase{V}{x_1}{x_2}{M'}\).
    Since the $\config{M}{\theta}$ is stuck, we know that
    $V \neq \vpair{V_1}{V_2}$ for any values $V_1$ and $V_2$.
    Consequently, $\mte{V}{\eta}$ also cannot be of the form
    $\vpair{W_1}{W_2}$ for any values $W_1$ and $W_2$.
    Therefore, the configuration
    $\config{\mte{\pcase{V}{x_1}{x_2}{M}}{\eta}}{\tau}$ is stuck,
    and this concludes the current case.
  \end{caseindent}

  \noindent \textbf{Case} (set):
  \begin{caseindent}
    Suppose \(M \equiv \set{V}{W}\).
    It follows that $V \neq l$ for any reference cell $l$,
    since if $V = l$ for some $l$, then by
    \(\WF_{\reflang}(\config{\set{V}{W}}{\theta})\), we know that
    $l \in \dom{\theta}$. However, this contradicts the assumption that
    $\config{\set{V}{W}}{\theta}$ is stuck.
    This implies that $\mte{\set{V}{W}}{\eta}$ is also stuck, which
    concludes the current case.
  \end{caseindent}

  \noindent \textbf{Case} (sequencing):
  \begin{caseindent}
    Suppose \(M \equiv \seq{x}{M_1}{M_2}\).
    Since $\config{M}{\theta}$ is stuck, $M_1$
    cannot be of the form $\return{V}$ for any value $V$.
    By the induction hypothesis, we obtain that $N_1$ is stuck and
    that $N_1 \neq \return{W}$ for any value $W$.
    Consequently, the configuration
    $\config{\seq{x}{N_1}{\mte{M_2}{\eta}}}{\tau}$ is also stuck,
    which completes the current case.
  \end{caseindent}
\end{proof}

We now prove Theorem~\ref{thm:reftoac}.

\begin{proof}[Proof of Theorem~\ref{thm:reftoac}]
  We prove that $\eval{\reflang}{M}$ is defined if and only if
  $\eval{\ac}{\mt{M}}$ is defined and show the ``only if'' direction
  first.
  Suppose that
  $\redRplus{\config{M}{\emptyset}}{\config{\return{V}}{\theta}}$ for
  some $\theta$.
  By Lemma~\ref{lem:reftoac_initial}, we obtain $\simu{\config{M}{\emptyset}}{\config{\mt{M}}{\emptyset}}$.
  By repeatedly applying Proposition~\ref{prop:reftoac_sim}, it follows that there exist $\eta$ and $\tau$ such that
  \[
    \redACplus{\config{\mt{M}}{\emptyset}}{\config{\return{\mte{V}{\eta}}}{\tau}}.
  \]
  Therefore, $\eval{\ac}{\mt{M}}$ is defined.

  Next, we prove the contrapositive of the ``if'' direction.
  There are two cases where $\eval{\reflang}{M}$ is undefined: the
  evaluation of \(M\) (1) diverges, or (2) gets stuck.
  In each case, Lemmas~\ref{lem:reftoac:diverge} and
  \ref{lem:reftoac:stuck} imply that the evaluation of $\mt{M}$ diverges or
  gets stuck, respectively.
  Thus, $\eval{\ac}{\mt{M}}$ is undefined, which completes the proof.
  
\end{proof}

%% file: appD-non-deltoref.tex
\section{Supplementary proofs for Section~\ref{sec:nonexistent:deltoref}}
\label{app:sec:nonexistent:deltoref}

In this section, we give the proof of Theorem~\ref{thm:no-del-to-ref}.

\begin{lemma}\label{lem:non-termination-lift}
  If the evaluation of a \(\reflang\) configuration \(\config{M}{\theta}\) diverges, then for every \(\reflang\) evaluation context \(\context{E}\), the evaluation of \(\config{\context{E}[M]}{\theta}\) also diverges.
\end{lemma}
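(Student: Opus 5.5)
The plan is to lift the infinite reduction sequence of \(\config{M}{\theta}\) through the context \(\context{E}\). I use two facts about \(\reflang\). First, reduction is closed under evaluation contexts: if \(\redRp{M_i}{\theta_i}{M_{i+1}}{\theta_{i+1}}\), then \(\redRp{\plug{E}{M_i}}{\theta_i}{\plug{E}{M_{i+1}}}{\theta_{i+1}}\). Second, \(\arrR\) is deterministic, so the maximal reduction sequence from \(\config{\plug{E}{M}}{\theta}\) is unique.

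Let \(\config{M}{\theta} = \config{M_0}{\theta_0} \arrR \config{M_1}{\theta_1} \arrR \cdots\) be the infinite sequence. Each step \(\config{M_i}{\theta_i} \arrR \config{M_{i+1}}{\theta_{i+1}}\) arises, by the single reduction rule of Figure~\ref{fig:ref_beta}, from a decomposition \(M_i \equiv \plug{C}{R}\) and a beta step \(\redbetaRp{R}{\theta_i}{R'}{\theta_{i+1}}\). Evaluation contexts compose: \(\plug{E}{\plug{C}{\hole}}\) is again an evaluation context, which I check by a routine induction on \(\context{E}\), since contexts are built from frames. Hence the same rule yields \(\config{\plug{E}{M_i}}{\theta_i} \arrR \config{\plug{E}{M_{i+1}}}{\theta_{i+1}}\).

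Putting these together, \(\config{\plug{E}{M_0}}{\theta_0} \arrR \config{\plug{E}{M_1}}{\theta_1} \arrR \cdots\) is an infinite reduction sequence. By determinism it is the maximal reduction sequence from \(\config{\plug{E}{M}}{\theta}\), so that evaluation diverges.

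The one point needing care is the composition of evaluation contexts and the uniqueness of decomposition. Uniqueness is inherited from \(\mam\): \(\reflang\) adds no new frames, and \(\mathbf{create}\), \(\mathbf{get}\) and \(\mathbf{set}\) take only values as arguments. Beyond that the argument is a direct induction and presents no real obstacle. Note also that \(\reflang\) has no error state \(\bot\), so no step of the lifted sequence can escape through an error transition.
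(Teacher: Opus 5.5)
Your proof is correct and follows the paper's argument: lift each step of the infinite sequence through the context by contextual closure of \(\arrR\), then use determinism to conclude that the lifted infinite sequence is the maximal one. The only difference is that the paper inducts on \(\context{E}\) one frame at a time, whereas you lift through all of \(\context{E}\) in one step by composing evaluation contexts, which changes nothing of substance.
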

\begin{proof}
  By induction on the structure of \(\context{E}\).
  The base case is trivial.
  Let \(\context{F}\) be the outermost frame of \(\context{E}\) and \(\context{E}'\) be the inner evaluation context.
  By the induction hypothesis, there exists the maximal reduction sequence
  \[
    \config{\context{E}'[M]}{\theta}
    = C_0 \arrR C_1 \arrR C_2 \arrR \cdots.
  \]
  By the definition of the reduction, each step
  \[
    C_i = \config{M_i}{\theta_i} \arrR C_{i+1} = \config{M_{i+1}}{\theta_{i+1}}
  \]
  induces the step
  \[
    \config{\plug{F}{M_i}}{\theta_i} \arrR \config{\plug{F}{M_{i+1}}}{\theta_{i+1}}.
  \]
  Thus, we obtain an infinite reduction sequence
  \[
    \config{\plug{E}{M}}{\theta}  = \config{\plug{F}{\context{E}'[M]}}{\theta} \arrR \config{\plug{F}{M_1}}{\theta_1} \arrR \cdots.
  \]
  Since the reduction is deterministic, the evaluation of \(\config{\plug{E}{M}}{\theta}\) diverges.
\end{proof}

\begin{lemma}\label{lem:non-termination}
  In \(\reflang\), for any evaluation context \(\context{E}\), computation \(M\), and store \(\theta\), the evaluation of \(\config{\plug{E}{\letin{\underscore}{M}{\Omega}}}{\theta}\) does not terminate successfully.
  More precisely, the evaluation diverges or gets stuck at a configuration of the form
  \(\config{\plug{E}{\letin{\underscore}{M_0}{\Omega}}}{\theta_0}\) for some
  computation \(M_0\) and store \(\theta_0\).
\end{lemma}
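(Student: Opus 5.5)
The plan is to follow the evaluation of \(\config{M}{\theta}\) on its own and then lift it into the context \(\context{E}[\letin{\underscore}{\hole}{\Omega}]\). The central observation is that \(\reflang\) has no computational frames beyond the pure frames of \(\mam\). Hence every reduction step of \(\config{\plug{E}{\letin{\underscore}{M'}{\Omega}}}{\theta'}\) with \(M'\) not of the form \(\return{V}\) is a step of \(\config{M'}{\theta'}\) under the evaluation context \(\plug{E}{\letin{\underscore}{\hole}{\Omega}}\). This holds because the decomposition into evaluation context and redex is unique, and the redex must lie inside \(M'\): the frame \(\letin{\underscore}{\hole}{\Omega}\) is consumed only by rule \((F)\), which needs \(M' \equiv \return{V}\). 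Conversely, every step of \(\config{M'}{\theta'}\) lifts to a step of the plugged configuration. This is essentially Lemma~\ref{lem:non-termination-lift} and Lemma~\ref{lem:reset-free-decomp} transferred to \(\reflang\), and it is easier here because there is no analogue of \((\mathrm{shift})\) that could capture an outer context.

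Since \(\arrR\) is deterministic, the evaluation of \(\config{M}{\theta}\) falls into exactly one of three cases.

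\textbf{(i) It diverges.} Lemma~\ref{lem:non-termination-lift}, applied with the evaluation context \(\plug{E}{\letin{\underscore}{\hole}{\Omega}}\), shows the whole configuration diverges.

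\textbf{(ii) It reaches a stuck \(\config{M_0}{\theta_0}\) with \(M_0\) not a return.} By the lifting observation, the whole configuration reduces to \(\config{\plug{E}{\letin{\underscore}{M_0}{\Omega}}}{\theta_0}\). I claim this configuration is irreducible. Any redex would lie inside \(M_0\), because \(M_0\) is not a return and so \((F)\) does not fire on the let frame. But \(M_0\) has no redex. This gives the stuck form stated in the lemma.

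\textbf{(iii) It reaches \(\config{\return{V}}{\theta_0}\).} The whole configuration reduces to \(\config{\plug{E}{\letin{\underscore}{\return{V}}{\Omega}}}{\theta_0}\), then by \((F)\) to \(\config{\plug{E}{\Omega}}{\theta_0}\). The computation \(\Omega\) diverges under any store, since it reduces to itself in a few \(\mam\) steps. Lemma~\ref{lem:non-termination-lift} then shows the whole configuration diverges.

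In every case the evaluation never terminates successfully. Determinism of \(\arrR\) guarantees that these are the only possible behaviours.

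The main technical obstacle is stating the lifting step cleanly: a step of the plugged configuration must come from a step of the inner one, so that the stuck configuration in case (ii) really is stuck. I would handle this by proving, by induction on \(\context{E}\), that redex decomposition of \(\plug{E}{\letin{\underscore}{M'}{\Omega}}\) with \(M'\) non-return factors through \(M'\), using the uniqueness of decomposition noted for \(\mam\).
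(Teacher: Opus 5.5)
Your proof is correct and uses the same two ingredients as the paper's. The first is that a step of \(\config{\plug{E}{\letin{\underscore}{M'}{\Omega}}}{\theta'}\) either happens inside \(M'\) or is the \((F)\) step consuming the let frame (the paper's auxiliary Lemma~\ref{lem:let-destruct}); the second is that \(\plug{E}{\Omega}\) diverges by Lemma~\ref{lem:non-termination-lift}. The only difference is organization: the paper derives a contradiction from an assumed terminating sequence and merely asserts the refined ``diverges or gets stuck at \(\plug{E}{\letin{\underscore}{M_0}{\Omega}}\)'' clause, whereas your forward case split on the evaluation of \(\config{M}{\theta}\) proves that clause explicitly.
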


\begin{proof}
  Let \(\context{L} \defeq \letin{\underscore}{\hole}{\Omega}\).
  Suppose the evaluation of the given configuration terminates successfully.
  Then, there exist a value \(V\) and a store \(\theta'\) such that
  \[
    \config{\plug{E}{\plug{L}{M}}}{\theta} \arrR^{+} \config{\return{V}}{\theta'}.
  \]

  Here, we use the following lemma, which can be easily checked.

  \begin{lemma}\label{lem:let-destruct}
    Let \(\context{C}\) be an evaluation context and \(\tau\) a store, and
    suppose that
    \[
      \config{\plug{C}{\plug{L}{M}}}{\tau}
      \arrR
      \config{P}{\tau'}.
    \]
    Then one of the following holds.

    \begin{enumerate}
    \item The step is performed inside \(M\).
    \item The \textbf{let} frame \(\context{L}\) is eliminated: there exists a value \(W\) such that
      \[
        M \equiv \return{W},
      \]
      and
      \[
        \config{\plug{C}{\plug{L}{\return{W}}}}{\tau}
        \arrR
        \config{\plug{C}{\Omega}}{\tau}.
      \]
    \end{enumerate}
  \end{lemma}

  By this lemma, as long as the \textbf{let} frame \(\context{L}\) has not been eliminated, each reduction step can only occur inside \(M\).
  Since the final configuration no longer contains the \textbf{let} frame, there must be a step where it is eliminated.
  Hence, for some evaluation context \(\context{E}_0\), value \(W\), and store \(\theta_0\), the sequence is of the form
  \[
    \config{\plug{E}{\plug{L}{M}}}{\theta}
    \arrR^*
    \config{\context{E}_0[\plug{L}{\return{W}}]}{\theta_0}
    \arrR
    \config{\context{E}_0[\Omega]}{\theta_0}.
  \]

  However, the evaluation of \(\config{\context{E}_0[\Omega]}{\theta_0}\)
  diverges, and by Lemma~\ref{lem:non-termination-lift}, so does the evaluation
  of \(\config{\context{E}_0[\Omega]}{\theta_0}\).
  Since the reduction is deterministic, this implies that the unique maximal reduction sequence starting from \(\config{\plug{E}{\plug{L}{M}}}{\theta}\) is infinite.
  This contradicts the assumption that the evaluation of \(\config{\plug{E}{\plug{L}{M}}}{\theta}\) terminates successfully.

  Therefore, the evaluation does not terminate successfully.
  More precisely, the evaluation either diverges, or reaches a stuck configuration of the form \(\config{\plug{E}{\letin{\underscore}{M_0}{\Omega}}}{\theta_0}\) for some computation \(M_0\) and store \(\theta_0\).
\end{proof}

\begin{proposition}\label{prop:appD:obs-equiv}
  Let \(\context{L} \defeq \letin{\underscore}{\hole}{\Omega}\).
  For any \(\reflang\) computations \(M\) and \(N\), the thunks \(\thunk{\plug{L}{M}}\) and \(\thunk{\plug{L}{N}}\) are observationally equivalent.
  That is, for every context \(\context{C}\) that expects a value,
  \(\eval{\reflang}{\plug{C}{\thunk{\plug{L}{M}}}}\) is defined if and only if
  \(\eval{\reflang}{\plug{C}{\thunk{\plug{L}{N}}}}\) is defined.
\end{proposition}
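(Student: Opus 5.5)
By symmetry of the statement it suffices to prove the ``only if'' direction. So I assume \(\config{\plug{C}{\thunk{\plug{L}{M}}}}{\emptyset} \arrR^{*} \config{\return{V}}{\theta}\) and aim to build a successful run of \(\plug{C}{\thunk{\plug{L}{N}}}\).

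I define \(\approx\) as the smallest congruence on \(\reflang\) terms satisfying \(\thunk{\plug{L}{M'}} \approx \thunk{\plug{L}{N'}}\) for all computations \(M'\) and \(N'\). I lift it pointwise to stores with equal domains, and then to configurations. For the initial pair we have \(\config{\plug{C}{\thunk{\plug{L}{M}}}}{\emptyset} \approx \config{\plug{C}{\thunk{\plug{L}{N}}}}{\emptyset}\). Two facts about \(\approx\) come first.

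\begin{enumerate}
\item \emph{Substitution.} If \(E \approx E'\) and \(V \approx V'\), then \(E[V/x] \approx E'[V'/x]\). This is proved by induction on the derivation of \(E \approx E'\).
\item \emph{Matching decompositions.} If \(P \approx P'\) and \(P = \plug{E}{R}\) with \(R\) a redex, then \(P' = \plug{E'}{R'}\) with \(\context{E} \approx \context{E}'\) and \(R \approx R'\). The point is that \(\approx\) relates a thunk only to a thunk, and otherwise forces the outermost constructors to agree. So a related term has the same head shape, except inside thunk bodies, which are never in evaluation position.
\end{enumerate}

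The central claim is a lock-step simulation. Suppose \(\config{P}{\theta} \approx \config{P'}{\theta'}\) and \(\config{P}{\theta} \arrR \config{Q}{\sigma}\). Then either \(\config{P'}{\theta'} \arrR \config{Q'}{\sigma'}\) with \(\config{Q}{\sigma} \approx \config{Q'}{\sigma'}\), or the step is the exceptional one described below. I check this by case analysis on the rule.

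\begin{enumerate}
\item \emph{\(\mam\) rules.} Related redexes have related subterms, and the conclusion follows from the substitution fact.
\item \emph{(create).} Both sides pick the least fresh label. Since the domains agree, they pick the same label, and the stored values are related.
\item \emph{(set) and (get).} The label argument is a label on both sides, because \(\approx\) relates a label only to itself. The updated or retrieved values are related because the stores are pointwise related.
\item \emph{(U) on a generating pair.} The only case where the reducts are not related is the rule \((U)\) forcing a thunk \(\thunk{\plug{L}{M'}}\) that is related to \(\thunk{\plug{L}{N'}}\) through a generating pair, rather than through congruence. This is the exceptional step.
\end{enumerate}

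The exceptional step would leave the source in \(\config{\plug{E}{\letin{\underscore}{M'}{\Omega}}}{\theta}\). By Lemma~\ref{lem:non-termination}, such a configuration never terminates successfully. Determinism then contradicts our assumption that the source run ends in \(\return{V}\). Hence the exceptional step never occurs along the given run.

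Inducting on the length of the source run, I obtain a target run ending in a configuration \(\approx\)-related to \(\config{\return{V}}{\theta}\). Such a configuration has the form \(\config{\return{V'}}{\theta'}\), so \(\eval{\reflang}{\plug{C}{\thunk{\plug{L}{N}}}}\) is defined.

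I expect the main obstacle to be the bookkeeping behind the matching-decompositions fact. In particular, I must make sure that \(\approx\) never relates a thunk to a non-thunk, and never relates two different labels. Defining \(\approx\) as the congruence closure and proving an inversion lemma by induction on derivations should settle this.
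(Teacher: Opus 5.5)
Your proposal is correct and follows the paper's proof essentially step for step: the same relation generated by the thunk pairs and closed under all constructors, the same substitution and matching-decomposition lemmas, a lock-step simulation whose only exceptional step is \((U)\) on a generating pair (ruled out via Lemma~\ref{lem:non-termination} and determinism), and induction on the length of the source run. One difference: the paper takes the least relation closed under the axiom and the constructor rules, with no transitivity, so the inversion property you flag as the main obstacle is immediate; if your ``congruence'' includes transitivity, a short extra argument is needed, for instance that \(\Omega\) is related only to itself, so a chain through thunks still reduces to the same two cases.
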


We show this proposition by proving a lock-step simulation.
The idea is that \(\thunk{\plug{L}{M}}\) and \(\thunk{\plug{L}{N}}\) are
\emph{values}, and thus they are copied around by substitution during a
reduction; the only step that can reveal the difference between them is the one
that forces them, and such a step leads to a configuration whose evaluation
never terminates successfully by Lemma~\ref{lem:non-termination}.

\begin{definition}\label{app:def:deltoref:cong}
  We define the relation \(\sim\) on \(\reflang\) values and on \(\reflang\)
  computations as the least pair of relations closed under the following two
  rules:
  \begin{enumerate}[label=(S\arabic*), ref=S\arabic*]
  \item\label{app:def:deltoref:cong:ax}
    \(\simu{\thunk{\plug{L}{M}}}{\thunk{\plug{L}{N}}}\) for any \(\reflang\)
    computations \(M\) and \(N\);
  \item\label{app:def:deltoref:cong:cong} for every \(n\)-ary constructor \(F\)
    of \(\reflang\) and terms \(P_1, \ldots, P_n\) and \(Q_1, \ldots, Q_n\) with
    \(\simu{P_i}{Q_i}\) for each \(i\),
    \[
      \simu{F(P_1, \ldots, P_n)}{F(Q_1, \ldots, Q_n)}.
    \]
  \end{enumerate}
  We extend \(\sim\) to (general) contexts by regarding the hole as a nullary constructor,
  to stores by letting \(\simu{\theta}{\tau}\) hold if and only if
  \(\dom{\theta} = \dom{\tau}\) and \(\simu{\theta(l)}{\tau(l)}\) for every
  \(l \in \dom{\theta}\), and to configurations by letting
  \(\simu{\config{M}{\theta}}{\config{N}{\tau}}\) hold if and only if
  \(\simu{M}{N}\) and \(\simu{\theta}{\tau}\).
\end{definition}

Note that \(\sim\) is reflexive, since the nullary constructors, such as
variables, are related to themselves by (\ref{app:def:deltoref:cong:cong}), and
that \(\sim\) is symmetric, since (\ref{app:def:deltoref:cong:ax}) is imposed
for every pair of computations.

\begin{lemma}\label{app:lem:deltoref:basic}
  The following statements hold.
  \begin{enumerate}
  \item\label{app:lem:deltoref:basic:shape} If \(\simu{M}{N}\), then \(M\) and
    \(N\) have the same outermost constructor, and their corresponding immediate
    subterms are related by \(\sim\).
    The same holds for \(\simu{V}{W}\) whenever \(V\) is not a thunk; in
    particular, a value related to a reference cell \(l\) is \(l\) itself.
  \item\label{app:lem:deltoref:basic:subst} If \(\simu{M}{N}\) and
    \(\simu{V}{W}\), then \(\simu{M[V/x]}{N[W/x]}\), and likewise for values.
  \item\label{app:lem:deltoref:basic:ctx} If \(\simu{\context{C}}{\context{C}'}\)
    and \(\simu{P}{Q}\), where \(P\) and \(Q\) are terms of the sort that the
    holes expect, then \(\simu{\plug{C}{P}}{\plug{C'}{Q}}\).
  \end{enumerate}
\end{lemma}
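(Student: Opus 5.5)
All three parts of Lemma~\ref{app:lem:deltoref:basic} will be proved by structural induction: parts (1) and (2) by induction on the derivation of \(\sim\), and part (3) by induction on the context. No reduction is involved; these are purely syntactic facts about the congruence generated by (\ref{app:def:deltoref:cong:ax}).

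For part (1), we inspect the last rule used to derive \(\simu{M}{N}\). Rule (\ref{app:def:deltoref:cong:ax}) relates only thunks, which are values. Hence a relation between computations must come from (\ref{app:def:deltoref:cong:cong}), and that rule gives exactly the same outermost constructor with pointwise related immediate subterms. The same argument applies to a non-thunk value \(V\): if \(\simu{V}{W}\) were derived by (\ref{app:def:deltoref:cong:ax}), then \(V\) would be a thunk. For a reference cell \(l\), which is a nullary constructor, (\ref{app:def:deltoref:cong:cong}) forces \(W \equiv l\). The only point needing care is that the two rules overlap on thunks. We therefore state the claim only for non-thunks and leave thunks, which may relate by either rule, unconstrained.

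For part (2), we induct on the derivation of \(\simu{M}{N}\), proving the statements for values and computations simultaneously. In the case of (\ref{app:def:deltoref:cong:ax}) we have \(M \equiv \thunk{\plug{L}{M'}}\) and \(N \equiv \thunk{\plug{L}{N'}}\). Here \(M[V/x] \equiv \thunk{\plug{L}{M'[V/x]}}\), because \(\context{L}\) contains \(\Omega\), which is closed, and the bound variable \(\underscore\) can be chosen fresh. Likewise \(N[W/x] \equiv \thunk{\plug{L}{N'[W/x]}}\), so the axiom applies again, for arbitrary \(M'\) and \(N'\). In the case of (\ref{app:def:deltoref:cong:cong}), substitution commutes with the constructor up to renaming of binders, and we apply the induction hypothesis to each subterm. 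The variable case \(x\) itself gives \(\simu{V}{W}\) directly, and any other variable is related to itself. We treat terms modulo \(\alpha\)-equivalence and rename binders to avoid \(x\) and the free variables of \(V\) and \(W\). This freshness bookkeeping is the only mildly delicate step.

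For part (3), we proceed by induction on the derivation of \(\simu{\context{C}}{\context{C}'}\), regarding the hole as a nullary constructor. If the context is the hole, the conclusion is the premise \(\simu{P}{Q}\). Otherwise the contexts share an outermost constructor with related immediate subcontexts, and we plug into each and reapply (\ref{app:def:deltoref:cong:cong}). A context matched by (\ref{app:def:deltoref:cong:ax}) must be a thunk \(\thunk{\plug{L}{\cdot}}\) whose body contains the hole. After plugging it is still of the form \(\thunk{\plug{L}{\ldots}}\), so the axiom applies directly. I expect no real obstacle here; the main thing to watch is the overlap of the two rules on thunks, which the proof must handle in each of the three parts.
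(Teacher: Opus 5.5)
Your proposal is correct and follows the paper's proof essentially step for step: in part (1) you use that (\ref{app:def:deltoref:cong:ax}) relates only thunks. In parts (2) and (3) you induct on the derivation and handle the axiom case by noting that \(\Omega\) is closed, so substitution and plugging preserve the form \(\thunk{\plug{L}{\cdot}}\). You are somewhat more explicit than the paper about the variable case \(x[V/x] = V\) and about binder renaming, both of which the paper leaves implicit.
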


\begin{proof}
  (\ref{app:lem:deltoref:basic:shape}) Since Rule
  (\ref{app:def:deltoref:cong:ax}) relates only thunks, any other related pair
  must be derived by (\ref{app:def:deltoref:cong:cong}).
  Thus (\ref{app:lem:deltoref:basic:shape}) holds.

  (\ref{app:lem:deltoref:basic:subst}) By induction on the derivation of
  \(\simu{M}{N}\).
  The case of (\ref{app:def:deltoref:cong:cong}) is immediate from the induction
  hypothesis.
  In the case of (\ref{app:def:deltoref:cong:ax}), since \(\Omega\) contains no
  free variables,
  \[
    \thunk{\plug{L}{M_0}}[V/x] \equiv \thunk{\plug{L}{M_0[V/x]}}
    \quad\text{and}\quad
    \thunk{\plug{L}{N_0}}[W/x] \equiv \thunk{\plug{L}{N_0[W/x]}},
  \]
  and these two are again related by (\ref{app:def:deltoref:cong:ax}).

  (\ref{app:lem:deltoref:basic:ctx}) By induction on the derivation of
  \(\simu{\context{C}}{\context{C}'}\).
  The case of (\ref{app:def:deltoref:cong:cong}) is immediate from the induction
  hypothesis.
  The base case is either the one where both contexts are the hole, which is
  immediate, or the one where (\ref{app:def:deltoref:cong:ax}) is applied.
  In the latter case, we have \(\context{C} \equiv \thunk{\plug{L}{\context{C}_0}}\)
  and \(\context{C}' \equiv \thunk{\plug{L}{\context{C}_0'}}\) for some contexts
  \(\context{C}_0\) and \(\context{C}_0'\); then \(\plug{C}{P}\) and
  \(\plug{C'}{Q}\) are again related by (\ref{app:def:deltoref:cong:ax}).
\end{proof}

\begin{lemma}\label{app:lem:deltoref:decomp}
  Suppose \(\simu{M}{N}\) and \(M \equiv \plug{E}{R}\), where \(\context{E}\) is
  an evaluation context and \(R\) is a redex.
  Then \(N \equiv \plug{E'}{R'}\) for some evaluation context \(\context{E}'\)
  and redex \(R'\) such that \(\simu{\context{E}}{\context{E}'}\),
  \(\simu{R}{R'}\), and \(R\) and \(R'\) are redexes of the same rule.
\end{lemma}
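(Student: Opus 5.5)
The plan is to prove the statement by induction on the structure of the evaluation context \(\context{E}\), that is, on the number of frames, using the shape property Lemma~\ref{app:lem:deltoref:basic}(\ref{app:lem:deltoref:basic:shape}) at each stage. Since \(\reflang\) shares its frames with \(\mam\), an evaluation context is built from the pure frames \(\letin{x}{\hole}{P}\), \(\app{\hole}{V}\), and \(\prj{i}{\hole}\).

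For the base case \(\context{E} = \hole\), so \(M \equiv R\) is itself a redex. I will do a case analysis on the rule that \(R\) fires.

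\begin{itemize}
\item For \(\seq{x}{\return{V}}{P}\), the shape lemma gives \(N \equiv \seq{x}{N_1}{P'}\) with \(\simu{\return{V}}{N_1}\) and \(\simu{P}{P'}\). Applying the shape lemma again, \(N_1 \equiv \return{W}\) with \(\simu{V}{W}\), so \(N\) is an \((F)\)-redex.
\item The cases \(\app{(\abs{x}{P})}{V}\) and \(\prj{i}{\cpair{P_1}{P_2}}\) are handled in exactly the same way.
\item For the value-inspecting rules \((\times)\) and \((+)\), the scrutinee is a pair or a variant, not a thunk. The value clause of the shape lemma therefore forces the related scrutinee to have the same constructor, so the rule still applies.
\item For \(\mathbf{set}\) and \(\mathbf{get}\), the first argument is a reference cell \(l\). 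The shape lemma gives that the related value is \(l\) itself, so both are \((\mathrm{set})\)- or \((\mathrm{get})\)-redexes on the same cell.
\item For \(\mathbf{create}\), the related term is also a \(\mathbf{create}\) term, which is always a redex.
\item The interesting case is \((U)\), where \(R \equiv \force{\thunk{P}}\). By the shape lemma, \(N \equiv \force{W}\) with \(\simu{\thunk{P}}{W}\). This relation comes either from (\ref{app:def:deltoref:cong:cong}), giving \(W \equiv \thunk{P'}\), or from (\ref{app:def:deltoref:cong:ax}), giving \(W \equiv \thunk{\plug{L}{N_0}}\). In both situations \(W\) is a thunk, so \(N\) is a \((U)\)-redex.
\end{itemize}

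In every case \(R' \defeq N\) and \(\context{E}' \defeq \hole\).

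For the inductive step, write \(\context{E} = \context{F}[\context{E}_0]\) with \(\context{F}\) the outermost frame. Take the let-frame as representative: \(M \equiv \seq{x}{\plug{E_0}{R}}{P}\). The shape lemma gives \(N \equiv \seq{x}{N_1}{P'}\) with \(\simu{\plug{E_0}{R}}{N_1}\) and \(\simu{P}{P'}\). The induction hypothesis decomposes \(N_1 \equiv \context{E}_0'[R']\) with the required properties. I then set \(\context{E}' \defeq \seq{x}{\context{E}_0'}{P'}\). The relation \(\simu{\context{E}}{\context{E}'}\) follows from (\ref{app:def:deltoref:cong:cong}), treating the hole as a nullary constructor. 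The application and projection frames are identical.

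The main point to get right is that (\ref{app:def:deltoref:cong:ax}) never relates a computation directly. It only relates thunk values, so it cannot occur at the redex position except inside a forced thunk, which is exactly the \((U)\) case above. The resulting redexes \(R\) and \(R'\) need not be related beyond \(\simu{R}{R'}\). In the \((U)\) case their contracta may differ arbitrarily, but that is not part of this lemma and is left to the simulation argument, which invokes Lemma~\ref{lem:non-termination}.
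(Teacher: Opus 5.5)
Your proposal is correct and follows essentially the same route as the paper: induction on the evaluation context. The base case is a rule-by-rule check using the shape property of Lemma~\ref{app:lem:deltoref:basic}(\ref{app:lem:deltoref:basic:shape}); the only delicate case is \((U)\), where a value related to a thunk is again a thunk under either rule. The inductive step peels off the outermost frame via the same shape property and applies the induction hypothesis, and your case analysis is slightly more explicit than the paper's, which spells out only the force, variant-matching, and \(\mathbf{get}\) cases.
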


\begin{proof}
  By induction on \(\context{E}\).
  In the base case, \(M \equiv R\), and we take \(\context{E}' \defeq \hole\)
  and \(R' \defeq N\).
  That \(R'\) is a redex of the same rule as \(R\) follows from
  Lemma~\ref{app:lem:deltoref:basic}(\ref{app:lem:deltoref:basic:shape}): for
  instance, if \(R \equiv \force{V}\) with \(V \equiv \thunk{M_1}\), then
  \(R' \equiv \force{W}\) with \(\simu{V}{W}\), and \(W\) is again a thunk since
  a value related to a thunk is a thunk by either rule; if
  \(R \equiv \scase{V}{L_{\mathnormal{i}}}{x_i}{M_i}\) with
  \(V \equiv \inj{L_{\mathnormal{k}}}{V_0}\), then \(R'\) is a variant matching
  against \(\inj{L_{\mathnormal{k}}}{W_0}\) and whose branches are indexed by
  the same labels, so that the same branch is selected; and if
  \(R \equiv \get{l}\), then \(R' \equiv \get{l}\).

  In the induction step, \(\context{E} \equiv \plug{F}{\context{E}''\hole}\) for
  a frame \(\context{F}\), and
  Lemma~\ref{app:lem:deltoref:basic}(\ref{app:lem:deltoref:basic:shape}) gives
  \(N \equiv \plug{F'}{N''}\) with \(\simu{\context{F}}{\context{F}'}\) and
  \(\simu{\plug{E''}{R}}{N''}\), to which we apply the induction hypothesis.
  Thus, there exist some evaluation context \(\context{E'}\) and redex \(R'\)
  such that \(N'' \equiv \plug{E'}{R'}\), \(\simu{\context{E''}}{\context{E'}}\),
  \(\simu{R}{R'}\), and \(R\) and \(R'\) are redexes of the same rule.
  It is immediate to check
  \(\simu{\plug{F}{\context{E''} \hole}}{\plug{F'}{\context{E'} \hole}}\).
  Since the decomposition of a computation into an evaluation context and a
  redex is unique, \(\plug{F}{\plug{E'}{R'}}\) is the unique decomposition of \(N\).
\end{proof}

\begin{lemma}[simulation]\label{app:lem:deltoref:sim}
  Suppose \(\simu{\config{M}{\theta}}{\config{N}{\tau}}\) and
  \(\redRp{M}{\theta}{M'}{\theta'}\).
  Then one of the following two statements holds.
  \begin{enumerate}
  \item\label{app:lem:deltoref:sim:match} There exist a computation \(N'\) and a
    store \(\tau'\) such that \(\redRp{N}{\tau}{N'}{\tau'}\) and
    \(\simu{\config{M'}{\theta'}}{\config{N'}{\tau'}}\).
  \item\label{app:lem:deltoref:sim:bad} \(M' \equiv \plug{E}{\plug{L}{M_0}}\)
    for some evaluation context \(\context{E}\) and computation \(M_0\).
    In this case, the evaluation of \(\config{M'}{\theta'}\) does not terminate
    successfully by Lemma~\ref{lem:non-termination}.
  \end{enumerate}
\end{lemma}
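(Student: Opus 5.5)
The plan is to use Lemma~\ref{app:lem:deltoref:decomp} to split the source step into an evaluation context and a redex, and then do a case analysis on the rule that fires. Write \(M \equiv \plug{E}{R}\). The lemma gives \(N \equiv \plug{E'}{R'}\) with \(\simu{\context{E}}{\context{E}'}\), \(\simu{R}{R'}\), and \(R'\) a redex of the same rule. In almost every case, firing the same rule in \(\config{N}{\tau}\) will land us in alternative (\ref{app:lem:deltoref:sim:match}). We then rebuild the relation on the whole configuration with Lemma~\ref{app:lem:deltoref:basic}(\ref{app:lem:deltoref:basic:ctx}), applied to \(\context{E}\) and \(\context{E}'\).

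For the \(\mam\) rules other than \((U)\), and for \((U)\) whenever the thunk pair \(\simu{\thunk{P}}{\thunk{Q}}\) was derived by (\ref{app:def:deltoref:cong:cong}), Lemma~\ref{app:lem:deltoref:basic}(\ref{app:lem:deltoref:basic:shape}) shows that the immediate subterms are related. The contracta are then related by Lemma~\ref{app:lem:deltoref:basic}(\ref{app:lem:deltoref:basic:subst}). Two examples: for \((F)\) we get \(\simu{M_1[V/x]}{N_1[W/x]}\), and for \((+)\) the same branch is selected. The stores do not change.

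The rules specific to \(\reflang\) go as follows.
\begin{itemize}
\item \((\mathrm{create})\): \(R \equiv \create{V}\) and \(R' \equiv \create{W}\) with \(\simu{V}{W}\). Since \(\dom{\theta}=\dom{\tau}\) and fresh labels are chosen deterministically by the least-label convention, both sides choose the same label \(l\). The updated stores are related pointwise.
\item \((\mathrm{set})\): the cell on the target side is the same \(l\), by shape, and we update it with related values.
\item \((\mathrm{get})\): we return \(\theta(l)\) and \(\tau(l)\), which are related by \(\simu{\theta}{\tau}\).
\end{itemize}
In all three cases, \(\dom{\theta'}=\dom{\tau'}\) is preserved.

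The remaining case, and the only real obstacle, is \((U)\) when the forced thunk pair comes from (\ref{app:def:deltoref:cong:ax}). That is, \(R \equiv \force{\thunk{\plug{L}{M_0}}}\) and \(R' \equiv \force{\thunk{\plug{L}{N_0}}}\) with \(M_0\) and \(N_0\) unrelated. Here we do not try to match the step. The source reduct is exactly \(M' \equiv \plug{E}{\plug{L}{M_0}}\) with \(\theta'=\theta\), which is alternative (\ref{app:lem:deltoref:sim:bad}). Its failure to terminate successfully then follows directly from Lemma~\ref{lem:non-termination} applied to \(\context{E}\), \(M_0\), and \(\theta\). The point to handle carefully is that the case split on how \(\simu{\thunk{P}}{\thunk{Q}}\) was derived is exhaustive. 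If the derivation is by (\ref{app:def:deltoref:cong:cong}), then \(\simu{P}{Q}\) holds and the matching argument applies. If it is by (\ref{app:def:deltoref:cong:ax}), we are in the bad case. A pair may admit both derivations; then either alternative is acceptable, so we pick one.
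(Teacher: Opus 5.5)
Your proposal is correct and matches the paper's proof essentially step for step. Both decompose with Lemma~\ref{app:lem:deltoref:decomp}, reduce to relating the plugged contracta via Lemma~\ref{app:lem:deltoref:basic}(\ref{app:lem:deltoref:basic:ctx}), and split on the rule. In the \((U)\) case both split further on whether the forced thunks are related by (\ref{app:def:deltoref:cong:cong}), giving a matching step, or by (\ref{app:def:deltoref:cong:ax}), giving the second alternative with \(\theta' = \theta\). One small imprecision: in the \((U)\) case under (\ref{app:def:deltoref:cong:cong}), the bodies are related by inverting that rule directly, not by Lemma~\ref{app:lem:deltoref:basic}(\ref{app:lem:deltoref:basic:shape}), which explicitly excludes thunks.
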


\begin{proof}
  Let \(M \equiv \plug{E}{R}\) be the decomposition of \(M\) into an evaluation
  context and a redex.
  By Lemma~\ref{app:lem:deltoref:decomp}, \(N \equiv \plug{E'}{R'}\) with
  \(\simu{\context{E}}{\context{E}'}\) and \(\simu{R}{R'}\), where \(R\) and
  \(R'\) are redexes of the same rule.
  Since \(\simu{\context{E}}{\context{E}'}\), when we prove
  (\ref{app:lem:deltoref:sim:match}), it suffices by
  Lemma~\ref{app:lem:deltoref:basic}(\ref{app:lem:deltoref:basic:ctx}) to relate
  the two computations plugging the hole in order to relate the resulting
  configurations.
  We proceed by case analysis on this rule, showing the representative cases
  only.

  \noindent \textbf{Case} \((U)\):
  \begin{caseindent}
    Suppose \(R \equiv \force{\thunk{M_1}}\); then
    \(R' \equiv \force{\thunk{N_1}}\).
    If \(\simu{\thunk{M_1}}{\thunk{N_1}}\) is derived by
    (\ref{app:def:deltoref:cong:cong}), then \(\simu{M_1}{N_1}\), and hence
    \(\redRp{N}{\tau}{\plug{E'}{N_1}}{\tau}\) with
    \(\simu{\plug{E}{M_1}}{\plug{E'}{N_1}}\); thus
    (\ref{app:lem:deltoref:sim:match}) holds.
    Otherwise, it is derived by (\ref{app:def:deltoref:cong:ax}), so
    \(M_1 \equiv \plug{L}{M_0}\) for some computation \(M_0\), and therefore
    \(M' \equiv \plug{E}{\plug{L}{M_0}}\); thus (\ref{app:lem:deltoref:sim:bad})
    holds.
    This is the only case where (\ref{app:lem:deltoref:sim:bad}) arises.
  \end{caseindent}

  \noindent \textbf{Case} \((F)\):
  \begin{caseindent}
    Suppose \(R \equiv \letin{x}{\return{V}}{M_1}\).
    Then, \(R' \equiv \letin{x}{\return{W}}{N_1}\), with \(\simu{V}{W}\) and
    \(\simu{M_1}{N_1}\).
    By
    Lemma~\ref{app:lem:deltoref:basic}(\ref{app:lem:deltoref:basic:subst}), we
    have \(\simu{M_1[V/x]}{N_1[W/x]}\), and hence
    \(\simu{\plug{E}{M_1[V/x]}}{\plug{E'}{N_1[W/x]}}\).
    Since the stores are unchanged, (\ref{app:lem:deltoref:sim:match}) holds.
    The cases \((\times)\), \((+)\), \((\to)\), and \((\&)\) are analogous.
  \end{caseindent}

  \noindent \textbf{Case} \((\mathrm{create})\):
  \begin{caseindent}
    Suppose \(R \equiv \create{V}\); then \(R' \equiv \create{W}\) with
    \(\simu{V}{W}\).
    Since \(\dom{\theta} = \dom{\tau}\), the same fresh label \(l\) is chosen in
    both reductions, and \(\simu{\theta[l := V]}{\tau[l := W]}\) holds.
  \end{caseindent}

  \noindent \textbf{Case} \((\mathrm{get})\):
  \begin{caseindent}
    Suppose \(R \equiv \get{l}\); then \(R' \equiv \get{l}\) by
    Lemma~\ref{app:lem:deltoref:basic}(\ref{app:lem:deltoref:basic:shape}).
    Since \(\dom{\theta} = \dom{\tau}\), the side condition of the rule is
    satisfied on both sides, and \(\simu{\theta(l)}{\tau(l)}\) yields
    \(\simu{\plug{E}{\return{\theta(l)}}}{\plug{E'}{\return{\tau(l)}}}\).
    The case \((\mathrm{set})\) is analogous.
  \end{caseindent}
  Thus, we are done.
\end{proof}

\begin{proof}[Proof of Proposition~\ref{prop:appD:obs-equiv}]
  Here \(\context{C}\) ranges over the contexts such that both
  \(\plug{C}{\thunk{\plug{L}{M}}}\) and \(\plug{C}{\thunk{\plug{L}{N}}}\) are
  \(\reflang\) programs.
  Since \(\sim\) is symmetric, it suffices to show the ``only if'' direction.

  We first show the following statement by induction on \(n\): if
  \(\simu{\config{M_1}{\theta_1}}{\config{N_1}{\tau_1}}\) and
  \(\config{M_1}{\theta_1} \arrR^{n}
  \config{\return{V}}{\theta}\),\footnote{\(C \to^n D\) means that \(C\) is
    reduced to \(D\) in exactly \(n\)-steps.}
  then \(\config{N_1}{\tau_1} \arrR^{n} \config{\return{W}}{\tau}\) for some
  value \(W\) and store \(\tau\).
  If \(n = 0\), then \(M_1 \equiv \return{V}\), and by
  Lemma~\ref{app:lem:deltoref:basic}(\ref{app:lem:deltoref:basic:shape}), we
  have \(N_1 \equiv \return{W}\) for some value \(W\).
  If \(n > 0\), let \(\redRp{M_1}{\theta_1}{M_1'}{\theta_1'}\) be the first step
  of the given reduction sequence, and we then apply
  Lemma~\ref{app:lem:deltoref:sim}.
  Its case (\ref{app:lem:deltoref:sim:bad}) cannot occur, since the evaluation
  of \(\config{M_1'}{\theta_1'}\) does terminate successfully by the assumption
  \(\config{M_1'}{\theta_1'} \arrR^{n-1} \config{\return{V}}{\theta}\).
  Hence its case (\ref{app:lem:deltoref:sim:match}) applies, and the induction
  hypothesis yields the desired reduction sequence.

  Now suppose that \(\eval{\reflang}{\plug{C}{\thunk{\plug{L}{M}}}}\) is
  defined.
  By (\ref{app:def:deltoref:cong:ax}), the reflexivity of \(\sim\), and
  Lemma~\ref{app:lem:deltoref:basic}(\ref{app:lem:deltoref:basic:ctx}), we have
  \[
    \simu
    {\config{\plug{C}{\thunk{\plug{L}{M}}}}{\emptyset}}
    {\config{\plug{C}{\thunk{\plug{L}{N}}}}{\emptyset}},
  \]
  and therefore the above statement implies that
  \(\eval{\reflang}{\plug{C}{\thunk{\plug{L}{N}}}}\) is defined.
\end{proof}

\begin{proof}[Proof of Theorem~\ref{thm:no-del-to-ref}]
  Consider the following two \(\del\) computations:

  \[
    M_i \defeq \dollar{\force{\thunk{\plug{L}{\shift{k}{\return{(\inj{L_{\mathnormal{i}}}{\unit})}}}}}}{x}{\return{x}} \quad (i = 1, 2),
  \]
  where \(\context{L} \defeq (\letin{\underscore}{\hole}{\Omega})\).

  \begin{lemma}\label{lem:source-eval}
    \(\eval{\del}{\dollar{\force{\thunk{\letin{\underscore}{\shift{k}{\return{(\inj{L_{\mathnormal{i}}}{\unit})}}}{\Omega}}}}{x}{\return{x}}} = \inj{L_{\mathnormal{i}}}{\unit}\).
  \end{lemma}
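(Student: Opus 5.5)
The plan is to compute the reduction sequence explicitly, starting from the configuration \(\config{M_i}{\emptyset}\), and to show that it reaches \(\config{\return{(\inj{L_{\mathnormal{i}}}{\unit})}}{\theta}\) for some store \(\theta\). Since \(\arrD\) is deterministic, this single terminating trace determines \(\eval{\del}{M_i}\). The outermost dollar term supplies a computational frame \(\dollar{\hole}{x}{\return{x}}\). The first redex inside it is \(\force{\thunk{\plug{L}{\shift{k}{\return{(\inj{L_{\mathnormal{i}}}{\unit})}}}}}\), and rule \((U)\) reduces it in one step to \(\plug{L}{\shift{k}{\return{(\inj{L_{\mathnormal{i}}}{\unit})}}}\).

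The configuration is now \(\dollar{\letin{\underscore}{\shift{k}{\return{(\inj{L_{\mathnormal{i}}}{\unit})}}}{\Omega}}{x}{\return{x}}\). The let frame \(\context{L}\) is a pure frame, so \(\context{L}\) is itself a pure context \(\context{H}\). The \((\mathrm{shift})\) rule therefore applies at the root: it takes the least fresh label \(l\) and produces \(\config{\return{(\inj{L_{\mathnormal{i}}}{\unit})}[l/k]}{\emptyset[l := \abs{y}{\dollar{\letin{\underscore}{\return{y}}{\Omega}}{x}{\return{x}}}]}\). The variable \(k\) does not occur in the body, so the computation is just \(\return{(\inj{L_{\mathnormal{i}}}{\unit})}\), a return configuration. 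The evaluation function is therefore defined on \(M_i\) and equals \(\inj{L_{\mathnormal{i}}}{\unit}\).

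Two small points need checking, and neither should be a real obstacle. First, the decomposition into evaluation context and redex is unique, so no other step applies before the \((\mathrm{shift})\) step. In particular, \(\Omega\) lies in the body of the let frame and is never evaluated, because the shift discards the captured continuation. Second, the second step must use \((\mathrm{shift})\) with \(\context{H} = \context{L}\) directly under the dollar frame, rather than a rule applied under the outer frame. Both follow immediately from the frame grammar in Figures~\ref{fig:mam_frame} and~\ref{fig:del_syntax}. The trace takes exactly two steps, and the lemma follows.
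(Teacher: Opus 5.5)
Your proposal is correct and matches the paper's proof, which says only ``by routine calculation.'' Your two-step trace is exactly that calculation: first \((U)\) under the frame \(\dollar{\hole}{x}{\return{x}}\), then \((\mathrm{shift})\) at the root with \(\context{H} = \context{L}\), which discards both the delimiter and \(\Omega\).
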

  \begin{proof}
    By routine calculation.
  \end{proof}

  In \(\del\), these results can be distinguished by the following context \(\context{D}_i\):
  \[
    \context{D}_i \defeq
    \begin{pmatrix*}[l]
      \letin{r}{\hole}
      {\\
      \mathbf{case}\;r\;\mathbf{of}\;\{\\
      \quad\inj{L_{\mathnormal{i}}}{\underscore} \mapsto \return{\unit}\\
      \quad\underscore \mapsto \Omega\\
      \}}
    \end{pmatrix*}.
  \]
  Note that \(\context{D}_i\) consists only of \(\mam\) constructs; thus any weak macro-translation preserves it homomorphically.

  Suppose that there exists a weak macro-translation from \(\del\) to \(\reflang\).
  Then, the translation \(\mt{M_i}\) can be written as
  \[
    \cdollar{x}\left[{\force{\thunk{\plug{L}{\cshift{k}\left[{\return{(\inj{L_{\mathnormal{i}}}{\unit})}}\right]}}}},
    \return{x}\right],
  \]
  where \(\cdollar{x}\) is the syntactic abstraction corresponding to a source
  dollar binding \(x\) in its additional part, and \(\cshift{k}\) is the one
  corresponding to a source shift0 binding \(k\) in its body.
  Let \(\context{C}\) denote the context
  \(\cdollar{x}\left[{\force{\hole}}, {\return{x}}\right]\) and \(N_i\) be the
  computation
  \(\plug{L}{\cshift{k}\left[{\return{(\inj{L_{\mathnormal{i}}}{\unit})}}\right]}\).
  Then, \(\mt{M_i}\) can be written as \(\context{C}[\thunk{N_i}]\).
  By Lemma~\ref{lem:non-termination}, the evaluation of \(N_i\) under any evaluation context and any store does not terminate successfully.

  \begin{lemma}
    \(\eval{\reflang}{\mt{M_i}}\) is defined, and it is of the form \(\inj{L_{\mathnormal{i}}}{V_i}\) for some \(\reflang\) value \(V_i\).
  \end{lemma}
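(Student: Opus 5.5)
The claim follows from the weak macro-translation property together with the homomorphic treatment of the \(\mam\) context \(\context{D}_i\).

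First, by Lemma~\ref{lem:source-eval}, \(\eval{\del}{M_i} = \inj{L_{\mathnormal{i}}}{\unit}\). Consequently \(\eval{\del}{\plug{D_i}{M_i}}\) is also defined. To see this, reduce \(\config{\plug{D_i}{M_i}}{\emptyset}\) inside the \(\mathbf{let}\) frame until it reaches \(\config{\plug{D_i}{\return{(\inj{L_{\mathnormal{i}}}{\unit})}}}{\theta}\). This is possible because \(\arrD\) is closed under evaluation contexts. Then apply \((F)\) and \((+)\), which select the branch \(\return{\unit}\).

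Since \(\context{D}_i\) consists only of \(\mam\) constructs, the translation acts on it homomorphically, so \(\mt{\plug{D_i}{M_i}} \equiv \plug{D_i}{\mt{M_i}}\). Note that \(\plug{D_i}{M_i}\) is a \(\del\) program because it contains no labels. The weak macro-translation condition (item~\ref{item:4}, ``only if'') then gives that \(\eval{\reflang}{\plug{D_i}{\mt{M_i}}}\) is defined.

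Next, decompose this terminating \(\reflang\) evaluation. I would use the \(\reflang\) analogue of Lemma~\ref{lem:reset-free-decomp}, which is easier here because \(\reflang\) has no control operators at all. By the same argument as Lemma~\ref{lem:let-destruct}, any step of \(\config{\letin{r}{P}{Q}}{\theta}\) either occurs inside \(P\) or eliminates the frame when \(P\) is a returner. A terminating run must eliminate the frame at some point, since the final computation is \(\return{V}\). Hence \(\config{\mt{M_i}}{\emptyset} \arrR^{*} \config{\return{V'}}{\theta'}\) for some \(V'\) and \(\theta'\), so \(\eval{\reflang}{\mt{M_i}} = V'\) is defined. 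Moreover, the remaining computation \(\config{\mathbf{case}\;V'\;\mathbf{of}\;\{\inj{L_{\mathnormal{i}}}{\underscore} \mapsto \return{\unit},\;\underscore \mapsto \Omega\}}{\theta'}\) terminates successfully.

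Finally, analyze this case expression. If \(V'\) is not a variant, the case matching is stuck. If \(V' \equiv \inj{L'}{V_i}\) with \(L' \neq L_i\), the catch-all branch runs \(\Omega\) and diverges. In both situations evaluation does not terminate successfully, contradicting the previous step. Therefore \(V' \equiv \inj{L_{\mathnormal{i}}}{V_i}\), which is the claim.

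The only point that needs care is the desugaring of the nested or catch-all pattern into basic case matching. Determinism of \(\arrR\) is what rules out the alternative branches. Otherwise the argument is routine.
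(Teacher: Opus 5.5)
Your proof is correct and follows essentially the same route as the paper. You apply the weak macro-translation property to \(\context{D}_i[M_i]\), use homomorphism on \(\mam\) constructs to get \(\mt{\context{D}_i[M_i]} \equiv \context{D}_i[\mt{M_i}]\), and read off the shape of the result from the successful run. The only difference is minor: you obtain definedness of \(\eval{\reflang}{\mt{M_i}}\) by decomposing the run at the \(\mathbf{let}\) frame rather than by a second application of the translation property, and you spell out the final case analysis that the paper compresses into ``which implies''.
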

  \begin{proof}
    By Lemma~\ref{lem:source-eval}, both \(\eval{\del}{M_i}\) and \(\eval{\del}{\context{D}_i[M_i]}\) are defined.
    Since \(\mtempty\) is a weak macro-translation, both \(\eval{\reflang}{\mt{M_i}}\) and \(\eval{\reflang}{\mt{\context{D}_i[M_i]}}\) are defined.
    Since \(\context{D}_i\) is a \(\mam\) context, and \(\mtempty\) is homomorphic on \(\mam\) constructs,
    \[
      \mt{\context{D}_i[M_i]} \equiv \context{D}_i[\mt{M_i}].
    \]
    Thus \(\eval{\reflang}{\context{D}_i[\mt{M_i}]}\) is defined, which implies that \(\eval{\reflang}{\mt{M_i}}\) is of the form \(\inj{L_{\mathnormal{i}}}{V_i}\) for some \(\reflang\) value \(V_i\).
  \end{proof}

  Thus, the evaluation of \(\context{C}[\thunk{N_1}]\) and
  \(\context{C}[\thunk{N_2}]\) yields \(\inj{L_1}{V_1}\) and \(\inj{L_2}{V_2}\),
  respectively, so \(\thunk{N_1}\) and \(\thunk{N_2}\) are observationally
  inequivalent.
  However, this contradicts Proposition~\ref{prop:appD:obs-equiv}.
  Therefore, there is no weak macro-translation from \(\del\) to \(\reflang\).
\end{proof}

%% file: main.bbl
%%% -*-BibTeX-*-
%%% Do NOT edit. File created by BibTeX with style
%%% ACM-Reference-Format-Journals [18-Jan-2012].

\begin{thebibliography}{31}

%%% ====================================================================
%%% NOTE TO THE USER: you can override these defaults by providing
%%% customized versions of any of these macros before the \bibliography
%%% command.  Each of them MUST provide its own final punctuation,
%%% except for \shownote{} and \showURL{}.  The latter two
%%% do not use final punctuation, in order to avoid confusing it with
%%% the Web address.
%%%
%%% To suppress output of a particular field, define its macro to expand
%%% to an empty string, or better, \unskip, like this:
%%%
%%% \newcommand{\showURL}[1]{\unskip}   % LaTeX syntax
%%%
%%% \def \showURL #1{\unskip}           % plain TeX syntax
%%%
%%% ====================================================================

\ifx \showCODEN    \undefined \def \showCODEN     #1{\unskip}     \fi
\ifx \showISBNx    \undefined \def \showISBNx     #1{\unskip}     \fi
\ifx \showISBNxiii \undefined \def \showISBNxiii  #1{\unskip}     \fi
\ifx \showISSN     \undefined \def \showISSN      #1{\unskip}     \fi
\ifx \showLCCN     \undefined \def \showLCCN      #1{\unskip}     \fi
\ifx \shownote     \undefined \def \shownote      #1{#1}          \fi
\ifx \showarticletitle \undefined \def \showarticletitle #1{#1}   \fi
\ifx \showURL      \undefined \def \showURL       {\relax}        \fi
% The following commands are used for tagged output and should be
% invisible to TeX
\providecommand\bibfield[2]{#2}
\providecommand\bibinfo[2]{#2}
\providecommand\natexlab[1]{#1}
\providecommand\showeprint[2][]{arXiv:#2}

\bibitem[Anton and Thiemann(2010)]%
        {DBLP:conf/aplas/AntonT10}
\bibfield{author}{\bibinfo{person}{Konrad Anton} {and} \bibinfo{person}{Peter
  Thiemann}.} \bibinfo{year}{2010}\natexlab{}.
\newblock \showarticletitle{Towards Deriving Type Systems and Implementations
  for Coroutines}. In \bibinfo{booktitle}{\emph{Programming Languages and
  Systems - 8th Asian Symposium, {APLAS} 2010, Shanghai, China, November 28 -
  December 1, 2010. Proceedings}} \emph{(\bibinfo{series}{Lecture Notes in
  Computer Science}, Vol.~\bibinfo{volume}{6461})},
  \bibfield{editor}{\bibinfo{person}{Kazunori Ueda}} (Ed.).
  \bibinfo{publisher}{Springer}, \bibinfo{pages}{63--79}.
\newblock
\showISBNx{978-3-642-17163-5}
\href{https://doi.org/10.1007/978-3-642-17164-2\_6}{doi:\nolinkurl{10.1007/978-3-642-17164-2\_6}}


\bibitem[Berdine et~al\mbox{.}(2002)]%
        {berdine2002linear}
\bibfield{author}{\bibinfo{person}{Josh Berdine}, \bibinfo{person}{Peter~W.
  O'Hearn}, \bibinfo{person}{Uday~S. Reddy}, {and} \bibinfo{person}{Hayo
  Thielecke}.} \bibinfo{year}{2002}\natexlab{}.
\newblock \showarticletitle{Linear Continuation-Passing}.
\newblock \bibinfo{journal}{\emph{High. Order Symb. Comput.}}
  \bibinfo{volume}{15}, \bibinfo{number}{2-3} (\bibinfo{year}{2002}),
  \bibinfo{pages}{181--208}.
\newblock
\href{https://doi.org/10.1023/A:1020891112409}{doi:\nolinkurl{10.1023/A:1020891112409}}


\bibitem[Bruggeman et~al\mbox{.}(1996)]%
        {bruggeman1996oneshot}
\bibfield{author}{\bibinfo{person}{Carl Bruggeman}, \bibinfo{person}{Oscar
  Waddell}, {and} \bibinfo{person}{R.~Kent Dybvig}.}
  \bibinfo{year}{1996}\natexlab{}.
\newblock \showarticletitle{Representing Control in the Presence of One-Shot
  Continuations}. In \bibinfo{booktitle}{\emph{Proceedings of the {ACM}
  SIGPLAN'96 Conference on Programming Language Design and Implementation
  (PLDI), Philadephia, Pennsylvania, USA, May 21-24, 1996}},
  \bibfield{editor}{\bibinfo{person}{Charles~N. Fischer}} (Ed.).
  \bibinfo{publisher}{{ACM}}, \bibinfo{pages}{99--107}.
\newblock
\href{https://doi.org/10.1145/231379.231395}{doi:\nolinkurl{10.1145/231379.231395}}


\bibitem[Danvy and Filinski(1989)]%
        {danvy1989functional}
\bibfield{author}{\bibinfo{person}{Olivier Danvy} {and}
  \bibinfo{person}{Andrzej Filinski}.} \bibinfo{year}{1989}\natexlab{}.
\newblock \bibinfo{title}{A functional abstraction of typed contexts}.
\newblock \bibinfo{howpublished}{DIKU Rapport 89/12}.
\newblock


\bibitem[de~Moura and Ierusalimschy(2009)]%
        {moura2009revisiting}
\bibfield{author}{\bibinfo{person}{Ana~L{\'u}cia de Moura} {and}
  \bibinfo{person}{Roberto Ierusalimschy}.} \bibinfo{year}{2009}\natexlab{}.
\newblock \showarticletitle{Revisiting coroutines}.
\newblock \bibinfo{journal}{\emph{{ACM} Trans. Program. Lang. Syst.}}
  \bibinfo{volume}{31}, \bibinfo{number}{2} (\bibinfo{year}{2009}),
  \bibinfo{pages}{6:1--6:31}.
\newblock
\href{https://doi.org/10.1145/1462166.1462167}{doi:\nolinkurl{10.1145/1462166.1462167}}


\bibitem[de~Vilhena and Pottier(2021)]%
        {vilhena2021separation}
\bibfield{author}{\bibinfo{person}{Paulo~Em{\'i}lio de Vilhena} {and}
  \bibinfo{person}{Fran{\c{c}}ois Pottier}.} \bibinfo{year}{2021}\natexlab{}.
\newblock \showarticletitle{A separation logic for effect handlers}.
\newblock \bibinfo{journal}{\emph{Proc. {ACM} Program. Lang.}}
  \bibinfo{volume}{5}, \bibinfo{number}{{POPL}} (\bibinfo{year}{2021}),
  \bibinfo{pages}{1--28}.
\newblock
\href{https://doi.org/10.1145/3434314}{doi:\nolinkurl{10.1145/3434314}}


\bibitem[Dybvig(2006)]%
        {DBLP:conf/icfp/Dybvig06}
\bibfield{author}{\bibinfo{person}{R.~Kent Dybvig}.}
  \bibinfo{year}{2006}\natexlab{}.
\newblock \showarticletitle{The development of Chez Scheme}. In
  \bibinfo{booktitle}{\emph{Proceedings of the 11th {ACM} {SIGPLAN}
  International Conference on Functional Programming, {ICFP} 2006, Portland,
  Oregon, USA, September 16-21, 2006}},
  \bibfield{editor}{\bibinfo{person}{John~H. Reppy} {and}
  \bibinfo{person}{Julia Lawall}} (Eds.). \bibinfo{publisher}{{ACM}},
  \bibinfo{pages}{1--12}.
\newblock
\showISBNx{1-59593-309-3}
\href{https://doi.org/10.1145/1159803.1159805}{doi:\nolinkurl{10.1145/1159803.1159805}}


\bibitem[Felleisen(1991)]%
        {felleisen1991expressive}
\bibfield{author}{\bibinfo{person}{Matthias Felleisen}.}
  \bibinfo{year}{1991}\natexlab{}.
\newblock \showarticletitle{On the Expressive Power of Programming Languages}.
\newblock \bibinfo{journal}{\emph{Sci. Comput. Program.}} \bibinfo{volume}{17},
  \bibinfo{number}{1-3} (\bibinfo{year}{1991}), \bibinfo{pages}{35--75}.
\newblock
\href{https://doi.org/10.1016/0167-6423(91)90036-W}{doi:\nolinkurl{10.1016/0167-6423(91)90036-W}}


\bibitem[Felleisen and Friedman(1987)]%
        {felleisen1987reduction}
\bibfield{author}{\bibinfo{person}{Matthias Felleisen} {and}
  \bibinfo{person}{Daniel~P. Friedman}.} \bibinfo{year}{1987}\natexlab{}.
\newblock \showarticletitle{A Reduction Semantics for Imperative Higher-Order
  Languages}. In \bibinfo{booktitle}{\emph{PARLE, Parallel Architectures and
  Languages Europe, Volume {II:} Parallel Languages, Eindhoven, The
  Netherlands, June 15-19, 1987, Proceedings}} \emph{(\bibinfo{series}{Lecture
  Notes in Computer Science}, Vol.~\bibinfo{volume}{259})},
  \bibfield{editor}{\bibinfo{person}{J.~W. de~Bakker}, \bibinfo{person}{A.~J.
  Nijman}, {and} \bibinfo{person}{Philip~C. Treleaven}} (Eds.).
  \bibinfo{publisher}{Springer}, \bibinfo{pages}{206--223}.
\newblock
\href{https://doi.org/10.1007/3-540-17945-3\_12}{doi:\nolinkurl{10.1007/3-540-17945-3\_12}}


\bibitem[Fiore et~al\mbox{.}(1999)]%
        {fiore1999abstractsyntax}
\bibfield{author}{\bibinfo{person}{Marcelo~P. Fiore},
  \bibinfo{person}{Gordon~D. Plotkin}, {and} \bibinfo{person}{Daniele Turi}.}
  \bibinfo{year}{1999}\natexlab{}.
\newblock \showarticletitle{Abstract Syntax and Variable Binding}. In
  \bibinfo{booktitle}{\emph{14th Annual {IEEE} Symposium on Logic in Computer
  Science, Trento, Italy, July 2-5, 1999}}. \bibinfo{publisher}{{IEEE} Computer
  Society}, \bibinfo{pages}{193--202}.
\newblock
\href{https://doi.org/10.1109/LICS.1999.782615}{doi:\nolinkurl{10.1109/LICS.1999.782615}}


\bibitem[Forster et~al\mbox{.}(2017)]%
        {DBLP:journals/pacmpl/0002KLP17}
\bibfield{author}{\bibinfo{person}{Yannick Forster}, \bibinfo{person}{Ohad
  Kammar}, \bibinfo{person}{Sam Lindley}, {and} \bibinfo{person}{Matija
  Pretnar}.} \bibinfo{year}{2017}\natexlab{}.
\newblock \showarticletitle{On the expressive power of user-defined effects:
  effect handlers, monadic reflection, delimited control}.
\newblock \bibinfo{journal}{\emph{Proc. {ACM} Program. Lang.}}
  \bibinfo{volume}{1}, \bibinfo{number}{{ICFP}} (\bibinfo{year}{2017}),
  \bibinfo{pages}{13:1--13:29}.
\newblock
\href{https://doi.org/10.1145/3110257}{doi:\nolinkurl{10.1145/3110257}}


\bibitem[Forster et~al\mbox{.}(2019)]%
        {forster2019expressive}
\bibfield{author}{\bibinfo{person}{Yannick Forster}, \bibinfo{person}{Ohad
  Kammar}, \bibinfo{person}{Sam Lindley}, {and} \bibinfo{person}{Matija
  Pretnar}.} \bibinfo{year}{2019}\natexlab{}.
\newblock \showarticletitle{On the expressive power of user-defined effects:
  Effect handlers, monadic reflection, delimited control}.
\newblock \bibinfo{journal}{\emph{J. Funct. Program.}}  \bibinfo{volume}{29}
  (\bibinfo{year}{2019}), \bibinfo{pages}{e15}.
\newblock
\href{https://doi.org/10.1017/S0956796819000121}{doi:\nolinkurl{10.1017/S0956796819000121}}


\bibitem[Hasegawa(2002)]%
        {DBLP:conf/flops/Hasegawa02}
\bibfield{author}{\bibinfo{person}{Masahito Hasegawa}.}
  \bibinfo{year}{2002}\natexlab{}.
\newblock \showarticletitle{Linearly Used Effects: Monadic and {CPS}
  Transformations into the Linear Lambda Calculus}. In
  \bibinfo{booktitle}{\emph{Functional and Logic Programming, 6th International
  Symposium, {FLOPS} 2002, Aizu, Japan, September 15-17, 2002, Proceedings}}
  \emph{(\bibinfo{series}{Lecture Notes in Computer Science},
  Vol.~\bibinfo{volume}{2441})}, \bibfield{editor}{\bibinfo{person}{Zhenjiang
  Hu} {and} \bibinfo{person}{Mario Rodr{\'{\i}}guez{-}Artalejo}} (Eds.).
  \bibinfo{publisher}{Springer}, \bibinfo{pages}{167--182}.
\newblock
\showISBNx{3-540-44233-2}
\href{https://doi.org/10.1007/3-540-45788-7\_10}{doi:\nolinkurl{10.1007/3-540-45788-7\_10}}


\bibitem[Ikemori et~al\mbox{.}(2023)]%
        {DBLP:conf/ppdp/IkemoriCM23}
\bibfield{author}{\bibinfo{person}{Kazuki Ikemori}, \bibinfo{person}{Youyou
  Cong}, {and} \bibinfo{person}{Hidehiko Masuhara}.}
  \bibinfo{year}{2023}\natexlab{}.
\newblock \showarticletitle{Typed Equivalence of Labeled Effect Handlers and
  Labeled Delimited Control Operators}. In
  \bibinfo{booktitle}{\emph{International Symposium on Principles and Practice
  of Declarative Programming, {PPDP} 2023, Lisboa, Portugal, October 22-23,
  2023}}, \bibfield{editor}{\bibinfo{person}{Santiago Escobar} {and}
  \bibinfo{person}{Vasco~T. Vasconcelos}} (Eds.). \bibinfo{publisher}{{ACM}},
  \bibinfo{pages}{4:1--4:13}.
\newblock
\href{https://doi.org/10.1145/3610612.3610616}{doi:\nolinkurl{10.1145/3610612.3610616}}


\bibitem[James and Sabry(2011)]%
        {yield2011}
\bibfield{author}{\bibinfo{person}{Roshan~P. James} {and} \bibinfo{person}{Amr
  Sabry}.} \bibinfo{year}{2011}\natexlab{}.
\newblock \showarticletitle{Yield: Mainstream Delimited Continuations}. In
  \bibinfo{booktitle}{\emph{Proceedings on the 1st International Workshop on
  Theory and Practice of Delimited Continuations ({TPDC 2011}), 2011}}.
  \bibinfo{pages}{12 pages}.
\newblock


\bibitem[Kammar and Pretnar(2017)]%
        {DBLP:journals/jfp/KammarP17}
\bibfield{author}{\bibinfo{person}{Ohad Kammar} {and} \bibinfo{person}{Matija
  Pretnar}.} \bibinfo{year}{2017}\natexlab{}.
\newblock \showarticletitle{No value restriction is needed for algebraic
  effects and handlers}.
\newblock \bibinfo{journal}{\emph{J. Funct. Program.}}  \bibinfo{volume}{27}
  (\bibinfo{year}{2017}), \bibinfo{pages}{e7}.
\newblock
\href{https://doi.org/10.1017/S0956796816000320}{doi:\nolinkurl{10.1017/S0956796816000320}}


\bibitem[Kawahara and Kameyama(2020)]%
        {kawahara2020one}
\bibfield{author}{\bibinfo{person}{Satoru Kawahara} {and}
  \bibinfo{person}{Yukiyoshi Kameyama}.} \bibinfo{year}{2020}\natexlab{}.
\newblock \showarticletitle{One-Shot Algebraic Effects as Coroutines}. In
  \bibinfo{booktitle}{\emph{Trends in Functional Programming - 21st
  International Symposium, {TFP} 2020, Krakow, Poland, February 13-14, 2020,
  Revised Selected Papers}} \emph{(\bibinfo{series}{Lecture Notes in Computer
  Science}, Vol.~\bibinfo{volume}{12222})},
  \bibfield{editor}{\bibinfo{person}{Aleksander Byrski} {and}
  \bibinfo{person}{John Hughes}} (Eds.). \bibinfo{publisher}{Springer},
  \bibinfo{pages}{159--179}.
\newblock
\href{https://doi.org/10.1007/978-3-030-57761-2\_8}{doi:\nolinkurl{10.1007/978-3-030-57761-2\_8}}


\bibitem[Kiselyov and Shan(2007)]%
        {DBLP:conf/tlca/KiselyovS07}
\bibfield{author}{\bibinfo{person}{Oleg Kiselyov} {and}
  \bibinfo{person}{Chung{-}chieh Shan}.} \bibinfo{year}{2007}\natexlab{}.
\newblock \showarticletitle{A Substructural Type System for Delimited
  Continuations}. In \bibinfo{booktitle}{\emph{Typed Lambda Calculi and
  Applications, 8th International Conference, {TLCA} 2007, Paris, France, June
  26-28, 2007, Proceedings}} \emph{(\bibinfo{series}{Lecture Notes in Computer
  Science}, Vol.~\bibinfo{volume}{4583})},
  \bibfield{editor}{\bibinfo{person}{Simona Ronchi~Della Rocca}} (Ed.).
  \bibinfo{publisher}{Springer}, \bibinfo{pages}{223--239}.
\newblock
\showISBNx{978-3-540-73227-3}
\href{https://doi.org/10.1007/978-3-540-73228-0\_17}{doi:\nolinkurl{10.1007/978-3-540-73228-0\_17}}


\bibitem[Kobayashi and Kameyama(2025)]%
        {DBLP:conf/aplas/KobayashiK25}
\bibfield{author}{\bibinfo{person}{Kentaro Kobayashi} {and}
  \bibinfo{person}{Yukiyoshi Kameyama}.} \bibinfo{year}{2025}\natexlab{}.
\newblock \showarticletitle{Expressive Power of One-Shot Control Operators and
  Coroutines}. In \bibinfo{booktitle}{\emph{Programming Languages and Systems -
  23rd Asian Symposium, {APLAS} 2025, Bengaluru, India, October 27-30, 2025,
  Proceedings}} \emph{(\bibinfo{series}{Lecture Notes in Computer Science},
  Vol.~\bibinfo{volume}{16201})}, \bibfield{editor}{\bibinfo{person}{Alex
  Potanin}} (Ed.). \bibinfo{publisher}{Springer}, \bibinfo{pages}{88--106}.
\newblock
\showISBNx{978-981-95-3584-2}
\href{https://doi.org/10.1007/978-981-95-3585-9\_5}{doi:\nolinkurl{10.1007/978-981-95-3585-9\_5}}


\bibitem[Levy(2004)]%
        {levy2004cbpv}
\bibfield{author}{\bibinfo{person}{Paul~Blain Levy}.}
  \bibinfo{year}{2004}\natexlab{}.
\newblock \bibinfo{booktitle}{\emph{Call-By-Push-Value: {A}
  Functional/Imperative Synthesis}}. \bibinfo{series}{Semantics Structures in
  Computation}, Vol.~\bibinfo{volume}{2}.
\newblock \bibinfo{publisher}{Springer}.
\newblock
\showISBNx{1-4020-1730-8}


\bibitem[Materzok and Biernacki(2012)]%
        {materzok2012ADI}
\bibfield{author}{\bibinfo{person}{Marek Materzok} {and}
  \bibinfo{person}{Dariusz Biernacki}.} \bibinfo{year}{2012}\natexlab{}.
\newblock \showarticletitle{A Dynamic Interpretation of the {CPS} Hierarchy}.
  In \bibinfo{booktitle}{\emph{Programming Languages and Systems - 10th Asian
  Symposium, {APLAS} 2012, Kyoto, Japan, December 11-13, 2012. Proceedings}}
  \emph{(\bibinfo{series}{Lecture Notes in Computer Science},
  Vol.~\bibinfo{volume}{7705})}, \bibfield{editor}{\bibinfo{person}{Ranjit
  Jhala} {and} \bibinfo{person}{Atsushi Igarashi}} (Eds.).
  \bibinfo{publisher}{Springer}, \bibinfo{pages}{296--311}.
\newblock
\href{https://doi.org/10.1007/978-3-642-35182-2\_21}{doi:\nolinkurl{10.1007/978-3-642-35182-2\_21}}


\bibitem[Moggi(1989)]%
        {DBLP:conf/lics/Moggi89}
\bibfield{author}{\bibinfo{person}{Eugenio Moggi}.}
  \bibinfo{year}{1989}\natexlab{}.
\newblock \showarticletitle{Computational Lambda-Calculus and Monads}. In
  \bibinfo{booktitle}{\emph{Proceedings of the Fourth Annual Symposium on Logic
  in Computer Science {(LICS} '89), Pacific Grove, California, USA, June 5-8,
  1989}}. \bibinfo{publisher}{{IEEE} Computer Society},
  \bibinfo{pages}{14--23}.
\newblock
\showISBNx{0-8186-1954-6}
\href{https://doi.org/10.1109/LICS.1989.39155}{doi:\nolinkurl{10.1109/LICS.1989.39155}}


\bibitem[Phipps{-}Costin et~al\mbox{.}(2023)]%
        {phipps2023continuing}
\bibfield{author}{\bibinfo{person}{Luna Phipps{-}Costin},
  \bibinfo{person}{Andreas Rossberg}, \bibinfo{person}{Arjun Guha},
  \bibinfo{person}{Daan Leijen}, \bibinfo{person}{Daniel Hillerstr{\"{o}}m},
  \bibinfo{person}{K.~C. Sivaramakrishnan}, \bibinfo{person}{Matija Pretnar},
  {and} \bibinfo{person}{Sam Lindley}.} \bibinfo{year}{2023}\natexlab{}.
\newblock \showarticletitle{Continuing WebAssembly with Effect Handlers}.
\newblock \bibinfo{journal}{\emph{Proc. {ACM} Program. Lang.}}
  \bibinfo{volume}{7}, \bibinfo{number}{{OOPSLA2}} (\bibinfo{year}{2023}),
  \bibinfo{pages}{460--485}.
\newblock
\href{https://doi.org/10.1145/3622814}{doi:\nolinkurl{10.1145/3622814}}


\bibitem[Pir{\'{o}}g et~al\mbox{.}(2019)]%
        {DBLP:conf/rta/PirogPS19}
\bibfield{author}{\bibinfo{person}{Maciej Pir{\'{o}}g}, \bibinfo{person}{Piotr
  Polesiuk}, {and} \bibinfo{person}{Filip Sieczkowski}.}
  \bibinfo{year}{2019}\natexlab{}.
\newblock \showarticletitle{Typed Equivalence of Effect Handlers and Delimited
  Control}. In \bibinfo{booktitle}{\emph{4th International Conference on Formal
  Structures for Computation and Deduction, {FSCD} 2019, Dortmund, Germany,
  June 24-30, 2019}} \emph{(\bibinfo{series}{LIPIcs},
  Vol.~\bibinfo{volume}{131})}, \bibfield{editor}{\bibinfo{person}{Herman
  Geuvers}} (Ed.). \bibinfo{publisher}{Schloss Dagstuhl - Leibniz-Zentrum
  f{\"{u}}r Informatik}, \bibinfo{pages}{30:1--30:16}.
\newblock
\showISBNx{978-3-95977-107-8}
\href{https://doi.org/10.4230/LIPICS.FSCD.2019.30}{doi:\nolinkurl{10.4230/LIPICS.FSCD.2019.30}}


\bibitem[Plotkin and Power(2003)]%
        {plotkin2003algebraic}
\bibfield{author}{\bibinfo{person}{Gordon~D. Plotkin} {and}
  \bibinfo{person}{John Power}.} \bibinfo{year}{2003}\natexlab{}.
\newblock \showarticletitle{Algebraic Operations and Generic Effects}.
\newblock \bibinfo{journal}{\emph{Appl. Categorical Struct.}}
  \bibinfo{volume}{11}, \bibinfo{number}{1} (\bibinfo{year}{2003}),
  \bibinfo{pages}{69--94}.
\newblock
\href{https://doi.org/10.1023/A:1023064908962}{doi:\nolinkurl{10.1023/A:1023064908962}}


\bibitem[Plotkin and Pretnar(2009)]%
        {plotkin2009handlers}
\bibfield{author}{\bibinfo{person}{Gordon~D. Plotkin} {and}
  \bibinfo{person}{Matija Pretnar}.} \bibinfo{year}{2009}\natexlab{}.
\newblock \showarticletitle{Handlers of Algebraic Effects}. In
  \bibinfo{booktitle}{\emph{Programming Languages and Systems, 18th European
  Symposium on Programming, {ESOP} 2009, Held as Part of the Joint European
  Conferences on Theory and Practice of Software, {ETAPS} 2009, York, UK, March
  22-29, 2009. Proceedings}} \emph{(\bibinfo{series}{Lecture Notes in Computer
  Science}, Vol.~\bibinfo{volume}{5502})},
  \bibfield{editor}{\bibinfo{person}{Giuseppe Castagna}} (Ed.).
  \bibinfo{publisher}{Springer}, \bibinfo{pages}{80--94}.
\newblock
\href{https://doi.org/10.1007/978-3-642-00590-9\_7}{doi:\nolinkurl{10.1007/978-3-642-00590-9\_7}}


\bibitem[Riecke and Thielecke(1999)]%
        {DBLP:conf/icalp/RieckeT99}
\bibfield{author}{\bibinfo{person}{Jon~G. Riecke} {and} \bibinfo{person}{Hayo
  Thielecke}.} \bibinfo{year}{1999}\natexlab{}.
\newblock \showarticletitle{Typed Exeptions and Continuations Cannot
  Macro-Express Each Other}. In \bibinfo{booktitle}{\emph{Automata, Languages
  and Programming, 26th International Colloquium, ICALP'99, Prague, Czech
  Republic, July 11-15, 1999, Proceedings}} \emph{(\bibinfo{series}{Lecture
  Notes in Computer Science}, Vol.~\bibinfo{volume}{1644})},
  \bibfield{editor}{\bibinfo{person}{Jir{\'{\i}} Wiedermann},
  \bibinfo{person}{Peter van Emde~Boas}, {and} \bibinfo{person}{Mogens
  Nielsen}} (Eds.). \bibinfo{publisher}{Springer}, \bibinfo{pages}{635--644}.
\newblock
\showISBNx{3-540-66224-3}
\href{https://doi.org/10.1007/3-540-48523-6\_60}{doi:\nolinkurl{10.1007/3-540-48523-6\_60}}


\bibitem[Shan(2007)]%
        {DBLP:journals/lisp/Shan07}
\bibfield{author}{\bibinfo{person}{Chung{-}chieh Shan}.}
  \bibinfo{year}{2007}\natexlab{}.
\newblock \showarticletitle{A static simulation of dynamic delimited control}.
\newblock \bibinfo{journal}{\emph{High. Order Symb. Comput.}}
  \bibinfo{volume}{20}, \bibinfo{number}{4} (\bibinfo{year}{2007}),
  \bibinfo{pages}{371--401}.
\newblock
\href{https://doi.org/10.1007/S10990-007-9010-4}{doi:\nolinkurl{10.1007/S10990-007-9010-4}}


\bibitem[Sivaramakrishnan et~al\mbox{.}(2021)]%
        {DBLP:conf/pldi/Sivaramakrishnan21}
\bibfield{author}{\bibinfo{person}{K.~C. Sivaramakrishnan},
  \bibinfo{person}{Stephen Dolan}, \bibinfo{person}{Leo White},
  \bibinfo{person}{Tom Kelly}, \bibinfo{person}{Sadiq Jaffer}, {and}
  \bibinfo{person}{Anil Madhavapeddy}.} \bibinfo{year}{2021}\natexlab{}.
\newblock \showarticletitle{Retrofitting effect handlers onto OCaml}. In
  \bibinfo{booktitle}{\emph{{PLDI} '21: 42nd {ACM} {SIGPLAN} International
  Conference on Programming Language Design and Implementation, Virtual Event,
  Canada, June 20-25, 2021}}, \bibfield{editor}{\bibinfo{person}{Stephen~N.
  Freund} {and} \bibinfo{person}{Eran Yahav}} (Eds.).
  \bibinfo{publisher}{{ACM}}, \bibinfo{pages}{206--221}.
\newblock
\showISBNx{978-1-4503-8391-2}
\href{https://doi.org/10.1145/3453483.3454039}{doi:\nolinkurl{10.1145/3453483.3454039}}


\bibitem[Tang et~al\mbox{.}(2024)]%
        {DBLP:journals/pacmpl/TangHLM24}
\bibfield{author}{\bibinfo{person}{Wenhao Tang}, \bibinfo{person}{Daniel
  Hillerstr{\"{o}}m}, \bibinfo{person}{Sam Lindley}, {and}
  \bibinfo{person}{J.~Garrett Morris}.} \bibinfo{year}{2024}\natexlab{}.
\newblock \showarticletitle{Soundly Handling Linearity}.
\newblock \bibinfo{journal}{\emph{Proc. {ACM} Program. Lang.}}
  \bibinfo{volume}{8}, \bibinfo{number}{{POPL}} (\bibinfo{year}{2024}),
  \bibinfo{pages}{1600--1628}.
\newblock
\href{https://doi.org/10.1145/3632896}{doi:\nolinkurl{10.1145/3632896}}


\bibitem[van Rooij and Krebbers(2025)]%
        {DBLP:journals/pacmpl/RooijK25}
\bibfield{author}{\bibinfo{person}{Orpheas van Rooij} {and}
  \bibinfo{person}{Robbert Krebbers}.} \bibinfo{year}{2025}\natexlab{}.
\newblock \showarticletitle{Affect: An Affine Type and Effect System}.
\newblock \bibinfo{journal}{\emph{Proc. {ACM} Program. Lang.}}
  \bibinfo{volume}{9}, \bibinfo{number}{{POPL}} (\bibinfo{year}{2025}),
  \bibinfo{pages}{126--154}.
\newblock
\href{https://doi.org/10.1145/3704841}{doi:\nolinkurl{10.1145/3704841}}


\end{thebibliography}
